\documentclass[]{article}

\usepackage{amsmath,amssymb,amsthm,amsfonts}   
\numberwithin{equation}{section}   
\usepackage{physics}   
\usepackage[english]{babel}   
\usepackage[margin=3cm]{geometry}   
\usepackage{csquotes}   
\usepackage{mathtools}   
\usepackage{relsize}   
\usepackage{indentfirst}  
\usepackage{thm-restate}   
\usepackage{mathrsfs}   
\usepackage[dvipsnames]{xcolor}   
\usepackage{paracol}    
\usepackage{enumitem}    
\usepackage{soul}       
\usepackage{aligned-overset}    
\usepackage[thinc]{esdiff}  
\usepackage{upgreek}    
\usepackage{datetime}   
\usepackage{tensor}     
\usepackage{caption}    
\usepackage{subcaption} 
\usepackage{appendix}   
\allowdisplaybreaks

\usepackage[dvipsnames]{xcolor}     
\usepackage{hyperref}	            
\newcommand\myshade{85}         
\colorlet{mylinkcolor}{NavyBlue}  
\colorlet{mycitecolor}{YellowOrange}    
\colorlet{myurlcolor}{Aquamarine}       
\hypersetup{            
    linkcolor  = mylinkcolor!\myshade!black,   
    citecolor  = mycitecolor!\myshade!black,   
    urlcolor   = myurlcolor!\myshade!black, 
    colorlinks = true,      
    hypertexnames=false,
}               
\usepackage{cleveref}                       
\usepackage[                                
style = alphabetic,                         
maxbibnames=9,maxcitenames=9,               
maxalphanames=4,minalphanames=4,            
doi=true,isbn=true,giveninits=true,         
backend=biber,                              
url=false,arxiv=true]{biblatex}             
\theoremstyle{plain}   

\theoremstyle{plain}   
\newtheorem{assumption}{Assumption}     
\crefname{assumption}{assumption}{assumptions}

\newtheorem{lemma}{Lemma}[section]   
\crefname{lemma}{lemma}{lemmas}
\newtheorem{theorem}{Theorem}[section]  

\newtheorem{definition}{Definition}[section]    %
\newtheorem{prop}[lemma]{Proposition}   
\newtheorem{cor}[lemma]{Corollary}   
\newtheorem{remark}[lemma]{Remark}   
\newtheorem{dfn}{Definition}    
\usepackage{pageslts}   
\usepackage{fancyhdr}   
\newtheoremstyle{examplestyle}%
  {3pt}{3pt}
  {\normalfont}
  {}
  {\bfseries}
  {.}
  {0.5em}
  {}
\theoremstyle{examplestyle}
\newtheorem{example}{Example}

\newtheorem{condition}{Condition}
\crefname{condition}{condition}{conditions}
\Crefname{condition}{Condition}{Conditions}

\newcommand{\condtag}[1]{\renewcommand{\thecondition}{(#1)}}

\usepackage{xparse} 
\ExplSyntaxOn       
\NewDocumentCommand{\definealphabet}{mmmm}{
  \int_step_inline:nnn { `#3 } { `#4 }{
    \cs_new_protected:cpx { #1 \char_generate:nn { ##1 }{ 11 } }{
      \exp_not:N #2 { \char_generate:nn { ##1 } { 11 } }}}}              
\ExplSyntaxOff      
\definealphabet{mb}{\mathbb}{A}{Z}      
\definealphabet{mc}{\mathcal}{A}{Z}     
\definealphabet{scr}{\mathscr}{A}{Z}    
\definealphabet{mf}{\mathfrak}{A}{Z}    

\DeclareMathOperator{\states}{\mbS_d}   

\DeclareMathOperator{\mapspace}{\mcL(\mbM_d)}   
\DeclarePairedDelimiterX{\corr}[2]{\mathrm{corr}(}{)}{#1,#2}     
\DeclarePairedDelimiterX{\cov}[2]{\mathrm{cov}(}{)}{#1,#2}          
\DeclarePairedDelimiterX{\dis}[2]{\mathrm{d}(}{)}{#1, #2} 

\DeclareMathOperator{\dummyborel}{\mathscr{B}}                  
\newcommand\borel[1]{\dummyborel\left(#1\right)}    
\DeclareMathOperator{\pr}{\mathsf{pr}}           
\newcommand{\adj}{^\dagger}         
\DeclareMathOperator{\outcomes}{\mcA^\mbN}      
\DeclareMathOperator{\matrices}{\mbM_d}     
\DeclarePairedDelimiterX{\inner}[2]{\langle}{\rangle}{#1| #2}   
\DeclareMathOperator*{\esssup}{ess\,sup}            
\newcommand{\Law}{\mathrm{Law}}     
\newcommand{\s}{\rho_{\mathsf{ss}}}     
\newcommand{\dee}{\mathrm{d}}           

\newcommand{\esp}{{\hyperlink{esp}{ESP}}}
\newcommand{\rhomix}{{\hyperlink{rho_mix}{$\rho$-mixing}}}

\renewcommand{\tr}[1]{\operatorname{Tr}\left(#1\right)} 

\DeclarePairedDelimiterX{\cnum}[1]{\mathrm{c}(}{)}{#1}  
\DeclarePairedDelimiterX{\diam}[1]{\text{diam}(}{)}{#1} 
\newcommand{\proj}{{\mathlarger{\boldsymbol{\cdot}}}} 
\let\oldker\ker         
\renewcommand{\ker}[1]{\oldker\left(#1\right)}  

\title{Central Limit Theorems for Outcome Records in Disordered Quantum Trajectories}
\usepackage{authblk}

\author[]{Lubashan Pathirana\thanks{lpk@math.ku.dk}}
\affil[]{Department of Mathematical Sciences and QMATH, University of Copenhagen, Denmark.}
\date{}

\begin{document}
\pagenumbering{arabic}
\lhead{\thepage}
\maketitle
\vspace{-1cm}


\begin{abstract}
    We prove annealed central limit theorems for finite pattern counts in the measurement record of discrete-time quantum trajectories generated by repeated measurements in a disordered environment.
    Under summable mixing assumptions on the environment and an annealed trace-norm forgetting property for the associated non-selective channel cocycle, we first establish the CLT under the annealed law determined by the dynamically stationary state. 
    This part applies to general disordered quantum instruments, and in particular is not restricted to the perfect-measurement regime, and it complements the law of large numbers established in \cite{qtlln} for the same disordered setting and provides a disordered counterpart of the homogeneous CLT in \cite{Attal_2014}. 
    We then introduce a coupling-based notion of admissibility for initial states and show that the same Gaussian limit extends to every admissible initial law, with unchanged centering and asymptotic variance. 
    In the perfect-measurement setting, we further identify a general condition ensuring admissibility for every initial state, and hence obtain a universal annealed central limit theorem. 
    We also provide practical sufficient criteria for this condition and verify the assumptions in a broad family of examples, including disordered walk-type models generated by finite group actions.
\end{abstract}





\section{Introduction}
\label{section:intro}


    The statistics of repeated measurement outcomes of a quantum system are modeled by a \emph{quantum instrument} \(\mcV\) \cite{Davies_1970}. 
    Given a finite-dimensional Hilbert space \(\mcH\) and a finite outcome set \(\mcA\), a quantum instrument is a family \(\mcV:= \{\mcT_a\}_{a\in \mcA}\) of completely positive, trace-non-increasing linear maps \(\mcT_a:\mcB(\mcH) \to \mcB(\mcH)\) such that 
    \begin{equation}
        \Phi \;:=\; \sum_{a\in \mcA} \mcT_a \quad \text{is trace-preserving}.
    \end{equation}
    Here \(\mcB(\mcH)\) denotes the algebra of linear operators on \(\mcH\). 
    Thus \(\Phi\) is completely positive and trace-preserving (CPTP), i.e.\ a \emph{quantum channel} \cite{Watrous_2018}.

    A particularly important case is the \emph{perfect-measurement} regime, where each recorded outcome \(a\in\mcA\) is associated with a single Kraus operator:
    \[
        \mcT_a(\rho)=V_a\rho V_a\adj.
    \]
    More generally, a recorded outcome may combine several microscopic alternatives, in which case
    \[
        \mcT_a(\rho)=\sum_{j=1}^{r(a)}V_{a,j}\rho V_{a,j}\adj.
    \]
    This \emph{imperfect-measurement} regime arises naturally from mixed probe preparations, degenerate readout, detector misidentification, or coupling to unmonitored degrees of freedom; see, e.g., \cite{tristant_2025,somaraju2012design}. For example, if \(\eta=(\eta_{i,j})\) is a left-stochastic confusion matrix linking an ideal microscopic outcome \(j\) to a reported outcome \(i\), then
    \[
        \mcT_i(\rho)=\sum_j \eta_{i,j}\,V_j\rho V_j\adj .
    \]
    The quenched and annealed path-space constructions below apply to this general setting. Later, for the final universal transfer theorem, we impose additional structure in the perfect-measurement regime.

    Given an initial state \(\rho_0\in\states\) and an outcome string \((a_1,\ldots,a_n)\in\mcA^n\), the corresponding posterior state after \(n\) repeated measurements is
    \begin{equation}
        \rho_n
        :=
        \frac{(\mcT_{a_n}\circ\cdots\circ\mcT_{a_1})(\rho_0)}
             {\tr{(\mcT_{a_n}\circ\cdots\circ\mcT_{a_1})(\rho_0)}},
    \end{equation}
    whenever the denominator is nonzero. By Born's rule \cite{Born1926}, the probability of observing the finite outcome string \((a_1,\ldots,a_n)\) is
    \begin{equation}
        \tr{(\mcT_{a_n}\circ\cdots\circ\mcT_{a_1})(\rho_0)}.
    \end{equation}

    These finite-dimensional distributions define, by the Kolmogorov extension theorem, a unique probability measure, $\mbQ_{\rho_0}$, on the outcome space \((\mcA^{\mbN},\Sigma)\), where \(\Sigma\) is the product \(\sigma\)-algebra, such that for every cylinder set
    \[
        \{a_1\}\times\cdots\times\{a_n\}\times\mcA\times\mcA\times\cdots
    \]
    one has
    \begin{equation}
        \mbQ_{\rho_0}\!\left(
            \{a_1\}\times\cdots\times\{a_n\}\times\mcA\times\mcA\times\cdots
        \right)
        =
        \tr{(\mcT_{a_n}\circ\cdots\circ\mcT_{a_1})(\rho_0)}.
    \end{equation}
    Thus \(\mbQ_{\rho_0}\) is the law of the measurement record associated with the initial state \(\rho_0\).

    Let
    \[
        A_n:\mcA^\mbN\to\mcA,
        \qquad
        A_n(\bar a)=a_n,
    \]
    be the canonical coordinate maps. For \(L\in\mbN\), we also write
    \[
        A_{1:L}:=(A_1,\dots,A_L):\mcA^\mbN\to\mcA^L
    \]
    for the block-coordinate map. The associated posterior-state process \((\rho_n)_{n\ge0}\) is then defined \(\mbQ_{\rho_0}\)-almost surely by
    \[
        \rho_n(\bar a)
        :=
        \frac{
            (\mcT_{A_n(\bar a)}\circ\cdots\circ\mcT_{A_1(\bar a)})(\rho_0)
        }{
            \tr{
                (\mcT_{A_n(\bar a)}\circ\cdots\circ\mcT_{A_1(\bar a)})(\rho_0)
            }
        }.
    \]
    Thus \((A_n)_{n\ge1}\) is the \emph{outcome process}, while \((\rho_n)_{n\ge0}\) is the corresponding \emph{quantum trajectory}.
    

\subsection{Disordered Quantum Trajectories}
\label{section:Disordered_Traj}


    We now allow the instrument to depend on an external disorder parameter. 
    Let \((\Omega,\mcF,\pr,\theta)\) be a probability space equipped with a \(\pr\)-preserving transformation \(\theta:\Omega\to\Omega\). 
    For each \(\omega\in\Omega\), let
    \[
        \mcV_\omega=\{\mcT_{a;\omega}\}_{a\in\mcA}
    \]
    be a quantum instrument on \(\mcB(\mcH)\), so that each \(\mcT_{a;\omega}\) is completely positive and trace-non-increasing and
    \[
        \Phi_\omega:=\sum_{a\in\mcA}\mcT_{a;\omega}
    \]
    is trace-preserving for \(\pr\)-almost every \(\omega\). 
    We assume that for each \(a\in\mcA\), the map
    \[
        \omega\longmapsto \mcT_{a;\omega}
    \]
    is \(\mcF\)-Borel measurable as a map into the finite-dimensional vector space \(\mcL(\mcB(\mcH))\). Equivalently, for each \(\rho\in\mcB(\mcH)\), the map
    \[
        \omega\longmapsto \mcT_{a;\omega}(\rho)
    \]
    is \(\mcF\)-Borel measurable.
    The probability space \((\Omega,\mcF,\pr,\theta)\) is called the \emph{disorder dynamical system} or the \emph {base system}. 
    A realization $\omega\in\Omega$ is called a \emph{disorder realization} or \emph{disorder configuration}.

    Fix a measurable initial state \(\rho_0:\Omega\to\states\). 
    For a fixed disorder realization \(\omega\in\Omega\), the \(k\)-th measurement uses the instrument \(\mcV_{\theta^k(\omega)}\). 
    The corresponding quenched cylinder probabilities are
    \[
        \mbQ_{\rho_0(\omega);\omega}\!\left(
            \{a_1\}\times\cdots\times\{a_n\}\times\mcA\times\mcA\times\cdots
        \right)
        :=
        \tr{
            \left(
                \mcT_{a_n;\theta^n(\omega)}\circ\cdots\circ\mcT_{a_1;\theta(\omega)}
            \right)\!\bigl(\rho_0(\omega)\bigr)
    }.
    \]
    Again, by the Kolmogorov extension theorem, this defines a probability measure \(\mbQ_{\rho_0(\omega);\omega}\) on \((\mcA^\mbN,\Sigma)\), called the \emph{quenched outcome law} at environment (or at disorder realization) \(\omega\). 
    The map
    \[
    \Omega\ni\omega\mapsto \mbQ_{\rho_0(\omega);\omega}\in\mcP(\mcA^\mbN)
    \]
    is measurable; see \Cref{lem:Q-random-initial-state}. 
    We refer to \cite{qtlln} for a detailed proof of the construction of quenched measures.
    Although \cite{traj,qtlln} are written in the perfect regime, this construction uses only the instrument axioms and therefore extends verbatim to the imperfect regime.

    With the canonical coordinates \(A_n\), define the quenched posterior-state process by
    \begin{equation}
    \label{eq:quenched-pathwise-rewrite}
        \rho_n^\omega(\bar a)
        :=
        \frac{
            \left(
                \mcT_{A_n(\bar a);\theta^n(\omega)}\circ\cdots\circ
                \mcT_{A_1(\bar a);\theta(\omega)}
            \right)\!\bigl(\rho_0(\omega)\bigr)
        }{
            \tr{
                \left(
                    \mcT_{A_n(\bar a);\theta^n(\omega)}\circ\cdots\circ
                    \mcT_{A_1(\bar a);\theta(\omega)}
                \right)\!\bigl(\rho_0(\omega)\bigr)
            }
        },
    \end{equation}
    for \(\mbQ_{\rho_0(\omega);\omega}\)-almost every \(\bar a\in\mcA^\mbN\). 
    Let \(\mcH_n:=\sigma(A_1,\dots,A_n)\). 
    Then for \(a\in\mcA\) and \(n\ge0\),
    \[
        \mbQ_{\rho_0(\omega);\omega}(A_{n+1}=a\mid \mcH_n)
        =
        \tr{\mcT_{a;\theta^{n+1}(\omega)}(\rho_n^\omega)},
        \qquad
        \mbQ_{\rho_0(\omega);\omega}\text{-a.s.},
    \]
    and on \(\{A_{n+1}=a\}\),
    \[
        \rho_{n+1}^\omega
        =
        \frac{\mcT_{a;\theta^{n+1}(\omega)}(\rho_n^\omega)}
             {\tr{\mcT_{a;\theta^{n+1}(\omega)}(\rho_n^\omega)}}.
    \]
    Hence, for each fixed \(\omega\), the posterior-state process is a time-inhomogeneous Markov chain, and the outcome process is a hidden Markov process with hidden process \((\rho_n^\omega)_{n\ge0}\).

    Averaging over the disorder yields the \emph{annealed law}
        \begin{equation}
        \label{eq:annealed}
             \overline{\mbQ}_{\rho_0}(d\omega,d\bar a)
                :=
            \pr(d\omega)\,\mbQ_{\rho_0(\omega);\omega}(d\bar a)
        \end{equation}
    on \((\Omega\times\mcA^\mbN,\mcF\otimes\Sigma)\). Let \(\varsigma\) denote the left shift on \(\mcA^\mbN\), and define the skew product
    \begin{equation}
    \label{eq:ske}
        \tau(\omega,\bar a):=(\theta\omega,\varsigma\bar a).
    \end{equation}
    Under \(\overline{\mbQ}_{\rho_0}\), the coordinate process \((A_n)_{n\ge1}\) is the measurement record in the disordered environment.

    \paragraph{Outcome Frequencies.}
        Let \(A_n:\mcA^\mbN\to\mcA\) be the canonical coordinate maps, \(A_n(\bar a)=a_n\), and let \(\varsigma\) denote the left shift on \(\mcA^\mbN\). 
        The sequence \(\bar a=(a_1,a_2,\dots)\in\mcA^\mbN\) is the \emph{measurement record}. 
        For a fixed block \(\pmb b=(b_1,\dots,b_m)\in\mcA^m\), define the cylinder set
        \begin{equation}
        \label{eq:cylinder_for_b}
             C_{\pmb b}:=\{\bar a\in\mcA^\mbN:a_1=b_1,\dots,a_m=b_m\}.
        \end{equation}
        The number of occurrences of the pattern \(\pmb b\) with starting index at most \(n\) is
        \[
            N_n^{\pmb b}(\bar a)
            :=
            \sum_{k=1}^n \delta N_k^{\pmb b}(\bar a),
            \qquad
            \delta N_k^{\pmb b}(\bar a)
            :=
            \mathbf 1_{C_{\pmb b}}(\varsigma^{k-1}\bar a).
        \]
        We call \((N_n^{\pmb b})_{n\ge1}\) the \emph{pattern counting process} associated with \(\pmb b\) in the measurement record, \(\delta N_k^{\pmb b}\) the \emph{pattern indicator} at time \(k\), and \(\frac1n N_n^{\pmb b}\) the \emph{empirical pattern frequency}.

        The goal of this paper is to establish a \emph{central limit theorem for the fluctuations of the empirical pattern frequencies} \(\frac1n N_n^{\pmb b}\), complementing the \emph{law of large numbers} obtained in \cite{qtlln}. In \cite{qtlln}, the authors work under the assumptions that
        \begin{enumerate}
            \item the base system \((\Omega,\mcF,\pr,\theta)\) is ergodic, and
            \item a unique \emph{dynamically stationary state} exists.
        \end{enumerate}
        Although \cite{qtlln} formulates the disordered model in the perfect regime, that is, with one Kraus operator per recorded outcome, the construction of the quenched and annealed laws and the corresponding ergodic arguments rely only on the instrument axioms: each component \(\mcT_{a;\omega}\) is completely positive and trace-non-increasing, and \(\sum_{a\in\mcA}\mcT_{a;\omega}\) is trace-preserving. 
        Thus, the LLN extends verbatim to the imperfect-measurement regime considered here.

    \paragraph{Dynamically Stationary State.}
        A random initial state \(\s:\Omega\to\states\) is called \emph{dynamically stationary} for the sequence of instruments \(\left(\mcV_{\theta(\omega)},\mcV_{\theta^2(\omega)},\ldots\right)\) if, for all \(n\in\mbN\),
        \begin{equation}
            \Phi_{\omega}^{(n)}\!\left({\s}(\omega)\right)
            \;=\;
            \s(\theta^n(\omega)),
        \end{equation}
        where
        \begin{equation}
            \label{eq:compositions}
            \Phi_\omega^{(n)} = \Phi_{n;\omega}\circ\cdots\circ\Phi_{1;\omega},
            \qquad
            \Phi_{k;\omega} = \sum_{a\in\mcA} \mcT_{a;\theta^k(\omega)}.
        \end{equation}
        
    In \cite{qtlln}, the following Law of Large Numbers for \(N_n^{\pmb b}\) was proved under the existence and uniqueness of the dynamically stationary state.
        
    \begin{theorem}[LLN for Outcome Frequencies \cite{qtlln}]
    \label{thm:LLN}
        Assume that $\left(\Omega,\mcF,\pr,\theta\right)$ is an invertible, ergodic, \(\pr\)-preserving dynamical system, and let \(\omega\mapsto\mcV_\omega\) be a random instrument on a finite outcome set \(\mcA\).
        Suppose there exists a \emph{unique} dynamically stationary state \({\s}(\omega)\) for the sequence of measurements \(\left(\mcV_{\theta^{1}(\omega)},\mcV_{\theta^{2}(\omega)},\ldots\right)\).
        Then:
        \begin{enumerate}
            \item The skew shift \(\tau\) is a \(\overline\mbQ_{\s}\)-measure-preserving ergodic transformation. 
            Furthermore, for any $\tau$-invariant set $\Gamma\in \mcF\otimes\Sigma$ (i.e. $\tau^{-1}(\Gamma) = \Gamma$) we have $\overline\mbQ_{\vartheta}(\Gamma) = \overline\mbQ_{\s}(\Gamma) \in \{0,1\}$, for any initial state $\vartheta:\Omega\to \states$.
            \item For every (random) initial state \(\vartheta\), for \(\pr\)-almost every \(\omega\in\Omega\), and for \(\mbQ_{\vartheta;\omega}\)-almost every \(\bar a\in\outcomes\),
            \[
                \lim_{n\to\infty} \frac{N_n^{\pmb{b}}(\bar a)}{n}
                \;=\;
                \int_\Omega \mbQ_{{\s}(\omega);\omega}(C_{\pmb{b}})\,\dee\pr(\omega).
            \]
            where $C_{\pmb{b}}$ is as defined in \eqref{eq:cylinder_for_b}.
        \end{enumerate}
    \end{theorem}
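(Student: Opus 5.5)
The plan has three stages: show that \(\tau\) preserves \(\overline\mbQ_{\s}\), show that it is ergodic, and then obtain the LLN from Birkhoff's theorem plus a transfer of invariant events between initial states.

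\textbf{Invariance.} For a cylinder \(\Gamma=F\times C_{\pmb b}\) with \(F\in\mcF\) and \(\pmb b\in\mcA^m\), I would compute \(\overline\mbQ_{\s}(\tau^{-1}\Gamma)\). Summing over the first outcome gives
\[
\mbQ_{\s(\omega);\omega}(\varsigma^{-1}C_{\pmb b})=\tr{\mcT_{b_m;\theta^{m+1}\omega}\circ\cdots\circ\mcT_{b_1;\theta^2\omega}\bigl(\Phi_{1;\omega}(\s(\omega))\bigr)}.
\]
Since \(\Phi_{1;\omega}(\s(\omega))=\s(\theta\omega)\), this equals \(\mbQ_{\s(\theta\omega);\theta\omega}(C_{\pmb b})\). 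Integrating over \(\theta^{-1}F\) and using that \(\pr\) is \(\theta\)-invariant gives \(\overline\mbQ_{\s}(\Gamma)\). A \(\pi\)-\(\lambda\) argument then extends the identity to all of \(\mcF\otimes\Sigma\).

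\textbf{Ergodicity and the transfer of invariant events.} Let \(\Gamma\) be \(\tau\)-invariant, and write \(g(\omega):=\mbQ_{\s(\omega);\omega}(\Gamma_\omega)\), where \(\Gamma_\omega\) is the section of \(\Gamma\) at \(\omega\).
\begin{enumerate}
\item Since \(\Gamma_\omega=\varsigma^{-1}\Gamma_{\theta\omega}\), the invariance computation applied fibrewise shows \(g\circ\theta=g\). Ergodicity of \(\theta\) then makes \(g\) a.s.\ constant.
\item To identify the constant, I would define the linear functional \(\ell_\omega(\rho):=\mbQ_{\rho;\omega}(\Gamma_\omega)\), which is affine in \(\rho\) (the law \(\mbQ_{\rho;\omega}\) is linear in \(\rho\)). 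Invariance gives \(\ell_\omega=\ell_{\theta\omega}\circ\Phi_{1;\omega}\), so \(\ell\) is a random fixed point of the dual cocycle.
\item Uniqueness of the dynamically stationary state should force such bounded dual fixed points to be multiples of the trace, that is \(\ell_\omega(\rho)=c(\omega)\tr{\rho}\). I would prove this by a duality argument: use Krylov--Bogoliubov-type Cesàro averages along the cocycle, together with the fact that \(\ell_\omega\) takes values in \([0,1]\) on states.
\item Conditioning on \(\mcH_n\) gives \(\ell_\omega(\rho_0)=\lim_n\ell_{\theta^n\omega}(\rho_n^\omega)\), which equals \(\mathbf 1_{\Gamma_\omega}\) a.s.\ by Lévy's 0–1 law. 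Hence \(c(\omega)\in\{0,1\}\), and \(c\) is \(\theta\)-invariant, so it is constant.
\end{enumerate}
This gives \(\overline\mbQ_\vartheta(\Gamma)=\overline\mbQ_{\s}(\Gamma)\in\{0,1\}\) for every initial state \(\vartheta\).

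\textbf{LLN.} Apply Birkhoff's theorem on \((\Omega\times\mcA^\mbN,\overline\mbQ_{\s},\tau)\) to \(f(\omega,\bar a)=\mathbf 1_{C_{\pmb b}}(\bar a)\). Since \(\delta N_k^{\pmb b}=f\circ\tau^{k-1}\), we get \(N_n^{\pmb b}/n\to\int f\,d\overline\mbQ_{\s}\) almost surely. The convergence event is \(\tau\)-invariant, so part 1 gives it full \(\overline\mbQ_\vartheta\)-measure for every \(\vartheta\). Fubini then yields the quenched statement for \(\pr\)-a.e.\ \(\omega\).

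\textbf{Main obstacle.} The hard step is item 3 above, showing that uniqueness of \(\s\) rules out non-trivial dual fixed points. If that duality fails, the fallback is to decompose \(\ell_\omega\) via the measurable selection of extremal stationary states, and show that any non-constant \(c\) would produce two distinct dynamically stationary states, contradicting uniqueness.
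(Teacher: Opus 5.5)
The paper does not prove this theorem; it is quoted from \cite{qtlln}, so I can only judge your argument on its own terms. The skeleton is right, and the following steps are all fine:
\begin{itemize}
\item the invariance computation and the section identity $\Gamma_\omega=\varsigma^{-1}\Gamma_{\theta\omega}$;
\item the resulting dual relation $Y_\omega=\Phi_{1;\omega}\adj(Y_{\theta\omega})$ for $Y_\omega:=\mbQ^*_\omega(\Gamma_\omega)$;
\item the L\'evy $0$--$1$ step, provided you note first that $c\circ\theta=c$ (by unitality of $\Phi_{1;\omega}\adj$), since that is what makes the martingale $\ell_{\theta^n\omega}(\rho_n^\omega)$ constant in $n$;
\item the Birkhoff--Fubini transfer, since the convergence event does not involve $\omega$ and is exactly $\varsigma$-invariant.
\end{itemize}

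\textbf{The gap is item 3}, and it carries the entire content of the theorem. In the homogeneous case this step is a finite-dimensional count comparing fixed points of $\Phi$ and $\Phi\adj$. No such count is available for the cocycle, and your proposal only asserts the conclusion, names a tool, and offers a fallback. The property you single out (that $\ell_\omega$ takes values in $[0,1]$) is not what makes it work. The fallback through extremal stationary states is unnecessary.

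The missing idea is that the pairing with $Y$ is conserved along the \emph{pulled-back} cocycle.
\begin{itemize}
\item On $L^\infty(\Omega;\mbM_d)=L^1(\Omega;\mbM_d)^*$ set $(\mathcal R X)(\omega):=\Phi_{1;\theta^{-1}\omega}\bigl(X(\theta^{-1}\omega)\bigr)$. This uses invertibility of $\theta$: the forward iterates $\Phi^{(n)}_\omega(\rho)$ live over $\theta^n\omega$ and cannot be averaged directly.
\item $\mathcal R$ is weak$^*$-continuous, with pre-adjoint $W\mapsto\Phi_{1;\,\cdot}\adj(W\circ\theta)$.
\item Its fixed state fields are exactly the dynamically stationary states.
\item The measurable state fields form a weak$^*$-compact convex set $\mcK$.
\end{itemize}
Since $Y$ is fixed by the pre-adjoint,
\[
    \int_\Omega\tr{(\mathcal R^nX)(\omega)\,Y_\omega}\,\dee\pr(\omega)
    =\int_\Omega\tr{X(\omega)\,Y_\omega}\,\dee\pr(\omega),
    \qquad X\in\mcK,\ n\ge0.
\]
So every weak$^*$ limit point of the Ces\`aro means $\frac1N\sum_{n<N}\mathcal R^nX$ is a dynamically stationary state with the same pairing against $Y$. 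By uniqueness, that limit point is $\s$. Hence $\int\tr{XY}\,\dee\pr=\int\tr{\s Y}\,\dee\pr$ for all $X\in\mcK$.

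Now take $X=\mathbf 1_F\rho+\mathbf 1_{F^c}\s$, with $F\in\mcF$ arbitrary and $\rho$ ranging over a countable family of states spanning $\mbM_d$. This gives $\tr{\rho\,Y_\omega}=\tr{\s(\omega)Y_\omega}$ for all such $\rho$ and $\pr$-a.e.\ $\omega$, i.e.\ $Y_\omega=c(\omega)\mbI_d$. With this lemma inserted, the rest of your proof goes through as written.
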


 
\subsection{Assumptions and Main Result}
\label{section:assumption_and_main}


    Fix a probability space \((\Omega,\mcF,\pr,\theta)\), where \(\theta\) \(\pr\)-preserving, and consider the measurement sequence
    \[
        \bigl(\mcV_{\theta^k(\omega)}\bigr)_{k\ge1}.
    \]
    Write
    \[
        \Phi_{k;\omega}:=\sum_{a\in\mcA}\mcT_{a;\theta^k(\omega)},
        \qquad k\ge1,
    \]
    and for \(k,n\ge1\),
    \[
        \Phi^{(n)}_{k;\omega}
        :=
        \Phi_{k+n-1;\omega}\circ\cdots\circ\Phi_{k;\omega},
        \qquad
        \Phi^{(n)}_\omega:=\Phi^{(n)}_{1;\omega}.
    \]

    Since \(\theta\) is invertible, we define the past and future \(\sigma\)-algebras associated with the instrument process by
    \begin{equation}
    \label{eq:past_and_future_sigma_instruments}
        \mcF_k
        :=
        \sigma\!\bigl(\omega\mapsto\mcV_{\theta^j(\omega)}:\, j\le k\bigr),
        \qquad
        \mcF^k
        :=
        \sigma\!\bigl(\omega\mapsto\mcV_{\theta^j(\omega)}:\, j\ge k\bigr).
    \end{equation}
    Their \(\rho\)- and \(\alpha\)-mixing coefficients are
    \begin{equation}
    \label{eq:rho_n__and_alpha_n_disorder}
        \rho(n):=\sup_{k\in\mbN}\rho_{\pr}(\mcF_k,\mcF^{k+n}),
        \qquad
        \alpha_{\pr}(n):=\sup_{k\in\mbN}\alpha_{\pr}(\mcF_k,\mcF^{k+n}),
    \end{equation}
    where, for sub-\(\sigma\)-algebras \(\mcA,\mcB\subseteq\mcF\),
    \[
        \rho_{\pr}(\mcA,\mcB)
        :=
        \sup\Bigl\{
            |\mathrm{Corr}_{\pr}(U,V)|:\,
            U\in L^2(\mcA),\ V\in L^2(\mcB),\ U\neq0,\ V\neq0
        \Bigr\},
    \]
    and
    \[
        \alpha_{\pr}(\mcA,\mcB)
        :=
        \sup\Bigl\{
            |\pr(A\cap B)-\pr(A)\pr(B)|:\,
            A\in\mcA,\ B\in\mcB
        \Bigr\}.
    \]
    It is known that
    \[
        \alpha_{\pr}(n)\le \tfrac14\,\rho(n)
    \]
    for all \(n\) \cite{bradley2007introduction}.

    We impose the following standing assumptions.

    \begin{assumption}
    \label{assumption1}
        The environment \((\Omega,\mcF,\pr,\theta)\) is an invertible, ergodic, \(\pr\)-preserving dynamical system.
    \end{assumption}
    
    \begin{assumption}
    \label{assumption_mixing_base}
        There exists \(\delta>0\) such that
        \[
            \sum_{n=1}^\infty \alpha_{\pr}(n)^{\delta/(2+\delta)}<\infty.
        \]
    \end{assumption}

    Next, we assume the existence of a dynamically stationary state together with annealed trace-norm forgetting for the non-selective cocycle. 
    For \(n\ge1\) and a measurable initial state \(\vartheta:\Omega\to\states\), define
    \[
        \beta_n(\vartheta)
        :=
        \mbE_{\pr}\!\left[
            \norm{
                \Phi^{(n)}_\omega(\vartheta(\omega))-\s(\theta^n\omega)
            }_1
        \right].
    \]

    \begin{assumption}
    \label{assumption:forgetting}
        There exist a dynamically stationary state \(\s:\Omega\to\states\) and a deterministic sequence \((r_n)_{n\ge1}\subseteq[0,\infty)\) such that
        \[
            \sum_{n=1}^\infty r_n^{\delta/(2+\delta)}<\infty,
        \]
        and for every measurable initial state \(\vartheta:\Omega\to\states\),
        \[
            \beta_n(\vartheta)\le r_n,
            \qquad n\in\mbN.
        \]
    \end{assumption}

    \begin{remark}
        A stronger, but often more natural, sufficient condition is
        \[
            \mbE_{\pr}\!\left[
                \sup_{\rho\in\states}
                \norm{
                    \Phi^{(n)}_\omega(\rho)-\s(\theta^n\omega)
                }_1
            \right]
            \le r_n,
            \qquad n\in\mbN.
        \]
        Also note that if $\s$ exists and satisfies the conditions in \Cref{assumption:forgetting}, then such $\s$ must necessarily be unique. 
    \end{remark}

    In \Cref{sec:examples} we give natural classes of environments for which \Cref{assumption1,assumption_mixing_base} hold. Our first result gives sufficient conditions for \Cref{assumption:forgetting}.
    This is proved in \Cref{section:proof_of_A} below. 

    \begin{restatable}[]{thm}{rhosexists}
    \label{thm:s_exists}
        Let \((\Omega,\mcF,\pr,\theta)\) satisfy \Cref{assumption1}, and assume:
        \begin{description}
        \hypertarget{esp}{}
            \item[(\esp)] For \(\pr\)-a.e.\ \(\omega\in\Omega\), the compositions \(\Phi^{(n)}_\omega\) are eventually strictly positive, i.e.\ there exists a random integer \(N_0(\omega)\in\mbN\) such that \(\Phi^{(n)}_\omega\) is strictly positive for all \(n\ge N_0(\omega)\).
        \end{description}
        Then there exists a unique dynamically stationary state \(\s:\Omega\to\states\), characterized by
        \[
            \Phi^{(n)}_\omega(\s(\omega))
            =
            \s(\theta^n\omega),
            \qquad n\in\mbN.
        \]
        Moreover, for every random initial state \(\vartheta:\Omega\to\states\),
        \[
            \mbQ_{\vartheta(\omega);\omega}\ll \mbQ_{\s(\omega);\omega}
            \qquad\text{for \(\pr\)-a.e.\ }\omega.
        \]
        If, in addition,
        \begin{description}
        \hypertarget{rhomix}{}
            \item[(\rhomix)] \(\rho_{\pr}(n)\to0\) as \(n\to\infty\),
        \end{description}
        then \Cref{assumption:forgetting} holds.
    \end{restatable}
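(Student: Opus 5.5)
Under (\esp) I will use Hilbert's projective metric on the cone of positive semidefinite matrices, following the Birkhoff contraction approach of \cite{qtlln,traj}. A strictly positive map \(\Phi\) sends \(\states\setminus\{0\}\) into the interior of the cone. Its image then has finite projective diameter \(\Delta(\Phi)\), so Birkhoff's theorem gives contraction by the factor \(\tanh(\Delta(\Phi)/4)<1\). Every CPTP map is non-expansive in the Hilbert metric.

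\emph{Existence.} Fix \(n\ge N_0(\theta^{-n}\omega)\) and consider the nested sets \(K_n(\omega):=\Phi^{(n)}_{\theta^{-n}\omega}(\states)\). These are images of the state space pulled back from the past and landing at time \(0\), i.e.\ at \(\omega\) in the paper's indexing. Their projective diameters tend to zero \(\pr\)-a.s. To see this, let \(N\) be fixed, large enough that \(\pr(N_0\le N)>1/2\). By ergodicity and invertibility of \(\theta\), infinitely many disjoint length-\(N\) blocks in the past are strictly positive with image diameter bounded by some constant \(D\) on a set of positive measure (restrict to \(\{\Delta\le D\}\)). Birkhoff's inequality then yields a product of infinitely many factors \(\tanh(D/4)\), which is zero. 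The intersection of the \(K_n\) is therefore a single point \(\s(\omega)\). Measurability follows since \(\s\) is a limit of \(\Phi^{(n)}_{\theta^{-n}\omega}(\mathbf 1/d)\). The cocycle identity gives \(\Phi^{(n)}_\omega(\s(\omega))=\s(\theta^n\omega)\) directly.

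\emph{Uniqueness.} Suppose \(\sigma\) is another dynamically stationary state. Then \(\sigma(\omega)=\Phi^{(n)}_{\theta^{-n}\omega}(\sigma(\theta^{-n}\omega))\in K_n(\omega)\) for all \(n\), so \(\sigma=\s\) a.s.

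\emph{Absolute continuity.} After time \(N_0(\omega)\), \(\s(\theta^{N_0}\omega)\) is full rank, so it dominates \(c\,\rho\) for every state \(\rho\), with some \(c(\omega)>0\). I would first argue that \(\s(\omega)\) itself is faithful. This holds because \(\s(\omega)=\Phi^{(n)}_{\theta^{-n}\omega}(\s(\theta^{-n}\omega))\) with \(n\ge N_0(\theta^{-n}\omega)\), and such \(n\) exist a.s.\ by Poincaré recurrence applied to \(\{N_0\le M\}\). Hence \(\vartheta(\omega)\le c(\omega)^{-1}\s(\omega)\), and positivity of the \(\mcT_a\) gives \(\mbQ_{\vartheta(\omega);\omega}(C)\le c^{-1}\mbQ_{\s(\omega);\omega}(C)\) on cylinders, hence on all of \(\Sigma\).

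\emph{Forgetting under \rhomix.} I need \(\mbE\sup_\rho\|\Phi^{(n)}_\omega(\rho)-\s(\theta^n\omega)\|_1\le r_n\) with \(\sum r_n^{\delta/(2+\delta)}<\infty\), so a geometric rate is the target. The trace distance is controlled by the Hilbert diameter of \(\Phi^{(n)}_\omega(\states)\), via \(\|x-y\|_1\le 2\tanh(d_H/2)\) or similar, together with the bound \(\le 2\). Split \([1,n]\) into \(m\approx n/(2N)\) blocks of length \(N\), separated by gaps of length \(N\). Let \(X_j\) be the indicator that block \(j\) has image diameter \(\le D\). The diameter after \(n\) steps is at most \(D\cdot\tanh(D/4)^{\#\{j:X_j=1\}-1}\). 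It then remains to show \(\pr(\#\{j: X_j=1\}<pm/2)\) decays geometrically, where \(p=\pr(X=1)\). This large-deviation bound for a \(\rho\)-mixing sequence is the main obstacle. My first attempt would use Bradley's moment inequality for \(\rho\)-mixing sums. Choosing the gap so that \(\rho\) is small gives \(\mbE e^{-tS_m}\le (1+\epsilon)^m(\mbE e^{-tX})^m\) by iterating \(|\mathrm{Cov}(U,V)|\le\rho\|U\|_2\|V\|_2\) on products of exponentials. If that fails, I would settle for the polynomial rate from Bradley's \(L^q\) bounds for \(\rho\)-mixing sums, which still makes \(\sum r_n^{\delta/(2+\delta)}\) finite if the exponent is large enough. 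Either way, combining the deterministic contraction with this tail bound gives \(r_n\).
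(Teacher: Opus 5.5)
Your proof is correct in outline. For existence and uniqueness, however, it takes a different and more self-contained route than the paper. The paper imports existence, and the a.s.\ faithfulness of \(\s\), from \cite{MS22}. It then proves uniqueness through the reducibility theory of \cite{ES24}: using (\esp), Poincar\'e recurrence and ergodicity, it shows that every nonzero reducing projection equals \(\mbI_d\) a.s., and then applies \cite[Theorem~1(c)]{ES24}.

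You instead build \(\s(\omega)\) as the single point of the nested pullback sets \(K_n(\omega)=\Phi^{(n)}_{\theta^{-n}\omega}(\states)\), using Birkhoff contraction along infinitely many disjoint good blocks in the past. This essentially re-derives the argument behind \cite{MS22}, and it makes uniqueness a one-liner: any dynamically stationary \(\sigma\) satisfies \(\sigma(\omega)\in K_n(\omega)\) for every \(n\), a.s. That is shorter and avoids the reducing-projection machinery. The paper's route, in exchange, ties uniqueness to a structural criterion (a unique minimal reducing projection) that remains meaningful when (\esp) is weakened. Your absolute-continuity argument is the same as the paper's: faithfulness of \(\s\), then \(\vartheta\le\eta(\s)^{-1}\s\), then positivity on cylinders or the POVM. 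Deriving faithfulness via recurrence to \(\{N_0\le M\}\) is fine.

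For the forgetting step the paper proves nothing new. It cites \cite[Lemma~4.6]{MPS} and \cite[Lemma~4.5]{asym} for \(\mbE_{\pr}[\cnum{\Phi^{(n)}}]\le C_p n^{-p}\) for every \(p\), which is exactly your block/large-deviation scheme. Two remarks on your version.

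First, your exponential-moment iteration does not close. The covariance bound produces the error \(\rho\,\norm{e^{-tS_{j-1}}}_2\norm{e^{-tX_j}}_2\). But \(\norm{e^{-tS_{j-1}}}_2\) is not comparable to \(\mbE e^{-tS_{j-1}}\); already for i.i.d.\ blocks their ratio grows exponentially in \(j\). So the error cannot be absorbed as \(\epsilon\,\mbE e^{-tS_{j-1}}\,\mbE e^{-tX_j}\) uniformly in \(j\), and the bound \((1+\epsilon)^m(\mbE e^{-tX})^m\) does not follow. With \(\rho(n)\to0\) at no prescribed rate, you should not aim for a geometric rate at all.

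Second, a geometric rate is not needed. \Cref{assumption:forgetting} only asks for \(\sum r_n^{\delta/(2+\delta)}<\infty\), and since the theorem does not fix \(\delta\), what you need is polynomial decay of \emph{every} order. Your fallback delivers exactly this:
\begin{itemize}
\item Use \(q\)-th moment bounds that hold under \(\rho(n)\to0\) with no rate. Shao's maximal inequality, for instance, gives \(\mbE|S_m-pm|^q\le C_{q,\epsilon}m^{q/2+\epsilon}\) for your bounded stationary block indicators, whose \(\rho\)-coefficients are dominated by those of the instrument process.
\item Markov's inequality then gives \(\pr(S_m<pm/2)\le C m^{-q/2+\epsilon}\).
\item The deterministic contraction term is geometric in \(m\), so taking \(q\) large finishes.
\end{itemize}
So the fallback, not the first attempt, should be your argument.
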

    
    \begin{remark}
        Under \Cref{assumption1}, the eventual strict positivity condition can be replaced by the following two hypotheses; see \cite[Lemma~2.1]{MS22}. 
        \begin{enumerate}[leftmargin=2cm]
            \item[(\esp\,(a))] 
                There exists \(N_1\in\mbN\) such that $\pr\{\omega:\Phi^{(N_1)}_\omega \text{ is strictly positive}\}>0$.
            \item[(\esp\,(b))] 
                $\pr\{\omega:\ker{\Phi_{1;\omega}}\cap\states=\emptyset\}=1
                =
                \pr\{\omega:\ker{\Phi_{1;\omega}\adj}\cap\states=\emptyset\}$. 
        \end{enumerate}
    \end{remark}
    
    \begin{remark}
        We also note that \Cref{prop:group-action-forgetting}, in \Cref{sec:examples_walk_type}, provides instrument-level sufficient conditions ensuring \Cref{assumption:forgetting}.
        We also provide several examples in \Cref{sec:examples} that verify the hypotheses needed for \Cref{assumption:forgetting} directly, even though \((\esp)\) is not assumed and examples that satisfy \Cref{assumption:forgetting} without satisfying (\esp). 
    \end{remark}
       
    Our next result is an annealed CLT for the pattern-counting process started from the dynamically stationary state $\s$. 
    The proof is given in \Cref{section:proof_of_B}.

    \begin{restatable}[Annealed CLT for \(\s\)]{thm}{annealedclt}
    \label{thm:aclt}
        Suppose \Cref{assumption1,assumption_mixing_base,assumption:forgetting} hold, and let \(\s\) be the unique dynamically stationary state. Fix \(m\in\mbN\) and \(\pmb b\in\mcA^m\), and define
        \[
            \mu_{\pmb b}
            :=
            \mbE_{\overline\mbQ_{\s}}\!\left[\delta N_1^{\pmb b}\right]
            =
            \int_\Omega \mbQ_{\s(\omega);\omega}(C_{\pmb b})\,\dee\pr(\omega).
        \]
        Then the covariance series
        \[
            \Sigma_{\pmb b}^2
            :=
            \mbE_{\overline\mbQ_{\s}}\!\left[(\delta N_1^{\pmb b}-\mu_{\pmb b})^2\right]
            +
            2\sum_{k=1}^\infty
            \mbE_{\overline\mbQ_{\s}}\!\left[
                (\delta N_1^{\pmb b}-\mu_{\pmb b})
                (\delta N_{1+k}^{\pmb b}-\mu_{\pmb b})
            \right]
        \]
        converges absolutely and defines a finite asymptotic variance \(\Sigma_{\pmb b}^2\in[0,\infty)\). Moreover,
        \[
            \frac{1}{\sqrt n}\sum_{k=1}^n
            \bigl(\delta N_k^{\pmb b}-\mu_{\pmb b}\bigr)
            \underset{\overline\mbQ_{\s}}{\overset{\mathrm d}{\longrightarrow}}
            \mcN(0,\Sigma_{\pmb b}^2),
            \qquad n\to\infty.
        \]
        If \(\Sigma_{\pmb b}^2=0\),then the normalized sums converge to \(0\) in \(L^2\), hence also in probability.
    \end{restatable}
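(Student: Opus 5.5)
The plan is to view $X_k:=\delta N_k^{\pmb b}-\mu_{\pmb b}=f\circ\tau^{k-1}$, with $f(\omega,\bar a)=\mathbf 1_{C_{\pmb b}}(\bar a)-\mu_{\pmb b}$, as a bounded stationary sequence under $\overline\mbQ_{\s}$. Stationarity and ergodicity are given by \Cref{thm:LLN} (part 1). We then apply a CLT for stationary sequences that requires only summable covariance-type decay. The most robust choice is Gordin's martingale-approximation criterion in the form of Dedecker--Rio / Maxwell--Woodroofe, or more concretely the Heyde/Gordin condition
\[
\sum_{n\ge1}\norm{\mbE_{\overline\mbQ_{\s}}[X_{n}\mid\mcG_0]}_{L^1}<\infty ,
\]
for bounded $X$. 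Here $\mcG_0$ is a suitable past filtration. For the projective version I take $\mcG_k:=\mcF_{k+m}\otimes\mcH_{k}$, the environment up to time $k+m$ joined with the outcomes up to time $k$.

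Under this filtration, conditional expectations are computable by the quantum-trajectory structure. Conditionally on $\mcG_0$, the outcome law from time $1$ on is the quenched law started from the random posterior state $\rho_0=\s(\omega)$ (more generally $\rho_k^\omega$ at time $k$). The environment after time $m$ is random, conditionally on $\mcF_m$. Hence
\[
\mbE[\mathbf 1_{C_{\pmb b}}\circ\varsigma^{n-1}\mid\mcG_k]
=\mbE_{\pr}\!\Big[\tr{\mcT_{\pmb b;\theta^{n-1}\omega}\big(\Phi^{(n-1-k)}_{k+1;\omega}(\rho^\omega_k)\big)}\,\Big|\,\mcF_{k+m}\Big],
\]
where $\mcT_{\pmb b;\cdot}$ denotes the composition of the $m$ instrument components. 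Comparing with the stationary value, the deviation splits into two terms. The first is the forgetting term
\[
\big|\tr{\mcT_{\pmb b}(\Phi^{(j)}(\rho_k)-\s(\theta^{n-1}\omega))}\big|\le\norm{\Phi^{(j)}(\rho_k)-\s}_1 ,
\]
whose $L^1(\overline\mbQ_{\s})$ norm is bounded by $r_{j}$ via \Cref{assumption:forgetting}. For this step I apply the assumption with the shifted environment and the measurable random initial state $\rho_k^\omega$, averaging over outcomes first; a measurable-selection/conditioning step is needed to reduce to a deterministic-in-outcome initial state. The second is the mixing term
\[
\mbE_{\pr}\big[g(\theta^{n-1}\omega)\mid\mcF_{k+m}\big]-\mbE_{\pr} g ,
\qquad g(\omega)=\tr{\mcT_{\pmb b;\omega}(\s(\omega))}.
\]
This $g$ is not measurable with respect to the future $\mcF^{n-1}$ alone, since $\s$ depends on the past. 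So I first approximate $\s(\theta^{n-1}\omega)$ by $\Phi^{(\ell)}_{n-\ell;\omega}(\sigma)$ with fixed $\sigma$, which is $\mcF^{n-\ell}$-measurable, at $L^1$ cost $r_\ell$. The remaining term, with $\ell\approx (n-k)/2$, is bounded by $\alpha_{\pr}(\cdot)$ through the standard covariance inequality for bounded variables.

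Choosing gaps $j\approx\ell\approx n/2$ gives
\[
\norm{\mbE[X_n\mid\mcG_0]}_{L^1}\lesssim r_{\lfloor n/2\rfloor}+\alpha_{\pr}(\lfloor n/2\rfloor-m),
\]
and summability follows from \Cref{assumption_mixing_base,assumption:forgetting}, since $x\le x^{\delta/(2+\delta)}$ for $x\le1$. Strictly, Gordin's theorem in $L^1$ needs an $L^2$ or $L^p$ version. The exponent $\delta/(2+\delta)$ is precisely what the Ibragimov/Doukhan covariance bound $|\mathrm{cov}|\le C\alpha^{\delta/(2+\delta)}\norm{\cdot}_{2+\delta}^2$ produces. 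So an alternative route is to prove directly that $|\mathrm{Cov}(X_1,X_{1+k})|\lesssim r_{k/2}^{\delta/(2+\delta)}+\alpha_{\pr}(k/2-m)^{\delta/(2+\delta)}$. That bound gives absolute convergence of $\Sigma^2_{\pmb b}$ and, together with the identity $n^{-1}\mathrm{Var}(S_n)\to\Sigma^2_{\pmb b}$, the $L^2$ statement in the degenerate case $\Sigma^2_{\pmb b}=0$. The Gaussian limit itself then follows from the Dedecker--Rio conditional-expectation criterion
\[
\sum_n \norm{X_0\,\mbE[X_n\mid\mcG_0]}_{L^1}<\infty ,
\]
which holds because $X_0$ is bounded and the $L^1$ bounds above are summable.

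I expect the main obstacle to be the conditioning step. The joint past $\mcG_k$ mixes environment and outcomes, and the posterior state $\rho_k^\omega$ depends on both. The annealed forgetting bound of \Cref{assumption:forgetting} is stated only for initial states depending on $\omega$. I would handle this by applying it under the quenched law for each fixed outcome prefix, treating $\rho_k^\omega(\bar a_{1:k})$ as a measurable function of $\theta^k\omega$ together with past environment data. This requires checking that the assumption's "every measurable $\vartheta$" is flexible enough, possibly after enlarging $\Omega$ to carry the prefix. Alternatively I would average over prefixes, using $\sum_{\bar a_{1:k}}\mcT_{\bar a_{1:k}}(\s)=\Phi^{(k)}(\s)$ and linearity of the quantity inside the trace norm, which yields a bound by $\beta_j$ applied to an $\omega$-measurable state.
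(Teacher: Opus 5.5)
Your strategy, a projective criterion of Dedecker--Rio type, differs from the paper's, which never conditions on the past. The paper bounds the $\alpha$-mixing coefficient of the skew product in \Cref{lemma:alpha_mix_skew}, using the same two-term split you use: a forgetting term bounded by $r_{n-1}$ and an environment term bounded by $8\alpha_{\pr}(n-m)+2r_m$. It then applies Ibragimov's CLT, \Cref{thm:CLT_alpha}. The step you flag as the main obstacle is left open, and it cannot be avoided.

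First, the filtration is not usable as written. $X_k$ depends on $A_k,\dots,A_{k+m-1}$, so it is not $\mcG_k=\mcF_{k+m}\otimes\mcH_k$-measurable. Indeed $\mcG_0$ carries no outcome information at all, which is why your forgetting term is identically zero at $k=0$. In their standard form, the Dedecker--Rio and Gordin--Heyde theorems require an invertible measure-preserving map, a stationary filtration $\mcM_i=T^{-i}\mcM_0$, and $X_0$ that is $\mcM_0$-measurable. Here $\tau$ is not invertible and $\tau^{-1}\mcG_k\subsetneq\mcG_{k+1}$. So you must pass to the two-sided stationary extension, where $\mcM_0$ contains the whole outcome past up to time $m-1$. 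Then $\mbE[X_n\mid\mcM_0]$ has to be obtained as an $L^1$-limit of conditional expectations given finite outcome pasts of length $N\to\infty$. This needs the forgetting estimate uniformly in the prefix length, and neither of your suggestions gives it.

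Averaging over prefixes fails because the trace norm is convex, not linear. Since $\sum_w p_w(\omega)\rho^w_k(\omega)=\Phi^{(k)}_\omega(\s(\omega))=\s(\theta^k\omega)$, the norm of the average is identically $0$. It says nothing about $\sum_w p_w\norm{\Phi^{(j)}_{\theta^k\omega}(\rho^w_k)-\s(\theta^{k+j}\omega)}_1$, which is what the $L^1(\overline\mbQ_{\s})$-norm requires. Applying \Cref{assumption:forgetting} prefix by prefix and summing costs a factor $|\mcA|^k$, which is useless as $k$ or $N$ grows.

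The missing idea is to bound the $p_w$-weighted average by the maximum over $w\in\mcA^k$. A measurable argmax $w^\ast(\omega)$ then gives an $\mcF$-measurable random state $\rho^{w^\ast(\omega)}_k(\omega)$. \Cref{assumption:forgetting} holds for every measurable initial state, with no adaptedness required, so shifting by $\theta^k$ gives the bound $r_j$ uniformly in $k$.

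An equivalent fix is the paper's device. Use $\norm{\mbE[X_n\mid\mcM]}_{L^1}\le 2\sup_{G\in\mcM}|\mbE[X_n\mathbf 1_G]|$, valid since $X_n$ is centered. For a past event $G=\bigsqcup_w E_w\times[w]$, form the single conditioned state proportional to $\sum_w\mathbf 1_{E_w}(\omega)\mcT^{(k)}_{w;\omega}(\s(\omega))$, exactly as $\rho_{A,\s}$ is built in the proof of \Cref{lemma:alpha_mix_skew}.

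That inequality also shows your projective quantity is itself an $\alpha$-type coefficient. So the repaired argument essentially reproduces the paper's mixing lemma and feeds it into a different CLT. Boundedness lets it run under $\sum_n(r_n+\alpha_{\pr}(n))<\infty$, but the same relaxation is available on the paper's route through the bounded-variable form of the $\alpha$-mixing CLT.

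Finally, your conditional-expectation formula needs the prefix probabilities to be past-measurable, that is, $\s$ must be a.s.\ $\mcF_0$-measurable. This holds because \Cref{assumption:forgetting} makes $\s(\omega)$ the $L^1(\pr)$-limit of $\Phi^{(n)}_{\theta^{-n}\omega}(\rho_\ast)$, but you should state it.
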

    

\subsubsection*{Annealed Outcome Laws and Admissibility}


    To transfer \Cref{thm:aclt} to more general random initial states, we compare the corresponding annealed outcome laws via coupling. 
    For probability measures \(\mu\) and \(\nu\), we write \(\Pi(\mu,\nu)\) for the set of all couplings of \(\mu\) and \(\nu\).
    For a random initial state \(\vartheta:\Omega\to\states\), define its annealed outcome law on \((\mcA^\mbN,\Sigma)\) by
    \begin{equation}
    \label{eq:annealed-outcome-law}
        \overline\mbQ^{\mathrm{out}}_{\vartheta}
        :=
        \overline\mbQ_{\vartheta}\circ\pi_{\mcA^\mbN}^{-1},
        \qquad
        \overline\mbQ^{\mathrm{out}}_{\vartheta}(E)
        =
        \int_\Omega \mbQ_{\vartheta(\omega);\omega}(E)\,\dee\pr(\omega),
        \quad E\in\Sigma,
    \end{equation}
    where \(\pi_{\mcA^\mbN}(\omega,\bar a)=\bar a\). Since \(\delta N_k^{\pmb b}\) depends only on the outcome sequence, convergence in distribution under \(\overline\mbQ_{\s}\) is equivalent to convergence under \(\overline\mbQ_{\s}^{\mathrm{out}}\).

    \begin{dfn}[Admissible initial state relative to a reference law]
    \label{dfn:admissible_state}
        Fix \(m\in\mbN\) and \(\pmb b\in\mcA^m\). Let \(\vartheta,\eta:\Omega\to\states\) be random initial states. We say that \(\vartheta\) is \emph{admissible for \(\pmb b\) relative to \(\eta\)} if there exists
        \[
            \widehat\mbQ\in\Pi\!\bigl(
                \overline\mbQ^{\mathrm{out}}_{\vartheta},
                \overline\mbQ^{\mathrm{out}}_{\eta}
            \bigr)
        \]
        such that, writing \((\bar A,\bar B)\) for the canonical coordinate processes on \(\mcA^\mbN\times\mcA^\mbN\),
        \begin{equation}
        \label{eq:admissible_relative}
            \frac{1}{\sqrt n}\,
            \mbE_{\widehat\mbQ}\!\left|
                \sum_{k=1}^n
                \Bigl(
                    \delta N_k^{\pmb b}(\bar B)-\delta N_k^{\pmb b}(\bar A)
                \Bigr)
            \right|
            \xrightarrow[n\to\infty]{}0.
        \end{equation}
    \end{dfn}

    This notion is formulated purely at the level of annealed outcome laws and is exactly what is needed to transfer the CLT from the stationary initial state to a general one.
    
    \begin{restatable}{bigprop}{thmfullannealed}
    \label{thm:clt_final}
        Suppose \Cref{assumption1,assumption_mixing_base,assumption:forgetting} hold. Let \(m\in\mbN\), \(\pmb b\in\mcA^m\), and let \(\vartheta:\Omega\to\states\) be admissible for \(\pmb b\) relative to \(\s\) in the sense of \Cref{dfn:admissible_state}. Then
        \[
            \frac{1}{\sqrt n}
            \sum_{k=1}^n
            \bigl(\delta N_k^{\pmb b}-\mu_{\pmb b}\bigr)
            \underset{\overline\mbQ_{\vartheta}}{\overset{\mathrm d}{\longrightarrow}}
            \mcN(0,\Sigma_{\pmb b}^2),
            \qquad n\to\infty,
        \]
        where \(\Sigma_{\pmb b}^2\) is the same variance as in \Cref{thm:aclt}. Equivalently, the same convergence holds under the annealed outcome law \(\overline\mbQ^{\mathrm{out}}_{\vartheta}\).
    \end{restatable}

    The proof of this transfer result is given in \Cref{section:proof_of_C}. 
    

\subsubsection*{Sufficient Conditions for Universal Admissibility}


    We now give instrument-level sufficient conditions, in the perfect-measurement regime, under which every random initial state is admissible for every pattern. 
    Thus, throughout this subsection, we assume that for each \(a\in\mcA\),
    \[
        \mcT_{a;\omega}(\,\cdot\,)
        =
        V_{a;\omega}(\,\cdot\,)V_{a;\omega}\adj 
        \qquad\text{\(\pr\)-a.s.}
    \]
    Informally, the condition below requires an orthonormal basis \((e_i)_{i=1}^d\) such that each Kraus operator \(V_{a;\omega}\) acts on basis vectors by sending them to scalar multiples of basis vectors, with target label independent of \(\omega\), and in such a way that images of distinct basis vectors remain orthogonal. Moreover, after a fixed number \(L\) of steps, any two such label evolutions can be coupled so that the terminal labels agree with probability at least \(\varepsilon>0\), uniformly in the starting time and disorder realization.
    
    Formally:
    
    \condtag{A}
    \begin{condition}
    \label{condition:monomial}
        Let \((e_i)_{i=1}^d\) be an orthonormal basis of \(\mcH\). Assume:
        \begin{enumerate}[leftmargin=1.3cm]
            \item[(A.1)] \textbf{Basis-state preserving structure.}
                For each \(a\in\mcA\), there exists a deterministic map
                \[
                    f_a:\{1,\dots,d\}\to\{1,\dots,d\}
                \]
                such that for \(\pr\)-a.e.\ \(\omega\in\Omega\) and every \(i\in\{1,\dots,d\}\),
                \[
                    V_{a;\omega}e_i \in \mathbb C\,e_{f_a(i)},
                \]
                and whenever \(i\neq j\),
                \[
                    \ip{V_{a;\omega}e_i,}{V_{a;\omega}e_j} = 0.
                \]
    
            \item[(A.2)] \textbf{Block mergeability.}
                There exist \(L\in\mbN\), \(\varepsilon>0\), and, for each \((i,j)\in\{1,\dots,d\}^2\), an \(\mcF\)-measurable map
                \[
                    \omega\longmapsto \kappa_{\omega;i,j}\in \mcP(\mcA^L\times\mcA^L)
                \]
                such that for \(\pr\)-a.e.\ \(\omega\), all \(i,j\in\{1,\dots,d\}\),
                \[
                    \kappa_{\omega;i,j}
                    \in
                    \Pi\!\Bigl(
                        \mbQ_{\rho^{(i)};\omega}\circ A_{1:L}^{-1},
                        \mbQ_{\rho^{(j)};\omega}\circ A_{1:L}^{-1}
                    \Bigr),
                \]
                and
                \[
                    \kappa_{\omega;i,j}
                    \Bigl(
                        \{(u,v)\in\mcA^L\times\mcA^L:\ f_u(i)=f_v(j)\}
                    \Bigr)
                    \ge \varepsilon,
                \]
                where for a word \(u=(u_1,\dots,u_L)\in\mcA^L\), we take \(    f_u:=f_{u_L}\circ\cdots\circ f_{u_1}\).     
        \end{enumerate}
    \end{condition}

    Combining \Cref{thm:clt_final} with the admissibility result for \Cref{condition:monomial}, we obtain the following universal CLT.
    
    \begin{restatable}[Transfer CLT for perfect measurements]{thm}{universalcltA}
    \label{thm:universal-clt-A}
        Suppose \Cref{assumption1,assumption_mixing_base,assumption:forgetting,condition:monomial} hold, and let \(\s\) be the dynamically stationary state from \Cref{assumption:forgetting}. Then for every \(m\in\mbN\), every \(\pmb b\in\mcA^m\), and every random initial state \(\vartheta:\Omega\to\states\),
        \[
            \frac{1}{\sqrt n}
            \sum_{k=1}^n
            \bigl(\delta N_k^{\pmb b}-\mu_{\pmb b}\bigr)
            \underset{\overline\mbQ_{\vartheta}}{\overset{\mathrm d}{\longrightarrow}}
            \mcN(0,\Sigma_{\pmb b}^2),
            \qquad n\to\infty,
        \]
        where \(\mu_{\pmb b}\) and \(\Sigma_{\pmb b}^2\) are as in \Cref{thm:aclt}.
    \end{restatable}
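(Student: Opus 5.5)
The plan is to deduce \Cref{thm:universal-clt-A} from \Cref{thm:clt_final}. By that proposition it suffices to show that, under \Cref{condition:monomial}, every random initial state \(\vartheta\) is admissible for every \(\pmb b\) relative to \(\s\). The centering \(\mu_{\pmb b}\) and variance \(\Sigma_{\pmb b}^2\) then come unchanged from \Cref{thm:aclt}. I will build the coupling required in \Cref{dfn:admissible_state} quenched, i.e.\ for each fixed \(\omega\), and then average over \(\pr\).

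\textbf{Step 1: reduction to a finite-label hidden Markov structure.} For a word \(u\in\mcA^n\), write \(V_{u;\omega}:=V_{u_n;\theta^n\omega}\cdots V_{u_1;\theta\omega}\). By (A.1), \(V_{u;\omega}e_i=c_u(i)\,e_{f_u(i)}\) for some scalars \(c_u(i)\), and the images of distinct basis vectors stay orthogonal. Iterating, \(V_{u;\omega}\adj V_{u;\omega}\) is diagonal in \((e_i)\). Hence, for any state \(\rho\),
\[
    \mbQ_{\rho;\omega}(C_u)=\tr{V_{u;\omega}\rho V_{u;\omega}\adj}=\sum_i \rho_{ii}\,|c_u(i)|^2 ,
\]
so that
\[
    \mbQ_{\rho;\omega}=\sum_i \ip{e_i}{\rho e_i}\,\mbQ_{\rho^{(i)};\omega},
    \qquad \rho^{(i)}:=\dyad{e_i}.
\]
Moreover, started from \(\rho^{(i)}\), the posterior after observing \(u\) is exactly \(\rho^{(f_u(i))}\). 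Consequently the conditional law of the future record is \(\mbQ_{\rho^{(f_u(i))};\theta^{n}\omega}\), shifted appropriately. Thus the record is generated by a deterministic finite-label process \(X_n=f_{A_{1:n}}(X_0)\), with \(X_0\) drawn from the diagonal of the initial state.

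\textbf{Step 2: the quenched coupling.} Fix \(\omega\). Draw \(X_0\sim(\vartheta(\omega)_{ii})_i\) and \(Y_0\sim(\s(\omega)_{ii})_i\) independently. Then proceed block by block. On block \(k\), i.e.\ times \(kL+1,\dots,(k+1)L\), if the current labels differ, say \((X,Y)=(i,j)\), sample the pair of \(L\)-words from \(\kappa_{\theta^{kL}\omega;i,j}\) and update the labels through \(f_u\) and \(f_v\). Once the labels agree, sample a single common continuation from \(\mbQ_{\rho^{(X)};\theta^{kL}\omega}\) and use it for both records. Since \(\theta\) preserves \(\pr\), the a.e.\ statement in (A.2) holds simultaneously at all \(\theta^{kL}\omega\) for \(\pr\)-a.e.\ \(\omega\). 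By Step 1 and the Markov property of the labels, each marginal reproduces \(\mbQ_{\vartheta(\omega);\omega}\), respectively \(\mbQ_{\s(\omega);\omega}\).

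\textbf{Step 3: the coupling time is geometric.} Let \(T\) be the first block end at which \(X=Y\). By (A.2), on every block the labels merge with conditional probability at least \(\varepsilon\) given the past. Hence \(\widehat\mbQ_\omega(T>kL)\le(1-\varepsilon)^k\), and so \(\mbE[T]\le L/\varepsilon\) uniformly in \(\omega\). After \(T\) the two records coincide, and \(\delta N_k^{\pmb b}\) depends only on coordinates \(k,\dots,k+m-1\). Therefore
\[
    \Bigl|\sum_{k=1}^n\bigl(\delta N_k^{\pmb b}(\bar B)-\delta N_k^{\pmb b}(\bar A)\bigr)\Bigr|\le T .
\]
Set \(\widehat\mbQ:=\int\widehat\mbQ_\omega\,\dee\pr(\omega)\). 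This is a coupling of \(\overline\mbQ^{\mathrm{out}}_{\vartheta}\) and \(\overline\mbQ^{\mathrm{out}}_{\s}\), and the left side of \eqref{eq:admissible_relative} is at most \((L/\varepsilon)/\sqrt n\to0\).

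\textbf{Main obstacle.} I expect the delicate points to be two. The first is verifying rigorously that the glued block construction has the correct marginals, which rests on the mixture and Markov identities of Step 1. The second is checking that \(\omega\mapsto\widehat\mbQ_\omega\) is a measurable kernel, so that the averaged measure \(\widehat\mbQ\) is well defined. For measurability I would build \(\widehat\mbQ_\omega\) as an Ionescu--Tulcea product of the measurable kernels \(\kappa\) and \(\omega\mapsto\mbQ_{\rho;\omega}\), using \Cref{lem:Q-random-initial-state}.
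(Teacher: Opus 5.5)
Your proposal is correct and follows essentially the paper's route. The theorem is deduced from \Cref{thm:clt_final} and \Cref{thm:aclt} once \Cref{prop:block-mergeability-admissible} gives admissibility of every state, and that proposition is proved exactly as you outline: the mixture decomposition of $\mbQ_{\rho;\omega}$ over the basis states $\rho^{(i)}$, Ionescu--Tulcea gluing of the block kernels $\kappa_{\theta^{kL}\omega;i,j}$, and a geometric bound on the merging time giving $\mbE[T_{\mathrm{out}}]\le L/\varepsilon+1$. The differences are inessential: the paper couples the initial labels maximally rather than independently, continues after merging block by block with the diagonal self-coupling rather than drawing the whole continuation at once, and puts $\omega$ into the initial coordinate of the Ionescu--Tulcea construction rather than averaging quenched couplings.
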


    While \Cref{condition:monomial}(A.1) is often straightforward to verify for a given family of Kraus operators, the block-mergeability condition \Cref{condition:monomial}(A.2) is generally less transparent. 
    The next proposition gives a convenient sufficient criterion for \Cref{condition:monomial}(A.2) in terms of a uniform overlap condition for the induced terminal-label laws.

    \begin{restatable}{bigprop}{propatwo}
    \label{prop:a2-sufficient}
        Assume \Cref{condition:monomial}(A.1). Suppose there exist \(L\in\mbN\) and \(\varepsilon>0\) such that, for \(\pr\)-a.e.\ \(\omega\in\Omega\) and all \(i,j\in\{1,\dots,d\}\),
        \[
            \sum_{k=1}^d
            \min\!\bigl\{
                \overline P^{(L)}_\omega(i,k),\,
                \overline P^{(L)}_\omega(j,k)
            \bigr\}
            \ge \varepsilon,
        \]
        where
        \[
            P^{(L)}_{\omega,i}
            :=
            \mbQ_{\rho^{(i)};\omega}\circ A_{1:L}^{-1}
            \in \mcP(\mcA^L),
            \qquad
            \rho^{(i)}:=\ket{e_i}\bra{e_i},
        \]
        and
        \[
            \overline P^{(L)}_\omega(i,k)
            :=
            \sum_{u\in\mcA^L:\,f_u(i)=k}
            P^{(L)}_{\omega,i}(u),
            \qquad
            k\in\{1,\dots,d\}.
        \]
        Then \Cref{condition:monomial}(A.2) holds with the same \(L\) and \(\varepsilon\).
    \end{restatable}  

    See \Cref{section:admissible_discuss} for the proof of \Cref{thm:universal-clt-A} and \Cref{prop:a2-sufficient} above. 

    \medskip
    The \Cref{condition:monomial}(A.1) also suggests how to construct examples satisfying the universal CLT in \Cref{thm:universal-clt-A}. 
    We discuss this route further in \Cref{sec:examples_walk_type}, see \Cref{prop:group-action-forgetting} and the examples that follow.


\subsection{Related Literature}
\label{section:lit}


    The literature on quantum trajectories naturally separates into a continuous-time theory and a discrete-time repeated-measurement theory. 
    For continuous-time quantum trajectories, one may consult the general stochastic Schr\"odinger equation and continuous-measurement framework developed in \cite{Dav76,Gis84,Car93,Hol01,BGM04,BG09,WM09}. 
    Discrete-time repeated interactions and repeated measurements provide a mathematically precise route to these continuous-time models; see, for example, \cite{AP06,Pel08,BBB13,NP09}. 
    Within the discrete-time finite-outcome setting, early ergodic results for measurement records and trajectories, that is, for the posterior-state process, appear in \cite{kummerer2003ergodic,KM04}. 
    
    For homogeneous repeated measurements, fluctuation and invariant-measure results were developed further in \cite{Attal_2014,BFPP19,BFP23,BPS24,Benoist_2025}. 
    Closely related path-space questions for repeated measurements were also studied from the viewpoint of entropy production in \cite{BJPP18,Benoist_2021}, while repeated quantum non-demolition measurements and their continuous-time limits were analyzed in \cite{BBB13}. 
    More recently, imperfect-measurement versions of the trajectory theory were developed in \cite{tristant_2025}. 
    
    The present work belongs to the disordered branch of the subject. 
    Section~1.2 of \cite{qtlln} places this model within the broader literature on disordered open quantum dynamical systems and repeated interactions, including \cite{BJM08,PP09,Bru+09,BJM10,NP12,Bur+13,BJM14,MS21,BJP22,MS22,PS23,Raqu_pas_2025,ES24,traj}. 
    Among these works, \cite{asym} introduced the specific model of disordered quantum trajectories considered in \cite{qtlln}. 
    Our contribution differs in emphasis from the invariant-measure theory of the posterior-state process: we focus instead on fluctuation theory for finite block counts in the measurement record under disorder.
    
    More precisely, our theorem complements the law of large numbers established in \cite{qtlln}. It may be viewed as a disordered analogue, at the level of the measurement record, of the central limit theorems obtained in the homogeneous setting in \cite{Attal_2014}.
    
    For comparison, one may also approach limit theorems in random environments using techniques from classical random dynamical systems, such as those in \cite{Kifer_1998,HAFOUTA_2019,Hafouta_2023}. 
    However, these frameworks typically impose assumptions that are either not naturally compatible with the perfect-measurement regime or not formulated directly in terms of the instrument sequence itself. 
    We also add that techniques in \Cite{YH26,HW25} can potentially be used to obtain quenched CLT-type results, but the assumptions required seem to be incompatible with the perfect measurement regime. 


\section{Examples}
\label{sec:examples}


    In this section, we provide examples of perfect measurements for which \Cref{thm:aclt} holds and examples where \Cref{thm:aclt} is then transported to a universal CLT \Cref{thm:universal-clt-A} via verifying one of the Conditions~\ref{condition:monomial}.
    The proofs of all stated claims in this section are provided in Appendix~\ref{appen:proof_of_examples}.
    
    \smallskip
    Before we produce genuinely interesting examples, we start with a toy model. 
    
    \smallskip
    Let \((\Omega,\mcF,\pr,\theta)\) be any invertible ergodic base system, and let \(d\ge2\).
    Set \(\mcH=\mbC^d\), let \((\ket{i})_{i=1}^{d}\) be the computational basis, and define
    \[
        \rho^{(i)}:=\ket{i}\bra{i},
        \qquad i=1,\dots,d.
    \]
    Let the outcome alphabet be
    \[
        \mcA:=\{1,\dots,d\}\times\{1,\dots,d\}.
    \]
    For \(a=(k,\ell)\in\mcA\) and \(\omega\in\Omega\), define
    \[
        V_{a;\omega}
        :=
        \frac{1}{\sqrt d}\,\ket{k}\bra{\ell}.
    \]
    Thus, the instrument is deterministic, i.e.\ independent of \(\omega\).
    Since the constructed instrument $\mcV_{a;\omega}$ with the given Kraus operator is deterministic, we have that \Cref{assumption_mixing_base} holds automatically.
    
    A direct calculation shall yield us that 
    \[
        \sum_{a\in\mcA}V_{a;\omega}\adj V_{a;\omega}
        =
        \sum_{\ell=1}^d\sum_{k=1}^d \frac{1}{d}\,\ket{\ell}\bra{\ell}
        =
        \sum_{\ell=1}^d \ket{\ell}\bra{\ell}
        =
        \mbI_d,
    \]
    so this defines a perfect measurement with $d^2$ many Kraus operators. 

    The associated non-selective channel is
    \[
        \Phi_\omega(\rho)
        =
        \sum_{a\in\mcA}V_{a;\omega}\rho V_{a;\omega}\adj 
        =
        \frac{\tr{\rho}}{d}\,\mbI_d.
    \]
    So the unique dynamically stationary state is
    \[
        \s(\omega)\equiv \frac{\mbI_d}{d}.
    \]
    Moreover, \(\Phi_\omega\) is strictly positive for every \(\omega\), hence \((\esp)\) holds with \(N_0(\omega)=1\). Since
    \[
        \Phi_\omega^{(n)}(\rho)=\frac{\mbI_d}{d}
        \qquad\text{for all }n\ge1\text{ and }\rho\in\states,
    \]
    \Cref{assumption:forgetting} also holds with \(r_n=0\) for all \(n\ge1\).

    We next verify \Cref{condition:monomial}. For \(a=(k,\ell)\in\mcA\), define
    \[
        f_a(i):=k,
        \qquad i=1,\dots,d.
    \]
    Then for every \(i\in\{1,\dots,d\}\),
    \[
        V_{(k,\ell);\omega}\rho^{(i)}V_{(k,\ell);\omega}\adj 
        =
        \frac{1}{d}\,\delta_{\ell,i}\,\rho^{(k)}
        \in \mbR_+\,\rho^{(f_a(i))}.
    \]
    Hence \Cref{condition:monomial}(A.1) holds.

    To verify \Cref{condition:monomial}(A.2), we apply \Cref{prop:a2-sufficient} with \(L=1\).
    For \(i\in\{1,\dots,d\}\), the one-step quenched outcome law is
    \[
        P^{(n,1)}_{\omega,i}\bigl((k,\ell)\bigr)
        =
        \tr{
            V_{(k,\ell);\theta^n\omega}\rho^{(i)}V_{(k,\ell);\theta^n\omega}\adj
        }
        =
        \frac{1}{d}\,\delta_{\ell,i}.
    \]
    Therefore, the terminal-label law is
    \[
        \overline P^{(n,1)}_\omega(i,m)
        =
        \sum_{(k,\ell)\in\mcA:\,f_{(k,\ell)}(i)=m}
        P^{(n,1)}_{\omega,i}\bigl((k,\ell)\bigr).
    \]
    Since \(f_{(k,\ell)}(i)=k\), this becomes
    \[
        \overline P^{(n,1)}_\omega(i,m)
        =
        \sum_{\ell=1}^d P^{(n,1)}_{\omega,i}\bigl((m,\ell)\bigr)
        =
        P^{(n,1)}_{\omega,i}\bigl((m,i)\bigr)
        =
        \frac{1}{d}.
    \]
    Thus \(\overline P^{(n,1)}_\omega(i,\cdot)\) is the uniform law on \(\{1,\dots,d\}\), independent of \(i\), and so for all \(i,j\),
    \[
        \sum_{m=1}^d
        \min\!\bigl\{
            \overline P^{(n,1)}_\omega(i,m),\,
            \overline P^{(n,1)}_\omega(j,m)
        \bigr\}
        =
        \sum_{m=1}^d \frac{1}{d}
        =1.
    \]
    Hence \Cref{prop:a2-sufficient} applies with \(L=1\) and \(\varepsilon=1\), and therefore \Cref{condition:monomial}(A.2) holds.
    Consequently, this example satisfies \((\esp)\), \Cref{assumption:forgetting}, and \Cref{condition:monomial}. 
    In particular, whenever \Cref{assumption1} holds for the chosen base system, the conclusions of \Cref{thm:s_exists}, \Cref{thm:aclt}, and \Cref{thm:universal-clt-A} apply.


\subsection{Non-disordered Examples}


    We now present examples for which the instruments are independent of $\omega$, i.e. deterministic.
    These examples automatically verify \Cref{assumption_mixing_base}, because $\mcF^k,\mcF_k$ are all trivial sub $\sigma$-algebras.
    We prove the stated claims below in \Cref{appen:proof_of_examples}. 
    
    \begin{example}
    \label{eg:projective-probe-noisy-label}
        Let $\mcH=\mbC^d$ with orthonormal basis $(e_i)_{i=1}^d$, and let
        \[
            \rho^{(i)}:=|e_i\rangle\langle e_i|,
            \qquad i\in\{1,\dots,d\}.
        \]
        Fix a parameter $\alpha\in(0,1)$, and let the outcome alphabet be
        \[
            \mcA:=\{1,\dots,d\}\times\{1,\dots,d\}.
        \]
        For $a=(k,\ell)\in\mcA$ and $\omega\in\Omega$, define
        \[
            q_{k\ell}:=(1-\alpha)\delta_{k\ell}+\frac{\alpha}{d},
            \qquad
            V_{(k,\ell);\omega}:=\sqrt{q_{k\ell}}\,|e_k\rangle\langle e_\ell|.
        \]
        Thus, the instrument is independent of $\omega$.
        It is easy to verify that
        \[
                \sum_{a\in\mcA}V_{a;\omega}\adj V_{a;\omega}=I_d,
        \]
        and therefore the instrument with the given Kraus operators forms a perfect measurement model. 
        A direct computation yields that
        \[
            \Phi_\omega(\rho)
            =
            (1-\alpha)\sum_{i=1}^d \langle e_i,\rho e_i\rangle\,\rho^{(i)}
            +
            \frac{\alpha\,\tr{\rho}}{d}\,I_d 
            = (1-\alpha)D(\rho) +  \frac{\alpha\,\tr{\rho}}{d}\,I_d ,
        \]
        and
        \[
            \Phi^{(n)}(\vartheta(\omega)) = (1-\alpha)^n D(\vartheta(\omega))) + (1-(1-\alpha)^n)\dfrac{1}{d}\mbI_d.
        \]
        See \Cref{appen_proof_of_1} for details. 
        This model satisfies the following conditions (see \Cref{appen_proof_of_1}):
        \begin{itemize}
            \item[(1.1)] \esp \ with $N_0(\omega) = 1$ 
            \item[(1.2)] Satisfies the conditions in \Cref{assumption:forgetting} with $\beta(\vartheta) \le 2(1-\alpha)^n$, independently of $\vartheta$ and $\s = \tfrac1d\mbI_d$.
            \item[(1.3)] Verifies \Cref{condition:monomial}(A.1) and \Cref{condition:monomial}(A.2) with $L=1$ and $\varepsilon = \alpha>0$. 
        \end{itemize}
        We call this model the \emph{projective-probe model with uniform label noise}, because the underlying probe interaction distinguishes the computational-basis label, while the reported output label is a uniformly noisy version of that basis label through the transition probabilities $ q_{k\ell}:=(1-\alpha)\delta_{k\ell}+\frac{\alpha}{d}$. 
    \end{example}
    

    \begin{example}
    \label{eg:projective-probe-no-esp}
        The previous example can be modified so that we may not necessarily have  \((\esp)\) yet \Cref{assumption:forgetting} holds. 
        Let \((\Omega,\mcF,\pr,\theta)\) be any invertible ergodic base system, let \(d=3\), and let \(\mcH=\mbC^3\) with orthonormal basis \((e_1,e_2,e_3)\). Set
        \[
            \rho^{(i)}:=\ket{e_i}\bra{e_i},
            \qquad i=1,2,3.
        \]
        Let the outcome alphabet be
        \[
            \mcA:=\{1,2,3\}\times\{1,2,3\}.
        \]
        Fix \(\alpha\in(0,1)\), and for \(a=(k,\ell)\in\mcA\), define
        \[
            q_{k\ell}
            :=
            \begin{cases}
                1, & \ell=1,\ k=1,\\[0.3em]
                \alpha, & \ell\neq1,\ k=1,\\[0.3em]
                1-\alpha, & \ell\neq1,\ k=\ell,\\[0.3em]
                0, & \text{otherwise},
            \end{cases}
        \]
        and
        \[
            V_{a;\omega}
            :=
            \sqrt{q_{k\ell}}\,\ket{e_k}\bra{e_\ell}.
        \]
        Again, the instrument is deterministic, so \Cref{assumption_mixing_base} holds automatically.
    
        The associated non-selective channel is
        \[
            \Phi_\omega(\rho)
            =
            \sum_{k,\ell=1}^3 q_{k\ell}\,\braket{e_\ell,\rho e_\ell}\,\rho^{(k)}
            =
            (1-\alpha)D(\rho)+\alpha\,\tr{\rho}\,\rho^{(1)},
        \]
        where
        \[
            D(\rho):=\sum_{i=1}^3 \braket{e_i,\rho e_i}\,\rho^{(i)}.
        \]
        In particular, for \(\rho\in\states\),
        \[
            \Phi_\omega(\rho)
            =
            (1-\alpha)D(\rho)+\alpha\,\rho^{(1)}.
        \]
        Since \(D(\rho^{(1)})=\rho^{(1)}\), one obtains by induction that
        \[
            \Phi_\omega^{(n)}(\rho)
            =
            (1-\alpha)^nD(\rho)+\bigl(1-(1-\alpha)^n\bigr)\rho^{(1)},
            \qquad n\ge1.
        \]
        Hence the constant state
        \[
            \s(\omega):=\rho^{(1)}
        \]
        is dynamically stationary, and for every measurable initial state \(\vartheta:\Omega\to\states\),
        \[
            \norm{\Phi_\omega^{(n)}(\vartheta(\omega))-\rho^{(1)}}_1
            \le
            2(1-\alpha)^n.
        \]
        Thus \Cref{assumption:forgetting} holds with \(r_n:=2(1-\alpha)^n\).
    
        On the other hand,
        \[
            \Phi_\omega(\rho^{(1)})=\rho^{(1)},
        \]
        and \(\rho^{(1)}\) is not strictly positive. Therefore no iterate of \(\Phi_\omega\) can be strictly positive, and \((\esp)\) fails.
    
        The verification of \Cref{condition:monomial} is exactly analogous to \Cref{eg:projective-probe-noisy-label} and hence omitted. 
        Consequently, this example satisfies \Cref{assumption:forgetting} and \Cref{condition:monomial}, but not \((\esp)\). In particular, whenever \Cref{assumption1} holds for the chosen base system, the conclusions of \Cref{thm:aclt} and \Cref{thm:universal-clt-A} apply.
        We call this model the \emph{projective-probe model with an absorbing state} because the basis label $1$ is absorbing under the induced classical dynamics, while every other basis label eventually drifts toward that distinguished state. 
    \end{example}


    \begin{example}
    \label{eg:projective-probe-no-esp-general-d}
        The previous non-\((\esp)\) example extends verbatim to arbitrary dimension.
        Let \(d\ge2\), let \(\mcH=\mbC^d\) with orthonormal basis \((e_1,\dots,e_d)\), and set
        \[
            \rho^{(i)}:=\ket{e_i}\bra{e_i},
            \qquad i=1,\dots,d.
        \]
        Let the outcome alphabet be
        \[
            \mcA:=\{1,\dots,d\}\times\{1,\dots,d\}.
        \]
        Fix \(\alpha\in(0,1)\), and for \(a=(k,\ell)\in\mcA\), define
        \[
            q_{k\ell}
            :=
            \begin{cases}
                1, & \ell=1,\ k=1,\\[0.3em]
                \alpha, & \ell\neq1,\ k=1,\\[0.3em]
                1-\alpha, & \ell\neq1,\ k=\ell,\\[0.3em]
                0, & \text{otherwise},
            \end{cases}
        \]
        and
        \[
            V_{a;\omega}:=\sqrt{q_{k\ell}}\,\ket{e_k}\bra{e_\ell}.
        \]
        The instrument is deterministic, hence \Cref{assumption_mixing_base} holds automatically.
        Moreover, exactly as in \Cref{eg:projective-probe-no-esp}, one checks that \((\esp)\) fails, while \Cref{assumption:forgetting} holds with dynamically stationary state
        \[
            \s(\omega):=\rho^{(1)}
        \]
        and rate \(r_n=2(1-\alpha)^n\). Furthermore, \Cref{condition:monomial}(A.1) and \Cref{condition:monomial}(A.2) hold with
        \[
            f_{(k,\ell)}(i)=k,
            \qquad L=1,
            \qquad \varepsilon=\alpha.
        \]
    \end{example}
    

    \begin{example}
    \label{eg:projective-probe-N0-2}
        Let \((\Omega,\mcF,\pr,\theta)\) be any invertible ergodic base system, let \(d=3\), and let \(\mcH=\mbC^3\) with orthonormal basis \((e_1,e_2,e_3)\). Set
        \[
            \rho^{(i)}:=\ket{e_i}\bra{e_i},
            \qquad i=1,2,3.
        \]
        Let the outcome alphabet be
        \[
            \mcA:=\{1,2,3\}\times\{1,2,3\}.
        \]
        For \(a=(k,\ell)\in\mcA\), define
        \[
            q_{k\ell}
            :=
            \begin{cases}
                \frac12, & k=\ell \text{ or } k\equiv \ell+1 \pmod 3,\\[0.3em]
                0, & \text{otherwise},
            \end{cases}
        \]
        and
        \[
            V_{a;\omega}
            :=
            \sqrt{q_{k\ell}}\,\ket{e_k}\bra{e_\ell}.
        \]
        Thus the instrument is deterministic, i.e.\ independent of \(\omega\). In particular, \Cref{assumption_mixing_base} holds automatically.
        Furthermore, this satisfies
        \begin{itemize}
            \item[(3.1)] \esp \, condition with $N_0(\omega)=2$
            \item[(3.2)] Conditions in \Cref{assumption:forgetting} with $\s = \tfrac 13 \mbI_3$
            \item[(3.3)] Conditions \Cref{condition:monomial}(A.1) and \Cref{condition:monomial}(A.2) with $L=1$ and $\varepsilon=\tfrac12$.
        \end{itemize}
        See \Cref{appen_proof_of_3} for the proofs. 
        We call this \emph{cyclic keep-switch projective-probe model} because, at the level of basis labels, each step either keeps the current label or moves it to the next one in the cycle, each with probability $1/2$. 
    \end{example}
    

\subsection{Environment Examples}


    Now we proceed to construct genuinely disordered examples. 
    For this, we shall first consider environments for which \Cref{assumption1} and \Cref{assumption_mixing_base} are natural. 

    \paragraph{I.i.d. Instrument Environments.}
        Let $\mcA$ be finite and $\dim(\mcH)=d<\infty$.
        Let $\mathsf{CP}_{\le}$ denote the set of all completely positive trace--nonincreasing linear maps on $\mcB(\mcH)$, equipped with the Borel $\sigma$--algebra induced by a finite-dimensional parametrization (for example via Choi matrices). Define
        \[
            \mathsf{Instr}
            :=
            \Bigl\{
            (\Phi_a)_{a\in\mcA}\in (\mathsf{CP}_{\le})^{\mcA}
            : \sum_{a\in\mcA}\Phi_a \text{ is trace--preserving}
            \Bigr\},
        \]
        equipped with the induced subspace Borel $\sigma$--algebra.
        Then $(\mathsf{Instr},\mcB(\mathsf{Instr}))$ is a standard Borel space.
        
        Fix a probability law $\mu\in\mcP(\mathsf{Instr})$ and let
        \[
            \Omega:=\mathsf{Instr}^{\mbZ},
            \qquad
            \mcF:=\mcB(\mathsf{Instr})^{\otimes\mbZ},
            \qquad
            \pr:=\mu^{\otimes\mbZ},
        \]
        and define the left shift
        \[
            \theta\bigl((\omega_n)_{n\in\mbZ}\bigr):=(\omega_{n+1})_{n\in\mbZ}.
        \]
        Then $(\Omega,\mcF,\pr,\theta)$ is invertible, measure-preserving, and ergodic.
        In particular, Assumption~\ref{assumption1} holds.
        Let
        \[
            X_n(\omega):=\omega_n,
            \qquad n\in\mbZ, \quad\text{ for }\qquad \omega=(\omega_k)_{k\in\mbZ}\in\Omega
        \]
        denote the coordinate process.
        Then $(X_n)_{n\in\mbZ}$ is i.i.d.\ with common law $\mu$, and the instrument applied at time $n$ is
        \[
            \mcV_{\theta^n\omega}=X_n(\omega)=\omega_n.
        \]
        Hence
        \[
            \alpha_{\pr}(n)=\rho_{\pr}(n)=0,\qquad n\ge1,
        \]
        so Assumption~\ref{assumption_mixing_base} holds.
        
        For each fixed $a\in\mcA$, let
        \[
            e_a:\mathsf{Instr}\to\mathsf{CP}_{\le},
            \qquad
            e_a\bigl((\Phi_b)_{b\in\mcA}\bigr):=\Phi_a.
        \]
        Then the $a$-th component process
        \[
            \mcT_{a;\theta^n\omega}
            =
            e_a(X_n(\omega)),
            \qquad n\in\mbZ,
        \]
        is itself i.i.d.\ with common law
        \[
            \mu_a:=\mu\circ e_a^{-1}.
        \]
        In particular, for each fixed $a\in\mcA$, the random maps $\mcT_{a;\theta^n\omega}$ all have the same distribution.

    \paragraph{Markov-driven Instrument Environments.}
        Let $(X_n)_{n\in\mbZ}$ be a strictly stationary Markov chain on a measurable state space $(E,\mcE)$ with stationary law $\pi$ and transition kernel $P$.
        Let $(\Omega,\mcF,\pr)$ be the canonical path space
        \[
            \Omega:=E^{\mbZ},
            \qquad
            \mcF:=\mcE^{\otimes\mbZ},
        \]
        with $\pr$ the stationary Markov measure induced by $(\pi,P)$, and let $\theta$ be the left shift.
        Then $(\Omega,\mcF,\pr,\theta)$ is invertible and measure-preserving.
        If the chain is ergodic, then $(\Omega,\mcF,\pr,\theta)$ is ergodic, hence Assumption~\ref{assumption1} holds.
        
        Assume that the instrument at time $n$ is a measurable function of the Markov symbol, i.e.\ there exists a measurable map
        \[
            F:E\to\mathsf{Instr}
        \]  
        such that
        \[
            \mcV_{\theta^n\omega}=F(X_n(\omega)),
        \]
        where $X_n$ is the coordinate projections as above. 
        For each fixed $a\in\mcA$, let
        \[
            e_a:\mathsf{Instr}\to\mathsf{CP}_{\le},
            \qquad
            e_a\bigl((\Phi_b)_{b\in\mcA}\bigr):=\Phi_a.
        \]
        Then
        \[
            \mcT_{a;\theta^n\omega}
            =
            e_a\!\bigl(F(X_n(\omega))\bigr),
        \]
        so for each fixed $a$, the component process $(\mcT_{a;\theta^n\omega})_{n\in\mbZ}$ is a measurable image of the stationary process $(X_n)_{n\in\mbZ}$ and is therefore strictly stationary. 
        In particular, for each fixed $a\in\mcA$, the one-dimensional law of $\mcT_{a;\theta^n\omega}$ does not depend on $n$.
        
        Assumption~\ref{assumption_mixing_base} is not automatic for general ergodic Markov chains.
        A convenient sufficient condition is geometric mixing of the driving chain; for example, one may assume
        \begin{equation}
        \label{eq:geom-rho}
            \exists\,C<\infty,\ r\in(0,1)\ \text{ such that }\ 
            \rho_{\pr}^{X}(n)\le C r^n,\qquad n\ge1,
        \end{equation}
        where $\rho_{\pr}^{X}(n)$ denotes the maximal correlation coefficient computed with respect to the past/future $\sigma$--algebras generated by $(X_k)$.
        Under \eqref{eq:geom-rho}, Assumption~\ref{assumption_mixing_base} holds for the instrument process as well.
        Indeed, the instrument-level past/future $\sigma$--algebras are sub--$\sigma$--algebras of those generated by $(X_n)$, and therefore, by monotonicity of $\alpha_{\pr}(\cdot,\cdot)$ and $\rho_{\pr}(\cdot,\cdot)$, the instrument-level mixing coefficients are bounded above by those of the driving chain.
        In particular, if $E$ is finite and $P$ is irreducible and aperiodic, then the stationary Markov chain $(X_n)$ is mixing with exponential rate (see, e.g., \cite[Theorem 3.2]{Bradley_2005}), so the above geometric mixing condition holds for the driving chain, and hence the induced instrument environment satisfies Assumption~\ref{assumption_mixing_base} and also satisfies that $\rho_{\pr}(n)\to 0$ as $n\to\infty$.

    \paragraph{Deterministic Instruments}
        Deterministic instruments are included in the present framework as a degenerate special case: one may either take a one-point base system \((\Omega,\mcF,\pr,\theta)\), or more generally any invertible ergodic base and assume that \(\omega\mapsto\mcV_\omega\) is constant. In the latter case, the disorder-mixing coefficients satisfy
        \[
            \alpha_{\pr}(n)=\rho_{\pr}(n)=0,
            \qquad n\ge1,
        \]
        so \Cref{assumption_mixing_base} holds trivially.

    \medskip
    These classes are only representative sources of environments for which \Cref{assumption1} and \Cref{assumption_mixing_base} are natural. 
    They are not intended to be exhaustive: many other disordered environments, beyond the i.i.d., deterministic, and Markov-driven settings considered here, also fit our framework whenever the corresponding ergodicity and mixing conditions can be verified.

\subsection{Disordered Instrument Examples}


    In the following examples, we always assume that they are defined on an invertible, ergodic, $\pr$-preserving system $(\Omega,\mcF,\pr,\theta)$ so that \Cref{assumption1} is satisfied. 
    Furthermore, by taking the environment described in the preceding subsection, we also assume that \Cref{assumption_mixing_base} also holds and that $\rho_{\pr}(n)\to 0$, as $n\to\infty$. 

    \begin{example}
        Several of the deterministic projective-probe examples above admit direct disordered analogues. 
        \begin{enumerate}[label=(\alph*)]
            \item 
                \textbf{Random version of \Cref{eg:projective-probe-noisy-label}.}
                Let \(\alpha:\Omega\to(0,1)\) be measurable and assume that for some \(\alpha_\ast>0\),
                \[
                    \alpha(\omega)\in[\alpha_\ast,1]
                    \qquad
                    \text{for \(\pr\)-a.e.\ }\omega.
                \]
                Define
                \[
                    q_{k\ell}(\omega)
                    :=
                    (1-\alpha(\omega))\delta_{k\ell}
                    +\frac{\alpha(\omega)}{d}.
                \]
                Then \((\esp)\) holds with \(N_0(\omega)=1\), the dynamically stationary state is
                \[
                    \s(\omega)\equiv \frac1d I_d,
                \]
                \Cref{assumption:forgetting} holds with exponential rate bounded by \(2(1-\alpha_\ast)^n\), and \Cref{condition:monomial} holds with \(L=1\) and \(\varepsilon=\alpha_\ast\).
                \item \textbf{Random version of the \Cref{eg:projective-probe-no-esp}.}
                Let \(d\ge2\), let \(\alpha:\Omega\to(0,1)\) be measurable, and assume that for some \(\alpha_\ast\in(0,\tfrac12]\),
                \[
                    \alpha(\omega)\in[\alpha_\ast,1-\alpha_\ast]
                    \qquad
                    \text{for \(\pr\)-a.e.\ }\omega.
                \]
                Define
                \[
                    q_{k\ell}(\omega)
                    :=
                    \begin{cases}
                        1, & \ell=1,\ k=1,\\[0.3em]
                        \alpha(\omega), & \ell\neq1,\ k=1,\\[0.3em]
                        1-\alpha(\omega), & \ell\neq1,\ k=\ell,\\[0.3em]
                        0, & \text{otherwise}.
                    \end{cases}
                \]
                Then the constant state
                \[
                    \s(\omega)\equiv \rho^{(1)}
                \]
                is dynamically stationary, \Cref{assumption:forgetting} holds with exponential rate bounded by \(2(1-\alpha_\ast)^n\), and \Cref{condition:monomial} holds with \(L=1\) and \(\varepsilon=\alpha_\ast\). 
                However, \((\esp)\) fails, since
                \[
                    \Phi_\omega(\rho^{(1)})=\rho^{(1)}
                    \qquad
                    \text{for all }\omega\in\Omega.
                \]
                
            \item 
                \textbf{Random version of \Cref{eg:projective-probe-N0-2}.}
                Let \(d=3\), let \(a:\Omega\to(0,1)\) be measurable, and assume that for some \(\eta\in(0,\tfrac12]\),
                \[
                    a(\omega)\in[\eta,1-\eta]
                    \qquad
                    \text{for \(\pr\)-a.e.\ }\omega.
                \]
                Define
                \[
                    q_{k\ell}(\omega)
                    :=
                    \begin{cases}
                        a(\omega), & k=\ell,\\[0.3em]
                        1-a(\omega), & k\equiv \ell+1 \pmod 3,\\[0.3em]
                        0, & \text{otherwise}.
                    \end{cases}
                \]
                Then the constant state
                \[
                    \s(\omega)\equiv \frac13 I_3
                \]
                is dynamically stationary, \Cref{assumption:forgetting} holds with exponential rate depending only on \(\eta\), \Cref{condition:monomial} holds with \(L=1\) and \(\varepsilon=\eta\), and \((\esp)\) holds with \(N_0(\omega)=2\).
        \end{enumerate}
    \end{example}
    
    The proofs in the models above are analogous to those of the corresponding deterministic models and are omitted.
    Thus, these examples satisfy \Cref{thm:universal-clt-A}. 


    \begin{example}[Disordered amplitude-damping measurement]
    \label{example:amplitude_damping_channel}
        Let \(\mcH=\mbC^2\) with computational basis \(\{\ket{0},\ket{1}\}\), and let \(\mcA=\{0,1\}\).
        Assume that \(\gamma:\Omega\to(0,1]\) is measurable and that there exists \(\gamma_\ast>0\) such that
        \[
            \gamma(\omega)\ge \gamma_\ast
            \qquad
            \text{for \(\pr\)-almost every \(\omega\in\Omega\).}
        \]
        Define Kraus operators by
        \[
            V_{0;\omega}
            :=
            \ket{0}\bra{0}
            +
            \sqrt{1-\gamma(\omega)}\,\ket{1}\bra{1},
            \qquad
            V_{1;\omega}
            :=
            \sqrt{\gamma(\omega)}\,\ket{0}\bra{1}.
        \]
        Then, the unique dynamically stationary state is $\s{\omega}=\ket{0}\bra{0}$. Moreover, \Cref{assumption:forgetting} holds.
        This example verifies \Cref{condition:monomial} and therefore \Cref{thm:universal-clt-A} applies (see \Cref{appen_proof_of_6}). 
    \end{example}


    \begin{example}[Disordered generalized amplitude-damping measurement]
    \label{eg:perfect-GAD-monomial-ESP}
        Let $\mcH=\mbC^2$ and let $\mcA:=\{0,1,2,3\}$.
        Fix measurable functions $p,\gamma:\Omega\to(0,1)$ and assume there exists $\delta\in(0,\tfrac12)$ such that for $\pr$-a.e.\ $\omega$,
        \begin{equation}
        \label{eq:GAD-unif-bounds}
            p(\omega)\in[\delta,1-\delta],
            \qquad
            \gamma(\omega)\in[\delta,1-\delta].
        \end{equation}
        For each $\omega\in\Omega$, define Kraus operators
        \[
            \begin{aligned}
                K_{0;\omega}
                    &:=\sqrt{p(\omega)}
                    \begin{pmatrix}
                        1&0\\
                        0&\sqrt{1-\gamma(\omega)}
                    \end{pmatrix},
                &
                K_{1;\omega}
                    &:=\sqrt{p(\omega)}
                    \begin{pmatrix}
                        0&\sqrt{\gamma(\omega)}\\
                        0&0
                    \end{pmatrix},\\[2mm]
                K_{2;\omega}
                    &:=\sqrt{1-p(\omega)}
                    \begin{pmatrix}
                        \sqrt{1-\gamma(\omega)}&0\\
                        0&1
                    \end{pmatrix},
                &
                K_{3;\omega}
                    &:=\sqrt{1-p(\omega)}
                    \begin{pmatrix}
                        0&0\\
                        \sqrt{\gamma(\omega)}&0
                    \end{pmatrix}.
            \end{aligned}
        \]
        This example satisfies the condition (\esp) and therefore the unique dynamically stationary state $\s$ exists, and the system verifies \Cref{assumption:forgetting}, by \Cref{thm:s_exists}.
        Furthermore, \Cref{condition:monomial} holds and therefore \Cref{thm:universal-clt-A} holds. See \Cref{appen_proof_of_7} for proofs. 
    \end{example}


    \begin{example}[Disordered keep--switch instrument]
    \label{eg:disordered-keep-switch}
        Let $\mcH=\mbC^2$ with computational basis $\{\ket0,\ket1\}$ and let $\mcA:=\{K,S\}$.
        Let \(p:\Omega\to(0,1)\) be measurable, and assume that
        \[
            p(\omega)\neq \tfrac12
            \qquad
            \text{for \(\pr\)-a.e.\ }\omega.
        \]
        For each $\omega\in\Omega$, define Kraus operators
        \[
            V_{K,\omega}
            :=
            \begin{pmatrix}
                \sqrt{p(\omega)} & 0\\
                0 & \sqrt{1-p(\omega)}
            \end{pmatrix},
            \qquad
            V_{S,\omega}
            :=
            \begin{pmatrix}
                0 & \sqrt{p(\omega)}\\
                \sqrt{1-p(\omega)} & 0
            \end{pmatrix},
        \]
        and instrument components
        \[
            \mcT_{a;\omega}(X):=V_{a,\omega}\,X\,V_{a,\omega}\adj ,
            \qquad a\in\mcA.
        \]
        Then, under the standing assumptions for the examples, this model satisfies (\esp) and \Cref{condition:monomial}. 
        Consequently, the unique dynamically stationary state exists, \Cref{assumption:forgetting} holds, and the conclusions of \Cref{thm:s_exists}, \Cref{thm:aclt}, and \Cref{thm:universal-clt-A} apply.
        See \Cref{appen_proof_of_8}. 
    \end{example}

    \begin{remark}
        The (non-disordered) \emph{Keep--Switch} model is a canonical two-symbol positive matrix product (PMP)  instrument studied in \cite{Benoist_2021}.
        In that setting, the reported alphabet is $\mcA=\{K,S\}$ and the PMP generator is given by two $2\times2$ matrices $M_K,M_S$ depending on parameters $q_1,q_2\in(0,1)$ (with $r_i:=1-q_i$).
        In our setting, a \emph{disordered keep--switch} model is obtained by allowing the parameters
        (or equivalently the instrument) to vary along a stationary environment $\omega\mapsto\mcV_\omega$ (e.g.\ i.i.d.\ or ergodic/Markov modulation), yielding a non-autonomous instrument sequence $(\mcV_{\theta^n\omega})_{n\ge0}$ on the same two-symbol alphabet $\{K,S\}$.
    \end{remark}


    \begin{example}[Disordered complete basis-transition measurement]
    \label{eg:d-level-all-to-all}
        Let \(d\ge2\), let \(\mcH=\mbC^d\) with orthonormal basis
        \(\{e_1,\dots,e_d\}\), and let
        \[
            \rho^{(i)}:=|e_i\rangle\langle e_i|,
            \qquad
            i=1,\dots,d.
        \]
        Let
        \[
            \mcA:=\{a_{k,i}:1\le k,i\le d\}.
        \]
        Fix measurable functions
        \[
            r_{k,i}:\Omega\to(0,1),
            \qquad
            1\le k,i\le d,
        \]
        such that for \(\pr\)-a.e.\ \(\omega\in\Omega\),
        \[
            \sum_{k=1}^d r_{k,i}(\omega)=1,
            \qquad i=1,\dots,d,
        \]
        and assume there exists \(\delta>0\) such that
        \[
            r_{k,i}(\omega)\ge\delta,
            \qquad
            1\le k,i\le d,
        \]
        for \(\pr\)-a.e.\ \(\omega\in\Omega\).
    
        For each \(a_{k,i}\in\mcA\), define
        \[
            V_{a_{k,i};\omega}
            :=
            \sqrt{r_{k,i}(\omega)}\,|e_k\rangle\langle e_i|.
        \]
        Then consider the perfect measure given by these Kraus operators.
        We have that the system satisfies \(\esp\) with one-step strict positivity. 
        It can also be proven that  \Cref{condition:monomial} holds.
        Therefore \Cref{thm:s_exists}, \Cref{thm:aclt}, and \Cref{thm:universal-clt-A} apply.
        We prove these results in \Cref{appen_proof_of_9}. 
        We call this a disordered complete basis-transition measurement because each recorded outcome specifies a transition from an input basis label $i$ to an output basis label $k$, and every such basis-to-basis transition is allowed with environment-dependent weight.
    \end{example}


    \begin{example}[Disordered replacement measurements]
    \label{eg:projective-probe-random-reset}
        Assume \(d\ge2\), and let the environment be any admissible environment from the preceding subsection carrying a measurable symbol map
        \[
            r:\Omega\to\{1,\dots,d\}.
        \]
        Let \(\mcH=\mbC^d\) with orthonormal basis \((e_i)_{i=1}^d\), let
        \[
            \rho^{(i)}:=|e_i\rangle\langle e_i|,
            \qquad i=1,\dots,d,
        \]
        and let
        \[
            \mcA:=\{1,\dots,d\}\times\{1,\dots,d\}.
        \]
        For \(a=(k,\ell)\in\mcA\) and \(\omega\in\Omega\), define
        \[
            V_{(k,\ell);\omega}
            :=
            \mathbf 1_{\{k=r(\omega)\}}\,|e_k\rangle\langle e_\ell|.
        \]
        Then these Kraus operators define a disordered perfect measurement. Moreover, the model
        satisfies \Cref{assumption:forgetting} with deterministic rate \(r_n=0\), while \((\esp)\)
        fails. Furthermore, \Cref{condition:monomial}(A.1) holds with
        \[
            f_{(k,\ell)}(i)=k,
        \]
        and \Cref{condition:monomial}(A.2) holds with \(L=1\) and \(\varepsilon=1\).
        See \Cref{appen_proof_of_11} for details. 
    \end{example}


\subsection{Disordered Walk-type Measurements}
\label{sec:examples_walk_type}

    Condition \Cref{condition:monomial}(A.1) suggests a natural way to construct examples: for each outcome \(a\), the Kraus operator sends each basis state \(\rho^{(i)}\) to a scalar multiple of another basis state \(\rho^{(f_a(i))}\). 
    In particular, the induced dynamics on the basis labels become a time-inhomogeneous finite-state walk in the random environment. 

    Before presenting examples of this fashion, we need the following proposition, proof of which is carried out in \Cref{appen:proof_of_example_prop_group}. 

    \begin{restatable}{bigprop}{groupactionprop}
    \label{prop:group-action-forgetting}
        Let \((G,\cdot)\) be a finite group with \(|G|=d\ge2\), and let \(\mcH=\mbC^d\) with orthonormal basis
        \((e_g)_{g\in G}\). Set
        \[
            \rho^{(g)}:=|e_g\rangle\langle e_g|,
            \qquad g\in G.
        \]
        Let \(\mcA\) be a finite outcome alphabet, and let \(s:\mcA\to G\).
        For each \(g\in G\), let $(w_{a,g})_{a\in\mcA}$ be an \(\mcF\)-measurable family of weights, i.e.\ for each \(a\in\mcA\), we have that  $w_{a,g}:\Omega\to[0,1]$ is measurable, and for \(\pr\)-a.e.\ \(\omega\),
        \[
            \sum_{a\in\mcA} w_{a,g}(\omega)=1.
        \]
        Define
        \[
            V_{a;\omega}
            :=
            \sum_{g\in G}\sqrt{w_{a,g}(\omega)}\,|e_{s(a)g}\rangle\langle e_g|,
            \qquad a\in\mcA,
        \]
        and let \(T_\omega\in M_d(\mbR)\) be the induced label-transition matrix
        \[
            T_\omega e_g
            :=
            \sum_{a\in\mcA} w_{a,g}(\omega)\,e_{s(a)g},
            \qquad g\in G.
        \]
        Assume:
        \begin{enumerate}[leftmargin=1.3cm]
            \item[(F.1)] 
                There exist \(L\in\mbN\) and \(\varepsilon_0>0\) such that for \(\pr\)-a.e.\ \(\omega\) and all \(g,k\in G\),
                \[
                    \bigl(T_{\theta^{L}\omega}\cdots T_{\theta\omega}\bigr)_{k,g}\ge \varepsilon_0.
                \]
    
            \item[(F.2)] 
                There exists \(q\in(0,1)\) such that for \(\pr\)-a.e.\ \(\omega\), all \(g\in G\), and all \(h\in G\setminus\{e\}\),
                \[
                    \sum_{a\in\mcA}\sqrt{w_{a,g}(\omega)\,w_{a,gh}(\omega)}\le q.
                \]
        \end{enumerate}
        Then the corresponding instrument satisfies \Cref{assumption:forgetting}. 
        More precisely, there exists a measurable dynamically stationary state
        \[
            \s(\omega)=\sum_{g\in G}\pi_g(\omega)\rho^{(g)},
            \qquad \pi(\omega)\in\Delta_G :=\left\{ u=(u_g)_{g\in G}\in[0,1]^G:\sum_{g\in G}u_g=1\right\}.,
        \]
        such that for every measurable initial state \(\vartheta:\Omega\to\states\),
        \[
            \beta_n(\vartheta)\le r_n,
            \qquad
            r_n:=2\lambda^{\lfloor n/L\rfloor}+2d\,q^n,
            \qquad
            \lambda:=1-d\varepsilon_0\in(0,1).
        \]
        Moreover, \Cref{condition:monomial}(A.1) holds with
        \[
            f_a(g):=s(a)g,
            \qquad a\in\mcA,\ g\in G,
        \]
        and \Cref{condition:monomial}(A.2) holds with the same block length \(L\).
        In particular, the dynamically stationary state is unique, and the conclusions of
        \Cref{thm:aclt} and \Cref{thm:universal-clt-A} apply, whenever the system satisfies \Cref{assumption1} and \Cref{assumption_mixing_base}.
    \end{restatable}

    As direct applications of \Cref{prop:group-action-forgetting}, we obtain the following examples.

\begin{example}[Disordered generalized keep--switch measurement]
\label{eg:d-level-keep-switch-forgetting}
    Let \(d\ge2\), let \(\mcH=\mbC^d\) with orthonormal basis
    \(\{e_1,\dots,e_d\}\), and let
    \[
        \rho^{(i)}:=|e_i\rangle\langle e_i|,
        \qquad i=1,\dots,d.
    \]
    Let \(\mcA:=\{K,S\}\), and let
    \(\sigma:\{1,\dots,d\}\to\{1,\dots,d\}\) be the cyclic permutation
    \[
        \sigma(i)=i+1 \quad (1\le i\le d-1),
        \qquad
        \sigma(d)=1.
    \]
    Fix measurable functions
    \[
        p_i:\Omega\to(0,1),
        \qquad i=1,\dots,d,
    \]
    and assume there exist constants \(\alpha\in(0,\tfrac12)\) and \(\eta>0\) such that, for
    \(\pr\)-a.e.\ \(\omega\in\Omega\),
    \[
        p_i(\omega)\in[\alpha,1-\alpha],
        \qquad i=1,\dots,d,
    \]
    and
    \[
        |p_i(\omega)-p_j(\omega)|\ge \eta,
        \qquad i\neq j.
    \]
    Define Kraus operators by
    \[
        V_{K;\omega}
        :=
        \sum_{i=1}^d \sqrt{p_i(\omega)}\,|e_i\rangle\langle e_i|,
        \qquad
        V_{S;\omega}
        :=
        \sum_{i=1}^d \sqrt{1-p_i(\omega)}\,|e_{\sigma(i)}\rangle\langle e_i|.
    \]

    This is a special case of \Cref{prop:group-action-forgetting} with
    \[
        G=\mbZ/d\mbZ,
        \qquad
        s(K)=0,\ \ s(S)=1,
    \]
    and
    \[
        w_{K,i}(\omega)=p_i(\omega),
        \qquad
        w_{S,i}(\omega)=1-p_i(\omega).
    \]
    In this case one may take
    \[
        L=d-1,
        \qquad
        \varepsilon_0=\alpha^{d-1},
    \]
    and
    \[
        q=
        \max\Bigl\{
            \sqrt{uv}+\sqrt{(1-u)(1-v)}:
            u,v\in[\alpha,1-\alpha],\ |u-v|\ge\eta
        \Bigr\}\in(0,1).
    \]
    We call this a disordered generalized keep–switch measurement because each measurement outcome either keeps the current basis label unchanged or switches it to the next label in the cycle, with probabilities depending on the environment.
\end{example}

\begin{example}[Disordered lazy cyclic measurement]
\label{eg:d-level-lazy-rw}
    Let \(d\ge2\), let \(\mcH=\mbC^d\) with orthonormal basis
    \(\{e_1,\dots,e_d\}\), and let
    \[
        \rho^{(i)}:=|e_i\rangle\langle e_i|,
        \qquad i=1,\dots,d.
    \]
    Let \(\mcA:=\{-1,0,+1\}\), and let
    \(\sigma:\{1,\dots,d\}\to\{1,\dots,d\}\) be the cyclic permutation
    \[
        \sigma(i)=i+1 \quad (1\le i\le d-1),
        \qquad
        \sigma(d)=1.
    \]
    Fix measurable functions
    \[
        p_i^{-},\,p_i^{0},\,p_i^{+}:\Omega\to(0,1),
        \qquad i=1,\dots,d,
    \]
    such that for \(\pr\)-a.e.\ \(\omega\in\Omega\),
    \[
        p_i^{-}(\omega)+p_i^{0}(\omega)+p_i^{+}(\omega)=1,
        \qquad i=1,\dots,d,
    \]
    and assume there exist constants \(\alpha\in(0,\tfrac13)\) and \(q\in(0,1)\) such that
    for \(\pr\)-a.e.\ \(\omega\in\Omega\),
    \[
        p_i^a(\omega)\ge \alpha,
        \qquad
        i=1,\dots,d,\ \ a\in\{-1,0,+1\},
    \]
    and
    \[
        \sqrt{p_i^{-}(\omega)p_j^{-}(\omega)}
        +
        \sqrt{p_i^{0}(\omega)p_j^{0}(\omega)}
        +
        \sqrt{p_i^{+}(\omega)p_j^{+}(\omega)}
        \le q
        \qquad
        \text{for all }i\neq j.
    \]
    Define Kraus operators by
    \[
        V_{-1;\omega}
        :=
        \sum_{i=1}^d \sqrt{p_i^{-}(\omega)}\,|e_{\sigma^{-1}(i)}\rangle\langle e_i|,
    \]
    \[
        V_{0;\omega}
        :=
        \sum_{i=1}^d \sqrt{p_i^{0}(\omega)}\,|e_i\rangle\langle e_i|,
    \]
    \[
        V_{+1;\omega}
        :=
        \sum_{i=1}^d \sqrt{p_i^{+}(\omega)}\,|e_{\sigma(i)}\rangle\langle e_i|.
    \]

    This is a special case of \Cref{prop:group-action-forgetting} with
    \[
        G=\mbZ/d\mbZ,
        \qquad
        s(a)=a,\quad a\in\{-1,0,+1\},
    \]
    and
    \[
        w_{a,i}(\omega)=p_i^a(\omega).
    \]
    Since the identity move \(0\) is available at every step, every label can be reached from every
    other label in exactly \(d-1\) steps by using suitable \(+1\)-moves followed by \(0\)-moves.
    Hence one may take
    \[
        L=d-1,
        \qquad
        \varepsilon_0=\alpha^{d-1}.
    \]
    We call this a disordered lazy cyclic measurement because each outcome moves the basis label one step clockwise, one step counterclockwise, or leaves it unchanged, with environment-dependent probabilities.
\end{example}

\begin{example}[Disordered biased cyclic nearest-neighbor measurement]
\label{eg:d-level-biased-rw}
    Assume \(d\) is odd. Let \(d\ge3\), let \(\mcH=\mbC^d\) with orthonormal basis
    \(\{e_1,\dots,e_d\}\), and let
    \[
        \rho^{(i)}:=|e_i\rangle\langle e_i|,
        \qquad i=1,\dots,d.
    \]
    Let \(\mcA:=\{-1,+1\}\), and let
    \(\sigma:\{1,\dots,d\}\to\{1,\dots,d\}\) be the cyclic permutation
    \[
        \sigma(i)=i+1 \quad (1\le i\le d-1),
        \qquad
        \sigma(d)=1.
    \]
    Fix measurable functions
    \[
        p_i^{-},\,p_i^{+}:\Omega\to(0,1),
        \qquad i=1,\dots,d,
    \]
    such that for \(\pr\)-a.e.\ \(\omega\in\Omega\),
    \[
        p_i^{-}(\omega)+p_i^{+}(\omega)=1,
        \qquad i=1,\dots,d,
    \]
    and assume there exist constants \(\alpha\in(0,\tfrac12)\) and \(q\in(0,1)\) such that
    for \(\pr\)-a.e.\ \(\omega\in\Omega\),
    \[
        p_i^{-}(\omega),\,p_i^{+}(\omega)\ge \alpha,
        \qquad i=1,\dots,d,
    \]
    and
    \[
        \sqrt{p_i^{-}(\omega)p_j^{-}(\omega)}
        +
        \sqrt{p_i^{+}(\omega)p_j^{+}(\omega)}
        \le q
        \qquad
        \text{for all }i\neq j.
    \]
    Define Kraus operators by
    \[
        V_{-1;\omega}
        :=
        \sum_{i=1}^d \sqrt{p_i^{-}(\omega)}\,|e_{\sigma^{-1}(i)}\rangle\langle e_i|,
        \qquad
        V_{+1;\omega}
        :=
        \sum_{i=1}^d \sqrt{p_i^{+}(\omega)}\,|e_{\sigma(i)}\rangle\langle e_i|.
    \]

    This is a special case of \Cref{prop:group-action-forgetting} with
    \[
        G=\mbZ/d\mbZ,
        \qquad
        s(a)=a,\quad a\in\{-1,+1\},
    \]
    and
    \[
        w_{-1,i}(\omega)=p_i^{-}(\omega),
        \qquad
        w_{+1,i}(\omega)=p_i^{+}(\omega).
    \]
    Since \(d\) is odd, every label can be reached from every other label by a word of exactly \(d\) steps in the alphabet \(\{-1,+1\}\). 
    Hence, one may take
    \[
        L=d,
        \qquad
        \varepsilon_0=\alpha^d.
    \]
    We call this a disordered biased cyclic nearest-neighbor measurement because each outcome moves the basis label one step in one of the two cyclic directions, but the two directions need not be equally likely and are modulated by the environment
\end{example}

\begin{example}[Disordered finite-group action measurement]
\label{eg:finite-group-action}
    Let \(G\) be a finite group of cardinality \(d\ge2\), let
    \[
        \mcH=\mbC^d
    \]
    with orthonormal basis \((e_g)_{g\in G}\), and let
    \[
        \rho^{(g)}:=|e_g\rangle\langle e_g|,
        \qquad g\in G.
    \]
    Let the outcome alphabet be \(\mcA:=G\). Fix measurable functions
    \[
        \mu_{g,\omega}:G\to(0,1),
        \qquad g\in G,
    \]
    such that for \(\pr\)-a.e.\ \(\omega\in\Omega\),
    \[
        \sum_{a\in G}\mu_{g,\omega}(a)=1,
        \qquad g\in G,
    \]
    and assume there exist constants \(\alpha>0\) and \(q\in(0,1)\) such that for
    \(\pr\)-a.e.\ \(\omega\in\Omega\),
    \[
        \mu_{g,\omega}(a)\ge \alpha,
        \qquad g,a\in G,
    \]
    and
    \[
        \sum_{a\in G}
        \sqrt{\mu_{g,\omega}(a)\mu_{h,\omega}(a)}
        \le q
        \qquad
        \text{for all }g\neq h.
    \]
    Define Kraus operators by
    \[
        V_{a;\omega}
        :=
        \sum_{g\in G}\sqrt{\mu_{g,\omega}(a)}\,|e_{ag}\rangle\langle e_g|,
        \qquad a\in G.
    \]

    This is a special case of \Cref{prop:group-action-forgetting} with
    \[
        \mcA=G,
        \qquad
        s(a)=a,
        \qquad
        w_{a,g}(\omega)=\mu_{g,\omega}(a).
    \]
    Since every one-step transition has probability at least \(\alpha\), one may take
    \[
        L=1,
        \qquad
        \varepsilon_0=\alpha.
    \]
    We call this a disordered finite-group action measurement because each recorded outcome acts on the basis labels by left multiplication with a group element, while the environment determines the corresponding outcome probabilities.
\end{example}

\begin{example}[Disordered Cayley-graph measurement]
\label{eg:lazy-cayley-walk}
    Let \(G\) be a finite group of cardinality \(d\ge2\), let \(S\subset G\) be a finite
    generating set containing the identity \(e\), and let
    \[
        \mcH=\mbC^d
    \]
    with orthonormal basis \((e_g)_{g\in G}\). Let
    \[
        \rho^{(g)}:=|e_g\rangle\langle e_g|,
        \qquad g\in G,
    \]
    and let the outcome alphabet be \(\mcA:=S\). Fix measurable functions
    \[
        p_g^s:\Omega\to(0,1),
        \qquad g\in G,\ s\in S,
    \]
    such that for \(\pr\)-a.e.\ \(\omega\in\Omega\),
    \[
        \sum_{s\in S} p_g^s(\omega)=1,
        \qquad g\in G,
    \]
    and assume there exist constants \(\alpha>0\) and \(q\in(0,1)\) such that for
    \(\pr\)-a.e.\ \(\omega\in\Omega\),
    \[
        p_g^s(\omega)\ge \alpha,
        \qquad g\in G,\ s\in S,
    \]
    and
    \[
        \sum_{s\in S}\sqrt{p_g^s(\omega)p_h^s(\omega)}
        \le q
        \qquad
        \text{for all }g\neq h.
    \]
    Define Kraus operators by
    \[
        V_{s;\omega}
        :=
        \sum_{g\in G}\sqrt{p_g^s(\omega)}\,|e_{sg}\rangle\langle e_g|,
        \qquad s\in S.
    \]

    This is a special case of \Cref{prop:group-action-forgetting} with
    \[
        \mcA=S,
        \qquad
        s(a)=a,
        \qquad
        w_{a,g}(\omega)=p_g^a(\omega).
    \]
    If every element of \(G\) can be written as a word of length at most \(L_0\) in the alphabet
    \(S\) (such a finite $L_0$ always exists as $|G|$ is finite), then, because \(e\in S\), one may pad with identity moves and obtain a block length
    \[
        L=L_0,
    \]
    together with
    \[
        \varepsilon_0=\alpha^{L_0}.
    \]
    We call this a disordered Cayley-graph measurement because the possible measurement outcomes correspond to generators of the group, and each outcome moves the basis label along an edge of the associated Cayley graph, with environment-dependent weights.
\end{example}


\section{Preliminaries and Notation}
\label{sec:notation}


    In this section, we collect the notation and basic objects used throughout the article.
    We also record several conventions concerning probability spaces, product measures, mixing coefficients, and convergence in distribution.


\subsection{Matrices and Superoerators}
    
    
    Throughout, the finite-dimensional Hilbert space is $\mcH=\mbC^d$ for some $d\in\mbN$.
    We denote by $\mbM_d$ the set of $d\times d$ complex matrices and equip $\mbM_d$ with the trace norm $\|X\|_1=\tr{|X|}$, where $|X|=\sqrt{X^\dagger X}$.  
    The Hilbert–Schmidt inner product is
    \[
        \inner{A}{B}=\tr{A^\dagger B}, \qquad A,B\in\mbM_d.
    \]
    
    The state space is
    \[
        \states := \{\rho\in\mbM_d:\ \rho\ge0,\ \tr\rho=1\},\qquad
        \states^{\mathrm o} := \{\rho\in\states:\ \rho>0\}.
    \]
    We write $\partial\states=\states\setminus\states^{\mathrm o}$ for the boundary of rank–deficient states and denote by $\eta(\rho)$ the smallest eigenvalue of $\rho\in\states^{\mathrm o}$.
    
    Let $\mapspace$ denote the space of linear maps $\phi:\mbM_d\to\mbM_d$ equipped with the $1\!\to\!1$ operator norm
    \[
       \norm{\phi}_{1\to1}=\sup\{\norm{\phi(X)}_1:\ \norm{X}_1\le1\}.
    \]
    A superoperator $\phi$ is \emph{positive} if it preserves the PSD cone and \emph{strictly positive} if it maps every non-zero PSD matrix to a positive definite one.
    
    Throughout, we fix an arbitrary full–rank reference state $\rho_*\in\states$.
    The projective action of $\phi\in\mapspace$ on $\states$ is defined by
    \begin{equation}
    \label{eq:projective_action}
        \phi \proj X :=
           \begin{cases}
              \displaystyle \frac{\phi(X)}{\norm{\phi(X)}_1}, & \phi(X)\neq 0,\\[1.2ex]
              \rho_*, & \phi(X)=0.
           \end{cases}
    \end{equation}
    

\subsection{Contraction Coefficient}

    
    \begin{definition}[Contraction coefficient]
    \label{dfn:cnum}
        For $\phi\in\mapspace$ define
        \[
            \cnum{\phi}
            := \sup_{A,B\in\states}\mathrm d\!\left(\phi\proj A,\phi\proj B\right),
        \]
        where the projective metric on $\states$ is
        \[
            \mathrm d(A,B)
            = \frac{1-m(A,B)m(B,A)}{1+m(A,B)m(B,A)},
            \qquad
            m(A,B):=\sup\{\lambda\ge0:\ \lambda B\le A\}.
        \]
    \end{definition}
    
    The following facts about $\mathrm d$ and $\cnum{\phi}$ are used later without proof.
    
    \begin{prop}[{\cite[Lemma~3.5, Lemma~3.8]{MS22}}, {\cite[Lemma~4.4]{Raqu_pas_2025}}]
    \label{prop:d}
        Let $\rho,\delta\in\states$. Then:
        \begin{enumerate}
            \item $\frac12\norm{\rho-\delta}_1 \le \mathrm d(\rho,\delta)$.
            \item $\sup_{\rho,\delta\in\states}\mathrm d(\rho,\delta)=1$.
            \item If $\rho\in\states^{\mathrm o}$, then $\mathrm d(\rho,\delta)=1$ iff $\delta\in\partial\states$.
            \item On $\states^{\mathrm o}$ the $\norm{\cdot}_1$ topology agrees with the $\mathrm d$ topology.
        \end{enumerate}
    \end{prop}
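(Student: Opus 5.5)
The statement collects four elementary properties of the projective metric $\mathrm d$. I would prove all four directly from the definition of $m(A,B)$, rather than relying on the cited references. The basic observation is that for $A,B\in\states$ one has $0\le m(A,B)\le 1$. Indeed, $\lambda B\le A$ implies $\lambda\,\tr{B}\le\tr{A}$, i.e.\ $\lambda\le1$. Consequently $0\le m(A,B)m(B,A)\le1$, and therefore $0\le \mathrm d(A,B)\le 1$.

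\emph{Item 1.} Write $x=m(A,B)$ and $y=m(B,A)$. Because the set $\{\lambda: \lambda B\le A\}$ is closed, the supremum is attained, so $A-xB\ge0$. This operator has trace $1-x$, which gives
\[
\norm{A-B}_1\le\norm{A-xB}_1+(1-x)\norm{B}_1=2(1-x).
\]
By symmetry, $\tfrac12\norm{A-B}_1\le\min\{1-x,1-y\}$. Assume without loss of generality that $x\le y$, so the minimum equals $1-y$. It then suffices to show $(1-y)(1+xy)\le 1-xy$. For $y>0$ this is equivalent to $2x\le 1+xy$, and since $y\ge x$,
\[
1+xy-2x\ge 1+x^2-2x=(1-x)^2\ge0 .
\]
If $y=0$ then also $x=0$, and both sides of the required inequality equal $1$.

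\emph{Items 2 and 3.} The bound $\mathrm d\le1$ was noted above. Equality holds, for example, for two orthogonal pure states $A,B$: then $m(A,B)=0$, because no positive multiple of $B$ lies below $A$ on the range of $B$. Hence the supremum equals $1$.

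For item 3, let $\rho>0$ with smallest eigenvalue $\eta(\rho)>0$. Then $\rho\ge\eta(\rho)I\ge\eta(\rho)\delta$, so $m(\rho,\delta)\ge\eta(\rho)>0$. It follows that $\mathrm d(\rho,\delta)=1$ if and only if $m(\delta,\rho)=0$. I claim $m(\delta,\rho)>0$ holds if and only if $\delta$ is full rank:
\begin{itemize}
\item if $\delta\ge c\rho$ with $c>0$, then $\delta\ge c\,\eta(\rho)I>0$;
\item conversely, if $\delta>0$, then $\delta\ge\eta(\delta)I\ge\eta(\delta)\rho$.
\end{itemize}
Therefore $\mathrm d(\rho,\delta)=1$ if and only if $\delta\in\partial\states$.

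\emph{Item 4.} One direction is immediate from item 1: convergence in $\mathrm d$ implies convergence in trace norm. For the converse, take $\rho_n\to\rho$ in trace norm with $\rho\in\states^{\mathrm o}$, and set $\varepsilon_n:=\norm{\rho_n-\rho}_1$. Since the operator norm is bounded by the trace norm,
\[
\rho_n\ge\rho-\varepsilon_n I\ge\Bigl(1-\tfrac{\varepsilon_n}{\eta(\rho)}\Bigr)\rho,
\]
so $m(\rho_n,\rho)\to1$. By Weyl's inequality, $\eta(\rho_n)\ge\eta(\rho)-\varepsilon_n$, which is bounded below for large $n$. The same argument with the roles of $\rho$ and $\rho_n$ exchanged gives
\[
\rho\ge\Bigl(1-\tfrac{\varepsilon_n}{\eta(\rho_n)}\Bigr)\rho_n,
\]
so $m(\rho,\rho_n)\to1$. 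Hence $\mathrm d(\rho_n,\rho)\to0$.

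The only step requiring care is the elementary inequality in item 1, which is handled by the reduction to $x\le y$ above. I am taking for granted that $\mathrm d$ is a metric on $\states^{\mathrm o}$ (the Birkhoff–Hilbert fact from \cite{MS22}). With that, the sequential equivalence just established identifies the two topologies, since both are metrizable.
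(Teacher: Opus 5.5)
Your proof is correct. The paper gives no argument of its own for this proposition; it imports all four items from \cite[Lemmas~3.5, 3.8]{MS22} and \cite[Lemma~4.4]{Raqu_pas_2025}, so your proposal is a self-contained replacement for those citations.

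The citation buys brevity and, implicitly, the metric property of $\mathrm d$. That is the one non-elementary ingredient. One has $\mathrm d=\tanh(d_H/2)$, where $d_H(A,B)=\log\frac{1}{m(A,B)m(B,A)}$ is Hilbert's projective metric, and the triangle inequality comes from the Birkhoff--Hilbert theory. You take this for granted, which is legitimate since the statement itself presupposes a ``$\mathrm d$ topology''.

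Your direct route buys quantitative information that the citation hides:
\begin{enumerate}
  \item The intermediate bound $\tfrac12\norm{A-B}_1\le 1-\max\{m(A,B),m(B,A)\}$ is sharper than item~1. For example, $m(A,B)=m(B,A)=\tfrac12$ gives $\tfrac12$ versus $\tfrac35$.
  \item Set $\varepsilon=\norm{\rho-\sigma}_1$. Your estimates $m(\sigma,\rho)\ge 1-\varepsilon/\eta(\rho)$ and $m(\rho,\sigma)\ge 1-\varepsilon/\eta(\sigma)$, combined with $1-xy\le(1-x)+(1-y)$ for $x,y\in[0,1]$, yield $\mathrm d(\rho,\sigma)\le 3\norm{\rho-\sigma}_1/\eta(\rho)$ whenever $\norm{\rho-\sigma}_1\le\eta(\rho)/2$.
  \item Together with item~1, this makes the two metrics comparable at small scales near each set $\{\eta\ge c\}$, with constants $\tfrac12$ and $3/c$. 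That is more than agreement of topologies.
\end{enumerate}

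One minor caveat: your witness for item~2 (two orthogonal pure states) requires $d\ge2$. For $d=1$ the supremum is $0$, so the statement tacitly assumes $d\ge 2$ as well.
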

    
    \begin{lemma}[{\cite[Lemma~3.10]{MS22}}]
    \label{lemma:cnum_properties}
        Suppose $\ker\phi\cap\states=\emptyset$. Then:
        \begin{enumerate}
            \item $\mathrm d(\phi\proj\rho,\phi\proj\delta)\le \cnum{\phi}\mathrm d(\rho,\delta)$.
            \item $\cnum{\phi}\le1$, with strict inequality if $\phi$ is strictly positive.
            \item If $\phi\proj\rho\in\states^{\mathrm o}$ and $\phi\proj\delta\in\partial\states$ for some
            $\rho,\delta\in\states$, then $\cnum{\phi}=1$.
            \item $\cnum{\phi\circ\psi}\le\cnum{\phi}\,\cnum{\psi}$ whenever
            $\ker\psi\cap\states=\emptyset$.
            \item If additionally $\ker\phi^\dagger\cap\states=\emptyset$, then $\cnum{\phi}=\cnum{\phi^\dagger}$.
        \end{enumerate}
    \end{lemma}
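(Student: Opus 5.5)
We would prove the five items from the elementary order-theoretic structure of the PSD cone, following the Birkhoff--Hopf approach, which we expect is also how \cite{MS22} proceeds. First we record that for $A,B\in\states$ one has $m(A,B)\,m(B,A)=e^{-h(A,B)}$, where $h$ is Hilbert's projective metric on the cone. Hence
\[
\mathrm d(A,B)=\tanh\!\bigl(h(A,B)/2\bigr).
\]
The hypothesis $\ker\phi\cap\states=\emptyset$ guarantees that $\phi(\rho)\neq0$ for every state. Therefore $\phi\proj\rho=\phi(\rho)/\norm{\phi(\rho)}_1$, and $\rho_*$ never enters. Since $m(\cdot,\cdot)$ is invariant under positive rescaling of each argument, $\mathrm d(\phi\proj A,\phi\proj B)$ only depends on the rays of $\phi(A)$ and $\phi(B)$. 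This makes $\cnum{\phi}=\tanh(\Delta(\phi)/2)$, where $\Delta(\phi)$ is the projective diameter of $\phi(\states)$.

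For item 1, the key step and main obstacle, we would run the classical contraction computation rather than quote Birkhoff's $\tanh(\Delta/4)$ bound, which is stated for $h$ and not for $\mathrm d$. Fix $\rho,\delta$ and set $\lambda=m(\rho,\delta)$, $\mu=m(\delta,\rho)$, so that $X:=\rho-\lambda\delta\ge0$ and $Y:=\delta-\mu\rho\ge0$. Positivity of $\phi$ gives
\[
\phi(\rho)=\lambda\phi(\delta)+\phi(X),\qquad \phi(\delta)=\mu\phi(\rho)+\phi(Y).
\]
We then treat two cases.
\begin{itemize}
\item If $X$ or $Y$ is zero, the states coincide, or $\lambda\mu$ is already extremal; these cases are trivial.
\item Otherwise $X,Y$ are nonzero PSD, so $\phi(X)$ and $\phi(Y)$ lie in the image cone. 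By definition of $\cnum{\phi}$, they satisfy $m(\phi X,\phi Y)\,m(\phi Y,\phi X)\ge\kappa:=\frac{1-\cnum{\phi}}{1+\cnum{\phi}}$.
\end{itemize}
Solving the two linear relations for $\phi(\rho),\phi(\delta)$ in terms of $\phi(X),\phi(Y)$ yields lower bounds on $m(\phi\rho,\phi\delta)$ and $m(\phi\delta,\phi\rho)$ as Möbius functions of $\lambda,\mu$ and the overlap constants $m(\phi X,\phi Y)$, $m(\phi Y,\phi X)$. An optimization of these functions, which is the standard calculus step in the Birkhoff--Hopf proof, then gives $\tanh(h'/2)\le\tanh(\Delta/2)\tanh(h/2)$, i.e. $\mathrm d(\phi\proj\rho,\phi\proj\delta)\le\cnum{\phi}\,\mathrm d(\rho,\delta)$. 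We expect the bookkeeping in this optimization to be the delicate part. A fallback is to reduce to the two-dimensional cone spanned by $\phi X,\phi Y$, where everything is explicit.

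Item 2 follows from Proposition \ref{prop:d}(2), which gives $\mathrm d\le1$. For the strict inequality, suppose $\phi$ is strictly positive. Then $\phi(\states)$ is a compact subset of $\states^{\mathrm o}$, and $\mathrm d$ is continuous there and $<1$ by Proposition \ref{prop:d}(3,4). Thus the supremum is attained and is $<1$.

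Item 3 is immediate from Proposition \ref{prop:d}(3).

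For item 4, we note that $\ker\psi\cap\states=\emptyset$ gives $(\phi\circ\psi)\proj A=\phi\proj(\psi\proj A)$. Applying item 1 to $\phi$ then yields
\[
\mathrm d\bigl((\phi\circ\psi)\proj A,(\phi\circ\psi)\proj B\bigr)\le\cnum{\phi}\,\mathrm d(\psi\proj A,\psi\proj B)\le\cnum{\phi}\,\cnum{\psi}.
\]

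Item 5 is the duality of the projective diameter. We would write $m(\phi A,\phi B)=\inf\bigl\{\ip{\phi A}{C}/\ip{\phi B}{C}:C\ge0,\ \ip{\phi B}{C}>0\bigr\}$ and move $\phi$ across the Hilbert--Schmidt pairing. This expresses $\Delta(\phi)$ as $\sup\log$ of cross-ratios $\frac{\ip{A}{\phi\adj C}\ip{B}{\phi\adj D}}{\ip{B}{\phi\adj C}\ip{A}{\phi\adj D}}$. That expression is symmetric under exchanging the roles of $(A,B)$ and $(C,D)$, so $\Delta(\phi)=\Delta(\phi\adj)$. The two kernel conditions ensure that no denominators vanish on either side.
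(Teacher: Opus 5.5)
The paper gives no proof of this lemma; it is quoted from \cite[Lemma~3.10]{MS22}, so there is no internal argument to compare with. Your outline is a correct, self-contained proof along standard Birkhoff--Hopf lines. Positivity of $\phi$ is the implicit standing hypothesis here; you use it, and it is needed for $\phi\proj\rho$ to be a state.

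\textbf{Item 1.} The step you expect to be delicate needs neither an optimization nor a planar reduction. Take $\rho\neq\delta$, so $\lambda\mu<1$, and set $a:=m(\phi X,\phi Y)$, $b:=m(\phi Y,\phi X)$. Using $(1-\lambda\mu)\rho=X+\lambda Y$ and $(1-\lambda\mu)\delta=\mu X+Y$, with $t=\frac{\lambda+a}{1+a\mu}$ one checks
\[
\phi X+\lambda\phi Y-t(\mu\phi X+\phi Y)=\frac{1-\lambda\mu}{1+a\mu}\,(\phi X-a\phi Y)\ge0 .
\]
Hence $m(\phi\rho,\phi\delta)\ge\frac{\lambda+a}{1+a\mu}$, and symmetrically $m(\phi\delta,\phi\rho)\ge\frac{\mu+b}{1+b\lambda}$. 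Let $\Pi$ be the product of these two bounds. Since $x\mapsto\frac{1-x}{1+x}$ is decreasing,
\[
\mathrm d(\phi\proj\rho,\phi\proj\delta)\le\frac{1-\Pi}{1+\Pi}=\frac{(1-ab)(1-\lambda\mu)}{(1+ab)(1+\lambda\mu)+2(a\mu+b\lambda)}\le\mathrm d(\phi\proj X,\phi\proj Y)\,\mathrm d(\rho,\delta).
\]
The bound $\mathrm d(\phi\proj X,\phi\proj Y)\le\cnum{\phi}$ then finishes item~1.

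\textbf{Item 5.} Your claim that the kernel conditions keep all denominators nonzero is false. Already for $\phi=\mathrm{id}$, $\langle\phi A,C\rangle=0$ whenever $A,C\ge0$ are nonzero with orthogonal supports. The fix is to keep the constraint in the variational formula, valid for nonzero $P,Q\ge0$:
\[
m(P,Q)=\inf\{\langle P,C\rangle/\langle Q,C\rangle:\ C\ge0,\ \langle Q,C\rangle>0\}.
\]
Then $\inf_{A,B}m(\phi A,\phi B)\,m(\phi B,\phi A)$ and $\inf_{C,D}m(\phi\adj C,\phi\adj D)\,m(\phi\adj D,\phi\adj C)$ are both the infimum of
\[
\frac{\langle\phi A,C\rangle\langle\phi B,D\rangle}{\langle\phi B,C\rangle\langle\phi A,D\rangle}
\]
over the same set $\{\langle\phi B,C\rangle>0,\ \langle\phi A,D\rangle>0\}$. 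This gives $\cnum{\phi}=\cnum{\phi\adj}$. What the two kernel conditions actually do is make these constraint sets nonempty, and guarantee that $\phi\proj$ and $\phi\adj\proj$ are genuine normalizations rather than the default $\rho_*$.
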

    
    \begin{cor}
    \label{cor:appn_sp_iff_cnum_1}
        If $\ker\phi\cap\states=\emptyset$ and $\ker\phi^\dagger\cap\states=\emptyset$, then
        \[
            \cnum{\phi}<1 \quad\Longleftrightarrow\quad \phi\ \text{is strictly positive}.
        \]
    \end{cor}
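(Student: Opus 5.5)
The plan is to prove the two implications separately. The direction ``strictly positive $\Rightarrow$ $\cnum{\phi}<1$'' is exactly \Cref{lemma:cnum_properties}(2), since $\ker\phi\cap\states=\emptyset$ is assumed. For the converse I will argue by contraposition. I assume throughout, as in the setting of \Cref{lemma:cnum_properties}, that $\phi$ is a positive map; this is implicit in using $\phi\proj\cdot$ as a map into $\states$, and it holds for every map to which the corollary is applied. If $\phi$ is not strictly positive, I will produce $\rho,\delta\in\states$ with $\phi\proj\rho\in\states^{\mathrm o}$ and $\phi\proj\delta\in\partial\states$. Then \Cref{lemma:cnum_properties}(3) gives $\cnum{\phi}=1$.

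\emph{Step 1 (the boundary point).} Since $\phi$ is not strictly positive, there is a nonzero PSD matrix $X$ with $\phi(X)$ not positive definite. Normalizing, set $\delta:=X/\tr{X}\in\states$. The hypothesis $\ker\phi\cap\states=\emptyset$ gives $\phi(\delta)\neq0$, so $\phi\proj\delta=\phi(\delta)/\norm{\phi(\delta)}_1$ is a genuine state. It is rank-deficient, so $\phi\proj\delta\in\partial\states$.

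\emph{Step 2 (the interior point).} Take $\rho:=\mbI_d/d$. I claim $\phi(\mbI_d)$ is positive definite, and this is where the hypothesis on $\phi\adj$ enters. Suppose $v\neq0$ satisfies $\langle v,\phi(\mbI_d)v\rangle=0$, and normalize $v$. Then
\[
0=\tr{\ket{v}\bra{v}\,\phi(\mbI_d)}=\tr{\phi\adj(\ket{v}\bra{v})}.
\]
Since $\phi\adj$ is positive whenever $\phi$ is, $\phi\adj(\ket{v}\bra{v})$ is a PSD matrix with zero trace, hence it is $0$. So $\ket{v}\bra{v}\in\ker\phi\adj\cap\states$, which contradicts the hypothesis. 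Therefore $\phi\proj\rho\in\states^{\mathrm o}$.

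Combining the two steps with \Cref{lemma:cnum_properties}(3) yields $\cnum{\phi}=1$, which completes the contrapositive. The only delicate point is Step 2: one must recognize that the adjoint kernel condition is exactly what forces $\phi(\mbI_d)$ to have full rank, by the trace-duality argument. Everything else is a direct appeal to \Cref{lemma:cnum_properties}.
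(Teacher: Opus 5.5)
Your proof is correct, and it is the natural derivation the paper has in mind (the paper states this corollary without proof, as a consequence of \Cref{lemma:cnum_properties}). Part (2) gives one direction. For the converse, the adjoint-kernel hypothesis makes $\phi(\mbI_d)$ positive definite via $\langle v,\phi(\mbI_d)v\rangle=\tr{\phi\adj(\ket{v}\bra{v})}$, so pairing $\phi\proj(\mbI_d/d)\in\states^{\mathrm o}$ with a rank-deficient $\phi\proj\delta$ gives $\cnum{\phi}=1$ by part (3). Stating the implicit positivity of $\phi$ explicitly, as you did, is appropriate, since both the projective action and the lemma presuppose it.
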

    

\subsection{Probability Spaces and Product Measures}
    
    Let $(X,\mcF,\mu)$ be a probability space, where $\mcF$ is the $\sigma$-algebra of events and $\mu$ is a probability measure on $(X,\mcF)$. 
    A map $\theta:X\to X$ is called \emph{$\mu$-probability preserving} (or \emph{ppt} for short) if $\theta$ is $\mcF$-measurable and
    \[
        \mu(\theta^{-1}(E)) = \mu(E)
        \qquad\text{for all }E\in\mcF.
    \]
    When $\theta$ is bijective and $\theta^{-1}$ is also $\mcF$-measurable, we say that $\theta$ is an \emph{invertible} ppt. 
    A ppt $\theta$ is called \emph{ergodic} if for every $E\in\mcF$ with $\theta^{-1}(E)=E$ we have $\mu(E)\in\{0,1\}$. 
    An equivalent characterization of ergodicity of a $\mu$-ppt is that members $E$ of $\mcF$ with $\mu(\theta^{-1}(E) \triangle E) =0 $ are those with $\mu(E) \in \{0,1\}$. 
    Here $\triangle$ denotes the symmetric difference. 
    We refer the reader to \cite{erg_walter} for further characterizations of ergodicity.

    We denote by $\mbE_\mu[\,\cdot\,]$ the expectation with respect to $\mu$. 
        When there is no ambiguity, we write simply $\mbE[\,\cdot\,]$. For $U,V\in L^2(\mu)$ we write
    \[
        \mathrm{Cov}_\mu(U,V)
          := \mbE_\mu\left[(U-\mbE_\mu U)(V-\mbE_\mu V)\right],
          \qquad
            \mathrm{Var}_\mu(U)
            := \mathrm{Cov}_\mu(U,U).
    \]

    For $1\le p<\infty$ and a sub-$\sigma$-algebra $\mcA\subseteq\mcF$, we define
    \[
        L^p(\mcA,\mu)
        :=
        \left\{ f:X\to\mbR \text{ (or }\mbC\text{)} : f \text{ is }\mcA\text{-measurable and }
                \int_X |f|^p\,\dee\mu <\infty \right\},
    \]
    with norm
    \[
        \norm{f}_{L^p(\mcA,\mu)}
        := \left(\int_X |f|^p\,\dee\mu\right)^{1/p}.
    \]
    For $p=\infty$ we set
    \[
        L^\infty(\mcA,\mu)
        :=
        \left\{ f:X\to\mbR : f \text{ is }\mcA\text{-measurable and }
                 \esssup_{x\in X} |f(x)| < \infty \right\}.
    \]
    When $\mcA=\mcF$ we simply write $L^p(\mu)$ in place of $L^p(\mcF,\mu)$.

    If $X$ is a topological space, we denote by \(\borel{X}\) its \emph{Borel $\sigma$-algebra}, that is, the smallest $\sigma$-algebra containing all open subsets of $X$.
    A measurable space $(X,\mcX)$ is called a \emph{Borel space} if there exists a topology $\tau$ on $X$ such that $\mcX$ coincides with the Borel $\sigma$-algebra generated by $\tau$, that is,
    \[
        \mcX = \mathscr{B}_\tau(X).
    \]
    In this case, we often simply write $\mcX = \borel{X}$.
    We refer to \cite{Kallenberg_2021} for results on probability spaces (or more generally on measure spaces).

    \paragraph{Measurable Functions.}
        Given two measurable spaces $(X,\mcX)$ and $(Y,\mcY)$, a function $f : X \to Y$ is called \emph{$(\mcX,\mcY)$-measurable} (or simply \emph{measurable}, when the $\sigma$-algebras are clear from the context) if
        \[
            f^{-1}(B) \in \mcX \qquad \text{for all } B \in \mcY.
        \]
    
     \paragraph{Product Spaces.}
        Given measurable spaces $(X,\mcF)$ and $(Y,\mcG)$, their product is
        \[
            X\times Y,\qquad \mcF\otimes\mcG := \sigma(A\times B:\ A\in\mcF,\ B\in\mcG).
        \]
        If $\mu$ and $\nu$ are probability measures, their product measure is denoted by $\mu\otimes\nu$.  
        For $n$–fold products we write $\mu^{\otimes n}$.  
        We refer to Halmos~\cite{Halmos_1950} for background.
            
        If $(X,\borel{X})$ and $(Y,\borel{Y})$ are second-countable Borel spaces we have that $\borel{X}\otimes\borel{Y} = \borel{X\times Y}$ \cite[Proposition 4.1.7]{Dudley_2002} which we shall often use below. 

    \paragraph{Probability Kernel}
        Given two  measurable spaces $(X,\mcX)$ and $(Y,\mcY)$ a map $K: X\times \mcY \to [0,\infty]$ is called a (probability) Kernel from $X$ to $Y$ if for each $x\in X$, the map $K_x(\,\cdot\,) := K(x, \,\cdot\,)$ is a (probability) measure on $(Y,\mcY)$ and if for each measurable $E\in\mcY$ the map $K_E(\,\cdot\,):= K(\,\cdot\,, E)$ is a $\mcX$-measurable function. 
        
    \paragraph{Convergence in Distribution.}
        Let $(X_n)$ be random variables on $(X,\mcF,\mu)$ and $Z$ a random variable with law $\nu$.
        We write
        \[
            X_n \overset{\mathrm d}{\underset{\mu}{\longrightarrow}} Z
        \]
        to mean $\mu\circ X_n^{-1}\Rightarrow \nu$ (weak convergence of measures, see \cite[Chapter~2]{Billingsley_1999}).  
        In particular for $\sigma>0$,  
        \[
            X_n \overset{\mathrm d}{\underset{\mu}{\longrightarrow}} \mcN(0,\sigma^2)
        \]
        means that $\mu\circ X_n^{-1}$ converges in distribution to the centered normal law of variance $\sigma^2$:
        \begin{equation}
        \label{eq:CDF-convergence}
            \mu\!\left( x:\, X_n(x) \le z \right)
                \;\longrightarrow\;
                \int_{-\infty}^{z}
                    \frac{1}{\sqrt{2\pi\sigma^2}}
                    \exp\!\left(-\frac{u^2}{2\sigma^2}\right)\,
                    \dee u.
        \end{equation}
        That is, the distribution functions of $X_n$ under $\mu$ converge pointwise to the CDF of the centered normal law with variance $\sigma^2$.


\subsection{Mixing Coefficients for Stochastic Processes}
\label{section:mixing}


    Let $Z:=(Z_n)_{n\in\mbZ}$ be a stochastic process on a probability space $(X,\mcF,\mu)$. Define the past and future $\sigma$–fields by
    \[
        \mcF_{-\infty}^k := \sigma(Z_j : j\le k),
        \qquad
        \mcF_{k+n}^{\infty} :=  \sigma(Z_j : j\ge k+n).
    \]
        
    \paragraph{\texorpdfstring{$\alpha$}{Alpha}–mixing profile.}
        For $n\ge1$ define
        \[
            \alpha_{Z,\mu}(n)
            :=
            \sup_{k\ge1}\ \sup_{A\in\mcF_{-\infty}^k,\ B\in\mcF_{k+n}^{\infty}}
                |\mu(A\cap B)-\mu(A)\mu(B)|.
        \]
        
    \paragraph{\texorpdfstring{$\rho$}{Rho}–mixing profile.}
        \[
            \rho_{Z,\mu}(n)
            :=
            \sup_{k\ge1}\ 
            \sup_{\substack{U\in L^2(\mcF_{-\infty}^k),\ V\in L^2(\mcF_{k+n}^{\infty})\\
                            \mathrm{Var}_\mu(U)>0,\ \mathrm{Var}_\mu(V)>0}}
                \bigl|\mathrm{Corr}_{\mu}(U,V)\bigr|.
        \]
        
    These general definitions specialize to the instrument–driven processes considered earlier when $Z_n$ is taken to be the instrument at time $n$ or the associated quantum channel, in which case the underlying measure $\mu$ is typically $\pr$ or the annealed law $\overline{\mbQ}_{\s}$.
        
    The use of these mixing coefficients are used to find bounds for covariances of functions. 
    We shall use the following lemma in the subsequent analysis.

    \begin{lemma}[{\cite[Theorem~3.7]{bradley2007introduction}}]
    \label{lemma:correlation_bounds}
        Let $\mcA, \mcB$ be sub-$\sigma$-algebras in the probability space $(X, \mcF, \mu)$ and let $p,q\in (1,\infty]$ and $1/p+1/q <1$. If $A\in L^p_{\mathrm{real}}(\mcA)$ and $B\in L^q_{\mathrm{real}}(\mcB)$, then 
        \begin{equation}
            \left|\mbE_{\mu}[AB] - \mbE_{\mu}[A]\mbE_{\mu}[B]\right|
            \le
            8 \alpha(\mcA,\mcB)^{1-1/p-1/q}\norm{A}_{L^p(\mu)}\norm{B}_{L^q(\mu)}
        \end{equation}
    \end{lemma}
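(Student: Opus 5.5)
This is the covariance inequality of Davydov type for \(\alpha\)-mixing, quoted from \cite[Theorem~3.7]{bradley2007introduction}. The plan is the classical three-stage argument: first bounded variables, then one unbounded factor, then both, each time by truncation. By homogeneity we normalize \(\norm{A}_{L^p(\mu)}=\norm{B}_{L^q(\mu)}=1\). We may assume \(\alpha:=\alpha(\mcA,\mcB)>0\), since otherwise \(\mcA\) and \(\mcB\) are independent and the covariance vanishes.

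\emph{Stage 1 (both bounded: \(p=q=\infty\)).} We aim for \(|\mathrm{Cov}_\mu(A,B)|\le 4\alpha\norm{A}_\infty\norm{B}_\infty\).
\begin{itemize}
\item Write \(\mathrm{Cov}_\mu(A,B)=\mbE_\mu[A\,(\mbE_\mu[B\mid\mcA]-\mbE_\mu B)]\).
\item Set \(\xi:=\operatorname{sign}(\mbE_\mu[B\mid\mcA]-\mbE_\mu B)\). It is \(\mcA\)-measurable and \(\pm1\)-valued, so \(|\mathrm{Cov}_\mu(A,B)|\le\norm{A}_\infty\,\mathrm{Cov}_\mu(\xi,B)\).
\item Repeat with the roles swapped, using \(\eta:=\operatorname{sign}(\mbE_\mu[\xi\mid\mcB]-\mbE_\mu\xi)\in\mcB\). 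This gives \(|\mathrm{Cov}_\mu(A,B)|\le\norm{A}_\infty\norm{B}_\infty|\mathrm{Cov}_\mu(\xi,\eta)|\).
\item Since \(\xi,\eta\) take only the values \(\pm1\), \(\mathrm{Cov}_\mu(\xi,\eta)\) expands as a signed sum of four terms \(\mu(E\cap F)-\mu(E)\mu(F)\), with \(E\in\{\xi=\pm1\}\subset\mcA\) and \(F\in\{\eta=\pm1\}\subset\mcB\). Each term is at most \(\alpha\) in absolute value, which gives \(4\alpha\).
\end{itemize}

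\emph{Stage 2 (\(q=\infty\), \(1<p<\infty\)).} Truncate \(A=A_1+A_2\) with \(A_1:=A\mathbf 1_{\{|A|\le M\}}\), which stays \(\mcA\)-measurable.
\begin{itemize}
\item Stage 1 gives \(|\mathrm{Cov}(A_1,B)|\le 4\alpha M\).
\item For the tail, \(|\mathrm{Cov}(A_2,B)|\le 2\norm{B}_\infty\mbE|A_2|\le 2M^{1-p}\) by Markov's inequality in the form \(\mbE|A|\mathbf 1_{\{|A|>M\}}\le M^{1-p}\mbE|A|^p\).
\item Choosing \(M=\alpha^{-1/p}\) yields \(|\mathrm{Cov}(A,B)|\le 6\alpha^{1-1/p}\).
\end{itemize}

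\emph{Stage 3 (general \(p,q\) with \(1/p+1/q<1\)).} Truncate \(B=B_1+B_2\) at level \(N\), with \(B_1:=B\mathbf 1_{\{|B|\le N\}}\).
\begin{itemize}
\item Stage 2 applied to \((A,B_1)\) gives at most \(6\alpha^{1-1/p}N\).
\item For the tail, let \(p'\) be the conjugate exponent of \(p\); the hypothesis \(1/p+1/q<1\) means exactly \(p'<q\). Hölder gives \(|\mathrm{Cov}(A,B_2)|\le 2\norm{A}_p\norm{B_2}_{p'}\), and \(\norm{B_2}_{p'}\le N^{1-q/p'}\) because \(\mbE|B|^{p'}\mathbf 1_{\{|B|>N\}}\le N^{p'-q}\).
\item Choosing \(N=\alpha^{-1/q}\) makes both terms of order \(\alpha^{1-1/p-1/q}\).
\end{itemize}
Undoing the normalization then gives the stated form.

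The conceptual steps are routine. The main obstacle I expect is the bookkeeping needed to reach the exact constant \(8\) rather than something like \(6+2=8\) only up to rounding. The direct choice in Stage 3 gives \(6+2\cdot(\text{const})\), so a finer optimization of the truncation levels is needed, or truncating \(A\) and \(B\) simultaneously in a single step. For the application in this paper only some finite constant matters, so if the sharp \(8\) proves delicate I would settle for the constant produced by this argument and refer to \cite{bradley2007introduction} for the optimized value.
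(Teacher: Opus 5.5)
Your proposal is correct and is the classical Davydov/Hall--Heyde truncation proof that underlies Bradley's Theorem~3.7, which the paper quotes without proof: the bounded case gives constant $4$, the case with one factor in $L^\infty$ gives $6$, and then the second factor is truncated. The constant you flag as the main obstacle already comes out exactly with your own choice: for $N=\alpha^{-1/q}$ the tail term is $2N^{1-q/p'}=2\alpha^{1/p'-1/q}=2\alpha^{1-1/p-1/q}$, so Stage~3 gives precisely $6+2=8$ with no further optimization or fallback constant needed (and the case $p=\infty>q$ is Stage~2 with the roles of $A$ and $B$ swapped).
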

        

\subsection{Skew-Product Trajectory}
\label{section:skew-prod-traj}
   

    Since we often view objects as defined on the skew-product space $\Omega\times\mcA^\mbN$, it is convenient to express the selective state dynamics using the projective action introduced in~\eqref{eq:projective_action}.
    Given a random initial state $\rho_0:\Omega\to\states$ and a sequence $\bar a=(a_1,a_2,\ldots)\in\mcA^\mbN$, define
    \[
        \rho_n^{\rho_0}(\omega,\bar a)
        :=
        \left(\mcT_{a_n;\theta^n(\omega)}
              \circ\cdots\circ
              \mcT_{a_1;\theta^1(\omega)}\right) {\proj}
              \left(\rho_0(\omega)\right),
        \qquad n\ge1.
    \]
    Equivalently, the trajectory evolves recursively by the projective updates
    \[
        \rho_n^{\rho_0}(\omega,\bar a)
        =
        \mcT_{a_n;\theta^n(\omega)} {\proj}
            \left(\rho_{n-1}^{\rho_0}(\omega,\bar a)\right),
        \qquad n\ge2,
    \]
    where $\proj$ is as defined in \cref{eq:projective_action}, with $\rho_0^{\rho_0}(\omega,\bar a)=\rho_0(\omega)$.  
    Recall that in the projective action \eqref{eq:projective_action} we fix a reference state $\rho_*\in\states$ and set $\phi\proj X:=\rho_*$ whenever $\phi(X)=0$.
    Under the quenched law $\mbQ_{\rho_0(\omega);\omega}$, such events occur only on a $\mbQ_{\rho_0(\omega);\omega}$--null set.
    Hence, the trajectory above agrees with the normalized (Born-rule) trajectory $\mbQ_{\rho_0(\omega);\omega}$--a.s., and therefore $\overline{\mbQ}_{\rho_0}$--a.s.

    \smallskip
    For a point $\bar a=(a_1,a_2,\ldots)\in\mcA^\mbN$ we define the coordinate projections
    \[
        \pi_k : \mcA^\mbN \to \mcA,
        \qquad
        \pi_k(\bar a) := a_k,
        \qquad k\in\mbN,
    \]
    and, for $n\in\mbN$, the block projection onto the first $n$ coordinates
    \[
        \pi_{1,\ldots,n} : \mcA^\mbN \to \mcA^n,
        \qquad
        \pi_{1,\ldots,n}(\bar a) := (a_1,\ldots,a_n).
    \]
    The cylinder sets are of the form
    \[
        \pi_{1,\ldots,n}^{-1}(\{a_1\}\times\cdots\times\{a_n\}),
        \qquad a_1,\ldots,a_n\in\mcA,\ n\in\mbN,
    \]
    and generate the product $\sigma$-algebra $\Sigma$.
    In \Cref{section:intro} we also used the notation $A_k$ for the canonical coordinate process. When convenient, we will identify $A_k$ with $\pi_k$, so that $A_k(\bar a)=\pi_k(\bar a)=a_k$.

    For $n\in\mbN$ and a finite word $\pmb a=(a_1,\ldots,a_n)\in\mcA^n$, we will use a compact notation for the composition of the selective maps along $\pmb a$.
    For each $\omega\in\Omega$ we set
    \[
        \mcT^{(n)}_{\pmb a;\omega}
        \;:=\;
        \mcT_{a_n;\theta^n(\omega)} \circ \cdots \circ \mcT_{a_1;\theta^1(\omega)},
    \]
    so that, for an initial state $\rho_0(\omega)\in\states$,
    \[
        \mcT^{(n)}_{\pmb a;\omega}\left(\rho_0(\omega)\right)
        =
        \left(\mcT_{a_n;\theta^n(\omega)} \circ \cdots \circ
             \mcT_{a_1;\theta^1(\omega)}\right)\left(\rho_0(\omega)\right).
    \]


\subsection{Associated POVM}

    
    Let $(\mcA^\mbN,\Sigma)$ be the measurement space.
    For each $\omega\in\Omega$, there exists a unique POVM $\mbQ^*_\omega:\Sigma\to\mbM_d$ such that for every $\rho\in\states$ and every $S\in\Sigma$,
    \begin{equation}
        \label{eq:duality}
        \inner{\rho}{\mbQ^*_\omega(S)}
        = \mbQ_{\rho;\omega}(S).
    \end{equation}
    In particular, for a random initial state $\rho_0(\omega)$,
    \[
        \mbQ_{\rho_0(\omega);\omega}(S)=\inner{\rho_0(\omega)}{\mbQ^*_\omega(S)}.
    \]

    The following lemma is used repeatedly.
    
    \begin{lemma}[{\cite[Lemma~3.1]{qtlln}}]
    \label{lem:disordered_k_m_lemma_1_2}
        Let $S\in\Sigma$ and $n\in\mbN$.  
        For $\pmb{a}= (a_1,\ldots,a_n)\in\mcA^n$,
        \[
            \mbQ^*_\omega\left(\varsigma^{-n}(S)\cap\pi_{1, \ldots, n}^{-1}\{(a_1,\ldots,a_n)\}\right)
            = \left(\mcT^{(n)}_{\pmb a;\omega}\right)\adj
             \left(\mbQ^*_{\theta^n(\omega)}(S)\right)
        \]
        for $\pr$–almost every $\omega$. 
        Here, recall that  \(\varsigma:\mcA^{\mbN}\to\mcA^{\mbN}\) denotes the left shift. 
    \end{lemma}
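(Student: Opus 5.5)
The statement is the duality identity of \Cref{lem:disordered_k_m_lemma_1_2}. I would prove it by testing both sides against an arbitrary state $\rho\in\states$. Since $\states$ spans $\mbM_d$ and the Hilbert--Schmidt pairing is nondegenerate, equality of all pairings $\inner{\rho}{\cdot}$ gives equality of the operators.

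Fix $\omega$ outside the null set where some $\mcT_{a;\theta^k\omega}$ fails to be a CP trace-non-increasing instrument component; this set is countable in $k$, hence null. By \eqref{eq:duality}, the left-hand side paired with $\rho$ equals
\[
    \mbQ_{\rho;\omega}\bigl(\varsigma^{-n}(S)\cap\pi_{1,\ldots,n}^{-1}\{\pmb a\}\bigr).
\]
The right-hand side paired with $\rho$ equals
\[
    \inner{\mcT^{(n)}_{\pmb a;\omega}(\rho)}{\mbQ^*_{\theta^n\omega}(S)}
    =\mbQ_{\mcT^{(n)}_{\pmb a;\omega}(\rho);\theta^n\omega}(S).
\]
Here $\mbQ_{\sigma;\omega'}$ is extended linearly to positive (unnormalized) $\sigma$, which is legitimate because $\sigma\mapsto\mbQ_{\sigma;\omega'}(S)$ is linear by \eqref{eq:duality}. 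It therefore suffices to show, for fixed $\pmb a$ and $\omega$, that the two set functions
\[
    S\mapsto \mbQ_{\rho;\omega}\bigl(\varsigma^{-n}S\cap\{A_{1:n}=\pmb a\}\bigr)
    \qquad\text{and}\qquad
    S\mapsto \mbQ_{\mcT^{(n)}_{\pmb a;\omega}(\rho);\theta^n\omega}(S)
\]
agree on $\Sigma$. Both are finite measures of equal total mass $\tr{\mcT^{(n)}_{\pmb a;\omega}(\rho)}$. Cylinders form a $\pi$-system generating $\Sigma$, so by Dynkin's $\pi$--$\lambda$ theorem it is enough to check cylinders $S=\{A_{1:m}=\pmb c\}$.

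For such a cylinder, $\varsigma^{-n}S\cap\{A_{1:n}=\pmb a\}$ is the cylinder for the concatenated word $(\pmb a,\pmb c)$. Its quenched probability is
\[
    \tr{\mcT_{c_m;\theta^{n+m}\omega}\circ\cdots\circ\mcT_{c_1;\theta^{n+1}\omega}\circ\mcT^{(n)}_{\pmb a;\omega}(\rho)}.
\]
Since $\theta^{j}(\theta^n\omega)=\theta^{n+j}\omega$, this is exactly the cylinder probability of $\pmb c$ under $\mbQ_{\mcT^{(n)}_{\pmb a;\omega}(\rho);\theta^n\omega}$, by the defining cylinder formula extended linearly. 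This is the core computation, and it is essentially the cocycle property of the composition.

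The main point of care is that the identity must hold on a single $\omega$-null-set exception valid for all $S\in\Sigma$ at once. This is handled because the exceptional set is chosen before $S$, and the monotone class argument then runs pointwise in $\omega$. A second point is uniqueness of the POVM $\mbQ^*_{\theta^n\omega}$, which follows from the same spanning argument.
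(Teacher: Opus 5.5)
Your proof is correct. The paper does not prove this lemma itself; it imports it from \cite{qtlln}. Your argument is the standard Kümmerer--Maassen-type proof of this identity: pair with states via the duality, reduce to equality of two finite measures in $S$, check concatenated cylinders using $\theta^{j}(\theta^{n}\omega)=\theta^{n+j}\omega$, and extend by $\pi$--$\lambda$. There are only two cosmetic fixes: the exceptional set $\bigcup_{k\ge0}\theta^{-k}(N)$ is null because $\pr$ is $\theta$-invariant (not merely because the union is countable), and the cylinder $\pi$-system should include $\emptyset$.
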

    

\section{Proofs of \texorpdfstring{\Cref{thm:s_exists}}{Theorem 1} and \texorpdfstring{\Cref{thm:aclt}}{Theorem 2}}
\label{section:proof_first}


    In this section, we provide the proof of \Cref{thm:s_exists} and \Cref{thm:aclt}.


\subsection{Proof of \texorpdfstring{\Cref{thm:s_exists}}{Theorem 1}}
\label{section:proof_of_A}


    With the preliminaries above, we are now ready to prove \Cref{thm:s_exists}. 
    We start be re-stating \Cref{thm:aclt} below.
    
    \rhosexists*
        \begin{proof}
            The existence of a dynamically stationary state under \Cref{assumption1} and condition 1 above is established in \cite[Theorem~1]{MS22}. 
            Thus there exists a random state \(\s:\Omega\to\states\) such that
            \[
                \Phi^{(n)}_\omega\left(\s(\omega)\right) = \s\left(\theta^n(\omega)\right)
                \qquad \text{for all } n\in\mbN.
            \]
            Moreover, it is shown there that \(\s(\omega)\) is strictly positive for \(\pr\)-almost every \(\omega\). 
            We also note that the existence of such a state does not require ergodicity of the base (see, e.g., \cite[Lemma~A.5]{MPS}).

            \smallskip
            To prove uniqueness, we use the reducibility theory of \cite{ES24}. 
            Recall that, in the sense of \cite{ES24}, a random orthogonal projection \(P\) is \emph{reducing} for the cocycle \(\{\Phi^{(n)}\}\) if there exists a (possibly random) constant \(C(\omega)>0\) such that
            \begin{equation}
            \label{eq:reducing_proj}
                \Phi^{(n)}_\omega\bigl(P(\omega)\bigr)
                \le
                C(\omega)\,P\bigl(\theta^n(\omega)\bigr)
                \qquad
                \text{for all } n\in\mbN,
                \quad \pr\text{-almost surely}.
            \end{equation}
            We show that every nonzero reducing projection is equal to \(\mbI_d\) \(\pr\)-almost surely.
            This implies that \(\mbI_d\) is the unique nonzero minimal reducing projection, and hence uniqueness follows from \cite[Theorem~1(c)]{ES24}.

            So let \(P\) be a nonzero reducing projection. 
            By (\esp), for \(\pr\)-almost every \(\omega\in\Omega\) there exists \(N_0(\omega)\in\mbN\) such that \(\Phi^{(n)}_\omega\) is strictly positive for all \(n\ge N_0(\omega)\). 
            Hence, whenever \(P(\omega)\neq 0_d\), we have
            \[
                \Phi^{(n)}_\omega\bigl(P(\omega)\bigr) > 0
                \qquad \text{for all } n\ge N_0(\omega).
            \]
            Combining this with \eqref{eq:reducing_proj}, we obtain
            \[
                0
                <
                \Phi^{(n)}_\omega\bigl(P(\omega)\bigr)
                \le
                C(\omega)\,P\bigl(\theta^n(\omega)\bigr),
                \qquad n\ge N_0(\omega).
            \]
            Since \(P(\theta^n(\omega))\) is an orthogonal projection, this forces
            \[
                P(\theta^n(\omega))=\mbI_d
                \qquad \text{for all } n\ge N_0(\omega)
            \]
            whenever \(P(\omega)\neq 0_d\). Indeed, if \(Q\) is an orthogonal projection and \(0<X\le cQ\) for some \(c>0\), then \(Q=\mbI_d\): the support of the strictly positive matrix \(X\) is the whole space, so the range of \(Q\) must be the whole space as well.

            Now consider the event
            \[
                E:=\{\omega\in\Omega: 0_d\neq P(\omega)\neq \mbI_d\}.
            \]
            If \(\omega\in E\), then \(P(\omega)\neq 0_d\), and therefore there exists \(N_0(\omega)\) such that
            \[
                P(\theta^n(\omega))=\mbI_d\notin E
                \qquad \text{for all } n\ge N_0(\omega).
            \]
            Thus every \(\omega\in E\) visits \(E\) only finitely many times under forward iteration of \(\theta\). By the Poincar\'e recurrence theorem, \(\pr(E)=0\). Therefore,
            \[
                P(\omega)\in\{0_d,\mbI_d\}
                \qquad\text{for }\pr\text{-almost every }\omega.
            \]

            Set
            \[
                A:=\{\omega\in\Omega:P(\omega)=0_d\},
                \qquad
                B:=\{\omega\in\Omega:P(\omega)=\mbI_d\}.
            \]
            Also define
            \[
                F
                :=
                \left\{
                    \omega\in\Omega:
                    \exists N(\omega)\in\mbN \text{ such that }
                    P(\theta^n(\omega))=\mbI_d \ \forall n\ge N(\omega)
                \right\}.
            \]
            The set \(F\) is $\mcF$-measurable because 
            \[
                F = \bigcup_{N=1}^\infty \bigcap_{n\ge N}\{\omega: P(\theta^n(\omega)) = \mbI_d\}. 
            \]
            We also have that $F$ is (essentially) $\theta$-invariant. Indeed, if $\omega\in F$ there is some $N$ such that $P(\theta^n(\omega)) = \mbI_d$ for all $n\ge N$. Thus $P(\theta^n(\theta(\omega))) = P(\theta^{n+1}(\omega)) = \mbI_d$ for all $n\ge 
            \max\{1, N-1\}$ thus $F\subseteq \theta(\omega)$. 
            Therefore $\pr(\theta^{-1}(F) \triangle F) = 0$ and thus ergodicity of $\theta$ we get  $\pr(F)\in\{0,1\}$. 
            Moreover, from the previous argument, we have $\{\,\omega:P(\omega)\neq 0_d\,\}\subseteq F$ $\pr$-almost surly (i.e. modulo null sets).
            Since \(P\) is nonzero, \(\pr\{\omega:P(\omega)\neq 0_d\}>0\), hence \(\pr(F)>0\) and it follows that $\pr(F)=1$.
            
            We now show that \(\pr(A)=0\). If \(\omega\in A\cap F\), then \(P(\omega)=0_d\) and, by definition of \(F\), there exists \(N(\omega)\) such that
            \[
                P(\theta^n(\omega))=\mbI_d
                \qquad\text{for all }n\ge N(\omega).
            \]
            In particular, \(\theta^n(\omega)\notin A\) for all \(n\ge N(\omega)\). Thus every \(\omega\in A\cap F\) visits \(A\cap F\) only finitely many times under forward iteration of \(\theta\). By Poincar\'e recurrence, this implies
            \[
                \pr(A\cap F)=0.
            \]
            Since \(\pr(F)=1\), we conclude that \(\pr(A)=0\). Therefore,
            \[
                P(\omega)=\mbI_d
                \qquad\text{for }\pr\text{-almost every }\omega.
            \]
            Thus, every nonzero reducing projection is equal to \(\mbI_d\) \(\pr\)-almost surely. In particular, \(\mbI_d\) is the unique nonzero minimal reducing projection.
            Therefore \cite[Theorem~1(c)]{ES24} yields uniqueness of the dynamically stationary state \(\s\).

            \medskip
            Now we prove the absolute continuity assertion. 
            Let \(\vartheta:\Omega\to\states\) be an arbitrary random initial state. For each \(\omega\), \(\vartheta(\omega)\) is a density matrix, so
            \[
                0_d\le \vartheta(\omega)\le \mbI_d.
            \]
            Since \(\s(\omega)\) is strictly positive for \(\pr\)-almost every \(\omega\), its smallest eigenvalue \(\eta(\s(\omega))\) satisfies \(\eta(\s(\omega))>0\). Hence
            \[
                \s(\omega)\ge \eta(\s(\omega))\,\mbI_d,
                \qquad\text{so}\qquad
                \mbI_d\le \eta(\s(\omega))^{-1}\s(\omega),
            \]
            and therefore
            \[
                0_d\le \vartheta(\omega)\le \mbI_d\le \eta(\s(\omega))^{-1}\s(\omega)
                \qquad \pr\text{-almost surely}.
            \]
            Let \(E\in\Sigma\). 
            By definition of the POVM \(\mbQ^*_\omega\) associated with the quenched laws,
            \[
                \mbQ_{\vartheta(\omega);\omega}(E)
                =
                \inner{\vartheta(\omega)}{\mbQ^*_\omega(E)}.
            \]
            Since \(\mbQ^*_\omega(E)\ge 0\) and \(\vartheta(\omega)\le \eta(\s(\omega))^{-1}\s(\omega)\) in the operator order, we obtain
            \[
                \mbQ_{\vartheta(\omega);\omega}(E)
                \le
                \eta(\s(\omega))^{-1}\inner{\s(\omega)}{\mbQ^*_\omega(E)}
                =
                \eta(\s(\omega))^{-1}\mbQ_{\s(\omega);\omega}(E).
            \]
            Thus, for \(\pr\)-almost every \(\omega\),
            \[
                \mbQ_{\vartheta(\omega);\omega}\ll \mbQ_{\s(\omega);\omega},
            \]
            which completes the proof of the first part.

            \medskip
            For the second assertion, assume in addition that \(\rho_{\pr}(n)\to0\) as \(n\to\infty\). 
            Then, by the same argument as in \cite[Lemma~4.6]{MPS} and \cite[Lemma~4.5]{asym}, conditions~(1) and~(2) imply that for every polynomial degree \(p\in\mbN\) there exists a deterministic constant \(C_p>0\) such that
            \begin{equation}
            \label{eq:cnum_decay_here}
                \mbE_{\pr}\!\left[\cnum{\Phi^{(n)}}\right]
                \le
                C_p\,n^{-p},
                \qquad n\in\mbN,
            \end{equation}
            where \(\cnum{\,\cdot\,}\) is the contraction coefficient introduced in \Cref{dfn:cnum}.
            Let \(\vartheta:\Omega\to\states\) be any random initial state. Since \(\s\) is dynamically stationary, we have
            \[
                \Phi^{(n)}_\omega(\s(\omega))
                =
                \s(\theta^n(\omega)).
            \]
            Therefore
            \[
                \norm{
                    \Phi^{(n)}_\omega(\vartheta(\omega))
                    -
                    \s(\theta^n(\omega))
                }_1
                =
                \norm{
                    \Phi^{(n)}_\omega(\vartheta(\omega))
                    -
                    \Phi^{(n)}_\omega(\s(\omega))
                }_1.
            \]
            Hence, by \Cref{lemma:cnum_properties} together with \Cref{prop:d},
            \[
                \norm{
                    \Phi^{(n)}_\omega(\vartheta(\omega))
                    -
                    \Phi^{(n)}_\omega(\s(\omega))
                }_1
                \le
                2\,\cnum{\Phi^{(n)}_\omega}.
            \]
            Taking expectations and using \eqref{eq:cnum_decay_here}, we obtain
            \[
                \beta_n(\vartheta)
                :=
                \mbE_{\pr}\!\left[
                    \norm{
                        \Phi^{(n)}_\omega(\vartheta(\omega))
                        -
                        \s(\theta^n(\omega))
                    }_1
                \right]
                \le
                2C_p\,n^{-p},
                \qquad n\in\mbN.
            \]
            Thus, for every \(p\in\mbN\), there exists a deterministic constant \(\widetilde C_p>0\) such that
            \[
                \beta_n(\vartheta)\le \widetilde C_p\,n^{-p},
                \qquad n\in\mbN,
            \]
            uniformly over all random initial states \(\vartheta\).
            
            Choose \(p\) so large that
            \[
                p\,\frac{\delta}{2+\delta}>1.
            \]
            Then, with \(r_n:=\widetilde C_p\,n^{-p}\), we have
            \[
                \sum_{n=1}^\infty r_n^{\delta/(2+\delta)}<\infty.
            \]
            Hence \Cref{assumption:forgetting} holds. 
        \end{proof}


\subsection{Mixing Properties of the Skew-Product}
\label{section:mixing_of_skew}


    The goal of this section is to obtain mixing bounds for the skew-product probability space $(\Omega\times\mcA^{\mbN},\mcF\otimes\Sigma, \overline{\mbQ}_{\s})$ under our standing assumptions. 
    We begin by defining the relevant past and future $\sigma$-algebras.
    
    For each $k\in\mbN$, define the past and future $\sigma$-algebras in the skew-product by
    \begin{equation}
        \mcG_k := \mcF_k \otimes \sigma(\pi_1,\ldots,\pi_k),
    \end{equation}
    and
    \begin{equation}
        \mcG^{k} := \mcF^{k}\otimes \sigma(\pi_k,\pi_{k+1},\ldots),
    \end{equation}
    where $\mcF_k$ and $\mcF^{k}$ are the $\sigma$-algebras generated by the instruments up to time $k$ and from time $k$ onward, respectively (see \eqref{eq:past_and_future_sigma_instruments}). 
    
    Using the forward and backward filtrations $(\mcG_k)_{k\ge1}$ and $(\mcG^{k})_{k\ge1}$, we define the $\alpha$–mixing profile of the skew-product system, for a given (random) initial state $\vartheta:\Omega\to\states$, by
    \begin{equation}
    \label{eq:alpha_n_skew}
        \overline{\alpha}_{\overline\mbQ_{\vartheta}}(n)
        :=
        \sup_{k\in\mbN}
        \alpha_{\overline{\mbQ}_{\vartheta}}\!\left(\mcG_k,\,\mcG^{k+n}\right),
    \end{equation}
    where for sub–$\sigma$-algebras $\mcA,\mcB\subseteq\mcF\otimes\Sigma$,
    \begin{equation}
        \alpha_{\overline\mbQ_{\vartheta}}(\mcA,\mcB)
        :=
        \sup\left\{
           \left|
              \overline\mbQ_{\vartheta}(A\cap B)
              - \overline\mbQ_{\vartheta}(A)\,\overline\mbQ_{\vartheta}(B)
           \right|
           : A\in\mcA,\ B\in\mcB
        \right\}.
    \end{equation}

    \begin{remark}
        In the skew-product space $(\Omega\times\mcA^\mbN,\mcF\otimes\Sigma,\overline\mbQ_{\vartheta})$ it is convenient to view the annealed mixing profile as the $\alpha$–mixing coefficient of a concrete process. Define the process $Z:=(Z_n)_{n\in\mbZ}$ by
        \[
            Z_n(\omega,\bar a)
            :=
            \begin{cases}
                \mcV_{\theta^n(\omega)}, & n \le 0,\\
                \left(\mcV_{\theta^n(\omega)},\,A_n(\bar a)\right),  &  n\ge1.
            \end{cases} 
        \]
        where $A_n(\bar a)=a_n$ is the canonical outcome coordinate. 
        Then, for every \(k\in\mbN\),
        \[
            \sigma(Z_j:\,j\le k)=\mcG_k,
            \qquad
            \sigma(Z_j:\,j\ge k+n)=\mcG^{k+n}.
        \]
        Hence, in the notation of \Cref{sec:notation},
        \[
            \overline{\alpha}_{\overline\mbQ_{\vartheta}}(n)
            =
            \alpha_{Z,\overline\mbQ_{\vartheta}}(n).
        \]
        In other words, $\overline\alpha(n)$ is precisely the $\alpha$–mixing coefficient of the process $Z$ under the annealed law~$\overline\mbQ_{\vartheta}$.
    \end{remark}

    We now obtain the first key estimate on the mixing profile of the skew-product.

    \begin{lemma}
    \label{lemma:alpha_mix_skew}
        Let $(\Omega,\mcF,\pr,\theta)$ satisfy \Cref{assumption1}, and let $(\mcV_{\theta^n(\omega)})_{n\ge1}$ satisfy \Cref{assumption:forgetting}. 
        Then for the unique dynamical stationary state $\s$ (from \Cref{assumption:forgetting}), consider the annealed law $\overline\mbQ_{\s}$ on $\Omega\times\mcA^{\mbN}$.
        Then for all $n\ge2$, and $m\in\{1,\ldots n-1\}$
        \[
            \overline\alpha_{\overline\mbQ_{\s}}(n)
            \;\le\;
           8\,\alpha_{\pr}(n-m)
                +
                r_{n-1}
                + 
                2r_m
        \]
        where $\overline\alpha_{\s}(n)$ and $\alpha_{\pr}(n)$ are as defined in \cref{eq:alpha_n_skew} and \cref{eq:rho_n__and_alpha_n_disorder} resp.
    \end{lemma}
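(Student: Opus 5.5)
Fix $k\in\mbN$, $A\in\mcG_k$, $B\in\mcG^{k+n}$. By a monotone-class argument it suffices to take
\[
A=F\times\pi_{1,\ldots,k}^{-1}(\{\pmb a\}),\qquad
B=G\times\varsigma^{-(k+n-1)}(S),
\]
with $F\in\mcF_k$, $G\in\mcF^{k+n}$, $\pmb a\in\mcA^k$ and $S\in\Sigma$. Finite disjoint unions of such rectangles form an algebra generating the product $\sigma$-algebras, and the bound passes to them.

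\textbf{Step 1: product formula.} Using the POVM duality \eqref{eq:duality} and \Cref{lem:disordered_k_m_lemma_1_2} twice (once at time $k$, once to shift over the gap), write
\[
\overline\mbQ_{\s}(A\cap B)=\mbE_\pr\!\left[\mathbf 1_F\mathbf 1_G\,\inner{\Phi^{(n-1)}_{k+1;\omega}\bigl(\mcT^{(k)}_{\pmb a;\omega}(\s(\omega))\bigr)}{\mbQ^*_{\theta^{k+n-1}\omega}(S)}\right].
\]
Set $X(\omega):=\mcT^{(k)}_{\pmb a;\omega}(\s(\omega))$, a positive operator with mass $p(\omega):=\tr{X(\omega)}\le1$, and $Y(\omega):=\mbQ^*_{\theta^{k+n-1}\omega}(S)$, with $0\le Y\le \mbI$. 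Then $X$ is $\mcF_k$-measurable and $Y$ is $\mcF^{k+n}$-measurable. If $\s$ is not itself $\mcF_k$-measurable, I would replace it by $\Phi^{(m')}$ applied to a fixed state at time $k-m'$ and pay an $r$-type error. I expect this measurability issue to be absorbed into the $r$ terms.

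\textbf{Step 2: forgetting across the gap.} Replace $\Phi^{(n-1)}_{k+1;\omega}(X)$ by $p(\omega)\,\s(\theta^{k+n-1}\omega)$. The error is at most $\mbE\,\|\Phi^{(n-1)}_{k+1;\omega}(X/p)-\s(\theta^{k+n-1}\omega)\|_1$. By $\theta$-invariance of $\pr$ this equals $\beta_{n-1}(\vartheta')$ for the random state $\vartheta'(\theta^k\omega):=X/p$, so \Cref{assumption:forgetting} bounds it by $r_{n-1}$. This is the $r_{n-1}$ term.

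\textbf{Step 3: decoupling the main term.} The main term is $\mbE[\mathbf 1_F p\,\cdot\,\mathbf 1_G\inner{\s(\theta^{k+n-1}\omega)}{Y}]$. The second factor is not $\mcF^{k+n}$-measurable, because $\s$ at time $k+n-1$ depends on the past. Here the intermediate $m$ enters. Approximate $\s(\theta^{k+n-1}\omega)$ by $\Phi^{(m)}$ applied to the constant state $\mbI/d$, started at time $k+n-m$ and run over times $k+n-m,\ldots,k+n-1$. This again costs $r_m$ by \Cref{assumption:forgetting} and stationarity, since $\Phi^{(m)}$ applied to $\s$ is $\s$.

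The resulting variable $W$ is measurable with respect to instruments at times $\ge k+n-m$, up to the index shift by one in the definition of $\mcF^{k+n}$ (the bookkeeping may cost a shift of the index by one). Apply \Cref{lemma:correlation_bounds} with $p=q=\infty$ to $U=\mathbf 1_F p\in\mcF_k$ and $W$, both bounded by $1$. This gives $|\mbE[UW]-\mbE U\,\mbE W|\le 8\alpha_\pr(n-m)$.

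\textbf{Step 4: comparison with the product of marginals.} Run the same replacement for $\overline\mbQ_{\s}(A)\overline\mbQ_{\s}(B)$. We have $\overline\mbQ_{\s}(A)=\mbE U$ exactly. For $\overline\mbQ_{\s}(B)=\mbE[\mathbf 1_G\inner{\s(\theta^{k+n-1}\omega)}{Y}]$ the dynamical stationarity of $\s$ is used, and replacing $\s$ by the approximant costs another $r_m$, since $\mbE U\le1$.

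Summing the errors gives $8\alpha_\pr(n-m)+r_{n-1}+2r_m$.

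The main obstacle is Step 3: arranging exact $\mcF_k$- and $\mcF^{k+n}$-measurability given the index conventions and the fact that $\s$ depends on the whole past. I would first try approximating by $\Phi^{(m)}(\mbI/d)$, and check that the stated index choices close the bound.
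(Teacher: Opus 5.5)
Your route is the paper's. Step 2 is exactly the paper's estimate $|I_1|\le r_{n-1}$: normalize the post-$k$ state and forget over the $n-1$ channels in the gap. Steps 3--4 treat $\mathrm{Cov}_{\pr}(U,V)$, with $U=\mathbf 1_F\,p$ and $V=\mathbf 1_G\inner{\s(\theta^{k+n-1}\omega)}{Y}$, by replacing $\s(\theta^{k+n-1}\omega)$ with the finite-memory state $\Phi^{(m)}_{\theta^{k+n-1-m}\omega}(\rho_\ast)$ and then applying \Cref{lemma:correlation_bounds}. There is no index shift to worry about. The approximant uses instruments at times $k+n-m,\dots,k+n-1$, while $Y$ and $\mathbf 1_G$ use times $\ge k+n$. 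So $W\in\mcF^{k+n-m}$, and the gap to $\mcF_k$ is exactly $n-m$.

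The step you left open, the $\mcF_k$-measurability of $U$, must be closed, and your proposed fallback does not fit the error budget. You charge each replacement twice, once in $\overline\mbQ_{\s}(A\cap B)$ and once in $\overline\mbQ_{\s}(A)\overline\mbQ_{\s}(B)$. Replacing $\s(\omega)$ inside $U$ as well would therefore give $8\alpha_{\pr}(n-m)+r_{n-1}+4r_m$, not the stated bound. There are two ways out.

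\emph{(i)} Apply \Cref{assumption:forgetting} with $\vartheta\equiv\rho_\ast$ and use $\theta$-invariance to get $\mbE_{\pr}\norm{\s(\omega)-\Phi^{(m)}_{\theta^{-m}\omega}(\rho_\ast)}_1\le r_m\to0$. Each approximant is $\mcF_0$-measurable, so $\s$ agrees a.s.\ with an $\mcF_0$-measurable state. Hence $U$ is a.s.\ $\mcF_k$-measurable at no cost.

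\emph{(ii)} Do what the paper does: replace $\s$ on both sides and write $\mathrm{Cov}(U,V)=\mathrm{Cov}(U',W)+\mathrm{Cov}(U-U',V)+\mathrm{Cov}(U',V-W)$. Since $|\mathrm{Cov}(Z,Z')|\le\mbE|Z|$ whenever $0\le Z'\le 1$, each replacement costs $r_m$ only once.

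If you combine (i) with this covariance accounting, you even get $8\alpha_{\pr}(n-m)+r_{n-1}+r_m$.

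Finally, a bound on single rectangles does not pass to finite disjoint unions by additivity; the constant would multiply by the number of pieces. Instead, run the same estimates directly for general $A\in\mcG_k$ and $B\in\mcG^{k+n}$ through their $\omega$-sections. This works because your bounds only use $0\le p\le 1$ and $0\le Y\le\mbI$. The paper also passes over this point quickly.
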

        \begin{proof}
            Fix $n\ge2$ and $k\ge1$. Let $E\in\mcF_k$, $F\in\mcF^{k+n}$, $A\in\sigma(\pi_1,\ldots,\pi_k)$ and $B\in\sigma(\pi_{k+n},\pi_{k+n+1},\ldots)$. Consider the rectangles $E\times A$ and $F\times B$ in $\mcF\otimes\Sigma$. We must bound
            \[
                \Delta
                :=
                \left|
                    \overline\mbQ_{\s}\left((E\times A)\cap(F\times B)\right)
                    -
                    \overline\mbQ_{\s}(E\times A)\,
                    \overline\mbQ_{\s}(F\times B)
                \right|.
            \]
            By definition of the annealed law,
            \[
                \overline\mbQ_{\s}(E\times A)
                =
                \int_{\Omega} 1_E(\omega)\,\mbQ_{\s(\omega);\omega}(A)\,\dee\pr(\omega),
            \]
            and similarly for the other terms. A direct expansion gives
            \[
                \Delta \le |I_1| + |I_2|,
            \]
            where
            \[
                I_1
                :=
                \int_{\Omega}
                    1_{E\cap F}(\omega)
                    \left(
                        \mbQ_{\s(\omega);\omega}(A\cap B)
                        -
                        \mbQ_{\s(\omega);\omega}(A)\,\mbQ_{\s(\omega);\omega}(B)
                    \right)
                \,\dee\pr(\omega),
            \]
            and
            \[
                I_2
                :=
                \mathrm{Cov}_{\pr}\!\left(
                    U(\omega),V(\omega)
                \right),
            \]
            with
            \[
                U(\omega)
                :=
                1_E(\omega)\,\mbQ_{\s(\omega);\omega}(A),
                \qquad
                V(\omega)
                :=
                1_F(\omega)\,\mbQ_{\s(\omega);\omega}(B),
            \]
            and by construction, $0\le U(\omega)\le1$ and $0\le V(\omega)\le1$ for all $\omega$. 

            \medskip 
            We first consider $|I_1|$ and prove that it is bounded above by $r_{n-1}$ where $r_{n-1}$ as in \Cref{assumption:forgetting}.
       
            We use the POVM representation and \Cref{lem:disordered_k_m_lemma_1_2} to factor the joint probability of $A$ and $B$. Since $A\in\sigma(\pi_1,\ldots,\pi_k)$, there exists a finite set $S\subseteq\mcA^k$ such that
            \[
                A
                =
                \bigsqcup_{\pmb a\in S} \pi_{1,\ldots,k}^{-1}(\pmb a),
                \qquad
                \pmb a=(a_1,\ldots,a_k).
            \]
            Since $B \in \sigma(\pi_{k+n}, \pi_{k+n+1},\ldots) = \varsigma^{-k}((\sigma(\pi_n \, \pi_{n+1}, \ldots)))$ we have that $B=\varsigma^{-k}(C)$ for some $C\in\sigma(\pi_n,\pi_{n+1},\ldots)$. 
            By the duality relation between $\mbQ_{\s(\omega);\omega}$ and the POVM $\mbQ^*_\omega$,
            \[
                \mbQ_{\s(\omega);\omega}(A\cap B)
                =
                \ip{
                    \s(\omega)
                }{
                    \mbQ^*_\omega
                    \left(
                        A\cap\varsigma^{-k}(C)
                    \right)
                }.
            \]
            Using the decomposition of $A$ and applying \Cref{lem:disordered_k_m_lemma_1_2} with $n=k$ and $S=C$, we obtain
            \[
                \mbQ^*_\omega
                \left(
                    \pi_{1,\ldots,k}^{-1}(\pmb a)\cap\varsigma^{-k}(C)
                \right)
                =
                \left(\mcT^{(k)}_{\pmb a;\omega}\right)^\dagger
                \left(
                    \mbQ^*_{\theta^k(\omega)}(C)
                \right),
            \]
            for $\pr$–almost every $\omega$. Summing over $\pmb a\in S$ gives
            \[
                \mbQ_{\s(\omega);\omega}(A\cap B)
                =
                \sum_{\pmb a\in S}
                \ip{
                    \s(\omega)
                }{
                    \left(\mcT^{(k)}_{\pmb a;\omega}\right)^\dagger
                    \left(
                        \mbQ^*_{\theta^k(\omega)}(C)
                    \right)
                }.
            \]
            Let
            \[
                \widetilde T_{S;\omega}
                :=
                \sum_{\pmb a\in S} \mcT^{(k)}_{\pmb a;\omega},
                \qquad
                K_\omega(A)
                :=
                \mbQ_{\s(\omega);\omega}(A)
                =
                \ip{
                    \s(\omega)
                }{
                    \mbQ^*_\omega(A)
                }.
            \]
            Then
            \[
                \mbQ_{\s(\omega);\omega}(A\cap B)
                =
                \ip{
                    \widetilde T_{S;\omega}(\s(\omega))
                }{
                    \mbQ^*_{\theta^k(\omega)}(C)
                }.
            \]
            Whenever $K_\omega(A)>0$, we define the posterior state using \Cref{eq:projective_action} as follows. 
            \[
                \rho_{A,\s}(\omega)
                :=
                \begin{cases}
                    \dfrac{\widetilde T_{S;\omega}(\s(\omega))}{K_\omega(A)}\in\states.\, & K_{\omega}(A) > 0,\\
                    \rho_\ast & K_\omega(A) = 0. 
                \end{cases}
            \]
            If $K_\omega(A)=0$, the event $A$ has zero quenched probability and the contribution to $I_1$ from such $\omega$ is zero.
            Thus, in all cases,
            \[
                \mbQ_{\s(\omega);\omega}(A\cap B)
                =
                K_\omega(A)\,
                \mbQ_{\rho_{A,\s}(\omega);\theta^k(\omega)}(C),
            \]
            and by definition,
            \[
                \mbQ_{\s(\omega);\omega}(A)
                =
                K_\omega(A).
            \]

            Now, note that
            \[
                \mbQ_{\s(\omega);\omega}(B)
                =
                \mbQ_{\s(\omega);\omega}\left(\varsigma^{-k}(C)\right)
                =
                \ip{
                    \s(\omega)
                }{
                    \mbQ^*_\omega\left(\varsigma^{-k}(C)\right)
                }.
            \]
            Applying \Cref{lem:disordered_k_m_lemma_1_2} with $n=k$ and $S=C$, but now summing over all possible $k$–step prefixes, we obtain
            \[
                \mbQ^*_\omega\left(\varsigma^{-k}(C)\right)
                =
                \left(\Phi^{(k)}_\omega\right)^\dagger
                \left(
                    \mbQ^*_{\theta^k(\omega)}(C)
                \right).
            \]
            Hence
            \[
                \mbQ_{\s(\omega);\omega}(B)
                =
                \ip{
                    \Phi^{(k)}_\omega\left(\s(\omega)\right)
                }{
                    \mbQ^*_{\theta^k(\omega)}(C)
                }
                =
                \mbQ_{\Phi^{(k)}_\omega(\s(\omega));\,\theta^k(\omega)}(C).
            \]
            In particular, the difference appearing in $I_1$ can be written as
            \[
                \mbQ_{\s(\omega);\omega}(A\cap B)
                -
                \mbQ_{\s(\omega);\omega}(A)\,\mbQ_{\s(\omega);\omega}(B)
                =
                K_\omega(A)\,
                \left(
                    \mbQ_{\rho_{A,\s}(\omega);\theta^k(\omega)}(C)
                    -
                    \mbQ_{\Phi^{(k)}_\omega(\s(\omega));\,\theta^k(\omega)}(C)
                \right).
            \]
            Since \(B\in\sigma(\pi_{k+n},\pi_{k+n+1},\ldots)\), there exists
            \(C\in\sigma(\pi_n,\pi_{n+1},\ldots)\) such that
            \[
                B=\varsigma^{-k}(C).
            \]
            Since \(C\in\sigma(\pi_n,\pi_{n+1},\ldots)=\varsigma^{-(n-1)}(\Sigma)\),
            there exists \(C'\in\Sigma\) such that
            \[
                C=\varsigma^{-(n-1)}(C').
            \]
            Fix $\omega\in\Omega$ and a state $\rho\in\states$. 
            Then
            \[
                \mbQ_{\rho;\theta^k(\omega)}(C)
                = \mbQ_{\rho;\theta^k(\omega)}\left(\varsigma^{-(n-1)}(C')\right)
                = \ip{\rho}{\mbQ^*_{\theta^k(\omega)}\left(\varsigma^{-(n-1)}(C')\right)}.
            \]
            
            We now apply \Cref{lem:disordered_k_m_lemma_1_2} with $\omega$ replaced by $\theta^k(\omega)$, $n$ replaced by $n-1$, and $S=C'$, and then sum over all $(n-1)$-step prefixes. 
            This yields
            \[
                \mbQ^*_{\theta^k(\omega)}\left(\varsigma^{-(n-1)}(C')\right)
                =
                \left(\Phi_{\theta^{k+n-1}(\omega)}\circ\cdots\circ
                    \Phi_{\theta^{k+1}(\omega)}\right)\adj
                \left(\mbQ^*_{\theta^{k+n-1}(\omega)}(C')\right).
            \]
            Hence
            \[
                \mbQ_{\rho;\theta^k(\omega)}(C)
                =
                \left\langle
                    \Phi_{\theta^{k+n-1}(\omega)}\circ\cdots\circ
                    \Phi_{\theta^{k+1}(\omega)}(\rho)
                \,\Big|\,
                    \mbQ^*_{\theta^{k+n-1}(\omega)}(C')
                \right\rangle.
            \]
            
            The POVM element $\mbQ^*_{\theta^{k+n-1}(\omega)}(C')$ is positive and bounded above by the identity, so for any pair of states $\rho,\sigma\in\states$,
            \[
                \left|
                    \mbQ_{\rho;\theta^k(\omega)}(C)
                    -
                    \mbQ_{\sigma;\theta^k(\omega)}(C)
                \right|
                \le
                \norm{
                    \Phi_{\theta^{k+n-1}(\omega)}\circ\cdots\circ
                    \Phi_{\theta^{k+1}(\omega)}(\rho-\sigma)
                }_1.
            \]
            Substituting $\rho=\rho_{A,\s}(\omega)$ and $\sigma=\Phi^{(k)}_\omega(\s(\omega))$ into this bound and integrating over $\omega$ yields
            \[
                |I_1|
                \le
                \int_\Omega
                    \norm{
                        \Phi_{\theta^{k+n-1}(\omega)}\circ\cdots\circ
                        \Phi_{\theta^{k+1}(\omega)}
                        \left(\rho_{A,\s}(\omega)-\Phi^{(k)}_\omega(\s(\omega))\right)
                    }_1
                \,\dee\pr(\omega),
            \]
            Since \(\Phi^{(k)}_\omega(\s(\omega))=\s(\theta^k\omega)\), the previous estimate may be rewritten as
            \[
                |I_1|
                \le
                \int_\Omega
                    \Bigl\|
                        \Phi^{(n-1)}_{\theta^k(\omega)}\bigl(\rho_{A,\s}(\omega)\bigr)
                        -
                        \Phi^{(n-1)}_{\theta^k(\omega)}\bigl(\s(\theta^k\omega)\bigr)
                    \Bigr\|_1
                \,\dee\pr(\omega).
            \]
            By dynamical stationarity of \(\s\),
            \[
                \Phi^{(n-1)}_{\theta^k(\omega)}\bigl(\s(\theta^k\omega)\bigr)
                =
                \s(\theta^{k+n-1}(\omega)),
            \]
            and hence
            \[
                |I_1|
                \le
                \int_\Omega
                    \Bigl\|
                        \Phi^{(n-1)}_{\theta^k(\omega)}\bigl(\rho_{A,\s}(\omega)\bigr)
                        -
                        \s(\theta^{k+n-1}(\omega))
                    \Bigr\|_1
                \,\dee\pr(\omega).
            \]
            Now define a shifted random initial state by
            \[
                \vartheta_{A,k}(\omega)
                :=
                \rho_{A,\s}(\theta^{-k}(\omega)),
                \qquad \omega\in\Omega.
            \]
            Since \(\theta\) is invertible and \(\rho_{A,\s}\) is measurable, \(\vartheta_{A,k}\) is measurable as a map \(\Omega\to\states\). Using the change of variable \(\omega_k=\theta^k(\omega)\) and the \(\pr\)-invariance of \(\theta\), we obtain
            \[
                |I_1|
                \le
                \int_\Omega
                    \Bigl\|
                        \Phi^{(n-1)}_{\omega_k}\bigl(\vartheta_{A,k}(\omega_k)\bigr)
                        -
                        \s(\theta^{n-1}(\omega_k))
                    \Bigr\|_1
                \,\dee\pr(\omega_k).
            \]
            Therefore, by \Cref{assumption:forgetting},
            \[
                |I_1|
                \le
                r_{n-1}.
            \]
            
           \medskip
            We now estimate \(I_2\).
            Fix \(m\in\{1,\dots,n-1\}\). 
            Recall that
            \[
                U(\omega)=1_E(\omega)\,\mbQ_{\s(\omega);\omega}(A),
                \qquad
                V(\omega)=1_F(\omega)\,\mbQ_{\s(\omega);\omega}(B),
            \]
            where \(E\in\mcF_k\), \(F\in\mcF^{k+n}\), \(A\in\sigma(\pi_1,\dots,\pi_k)\), and
            \(B\in\sigma(\pi_{k+n},\pi_{k+n+1},\dots)\).
            Choose a deterministic reference state \(\rho_\ast\in\states\), and define
            \[
                \rho_\ast^{(m)}(\omega)
                :=
                \Phi^{(m)}_{\theta^{-m}(\omega)}(\rho_\ast),
                \qquad \omega\in\Omega.
            \]
            Since \(\rho_\ast\) is deterministic, the random state \(\rho_\ast^{(m)}(\omega)\) depends only on the instruments at times \(-m+1,\dots,0\).
            
            Define
            \[
                U'(\omega)
                :=
                1_E(\omega)\,\mbQ_{\rho_\ast^{(m)}(\omega);\omega}(A).
            \]
            Then \(U'\in\mcF_k\), since \(E\in\mcF_k\), the event \(A\) depends only on the first \(k\) outcome coordinates, and \(\rho_\ast^{(m)}(\omega)\) depends only on the instruments up to time \(0\).
            Next, since \(B\in\sigma(\pi_{k+n},\pi_{k+n+1},\dots)\), there exists \(S\in\Sigma\) such that
            \[
                B=\varsigma^{-(k+n-1)}(S).
            \]
            Using the duality relation and \Cref{lem:disordered_k_m_lemma_1_2}, we obtain
            \[
                \mbQ_{\s(\omega);\omega}(B)
                =
                \mbQ_{\Phi^{(k+n-1)}_\omega(\s(\omega));\,\theta^{k+n-1}(\omega)}(S)
                =
                \mbQ_{\s(\theta^{k+n-1}(\omega));\,\theta^{k+n-1}(\omega)}(S),
            \]
            where in the last step we used the dynamical stationarity of \(\s\).
            
            Accordingly, define
            \[
                V'(\omega)
                :=
                1_F(\omega)\,
                \mbQ_{\rho_\ast^{(m)}(\theta^{k+n-1}\omega);\theta^{k+n-1}\omega}(S).
            \]
            Then \(V'\in\mcF^{k+n-m}\), because \(F\in\mcF^{k+n}\), the state
            \(\rho_\ast^{(m)}(\theta^{k+n-1}\omega)\) depends only on the instruments at times
            \(k+n-m,\dots,k+n-1\), and the quenched law
            \(\mbQ_{\rho_\ast^{(m)}(\theta^{k+n-1}\omega);\theta^{k+n-1}\omega}(S)\)
            depends only on the instruments from time \(k+n\) onward.
            
            We now compare \(U\) with \(U'\). Since \(0\le \mbQ_\omega^* (A)\le \mbI_d\), the duality formula gives
            \[
                |U(\omega)-U'(\omega)|
                =
                1_E(\omega)\,\left|
                    \ip{\s(\omega)-\rho_\ast^{(m)}(\omega)}{\mbQ_\omega^* (A)}
                    \right|
                \le
                    1_E(\omega)\,\norm{\s(\omega)-\rho_\ast^{(m)}(\omega)}_1.
            \]
            Since
            \[
                \rho_\ast^{(m)}(\omega)=\Phi^{(m)}_{\theta^{-m}(\omega)}(\rho_\ast),
            \]
            the \(\pr\)-invariance of \(\theta\) and \Cref{assumption:forgetting} imply
            \[
                \mbE_{\pr}\!\left[|U-U'|\right]
                \le
                \mbE_{\pr}\!\left[\norm{\s(\omega)-\rho_\ast^{(m)}(\omega)}_1\right]
                \le
                r_m.
            \]
            
            Similarly,
            \[
                |V(\omega)-V'(\omega)|
                \le
                1_F(\omega)\,
                \norm{
                    \s(\theta^{k+n-1}\omega)
                    -
                    \rho_\ast^{(m)}(\theta^{k+n-1}\omega)
                }_1,
            \]
            and again, by \(\pr\)-invariance of \(\theta\) and \Cref{assumption:forgetting},
            \[
                \mbE_{\pr}\!\left[|V-V'|\right]
                \le
                r_m.
            \]
            
            We now decompose the covariance:
            \[
                I_2=\mathrm{Cov}_{\pr}(U,V)
                =
                \mathrm{Cov}_{\pr}(U',V')
                +
                \mathrm{Cov}_{\pr}(U-U',V)
                +
                \mathrm{Cov}_{\pr}(U',V-V').
            \]
            Since \(0\le U',V,V'\le 1\), we have
            \[
                |\mathrm{Cov}_{\pr}(U-U',V)|
                \le
                \mbE_{\pr}\!\left[|U-U'|\right],
                \qquad
                |\mathrm{Cov}_{\pr}(U',V-V')|
                \le
                \mbE_{\pr}\!\left[|V-V'|\right].
            \]
            Moreover, \(U'\in\mcF_k\) and \(V'\in\mcF^{k+n-m}\), so by \Cref{lemma:correlation_bounds} with \(p=q=\infty\),
            \[
                |\mathrm{Cov}_{\pr}(U',V')|
                \le
                8\,\alpha_{\pr}(n-m).
            \]
            Combining these estimates yields
            \[
                |I_2|
                \le
                8\,\alpha_{\pr}(n-m)+2r_m.
            \]
            
            Finally, together with the bound \(|I_1|\le r_{n-1}\), we obtain
            \[
                \Delta
                \le
                8\,\alpha_{\pr}(n-m)+r_{n-1}+2r_m.
            \]
            Since the right-hand side is independent of \(k\), \(E\), \(F\), \(A\), and \(B\), taking the supremum and using a monotone class argument gives
            \[
                \overline\alpha_{\overline\mbQ_{\s}}(n)
                \le
                8\,\alpha_{\pr}(n-m)+r_{n-1}+2r_m,
            \]
            which completes the proof.
        \end{proof}

    We note that the mixing bounds from \Cref{lemma:alpha_mix_skew} together with the \Cref{assumption_mixing_base,assumption:forgetting} are sufficient to obtain an annealed CLT for the outcome frequencies under the stationary initial state $\s$. 


\subsection{Proof of \texorpdfstring{\Cref{thm:aclt}}{Theorem 2}}
\label{section:proof_of_B}


    \annealedclt*
        \begin{proof}
            Set
            \[
                X_k := \delta N_k^{\pmb b} - \mu_{\pmb b},\qquad k\ge1.
            \]
            Then $(X_k)_{k\ge1}$ is a sequence of centred random variables on the skew-product space $(\Omega\times\mcA^\mbN,\mcF\otimes\Sigma,\overline\mbQ_{\s})$.
            Under \Cref{assumption1} and with the existence and uniqueness of $\s$ (from \Cref{assumption:forgetting}), we have that \Cref{thm:LLN} yields that $\tau$ is $\overline\mbQ_{\s}$-probability preserving.
            By definition,
            \[
                \delta N_k^{\pmb b}
                = \delta N_1^{\pmb b}\circ\tau^{k-1},
                \qquad k\ge1,
            \]
            hence
            \[
                X_k
                = X_1\circ\tau^{k-1}.
            \]
            Since $\tau$ preserves $\overline\mbQ_{\s}$, the sequence $(X_k)_{k\ge1}$ is strictly stationary under $\overline\mbQ_{\s}$.
        
            For every $(\omega,\bar a)$ we have $\delta N_k^{\pmb b}(\omega,\bar a)\in\{0,1\}$ and $|\mu_{\pmb b}|\le1$, hence $|X_k|\le2$. 
            In particular, $X_1\in L^{2+\delta}(\overline\mbQ_{\s})$ for any $\delta>0$.
            
            Now, let $\alpha_{X,\overline\mbQ_{\s}}(n)$ denote the $\alpha$--mixing coefficients of the process $X$ under $\overline\mbQ_{\s}$ as defined in \Cref{section:mixing}:
            \[
                \alpha_{X, \overline\mbQ_{\s}}(n)
                :=
                \sup_{k\ge1}
                \alpha_{\overline\mbQ_{\s}}\!\left(
                  \sigma(X_1,\ldots,X_k),\,
                  \sigma(X_{k+n},X_{k+n+1},\ldots)
                \right).
            \]
            We compare these with the skew-product coefficients $\overline\alpha(n)$ defined in \cref{eq:alpha_n_skew}. 
            Since
            \[
                \delta N_k^{\pmb b}(\omega,\bar a)
                = \mathbf 1_{\{a_k=b_1,\ldots,a_{k+m-1}=b_m\}}(\bar a),
            \]
            each $\delta N_k^{\pmb b}$ depends only on the environment through $\omega$ and on the outcome coordinates $a_k,\ldots,a_{k+m-1}$. 
            In particular, for each $k$,
            \[
                \sigma\left(\delta N_1^{\pmb b},\ldots,\delta N_k^{\pmb b}\right)
                \subseteq \mcG_{k+m-1},
                \qquad
                \sigma\left(\delta N_{k+n}^{\pmb b},\delta N_{k+n+1}^{\pmb b},\ldots\right)
                \subseteq \mcG^{k+n},
            \]
            where $\mcG_j$ and $\mcG^j$ are the past and future $\sigma$--fields on the skew product introduced in the beginning of this section. 
            Since $X_k$ is obtained from $\delta N_k^{\pmb b}$ by subtracting the constant $\mu_{\pmb b}$, we have
            \[
                \sigma(X_1,\ldots,X_k)
                = \sigma(\delta N_1^{\pmb b},\ldots,\delta N_k^{\pmb b})
                \subseteq \mcG_{k+m-1},
            \]
            and similarly for the future $\sigma$–fields. By monotonicity of $\alpha$-mixing coefficients,
            \[
                \alpha_{X,\overline\mbQ_{\s}}(n)
                \le
                \sup_{k\ge1}
                \alpha_{\overline\mbQ_{\s}}\!\left(\mcG_{k+m-1},\mcG^{k+n}\right)
                = \overline\alpha_{\overline\mbQ_{\s}}(n-m+1),
                \qquad n> m-1.
            \]

            Set
            \[
                \zeta:=\frac{\delta}{2+\delta}\in(0,1).
            \]
            By \Cref{assumption_mixing_base,assumption:forgetting}, we have
            \[
                \sum_{j=1}^\infty \bigl[\alpha_{\pr}(j)\bigr]^\zeta<\infty,
                \qquad
                \sum_{j=1}^\infty r_j^\zeta<\infty.
            \]
            Let \(\ell:=|\pmb b|\). From the previous step, for every \(n>\ell-1\),
            \[
                \alpha_{X,\overline\mbQ_{\s}}(n)
                \le
                \overline\alpha_{\overline\mbQ_{\s}}(n-\ell+1).
            \]
            Now fix \(r\ge2\). Applying \Cref{lemma:alpha_mix_skew} with
            \[
                q:=\left\lfloor \frac r2 \right\rfloor \in \{1,\dots,r-1\},
            \]
            we obtain
            \[
                \overline\alpha_{\overline\mbQ_{\s}}(r)
                \le
                8\,\alpha_{\pr}\!\left(r-\left\lfloor \frac r2\right\rfloor\right)
                + r_{r-1}
                + 2r_{\lfloor r/2\rfloor}.
            \]
            Since \(r-\lfloor r/2\rfloor\ge \lfloor r/2\rfloor\) and \(\alpha_{\pr}\) is nonincreasing,
            \[
                \overline\alpha_{\overline\mbQ_{\s}}(r)
                \le
                8\,\alpha_{\pr}\!\left(\left\lfloor \frac r2\right\rfloor\right)
                + r_{r-1}
                + 2r_{\lfloor r/2\rfloor}.
            \]
            Substituting \(r=n-\ell+1\), we get for all \(n\ge \ell+1\),
            \[
                \alpha_{X,\overline\mbQ_{\s}}(n)
                \le
                8\,\alpha_{\pr}\!\left(\left\lfloor \frac{n-\ell+1}{2}\right\rfloor\right)
                + r_{n-\ell}
                + 2r_{\lfloor (n-\ell+1)/2\rfloor}.
            \]

            Since \(0<\zeta\le1\), the map \(x\mapsto x^\zeta\) is subadditive on \([0,\infty)\), and therefore
            \[
                \bigl[\alpha_{X,\overline\mbQ_{\s}}(n)\bigr]^\zeta
                \le
                8^\zeta
                \left[
                    \alpha_{\pr}\!\left(\left\lfloor \frac{n-\ell+1}{2}\right\rfloor\right)
                \right]^\zeta
                +
                r_{n-\ell}^\zeta
                +
                2^\zeta r_{\lfloor (n-\ell+1)/2\rfloor}^\zeta .
            \]
            Summing over \(n\ge \ell+1\), we obtain
            \begin{align*}
                \sum_{n=\ell+1}^\infty
                \bigl[\alpha_{X,\overline\mbQ_{\s}}(n)\bigr]^\zeta
                &\le
                8^\zeta
                \sum_{n=\ell+1}^\infty
                \left[
                    \alpha_{\pr}\!\left(\left\lfloor \frac{n-\ell+1}{2}\right\rfloor\right)
                \right]^\zeta
                +
                \sum_{n=\ell+1}^\infty r_{n-\ell}^\zeta
                +
                2^\zeta
                \sum_{n=\ell+1}^\infty r_{\lfloor (n-\ell+1)/2\rfloor}^\zeta.
            \end{align*}
            Now each integer \(j\ge1\) occurs as
            \(
            \left\lfloor \frac{n-\ell+1}{2}\right\rfloor
            \)
            for at most two values of \(n\), so
            \[
                \sum_{n=\ell+1}^\infty
                \left[
                    \alpha_{\pr}\!\left(\left\lfloor \frac{n-\ell+1}{2}\right\rfloor\right)
                \right]^\zeta
                \le
                2\sum_{j=1}^\infty \bigl[\alpha_{\pr}(j)\bigr]^\zeta
                <\infty,
            \]
            and similarly,
            \[
                \sum_{n=\ell+1}^\infty r_{\lfloor (n-\ell+1)/2\rfloor}^\zeta
                \le
                2\sum_{j=1}^\infty r_j^\zeta
                <\infty.
            \]
            Also,
            \[
                \sum_{n=\ell+1}^\infty r_{n-\ell}^\zeta
                =
                \sum_{j=1}^\infty r_j^\zeta
                <\infty.
            \]
            Hence
            \[
                \sum_{n=\ell+1}^\infty
                \bigl[\alpha_{X,\overline\mbQ_{\s}}(n)\bigr]^\zeta
                <\infty.
            \]
            Since the finitely many terms \(1\le n\le \ell\) do not affect convergence, it follows that
            \[
                \sum_{n=1}^\infty
                \bigl[\alpha_{X,\overline\mbQ_{\s}}(n)\bigr]^\zeta
                <\infty.
            \]
            That is,
            \[
                \sum_{n=1}^\infty
                \left[
                    \alpha_{X,\overline\mbQ_{\s}}(n)
                \right]^{\delta/(2+\delta)}
                <\infty.
            \]
            We have therefore verified that the stationary sequence
            \[
                X_k:=\delta N_k^{\pmb b}-\mu_{\pmb b},
                \qquad k\ge1,
            \]
            is centred, belongs to \(L^{2+\delta}(\overline\mbQ_{\s})\), and has
            \(\alpha\)-mixing coefficients satisfying
            \[
                \sum_{n=1}^\infty
                \left[
                    \alpha_{X,\overline\mbQ_{\s}}(n)
                \right]^{\delta/(2+\delta)}
                <\infty.
            \]
            Hence, by the central limit theorem for stationary \(\alpha\)-mixing sequences
            (\Cref{thm:CLT_alpha}), the covariance series
            \[
                \Sigma_{\pmb b}^2
                :=
                \mbE_{\overline\mbQ_{\s}}\!\left[X_1^2\right]
                +
                2\sum_{k=1}^\infty
                \mbE_{\overline\mbQ_{\s}}\!\left[X_1X_{1+k}\right]
            \]
            converges absolutely and defines a finite number
            \(\Sigma_{\pmb b}^2\in[0,\infty)\). Moreover,
            \[
                \frac{1}{\sqrt n}\sum_{k=1}^n X_k
                =
                \frac{1}{\sqrt n}\sum_{k=1}^n
                \left(\delta N_k^{\pmb b}-\mu_{\pmb b}\right)
                \underset{\overline\mbQ_{\s}}{\overset{\mathrm d}{\longrightarrow}}
                \mcN(0,\Sigma_{\pmb b}^2).
            \]
            If \(\Sigma_{\pmb b}^2=0\), then the limit law is the Dirac mass at \(0\), and the
            convergence is convergence to \(0\) in probability as well
            (see \Cref{lemma:variance_limit}). This is precisely the claimed statement.
        \end{proof}

    We now record the classical central limit theorem for stationary \(\alpha\)-mixing sequences that will be used below. 
    The version stated here is \cite[Theorem~10.7]{bradley2007introduction}, originating in \cite{Ibragimov_1962}.
        
    \begin{theorem}[CLT for stationary $\alpha$--mixing sequences]
    \label{thm:CLT_alpha}
        Let $X:=(X_n)_{n\ge1}$ be a strictly stationary sequence of centered real-valued random variables defined on a probability space $(\Omega,\mcF,\mu)$.
        Suppose that there exists $\delta>0$ such that $X_1\in L^{2+\delta}(\mu)$ and
        \begin{equation}
            \sum_{n=1}^\infty \left[\alpha_{X,\mu}(n)\right]^{\frac{\delta}{2+\delta}} < \infty,
        \end{equation}
        where $\alpha_{X,\mu}(n) := \alpha_X(n)$ denotes the $\alpha$--mixing coefficient of the process \(X\) under \(\mu\).
        Then the covariance series
        \begin{equation}
            \sigma^2
            :=
            \mbE_\mu\left[X_1^2\right]
            \;+\;
            2\sum_{n=1}^\infty \mbE_\mu\left[X_1 X_{1+n}\right]
        \end{equation}
        converges absolutely and defines a finite number $\sigma^2\in[0,\infty)$.
        Moreover,
        \begin{equation}
            \frac{1}{\sqrt{n}}\sum_{i=1}^n X_i
            \ \underset{\mu}{\overset{\mathrm d}\longrightarrow}\
            \mcN(0,\sigma^2)
            \qquad\text{as }n\to\infty.
        \end{equation}
        Here we allow $\sigma^2=0$, in which case the limit law is the Dirac mass at $0$, and convergence holds in probability as well.
    \end{theorem}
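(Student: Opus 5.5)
The plan is the classical Bernstein big-block/small-block argument with a truncation step. Throughout, write $S_n:=\sum_{i=1}^n X_i$ and $\zeta:=\delta/(2+\delta)$. We may assume $\alpha_{X,\mu}(n)\to0$, since this follows from the summability hypothesis.

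\emph{Step 1: the variance series.} Apply \Cref{lemma:correlation_bounds} with $p=q=2+\delta$, so that $1-1/p-1/q=\zeta$, to $A=X_1$ and $B=X_{1+n}$. Stationarity and centring give
\[
    \bigl|\mbE_\mu[X_1X_{1+n}]\bigr|\le 8\,\alpha_{X,\mu}(n)^{\zeta}\,\norm{X_1}_{L^{2+\delta}(\mu)}^2 .
\]
By hypothesis the right-hand side is summable in $n$, so the covariance series converges absolutely. Expanding $\mathrm{Var}_\mu(S_n)$ and using dominated convergence on the Ces\`aro weights $(1-k/n)$ then gives $\mathrm{Var}_\mu(S_n)/n\to\sigma^2$. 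The same bound shows $\mathrm{Var}_\mu(S_\ell)\le C\ell$ uniformly in $\ell$, with $C$ depending only on $\norm{X_1}_{2+\delta}$ and $\sum_n\alpha_{X,\mu}(n)^\zeta$. If $\sigma^2=0$, Chebyshev's inequality gives $S_n/\sqrt n\to0$ in $L^2$ and in probability, and we are done. So assume $\sigma^2>0$.

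\emph{Step 2: blocking.} Choose integers $q_n\ll p_n\ll n$ with $q_n/p_n\to0$, $p_n/n\to0$ and $(n/p_n)\,\alpha_{X,\mu}(q_n)\to0$; this is possible because $\alpha_{X,\mu}(n)\to0$. Partition $\{1,\dots,n\}$ into $k_n\approx n/(p_n+q_n)$ big blocks of length $p_n$, each followed by a small block of length $q_n$, plus a remainder. Write $S_n=\sum_j U_j+\sum_j W_j+R_n$, where $U_j$ are the big-block sums and $W_j$ the small-block sums. The variance bound $\mathrm{Var}(S_\ell)\le C\ell$ from Step 1, applied to sums of mutually separated small blocks (again via \Cref{lemma:correlation_bounds}), gives $\mbE\bigl(\sum_jW_j+R_n\bigr)^2=O(k_nq_n+p_n)=o(n)$. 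Hence only $\sum_j U_j/\sqrt n$ matters.

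\emph{Step 3: decoupling the big blocks.} Iterate the standard characteristic-function estimate (Volkonskii--Rozanov), which follows from the definition of $\alpha$ for complex bounded variables:
\[
    \Bigl|\mbE e^{it\sum_j U_j/\sqrt n}-\prod_j\mbE e^{itU_j/\sqrt n}\Bigr|\le 16\,k_n\,\alpha_{X,\mu}(q_n)\to0 .
\]
This reduces the problem to a triangular array of i.i.d.\ copies of $U_1/\sqrt n$. Their total variance is $k_n\mathrm{Var}(S_{p_n})/n\to\sigma^2$ by Steps 1 and 2. It remains to verify the Lindeberg condition, which amounts to uniform integrability of $S_{p}^2/p$ as $p\to\infty$.

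\emph{Step 4: truncation, the main obstacle.} With only $2+\delta$ moments, I will not try to bound $\mbE|S_p|^{2+\delta}$ directly. Instead, for a level $M$, split $X_i=X_i^{\le M}+X_i^{>M}$, each centred, both being measurable functions of $X_i$. Each piece is stationary and has $\alpha$-coefficients dominated by those of $X$. Applying Step 1 to the tail part gives $\sup_p \mathrm{Var}(S_p^{>M})/p\le C\,\norm{X_1^{>M}}_{2+\delta}^2$, and the right-hand side tends to $0$ as $M\to\infty$. For the bounded part I use a fourth-moment bound $\mbE|S_p^{\le M}|^4\le C_M p^2$, proved by the usual expansion over ordered index quadruples, bounding each term by $\alpha$ at the largest gap via \Cref{lemma:correlation_bounds} with $p=q=\infty$. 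This estimate, which must use only the given summability of $\alpha^\zeta$, is the delicate point. If the plain fourth-moment expansion needs more decay than is available, I would instead prove uniform integrability of $(S^{\le M}_p)^2/p$ through a moment bound of order $2+\delta'$ for small $\delta'$, again by a blocking induction. Together these give uniform integrability of $S_p^2/p$, hence Lindeberg. Combining Steps 2--4 yields $S_n/\sqrt n\overset{\mathrm d}{\to}\mcN(0,\sigma^2)$.
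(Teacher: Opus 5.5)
The paper does not prove this statement. It quotes Ibragimov's theorem from Bradley's monograph and supplies only the degenerate case $\sigma^2=0$, via \Cref{lemma:variance_limit}, which your Step~1 reproduces. Your Steps~1--3 are sound. \textbf{There is a genuine gap in Step~4, which is where the real difficulty lies.}

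The quadruple expansion you describe controls the largest gap $g$ by $\alpha(g)$, and the middle-gap case by a product of covariances plus $\alpha(g)$. Since there are of order $p\,g^2$ ordered quadruples with largest gap $g$, it only yields
\[
    \mbE\bigl|S^{\le M}_p\bigr|^4\le C_M\Bigl(p^2+p\sum_{g\le p}g^2\,\alpha_{X,\mu}(g)\Bigr).
\]
The hypothesis, together with monotonicity of $\alpha_{X,\mu}$, only forces $\alpha_{X,\mu}(g)=o\bigl(g^{-(2+\delta)/\delta}\bigr)$. For a rate such as $\alpha_{X,\mu}(g)^{\zeta}\asymp g^{-1}(\log g)^{-2}$, which is compatible with the hypothesis, the second term is of order $p^{3-2/\delta}$ up to logarithms. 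That is not $O(p^2)$ once $\delta>2$, so the bound $C_Mp^2$, and with it uniform integrability of $S_p^2/p$, is not obtained. Your fallback, a $(2+\delta')$-moment bound ``by a blocking induction'', is exactly Ibragimov's key lemma (a recursive doubling estimate for $\mbE|S_n|^{2+\delta'}$). Naming it does not prove it, so as written Lindeberg is not established.

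You can close the gap inside your own framework without any higher-moment inequality. Uniform integrability of $S_p^2/p$ over all $p$ is more than you need; it is a consequence of the theorem itself. Since $\alpha_{X,\mu}(m)=o(m^{-1/\zeta})$ with $1/\zeta>1$, take $p_n=n^{a}$ and $q_n=n^{b}$ with $\zeta/(1+\zeta)<a<1/2$ and $\zeta(1-a)\le b<a$. Then $q_n=o(p_n)$ and $(n/p_n)\,\alpha_{X,\mu}(q_n)=o(n^{1-a-b/\zeta})=o(1)$, and now also $p_n=o(\sqrt n)$.

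Write a big block as $U=V+W$ with $V=S^{\le M}_{p_n}$ and $W=S^{>M}_{p_n}$. Since $|V|\le 2Mp_n<\varepsilon\sqrt n/2$ for large $n$, one has $\{|U|>\varepsilon\sqrt n\}\subseteq\{|W|>\varepsilon\sqrt n/2\}$. Set $\eta(M):=\norm{X_1^{>M}}_{2+\delta}^2$. Using your bound $\mbE W^2\le C\eta(M)p_n$, Chebyshev, and $k_np_n\le n$,
\[
    \frac{k_n}{n}\,\mbE\bigl[U^2\mathbf 1_{\{|U|>\varepsilon\sqrt n\}}\bigr]
    \le\frac{k_n}{n}\Bigl(8M^2p_n^2\,\frac{4\,\mbE W^2}{\varepsilon^2n}+2\,\mbE W^2\Bigr)
    \le C\eta(M)\Bigl(2+\frac{32M^2p_n^2}{\varepsilon^2 n}\Bigr).
\]
The $\limsup_n$ of the right side is $2C\eta(M)$, which tends to $0$ as $M\to\infty$. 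So Lindeberg holds, and with these block sizes Steps~1--3 go through unchanged.
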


    \begin{remark}
        In many references, the conclusion is stated only for $\sigma^2>0$.
        The degenerate case \(\sigma^2=0\) then follows immediately once one knows that
        \[
            \mathrm{Var}_\mu\!\left(n^{-1/2}\sum_{i=1}^n X_i\right)\to \sigma^2.
        \]
        Indeed, if \(\sigma^2=0\), then the normalized sums converge to \(0\) in \(L^2\), hence also in probability and in distribution. We record this in the following lemma.
    \end{remark}

    \begin{lemma}
    \label{lemma:variance_limit}
        Let \((X_n)_{n\ge1}\) be a strictly stationary, centered sequence of real-valued random variables on a probability space \((\Omega,\mcF,\mu)\), with \(X_1\in L^2(\mu)\). Write
        \[
            \gamma(k)
            :=
            \mathrm{Cov}_\mu(X_1,X_{1+k})
            =
            \mbE_\mu[X_1X_{1+k}],
            \qquad k\ge0,
        \]
        and assume that
        \[
            \sum_{k=1}^\infty |\gamma(k)|<\infty.
        \]
        Define
        \[
            \sigma^2
            :=
            \gamma(0)+2\sum_{k=1}^\infty \gamma(k),
        \]
        which is well-defined by absolute summability. Then, with \(S_n:=\sum_{i=1}^n X_i\),
        \[
            \mathrm{Var}_\mu\!\left(\frac{S_n}{\sqrt n}\right)
            =
            \frac1n\mathrm{Var}_\mu(S_n)
            \longrightarrow
            \sigma^2
            \qquad\text{as }n\to\infty.
        \]
        In particular, \(\sigma^2\ge0\). If moreover \(\sigma^2=0\), then
        \[
            \frac{S_n}{\sqrt n}\to0
            \qquad\text{in }L^2(\mu),
        \]
        and hence also in probability.
    \end{lemma}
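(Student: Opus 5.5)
Expanding the variance of $S_n$ under stationarity gives the classical identity
\[
    \mathrm{Var}_\mu(S_n)=\sum_{i,j=1}^n \gamma(|i-j|)=n\gamma(0)+2\sum_{k=1}^{n-1}(n-k)\gamma(k),
\]
which holds because $\mathrm{Cov}_\mu(X_i,X_j)=\gamma(|i-j|)$ by strict stationarity; all terms are finite since $X_1\in L^2(\mu)$. Dividing by $n$ yields
\[
    \frac1n\mathrm{Var}_\mu(S_n)=\gamma(0)+2\sum_{k=1}^{n-1}\Bigl(1-\frac kn\Bigr)\gamma(k).
\]

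The convergence to $\sigma^2$ then follows from dominated convergence for series, applied to counting measure on $\mbN$. The summands $\mathbf 1_{\{k<n\}}(1-k/n)\gamma(k)$ converge pointwise to $\gamma(k)$ as $n\to\infty$ and are dominated in absolute value by $|\gamma(k)|$, which is summable by hypothesis. Hence the sum tends to $\sum_{k\ge1}\gamma(k)$, and $\frac1n\mathrm{Var}_\mu(S_n)\to\sigma^2$. Since every term on the left is a variance, it is nonnegative, so the limit satisfies $\sigma^2\ge0$.

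For the degenerate case, note that the $X_i$ are centered, so $\mbE_\mu[(S_n/\sqrt n)^2]=\mathrm{Var}_\mu(S_n/\sqrt n)\to0$ when $\sigma^2=0$. This is exactly convergence $S_n/\sqrt n\to0$ in $L^2(\mu)$. Chebyshev's inequality, $\mu(|S_n/\sqrt n|>\epsilon)\le\epsilon^{-2}\mbE_\mu[S_n^2/n]$, then gives convergence in probability, and therefore also convergence in distribution to $\delta_0=\mcN(0,0)$.

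The only step that needs any care is justifying the passage to the limit inside the weighted Cesàro sum. Domination by the summable sequence $|\gamma(k)|$ settles it, so no step is a genuine obstacle; the rest is bookkeeping.
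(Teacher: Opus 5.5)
Your proposal is correct and follows essentially the same route as the paper: the same expansion of $\mathrm{Var}_\mu(S_n)$ into the weighted sum $\gamma(0)+2\sum_{k<n}(1-k/n)\gamma(k)$, followed by passing to the limit. Where the paper splits off a tail $\sum_{k>K}|\gamma(k)|<\varepsilon$ by hand, you invoke dominated convergence for series, which is the same argument; the degenerate case $\sigma^2=0$ is handled identically via $L^2$ convergence and Chebyshev.
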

        \begin{proof}
            Since $(X_n)$ is centered and strictly stationary,
            \[
                \mathrm{Var}_\mu(S_n)
                = \mbE_\mu\left[S_n^2\right]
                = \sum_{i=1}^n\sum_{j=1}^n \mbE_\mu[X_i X_j]
                = \sum_{i=1}^n\sum_{j=1}^n \gamma(|i-j|).
            \]
            Grouping terms by lag $k:=|i-j|$, we obtain the standard variance decomposition
            \[
                \mathrm{Var}_\mu(S_n)
                = n\,\gamma(0)
                  + 2\sum_{k=1}^{n-1} (n-k)\,\gamma(k),
            \]
            so that
            \begin{equation}
            \label{eq:variance_scaled}
                \mathrm{Var}_\mu\!\left(\frac{S_n}{\sqrt{n}}\right)
                = \frac{1}{n}\mathrm{Var}_\mu(S_n)
                = \gamma(0)
                  + 2\sum_{k=1}^{n-1} \left(1 - \frac{k}{n}\right)\gamma(k).
            \end{equation}
            
            We now pass to the limit $n\to\infty$.
            Fix $\varepsilon>0$. By absolute summability of $(\gamma(k))_{k\ge1}$, there exists $K\in\mbN$ such that
            \[
                \sum_{k>K} |\gamma(k)| < \varepsilon.
            \]

            Split the sum in \eqref{eq:variance_scaled} as
            \[
                \sum_{k=1}^{n-1} \left(1 - \frac{k}{n}\right)\gamma(k)
                =
                \sum_{k=1}^{K} \left(1 - \frac{k}{n}\right)\gamma(k)
                +
                \sum_{k=K+1}^{n-1} \left(1 - \frac{k}{n}\right)\gamma(k).
            \]
            For the finite part,
            \[
                \lim_{n\to\infty}\sum_{k=1}^{K} \left(1 - \frac{k}{n}\right)\gamma(k)
                = \sum_{k=1}^{K} \gamma(k),
            \]
            since $(1-k/n)\to1$ for each fixed $k$ and the sum is finite.
            
            For the tail, we use $0\le 1 - k/n \le 1$ to obtain the bound
            \[
                \left|
                   \sum_{k=K+1}^{n-1} \left(1 - \frac{k}{n}\right)\gamma(k)
                \right|
                \le
                \sum_{k=K+1}^{\infty} |\gamma(k)|
                < \varepsilon
                \qquad\text{for all }n\ge K+1.
            \]
            Combining these estimates, we find
            \[
                \limsup_{n\to\infty}
                \left|
                   \sum_{k=1}^{n-1} \left(1 - \frac{k}{n}\right)\gamma(k)
                   - \sum_{k=1}^\infty \gamma(k)
                \right|
                \le \varepsilon.
            \]
            Since $\varepsilon>0$ was arbitrary, it follows that
            \[
                \lim_{n\to\infty}
                \sum_{k=1}^{n-1} \left(1 - \frac{k}{n}\right)\gamma(k)
                = \sum_{k=1}^\infty \gamma(k).
            \]
            
            Returning to \eqref{eq:variance_scaled}, we conclude that
            \[
                \lim_{n\to\infty}
                \mathrm{Var}_\mu\!\left(\frac{S_n}{\sqrt{n}}\right)
                = \gamma(0) + 2\sum_{k=1}^\infty \gamma(k)
                = \sigma^2,
            \]
            as claimed.
            If $\sigma^2=0$, then
            \[
                \mbE_\mu\!\left[\left|\frac{S_n}{\sqrt{n}}\right|^2\right]
                = \mathrm{Var}_\mu\!\left(\frac{S_n}{\sqrt{n}}\right)
                \longrightarrow 0,
            \]
            so $\frac{S_n}{\sqrt{n}}\to0$ in $L^2(\mu)$ and hence in probability.
        \end{proof}


\section{Coupling Space and Proof of \texorpdfstring{\Cref{thm:universal-clt-A}}{Theorem 3}}
\label{Section:Coupling}


    Recall that for a fixed pattern $\pmb b\in\mcA^m$ the block indicator
    \[
        \delta N_k^{\pmb b}
        = \mathbf 1_{\{(A_k,\dots,A_{k+m-1})=\pmb b\}},
    \]
    depends only on the outcome sequence $(A_n)_{n\ge1}$ and therefore does not depend on the choice of initial state of the system, even though its distribution does.
    In order to compare these distributions for different (random) initial states, it is natural to place the corresponding outcome processes on a common probability space by means of a coupling of their annealed path measures.
    This is the role of the coupling space introduced in this section.


\subsection{Abstract Coupling Space}
\label{section:coupling_space}


    We fix the outcome path space $(\mcA^\mbN,\Sigma)$, where $\Sigma$ is the product $\sigma$-algebra generated by cylinder sets.
    Given two probability measures $\mu$ and $\nu$ on $(\mcA^\mbN,\Sigma)$, a \emph{coupling} of $(\mu,\nu)$ is any probability measure $\widehat\mbQ$ on $(\mcA^\mbN\times\mcA^\mbN,\Sigma\otimes\Sigma)$ such that
    \[
        \widehat\mbQ(A\times\mcA^\mbN) = \mu(A)
        \quad\text{and}\quad
        \widehat\mbQ(\mcA^\mbN\times B) = \nu(B)
        \qquad\forall\,A,B\in\Sigma.
    \]
    We denote by $\Pi(\mu,\nu)$ the set of all such couplings.
    It is well known that $\Pi(\mu,\nu)$ is nonempty for all $\mu,\nu$ (for instance $\mu\otimes\nu\in\Pi(\mu,\nu)$).

    In the present setting, for each random initial state $\vartheta:\Omega\to\states$ we have the annealed measure $\overline\mbQ_{\vartheta}$ on $(\Omega\times\mcA^\mbN,\mcF\otimes\Sigma)$ and its \emph{annealed outcome law}
    \[
        \overline\mbQ^{\mathrm{out}}_{\vartheta}
        := \overline\mbQ_{\vartheta} \circ \pi_{\mcA^\mbN}^{-1}
        \in \mcP(\mcA^\mbN),
    \]
    where $\pi_{\mcA^\mbN}(\omega,\bar a)=\bar a$ is the projection onto the outcome coordinate.
    Since all block-counting observables depend only on the outcome sequence, the relevant objects for coupling and CLT transfer are the outcome laws $\overline\mbQ^{\mathrm{out}}_{\s}$ and $\overline\mbQ^{\mathrm{out}}_{\vartheta}$ on $(\mcA^\mbN,\Sigma)$.

    Fix any \(\widehat\mbQ\in\Pi(\overline\mbQ^{\mathrm{out}}_{\s},\overline\mbQ^{\mathrm{out}}_{\vartheta})\).
    We regard \((\mcA^\mbN\times\mcA^\mbN,\Sigma\otimes\Sigma,\widehat\mbQ)\) as a \emph{coupling space} for the two outcome laws, with coordinate projections
    \[
        \bar A^{(\s)}(\bar a,\bar a'):=\bar a,
        \qquad
        \bar A^{(\vartheta)}(\bar a,\bar a'):=\bar a',
        \qquad
        \bar a,\bar a'\in\mcA^\mbN.
    \]
    We also introduce the time-coordinate maps
    \[
        A_k^{(\s)}(\bar a,\bar a'):=a_k,
        \qquad
        A_k^{(\vartheta)}(\bar a,\bar a'):=a_k',
        \qquad k\ge1.
    \]
    Thus \((A_k^{(\s)})_{k\ge1}\) and \((A_k^{(\vartheta)})_{k\ge1}\) are the two coupled outcome processes, with the correct marginals:
    \[
        \Law_{\widehat\mbQ}\!\left((A_k^{(\s)})_{k\ge1}\right)
        = \overline\mbQ^{\mathrm{out}}_{\s},
        \qquad
        \Law_{\widehat\mbQ}\!\left((A_k^{(\vartheta)})_{k\ge1}\right)
        = \overline\mbQ^{\mathrm{out}}_{\vartheta}.
    \]

    On this coupling space, we define the pattern indicators by
    \[
        \delta N_k^{\pmb b;(\s)}(\bar a,\bar{a'})
            := \delta N_k^{\pmb b}\left(A^{(\s)} (\bar a, \bar{a'})\right)  
            = \mathbf 1_{\{(A_k^{(\s)},\dots,A_{k+m-1}^{(\s)})=\pmb b\}},
    \]
    and similarly 
    \[
        \delta N_k^{\pmb b;(\vartheta)}(\bar a,\bar{a'})
        :=    \delta N_k \left(A^{(\vartheta)}(\bar a, \bar{a'})\right)
        := \mathbf 1_{\{(A_k^{(\vartheta)},\dots,A_{k+m-1}^{(\vartheta)})=\pmb b\}},
    \]
    and the corresponding block-counting processes
    \[
        N_n^{\pmb b;(\s)}
            := \sum_{k=1}^n \delta N_k^{\pmb b;(\s)},
        \qquad
        N_n^{\pmb b;(\vartheta)}
            := \sum_{k=1}^n \delta N_k^{\pmb b;(\vartheta)}.
    \]
    By construction, for each $k$ the law of $\delta N_k^{\pmb b;(\s)}$ under $\widehat\mbQ$ coincides with the law of $\delta N_k^{\pmb b}$ under $\overline\mbQ_{\s}^{\mathrm{out}}$, and the law of $\delta N_k^{\pmb b;(\vartheta)}$ under $\widehat\mbQ$ coincides with the law of $\delta N_k^{\pmb b}$ under $\overline\mbQ_{\vartheta}^{\mathrm{out}}$.

    The admissibility condition introduced in \Cref{dfn:admissible_state} may be viewed as a quantitative hypothesis on the existence of such a coupling with controlled discrepancy between these pattern indicators.
    For convenience, we restate it in the present setting: a random initial state \(\vartheta\) is admissible for the pattern \(\pmb b\) if there exists a coupling
    \[
        \widehat\mbQ\in\Pi(\overline\mbQ_{\s}^{\mathrm{out}},\overline\mbQ_{\vartheta}^{\mathrm{out}})
    \]
    such that
    \begin{equation}
    \label{eq:admissible-pointwise}
        \dfrac{1}{\sqrt n}
            \mbE_{\widehat\mbQ}\!
            \left|
                \sum_{k=1}^n 
                \left(
                    \delta N_k^{\pmb b;(\vartheta)}
                    -\delta N_k^{\pmb b;(\s)}
                \right)
            \right|
            \xrightarrow[n\to\infty]{} 0.
    \end{equation}


\subsection{Proof of \texorpdfstring{\Cref{thm:clt_final}}{Theorem 3}}
\label{section:proof_of_C}


    We transfer the CLT from the stationary initial state $\s$ (Theorem~\ref{thm:aclt}) to an admissible initial state $\vartheta$.

    \thmfullannealed*
        \begin{proof}
            Fix \(m\in\mbN\) and \(\pmb b\in\mcA^m\).
            Let \(\vartheta:\Omega\to\states\) be admissible for \(\pmb b\) relative to \(\s\) in the sense of \Cref{dfn:admissible_state}.
            By admissibility, there exists a coupling
            \[
                \widehat\mbQ\in\Pi\!\bigl(
                    \overline\mbQ^{\mathrm{out}}_{\vartheta},
                    \overline\mbQ^{\mathrm{out}}_{\s}
                \bigr)
            \]
            such that \eqref{eq:admissible_relative} holds.
            
            Work on the coupling space \((\mcA^\mbN\times\mcA^\mbN,\Sigma\otimes\Sigma,\widehat\mbQ)\), and let \((\bar A,\bar B)\) denote the canonical coordinate processes.
            Define
            \[
                X_n
                :=
                \frac1{\sqrt n}\sum_{k=1}^n
                \bigl(\delta N_k^{\pmb b}(\bar A)-\mu_{\pmb b}\bigr),
                \qquad
                Y_n
                :=
                \frac1{\sqrt n}\sum_{k=1}^n
                \bigl(\delta N_k^{\pmb b}(\bar B)-\mu_{\pmb b}\bigr).
            \]
            Then \(X_n=Y_n+(X_n-Y_n)\).

            Since the second marginal of \(\widehat\mbQ\) is \(\overline\mbQ^{\mathrm{out}}_{\s}\) and \(Y_n\) depends only on the second coordinate outcome path, the law of $Y_n$ under $\widehat\mbQ$ coincides with the law of
            \[
                \sum_{k=1}^n\left(\delta N_k^{\pmb b}-\mu_{\pmb b}\right)
            \]
            under $\overline\mbQ^{\mathrm{out}}_{\s}$.
            By \Cref{thm:aclt} (which applies to the outcome process) we therefore have
            \[
                  Y_n
                  = \frac1{\sqrt n}\sum_{k=1}^n\left(\delta N_k^{\pmb b;(\s)}-\mu_{\pmb b}\right)
                  \xrightarrow[n\to\infty]{\ \mathrm d\ }
                  \mcN(0,\Sigma_{\pmb b}^2)
                  \quad\text{under }\widehat\mbQ.
            \]
            Next,
            \[
                  X_n-Y_n
                  = \frac1{\sqrt n}\sum_{k=1}^n\Bigl(\delta N_k^{\pmb b;(\vartheta)}-\delta N_k^{\pmb b;(\s)}\Bigr),
            \]
            so admissibility \eqref{eq:admissible-pointwise} implies
            \[
                \mbE_{\widehat\mbQ}\bigl|X_n-Y_n\bigr|\xrightarrow[n\to\infty]{}0.
            \]
            Hence $X_n-Y_n\to0$ in probability under $\widehat\mbQ$.
            By Slutsky's theorem, we conclude that
            \[
                  X_n
                  \xrightarrow[n\to\infty]{\ \mathrm d\ }
                  \mcN(0,\Sigma_{\pmb b}^2)
                  \quad\text{under }\widehat\mbQ.
            \]

            Finally, the first marginal of \(\widehat\mbQ\) is \(\overline\mbQ^{\mathrm{out}}_{\vartheta}\), and \(X_n\) depends only on the first coordinate outcome path, so the law of $X_n$ under $\widehat\mbQ$ coincides with the law of
            \[
                \frac1{\sqrt n}\sum_{k=1}^n\bigl(\delta N_k^{\pmb b}-\mu_{\pmb b}\bigr)
            \]
            under $\overline\mbQ^{\mathrm{out}}_{\vartheta}$.
            This yields the claimed CLT for the annealed outcome law.
            Since the random variable above depends only on the outcome coordinate, the same conclusion holds under the product-space annealed law $\overline\mbQ_{\vartheta}$ (viewing the outcome process on $\Omega\times\mcA^\mbN$ via the second coordinate).
        \end{proof}


\subsection{Admissible States and Proof of \texorpdfstring{\Cref{thm:universal-clt-A}}{Theorem 4}}
\label{section:admissible_discuss}


    The admissibility condition from \Cref{dfn:admissible_state} is formulated directly at the level of couplings of annealed outcome laws on the path space $\mcA^\mbN\times\mcA^\mbN$.
    At this level, it is not obvious how to verify admissibility directly from structural properties of the instruments \((\mcV_{a;\omega})_{a\in\mcA,\ \omega\in\Omega}\), beyond \Cref{assumption1,assumption:forgetting}.
    In this subsection, we discuss two natural instrument-level conditions under which \emph{all} initial states are admissible for every pattern $\pmb b\in\mcA^m$.

    Throughout this subsection, we assume that the random instrument
    \[
        \mcV_\omega=(\mcT_{a;\omega})_{a\in\mcA}
    \]
    is perfect. Thus, for $\pr$-a.e.\ $\omega\in\Omega$ and every $a\in\mcA$, there exists an operator
    $V_{a;\omega}\in\matrices$ such that
    \[
        \mcT_{a;\omega}(\,\cdot\,)
        =
        V_{a;\omega}(\,\cdot\,)V_{a;\omega}\adj.
    \]
    In particular, for $\pr$-a.e.\ $\omega\in\Omega$,
    \[
        \sum_{a\in\mcA} V_{a;\omega}\adj V_{a;\omega}=\mbI_d.
    \]

    Recall \Cref{condition:monomial}:
    \begin{condition}
        Let \((e_i)_{i=1}^d\) be an orthonormal basis of \(\mcH\). Assume:
        \begin{enumerate}[leftmargin=1.3cm]
            \item[(A.1)] \textbf{Basis-state preserving structure.}
                For each \(a\in\mcA\), there exists a deterministic map
                \[
                    f_a:\{1,\dots,d\}\to\{1,\dots,d\}
                \]
                such that for \(\pr\)-a.e.\ \(\omega\in\Omega\) and every \(i\in\{1,\dots,d\}\),
                \[
                    V_{a;\omega}e_i \in \mathbb C\,e_{f_a(i)},
                \]
                and whenever \(i\neq j\),
                \[
                    \ip{V_{a;\omega}e_i}{V_{a;\omega}e_j} = 0.
                \]
    
            \item[(A.2)] \textbf{Block mergeability.}
                There exist \(L\in\mbN\), \(\varepsilon>0\), and, for each \((i,j)\in\{1,\dots,d\}^2\), an \(\mcF\)-measurable map
                \[
                    \omega\longmapsto \kappa_{\omega;i,j}\in \mcP(\mcA^L\times\mcA^L)
                \]
                such that for \(\pr\)-a.e.\ \(\omega\), all \(i,j\in\{1,\dots,d\}\),
                \[
                    \kappa_{\omega;i,j}
                    \in
                    \Pi\!\Bigl(
                        \mbQ_{\rho^{(i)};\omega}\circ A_{1:L}^{-1},
                        \mbQ_{\rho^{(j)};\omega}\circ A_{1:L}^{-1}
                    \Bigr),
                \]
                and
                \[
                    \kappa_{\omega;i,j}
                    \Bigl(
                        \{(u,v)\in\mcA^L\times\mcA^L:\ f_u(i)=f_v(j)\}
                    \Bigr)
                    \ge \varepsilon.
                \]
                Where for a word \(u=(u_1,\dots,u_L)\in\mcA^L\), we take \(    f_u:=f_{u_L}\circ\cdots\circ f_{u_1}\).     
        \end{enumerate}
    \end{condition}

    \begin{prop}
    \label{prop:block-mergeability-admissible}
        Assume \Cref{condition:monomial} and \Cref{assumption1}. 
        Then for any random initial states $\vartheta,\eta:\Omega\to\states$, every pattern $\boldsymbol b\in\mcA^m$, and every $m\in\mbN$, the state $\vartheta$ is admissible for $\boldsymbol b$ relative to $\eta$.
    \end{prop}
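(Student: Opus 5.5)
The plan is to build, for each fixed $\omega$, a coupling of the quenched outcome laws $\mbQ_{\vartheta(\omega);\omega}$ and $\mbQ_{\eta(\omega);\omega}$. Under it the two outcome paths agree after a random time $\tau$ with geometric tails, uniformly in $\omega$. We then integrate over $\pr$ to get a coupling of the annealed outcome laws $\overline\mbQ^{\mathrm{out}}_{\vartheta}$ and $\overline\mbQ^{\mathrm{out}}_{\eta}$.

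Given such a coupling, the sum $\sum_{k=1}^n(\delta N_k^{\pmb b}(\bar B)-\delta N_k^{\pmb b}(\bar A))$ has absolute value at most $\tau+m$ for all $n$. Hence $\mbE|\cdot|\le \mbE\tau+m<\infty$, and dividing by $\sqrt n$ gives \eqref{eq:admissible_relative}. Note that \Cref{assumption:forgetting} is not needed; only \Cref{condition:monomial} is used.

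\textbf{Step 1 (reduction to labels).} By (A.1), each Kraus operator maps $e_i$ into $\mbC e_{f_a(i)}$, and distinct basis vectors are sent to orthogonal vectors. So if $\rho$ has diagonal $p_i=\langle e_i,\rho e_i\rangle$, then $\mbQ_{\rho;\omega}$ is the mixture $\sum_i p_i\,\mbQ_{\rho^{(i)};\omega}$. To see this, note that $V_{\pmb a}^{\dagger}V_{\pmb a}$ is diagonal in $(e_i)$ for every word $\pmb a$, by the orthogonality in (A.1) applied inductively. Consequently, the cylinder probabilities depend only on the diagonal of $\rho$.

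Moreover, starting from $\rho^{(i)}$, the posterior after the word $u$ is $\rho^{(f_u(i))}$, and the conditional law of the future outcomes is $\mbQ_{\rho^{(f_u(i))};\theta^{|u|}\omega}$. This follows from \Cref{lem:disordered_k_m_lemma_1_2}. The outcome process is therefore a hidden Markov chain whose hidden state is a basis label $X_n$, updated by $X_{n}=f_{A_n}(X_{n-1})$.

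\textbf{Step 2 (block coupling).} First draw initial labels $i\sim p(\vartheta(\omega))$ and $j\sim p(\eta(\omega))$ independently. Then proceed in blocks of length $L$. At the start of block $r$, at time $rL$, the current labels are $(i_r,j_r)$.
\begin{itemize}[leftmargin=1.3cm]
\item If $i_r\neq j_r$, sample the pair of block words $(u,v)$ from $\kappa_{\theta^{rL}\omega;i_r,j_r}$ and update the labels to $f_u(i_r)$ and $f_v(j_r)$.
\item Once the labels coincide, use identical outcomes forever after. This is legitimate because both paths then have the same conditional future law.
\end{itemize}
Measurability in $\omega$ follows from the measurability of $\kappa$, via the kernel construction with Ionescu--Tulcea. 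The marginals are correct because each block coupling has the correct conditional marginals given the past. By (A.2), at each block the labels merge with probability at least $\varepsilon$, uniformly in $\omega$. Hence $\widehat\mbQ_\omega(\tau>rL)\le(1-\varepsilon)^r$, and so $\mbE\tau\le L/\varepsilon$ uniformly in $\omega$.

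\textbf{Step 3 (annealing).} Define $\widehat\mbQ:=\int\pr(d\omega)\,\widehat\mbQ_\omega$, projected to $\mcA^\mbN\times\mcA^\mbN$. Its marginals are the two annealed outcome laws, and the bound $\mbE_{\widehat\mbQ}\tau\le L/\varepsilon$ survives the integration.

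The main obstacle I expect is Step 1, together with the rigorous construction in Step 2. The difficulty is justifying that the conditional law of the future outcomes given the past depends only on the current label, and that this label is a deterministic function of the outcome word and the initial label. The latter point requires carrying the latent initial label in an enlarged space $\{1,\dots,d\}\times\mcA^\mbN$ and then marginalizing. The key identity I would verify first is $\mbQ_{\rho^{(i)};\omega}(\pi_{1..n}^{-1}(u)\cap\varsigma^{-n}S)=\mbQ_{\rho^{(i)};\omega}(u)\,\mbQ_{\rho^{(f_u(i))};\theta^n\omega}(S)$, which comes from \Cref{lem:disordered_k_m_lemma_1_2} and (A.1).
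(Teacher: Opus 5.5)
Your proposal is correct and follows essentially the same route as the paper. Both arguments use the same three ingredients:
\begin{itemize}
\item the mixture decomposition $\mbQ_{\rho;\omega}=\sum_i\langle e_i,\rho e_i\rangle\,\mbQ_{\rho^{(i)};\omega}$, obtained from the diagonality of $W_w^\dagger W_w$;
\item an Ionescu--Tulcea block coupling that uses $\kappa_{\theta^{rL}\omega;i,j}$ while the labels differ and the diagonal self-coupling once they agree;
\item the geometric bound $\mbE[\tau]\le L/\varepsilon$ on the coalescence time.
\end{itemize}
The differences are cosmetic. The paper couples the initial labels maximally rather than independently. It also places $\omega$ as the zeroth coordinate of the Ionescu--Tulcea construction instead of integrating quenched couplings. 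Neither change affects the conclusion.
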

        \begin{proof}
            Fix the basis $(e_i)_{i=1}^d$, the maps $(f_a)_{a\in\mcA}$, the block length $L$, the constant $\varepsilon>0$, and the coupling kernels $(\kappa_{i,j})_{i,j}$ from \Cref{condition:monomial}.
            Let $\Omega_0\subseteq\Omega$ be a full-$\pr$-measure set on which all assertions of \Cref{condition:monomial} hold, and define
            \[
                \Omega^* :=\bigcap_{k\ge0}\theta^{-k}(\Omega_0).
            \]
            Then $\Omega^* $ also has full $\pr$-measure, and all pointwise identities below are understood for $\omega\in\Omega^* $.

            For $i\in\{1,\dots,d\}$, let
            \[
                \rho^{(i)}:=\ket{e_i}\bra{e_i}.
            \]
            By \Cref{condition:monomial}(A.1), for every $\omega\in\Omega^* $, $a\in\mcA$, and $i\in\{1,\dots,d\}$,
            \[
                V_{a;\omega}\rho^{(i)}V_{a;\omega}\adj
                \in \mbR_+\,\rho^{(f_a(i))}.
            \]
            Hence, whenever
            \[
                \tr{V_{a;\omega}\rho^{(i)}V_{a;\omega}\adj}>0,
            \]
            the normalized selective update is
            \[
                \frac{V_{a;\omega}\rho^{(i)}V_{a;\omega}\adj}
                     {\tr{V_{a;\omega}\rho^{(i)}V_{a;\omega}\adj}}
                =
                \rho^{(f_a(i))}.
            \]
            Thus, if the initial state is $\rho^{(i)}$, then the posterior trajectory remains in $\{\rho_\ast, \rho^{(1)},\dots,\rho^{(d)}\}$, or in $\{\rho^{(1)},\dots,\rho^{(d)}\}$, $\mbQ_{\rho^{(i)};\omega}$-almost surely.
        
            For $\omega\in\Omega^* $, $n\in\mbN_0$, and $i\in\{1,\dots,d\}$, define the quenched $L$-block law
            \[
                P^{(n)}_{\omega,i}
                :=
                \mbQ_{\rho^{(i)};\theta^n\omega}\circ\pi_{1,\ldots,L}^{-1}
                \in\mcP(\mcA^L).
            \]
            Equivalently, for $\pmb a\in\mcA^L$,
            \[
                P^{(n)}_{\omega,i}(\pmb a)
                =
                \mbQ_{\rho^{(i)};\theta^n\omega}
                \bigl((A_1,\dots,A_L)=\pmb a\bigr).
            \]
            By the measurability of the quenched outcome law on cylinder events  (\Cref{lem:Q-random-initial-state}), for each fixed $n\in\mbN_0$, $i\in\{1,\dots,d\}$, and $\pmb a\in\mcA^L$, the map
            \[
                \omega\longmapsto P^{(n)}_{\omega,i}(\pmb a)
            \]
            is $\mcF$-measurable. Since $\mbN_0$ carries the discrete $\sigma$-algebra, it follows that
            \[
                (\omega,n)\longmapsto P^{(n)}_{\omega,i}(\pmb a)
            \]
            is $\mcF\otimes 2^{\mbN_0}$-measurable.

            \medskip
            Now fix \(\omega\in\Omega^\dagger\), \(m'\in\mbN\), and a word \(w=(a_1,\dots,a_{m'})\in\mcA^{m'}\), and define
            \[
                W_w(\omega):=V_{a_{m'};\theta^{m'}\omega}\cdots V_{a_1;\theta\omega},
                \qquad
                E_w(\omega):=W_w(\omega)^\dagger W_w(\omega).
            \]
            Let
            \[
                [w]:=\{\bar a\in\mcA^\mbN:(a_1,\dots,a_{m'})=w\}
            \]
            denote the corresponding cylinder set.
            Using \Cref{condition:monomial}(A.1) we can see that for every \(\ell\in\{1,\dots,m'\}\) and every \(i\in\{1,\dots,d\}\),
            \[
                V_{a_\ell;\theta^\ell\omega}\cdots V_{a_1;\theta\omega}e_i
                \in \mbC\,e_{f_{a_\ell}\circ\cdots\circ f_{a_1}(i)},
            \]
            and that for every \(i\neq j\),
            \[
                \ip
                {V_{a_\ell;\theta^\ell\omega}\cdots V_{a_1;\theta\omega}e_i}
                {V_{a_\ell;\theta^\ell\omega}\cdots V_{a_1;\theta\omega}e_j}
                =0.
            \]
            Taking \(\ell=m'\), we conclude that the vectors
            \[
                W_w(\omega)e_1,\dots,W_w(\omega)e_d
            \]
            are pairwise orthogonal. Therefore, for \(i\neq j\),
            \[
                \ip{e_j}{E_w(\omega)e_i}
                =
                \ip{e_j}{W_w(\omega)^\dagger W_w(\omega)e_i}
                =
                \ip{W_w(\omega)e_j}{W_w(\omega)e_i}
                =0.
            \]
            Hence \(E_w(\omega)\) is diagonal in the basis \((e_i)\).
            Now, for every \(\rho\in\states\),
            \[
                \mbQ_{\rho;\omega}([w])
                =
                \tr{\rho\,E_w(\omega)}
                \sum_{i=1}^d
                \bra{e_i} \rho \ket{e_i}\,
                \bra{e_i} E_w (\omega) \ket{e_i}.
            \]
            Moreover,
            \[
                \bra{e_i} E_w(\omega) \ket{e_i}
                =
                \tr{\rho^{(i)}E_w(\omega)}
                =
                \mbQ_{\rho^{(i)};\omega}([w]).
            \]
            Thus
            \[
                \mbQ_{\rho;\omega}([w])
                =
                \sum_{i=1}^d
                \alpha_i(\rho)\,
                \mbQ_{\rho^{(i)};\omega}([w]),
                \qquad
                \alpha_i(\rho):=\bra{e_i}\rho \ket{e_i}.
            \]
            Since \(m'\in\mbN\) and \(w\in\mcA^{m'}\) were arbitrary, this identity holds for every finite cylinder.
            As the finite cylinders form a \(\pi\)-system generating \(\Sigma\), it follows that
            \begin{equation}
            \label{eq:mixture_decomp}
                \mbQ_{\rho;\omega}
                =
                \sum_{i=1}^d
                \alpha_i(\rho)\,\mbQ_{\rho^{(i)};\omega}.
            \end{equation}

            \medskip
            Having established \eqref{eq:mixture_decomp}, consider the random states $\eta,\vartheta:\Omega\to\states$.
            For each $\omega\in\Omega^* $, set
            \[
                m_\omega(i):=\min\{\alpha_i(\vartheta(\omega)),\alpha_i(\eta(\omega))\},
                \qquad
                r_\omega:=1-\sum_{i=1}^d m_\omega(i).
            \]
            Define $\lambda_\omega\in\mcP(\{1,\dots,d\}^2)$ by
            \[
                \lambda_\omega(i,j)
                :=
                \begin{cases}
                    m_\omega(i),
                        & \text{if } i=j, \\[6pt]
                    \displaystyle
                    \frac{
                        (\alpha_i(\vartheta(\omega))-m_\omega(i))
                        (\alpha_j(\eta(\omega))-m_\omega(j))
                    }{r_\omega},
                        & \text{if } i\neq j \text{ and } r_\omega>0, \\[10pt]
                    0,
                        & \text{if } i\neq j \text{ and } r_\omega=0.
                \end{cases}
            \]
            A direct computation shows that $\lambda_\omega$ is a coupling of the probability vectors $(\alpha_i(\vartheta(\omega)))_{i=1}^d$ and $(\alpha_j(\eta(\omega)))_{j=1}^d$:
            For fixed \(\omega\in\Omega^\ast\), write
            \[
                a_i:=\alpha_i(\vartheta(\omega)),
                \qquad
                b_i:=\alpha_i(\eta(\omega)),
                \qquad
                m_i:=\min\{a_i,b_i\},
                \qquad
                r:=1-\sum_{k=1}^d m_k.
            \]
            Since \(m_i=\min\{a_i,b_i\}\), we have
            \[
                (a_i-m_i)(b_i-m_i)=0
                \qquad\text{for every }i.
            \]
            If \(r>0\), then
            \[
                \sum_{j=1}^d \lambda_\omega(i,j)
                =
                m_i+\frac{a_i-m_i}{r}\sum_{j\neq i}(b_j-m_j).
            \]
            Using \(\sum_{j=1}^d(b_j-m_j)=r\), we obtain
            \[
                \sum_{j=1}^d \lambda_\omega(i,j)
                =
                m_i+\frac{a_i-m_i}{r}\bigl(r-(b_i-m_i)\bigr)
                =
                m_i+(a_i-m_i)
                =
                a_i.
             \]
            Thus the first marginal of \(\lambda_\omega\) is \((\alpha_i(\vartheta(\omega)))_{i=1}^d\).  
            The verification of the second marginal is identical by symmetry.  
            If \(r=0\), then \(a_i=b_i=m_i\) for all \(i\), so \(\lambda_\omega\) is supported on the diagonal and both marginals are again correct.
     
            Moreover, for each fixed $(i,j)\in\{1,\dots,d\}^2$, the map
            \[
                \omega\longmapsto \lambda_\omega(i,j)
            \]
            is measurable, since it is obtained from the measurable maps
            \[
                \omega\longmapsto \alpha_i(\vartheta(\omega)),
                \qquad
                \omega\longmapsto \alpha_j(\eta(\omega))
            \]
            by finitely many measurable operations.

            \medskip
            Using the kernels from \Cref{condition:monomial}(A.2), we now define a unified block-coupling kernel $\Lambda$ so that off the diagonal it agrees with the given coupling $\kappa$, while on the diagonal it is replaced by the diagonal self-coupling.
            This ensures that, once the hidden states agree, all subsequent coupled blocks coincide almost surely.
            
            For $\omega\in\Omega^*$, $n\ge0$, and $(i,j)\in\{1,\dots,d\}^2$, define
            \[
                \Lambda^{(n)}_\omega(i,j;\cdot)\in\mcP(\mcA^L\times\mcA^L)
            \]
            by
            \[
                \Lambda^{(n)}_\omega(i,j;C)
                :=
                \begin{cases}
                    \displaystyle
                    \sum_{u\in\mcA^L}
                    P^{(n)}_{\omega,i}(u)\,\mathbf 1_C(u,u),
                        & \text{if } i=j, \\[10pt]
                    \kappa^{(n)}_{\omega;i,j}(C),
                        & \text{if } i\neq j.
                \end{cases}
            \]
            Here $\kappa^{(n)}_{\omega;i,j} (\,\cdot\,)= \kappa_{\theta^n\omega;i,j} (\,\cdot\,)$
            Then $\Lambda^{(n)}_\omega(i,j;\cdot)$ is a coupling of $P^{(n)}_{\omega,i}$ and $P^{(n)}_{\omega,j}$.
            Indeed, if $i\neq j$, this is exactly \Cref{condition:monomial}(A.2), while if   $i=j$, the measure $\Lambda^{(n)}_\omega(i,i;\cdot)$ is the diagonal coupling of  $P^{(n)}_{\omega,i}$ with itself.
            
            Moreover, when $i\neq j$,
            \[
                \Lambda^{(n)}_\omega\Bigl(
                    i,j;\{(u,v)\in\mcA^L\times\mcA^L:\ f_u(i)=f_v(j)\}
                \Bigr)
                \ge \varepsilon
            \]
            by \Cref{condition:monomial}(A.2).
            Since $\mcA^L\times\mcA^L$ is finite, for every  $C\subseteq\mcA^L\times\mcA^L$ the map
            \[
                (\omega,n)\longmapsto \Lambda^{(n)}_\omega(i,j;C)
            \]
            is measurable.

            \medskip
            We now iterate the block-coupling kernel $\Lambda$ and construct the full coupled process via the Ionescu--Tulcea theorem.
            Set
            \[
                (E_0,\mathcal E_0)
                :=
                \bigl(\Omega\times\{1,\dots,d\}^2,\;
                \mathcal F\otimes 2^{\{1,\dots,d\}^2}\bigr),
            \]
            and, for $r\ge 1$,
            \[
                (E_r,\mathcal E_r)
                :=
                \bigl(\mcA^L\times\mcA^L,\;
                2^{\mcA^L\times\mcA^L}\bigr).
            \]
            After modifying the relevant objects on $\Omega\setminus\Omega^* $ arbitrarily if necessary, we may assume that they are defined for all $\omega\in\Omega$.
            Define a probability measure $\mu_0$ on $E_0$ by
            \[
                \mu_0(d\omega,i,j):=\Pr(d\omega)\,\lambda_\omega(i,j).
            \]
            For $r\ge 1$, define a kernel
            \[
                K_r:
                E_0\times E_1\times\cdots\times E_{r-1}
                \to \mcP(E_r)
            \]
            as follows. 
            If the coordinates are
            \[
                (\omega,x_0,y_0),(u^{(1)},v^{(1)}),\dots,(u^{(r-1)},v^{(r-1)}),
            \]
            set recursively
            \[
                x_s:=f_{u^{(s)}}(x_{s-1}),
                \qquad
                y_s:=f_{v^{(s)}}(y_{s-1}),
                \qquad s=1,\dots,r-1,
            \]
            and define
            \[
                K_r\bigl(
                    (\omega,x_0,y_0),(u^{(1)},v^{(1)}),\dots,(u^{(r-1)},v^{(r-1)});C
                \bigr)
                :=
                \Lambda^{((r-1)L)}_\omega(x_{r-1},y_{r-1};C).
            \]
            Since $(x_{r-1},y_{r-1})$ is a measurable function of the past and $\Lambda^{((r-1)L)}_\omega$ is a measurable probability kernel, each $K_r$ is a measurable probability kernel.
            
            By the Ionescu--Tulcea theorem with initial law $\mu_0$ (see \cite[Prop. V.1.1, Cor.~2]{neveu1965mathematical}), there exists a unique probability measure
            \[
                \widetilde\mbQ
                \quad\text{on}\quad
                \prod_{r\ge0}(E_r,\mathcal E_r)
                =
                E_0\times\prod_{r\ge1}E_r
            \]
            with initial law $\mu_0$ and transition kernels $(K_r)_{r\ge1}$.

            \medskip
            Under $\widetilde\mbQ$, let the coordinate maps be
            \[
                (\omega,X_0,Y_0),\ (U^{(1)},V^{(1)}),\ (U^{(2)},V^{(2)}),\dots
            \]
            and define recursively
            \[
                X_r:=f_{U^{(r)}}(X_{r-1}),
                \qquad
                Y_r:=f_{V^{(r)}}(Y_{r-1}),
                \qquad r\ge1.
            \]
            Now, define the concatenated outcome sequences
            \[
                \bar A
                :=
                (U^{(1)}_1,\dots,U^{(1)}_L,\;
                  U^{(2)}_1,\dots,U^{(2)}_L,\;\dots)
                \in\outcomes,
            \]
            \[
                \bar B
                :=
                (V^{(1)}_1,\dots,V^{(1)}_L,\;
                  V^{(2)}_1,\dots,V^{(2)}_L,\;\dots)
                \in\outcomes,
            \]
            and set
            \[
                \widehat\mbQ
                :=
                \widetilde\mbQ\circ(\bar A,\bar B)^{-1}
                \in \mcP(\outcomes\times\outcomes).
            \]
            Write $(A_k,B_k)_{k\ge1}$ for the canonical coordinate processes on $\outcomes\times\outcomes$ under $\widehat\mbQ$.

            \medskip
            We now show that $\widehat\mbQ$ is a coupling of $\overline\mbQ^{\mathrm{out}}_{\vartheta}$ and $\overline\mbQ^{\mathrm{out}}_{\eta}$.
            Fix $R\ge1$ and words $u^{(1)},\dots,u^{(R)}\in\mcA^L$. Apply the Ionescu--Tulcea rectangle formula to the cylinder event
            \[
                \{U^{(1)}=u^{(1)},\dots,U^{(R)}=u^{(R)}\},
            \]
            that is, to the measurable rectangle with
            \[
                F_0:=E_0,\qquad
                F_r:=\{u^{(r)}\}\times\mcA^L \quad (1\le r\le R),
                \qquad
                F_t:=E_t \quad (t>R).
            \]
            Since the initial law is
            \[
                \mu_0(d\omega,x_0,y_0)=\pr(d\omega)\,\lambda_\omega(x_0,y_0),
            \]
            and, by definition of $K_r$,
            \[
                K_r\bigl(
                    (\omega,x_0,y_0),(u^{(1)},v^{(1)}),\dots,(u^{(r-1)},v^{(r-1)});\,
                    \{u^{(r)}\}\times\mcA^L
                \bigr)
                =
                \Lambda^{((r-1)L)}_\omega(x_{r-1},y_{r-1};\,\{u^{(r)}\}\times\mcA^L),
            \]
            where $x_r=f_{u^{(r)}}(x_{r-1})$, it follows that
            \begin{align*}
                &\widetilde\mbQ\bigl(
                    U^{(1)}=u^{(1)},\dots,U^{(R)}=u^{(R)}
                \bigr) \\
                &\quad=
                \int_\Omega
                \sum_{x_0,y_0}
                \lambda_\omega(x_0,y_0)
                \prod_{r=1}^{R}
                \Lambda^{((r-1)L)}_\omega(x_{r-1},y_{r-1};\,\{u^{(r)}\}\times\mcA^L)
                \,\pr(d\omega).
            \end{align*}
            Since $\Lambda^{((r-1)L)}_\omega(x_{r-1},y_{r-1};\cdot)$ has first marginal
            $P^{((r-1)L)}_{\omega,x_{r-1}}$, we obtain
            \[
                \widetilde\mbQ\bigl(
                    U^{(1)}=u^{(1)},\dots,U^{(R)}=u^{(R)}
                \bigr)
                =
                \int_\Omega
                \sum_{x_0,y_0}
                \lambda_\omega(x_0,y_0)
                \prod_{r=1}^{R}
                P^{((r-1)L)}_{\omega,x_{r-1}}(u^{(r)})
                \,\pr(d\omega).
            \]
            Summing over $y_0$ and using the first marginal of $\lambda_\omega$ gives
            \[
                \widetilde\mbQ\bigl(
                    U^{(1)}=u^{(1)},\dots,U^{(R)}=u^{(R)}
                \bigr)
                =
                \int_\Omega
                \sum_{x_0=1}^d
                \alpha_{x_0}(\vartheta(\omega))
                \prod_{r=1}^{R}
                P^{((r-1)L)}_{\omega,x_{r-1}}(u^{(r)})
                \,\pr(d\omega).
            \]

            On the other hand, for a basis state $\rho^{(x_0)}$,
            \[
                \mbQ_{\rho^{(x_0)};\omega}
                \bigl(
                    (A_1,\dots,A_{RL})=(u^{(1)}\cdots u^{(R)})
                \bigr)
                =
                \prod_{r=1}^{R}
                P^{((r-1)L)}_{\omega,x_{r-1}}(u^{(r)}),
            \]
            where $x_r=f_{u^{(r)}}(x_{r-1})$ for $r=1,\dots,R$.
            Applying \eqref{eq:mixture_decomp} with $\rho=\vartheta(\omega)$, we obtain
            \[
                \widetilde\mbQ\bigl(
                    U^{(1)}=u^{(1)},\dots,U^{(R)}=u^{(R)}
                \bigr)
                =
                \int_\Omega
                \mbQ_{\vartheta(\omega);\omega}
                \bigl(
                    (A_1,\dots,A_{RL})=(u^{(1)}\cdots u^{(R)})
                \bigr)
                \pr(d\omega).
            \]
            
            Since $\bar A$ is the concatenation of the blocks $U^{(1)},U^{(2)},\dots$, we also have
            \[
                \widehat\mbQ\bigl(
                    (A_1,\dots,A_{RL})=(u^{(1)}\cdots u^{(R)})
                \bigr)
                =
                \widetilde\mbQ\bigl(
                    U^{(1)}=u^{(1)},\dots,U^{(R)}=u^{(R)}
                \bigr).
            \]
            Therefore
            \[
                \widehat\mbQ\bigl(
                    (A_1,\dots,A_{RL})=(u^{(1)}\cdots u^{(R)})
                \bigr)
                =
                \overline\mbQ^{\mathrm{out}}_{\vartheta}
                \bigl(
                    (A_1,\dots,A_{RL})=(u^{(1)}\cdots u^{(R)})
                \bigr).
            \]

            Since $\mcA$ is finite, every cylinder event is a finite union of cylinder events whose lengths are multiples of $L$: Indeed, if a cylinder depends on the first $m$ coordinates, choose $R$ with $RL\ge m$ and extend the defining word arbitrarily in the remaining $RL-m$ coordinates.
            Hence the family of cylinders of the form
            \[
                \{(A_1,\dots,A_{RL})=w\},
                \qquad R\ge1,\ \ w\in\mcA^{RL},
            \]
            generates $\Sigma$. It follows that the first marginal of $\widehat\mbQ$ is $\overline\mbQ^{\mathrm{out}}_{\vartheta}$.
            The same argument, using the second marginal of $\lambda_\omega$, shows that the second  marginal of $\widehat\mbQ$ is $\overline\mbQ^{\mathrm{out}}_{\eta}$.
            Hence
            \[
                \widehat\mbQ
                \in
                \Pi\!\bigl(
                    \overline\mbQ^{\mathrm{out}}_{\vartheta},\;
                    \overline\mbQ^{\mathrm{out}}_{\eta}
                \bigr),
            \]
            as claimed. 

            \medskip
            We now proceed to prove the remaining claims. 
            Define the block coalescence time
            \[
                R_*:=\inf\{r\ge0:\ X_r=Y_r\}\in\mbN_0\cup\{\infty\}.
            \]
            We show that once the hidden labels agree, they remain equal almost surely.
            If \(X_{r-1}=Y_{r-1}=i\), then by the definition of \(\Lambda\), the conditional law of \((U^{(r)},V^{(r)})\) given \(\mcH_{r-1}\) is the diagonal self-coupling
            \[
                \Lambda^{((r-1)L)}_\omega(i,i;\cdot).
            \]
            Hence
            \[
                U^{(r)}=V^{(r)}
                \qquad \widetilde{\mbQ}\text{-a.s. on }\{X_{r-1}=Y_{r-1}=i\}.
            \]
            Therefore
            \[
                X_r=f_{U^{(r)}}(X_{r-1})=f_{V^{(r)}}(Y_{r-1})=Y_r
            \qquad \widetilde{\mbQ}\text{-a.s. on }\{X_{r-1}=Y_{r-1}\}.
            \]
            It follows by induction that once the labels meet, they never separate. In particular,
            \[
                \{R_*>r-1\}=\{X_{r-1}\neq Y_{r-1}\},
                \qquad r\ge1.
            \]
            For $r\ge1$, let
            \[
                \mcH_{r-1}
                :=
                \sigma\left(
                    (\omega,X_0,Y_0),
                    (U^{(1)},V^{(1)}),\dots,(U^{(r-1)},V^{(r-1)})
                \right).
            \]
            On the event
            \[
                \{R_*>r-1\}=\{X_{r-1}\neq Y_{r-1}\},
            \]
            the conditional law of $(U^{(r)},V^{(r)})$ given $\mathcal H_{r-1}$ is
            \[
                \Lambda^{((r-1)L)}_\omega(X_{r-1},Y_{r-1};\cdot)
                =
                \kappa^{((r-1)L)}_{\omega;X_{r-1},Y_{r-1}}(\cdot).
            \]
            Since
            \[
                X_r=f_{U^{(r)}}(X_{r-1}),
                \qquad
                Y_r=f_{V^{(r)}}(Y_{r-1}),
            \]
            \Cref{condition:monomial}(A.2) implies
            \[
                \widetilde\mbQ(X_r=Y_r\mid\mcH_{r-1})
                \ge \varepsilon
                \qquad\text{on }\{R_*>r-1\}.
            \]
            Therefore
            \[
                \widetilde\mbQ(X_r\neq Y_r\mid\mathcal H_{r-1})
                \le 1-\varepsilon
                \qquad\text{on }\{R_*>r-1\}.
            \]
            Using that
            \[
                \{R_*>r\}=\{R_*>r-1\}\cap\{X_r\neq Y_r\},
            \]
            we obtain
            \begin{align*}
                \widetilde\mbQ(R_*>r)
                &=\mbE_{\widetilde\mbQ}\!\left[
                    \mathbf{1}_{\{R_*>r-1\}}
                    \widetilde\mbQ(X_r\neq Y_r\mid\mathcal H_{r-1})
                \right] \\
                &\le (1-\varepsilon)\,\widetilde\mbQ(R_*>r-1),
                \qquad r\ge1.
            \end{align*}
            Since $\widetilde\mbQ(R_*>0)\le1$, induction yields
            \[
                \widetilde\mbQ(R_*>r)\le(1-\varepsilon)^r,
                \qquad r\ge0.
            \]
            Consequently,
            \[
                \mbE_{\widetilde\mbQ}[R_*]
                =
                \sum_{r=0}^\infty \widetilde\mbQ(R_*>r)
                \le
                \sum_{r=0}^\infty (1-\varepsilon)^r
                =
                \frac1\varepsilon.
            \]

            \medskip
            Now, define
            \[
                T_{\mathrm{out}}:\outcomes\times\outcomes\to\mbN\cup\{\infty\}
            \]
            by
            \[
                T_{\mathrm{out}}(\bar a,\bar b)
                :=
                \inf\{t\ge1:\ a_n=b_n \text{ for all } n\ge t\},
            \]
            where
            \[
                \bar a=(a_1,a_2,\dots),\qquad \bar b=(b_1,b_2,\dots)\in\outcomes.
            \]
            We note that $T$ is measurable, since for each $t\ge1$,
            \[
                \{T_{\mathrm{out}}\le t\}
                =
                \bigcap_{n\ge t}\{a_n=b_n\},
            \]
            and each set $\{a_n=b_n\}$ is measurable in $\Sigma\otimes\Sigma$.
            Now suppose that $X_r=Y_r=i$. 
            Then, by the definition of $\Lambda$, the next block is coupled by the diagonal coupling of $P^{(rL)}_{\omega,i}$ with itself. 
            Hence
            \[
                U^{(r+1)}=V^{(r+1)}
                \qquad
                \widetilde\mbQ\text{-a.s. on }\{X_r=Y_r=i\}.
            \]
            It follows that
            \[
                X_{r+1}=Y_{r+1}
                \qquad
                \widetilde\mbQ\text{-a.s. on }\{X_r=Y_r=i\}.
            \]
            Iterating this argument, once the hidden labels meet, all subsequent blocks coincide
            almost surely. Thus, on the event $\{R_*=r\}$,
            \[
                U^{(s)}=V^{(s)} \qquad\text{for all } s\ge r+1,
            \]
            and therefore, by concatenation,
            \[
                A_n=B_n \qquad\text{for all } n\ge rL+1.
            \]
            Equivalently,
            \[
                T_{\mathrm{out}}(\bar A,\bar B)\le LR_*+1
                \qquad \widetilde\mbQ\text{-a.s.}
            \]
            Since
            \[
                \widehat\mbQ=\widetilde\mbQ\circ(\bar A,\bar B)^{-1},
            \]
            we obtain
            \[
                \mbE_{\widehat\mbQ}[T_{\mathrm{out}}]
                =
                \mbE_{\widetilde\mbQ}[T_{\mathrm{out}}(\bar A,\bar B)]
                \le
                L\,\mbE_{\widetilde\mbQ}[R_*]+1
                \le
                \frac{L}{\varepsilon}+1.
            \]
            In particular, $T_{\mathrm{out}}<\infty$, $\widehat\mbQ$-almost surely, and
            \[
                \widehat\mbQ\bigl(
                    A_n=B_n \text{ for all } n\ge T_{\mathrm{out}}
                \bigr)=1.
            \]

            \medskip
            Finally, fix $m\in\mbN$ and $\boldsymbol b\in\mcA^m$. 
            For every $k\ge T_{\mathrm{out}}$, the tails of $\bar A$ and $\bar B$ coincide from time $k$ onward, so
            \[
                \delta N_k^{\boldsymbol b}(\bar A)
                =
                \delta N_k^{\boldsymbol b}(\bar B).
            \]
            Hence, for every $n\ge 1$,
            \[
                \left|
                    \sum_{k=1}^{n}
                    \bigl(
                        \delta N_k^{\boldsymbol b}(\bar B)
                        -
                        \delta N_k^{\boldsymbol b}(\bar A)
                    \bigr)
                \right|
                \le
                \sum_{k=1}^{n}\mathbf 1_{\{k<T_{\mathrm{out}}\}}
                \le
                T_{\mathrm{out}}.
            \]
            Taking expectations under $\widehat\mbQ$ and dividing by $\sqrt n$,
            \[
                \frac{1}{\sqrt n}
                \mbE_{\widehat\mbQ}
                \left|
                    \sum_{k=1}^{n}
                    \bigl(
                        \delta N_k^{\boldsymbol b}(\bar B)
                        -
                        \delta N_k^{\boldsymbol b}(\bar A)
                    \bigr)
                \right|
                \le
                \frac{\mbE_{\widehat\mbQ}[T_{\mathrm{out}}]}{\sqrt n}
                \xrightarrow[n\to\infty]{} 0.
            \]
            Thus $\vartheta$ is admissible for every pattern $\boldsymbol b$
            relative to $\eta$.
        \end{proof}

    \begin{remark}
        \Cref{thm:universal-clt-A} is an immediate corollary of \Cref{thm:aclt}, \Cref{prop:block-mergeability-admissible}, and the transfer theorem \Cref{thm:clt_final}. 
        Indeed, by \Cref{prop:block-mergeability-admissible}, every random initial state \(\vartheta:\Omega\to\states\) is admissible for \(\pmb b\) relative to \(\s\). 
        The conclusion then follows from \Cref{thm:clt_final}, using \Cref{thm:aclt} for the reference state \(\s\).
    \end{remark}

    It is often easier to verify if an instrument satisfies \Cref{condition:monomial}(A.1).
    Now, assuming we have \Cref{condition:monomial}(A.1), we provide below a sufficient condition to guarantee \Cref{condition:monomial}(A.2) that was stated in \Cref{prop:a2-sufficient}.

    \propatwo*
        \begin{proof}
            Fix \(L\in\mbN\) and \(\varepsilon>0\) as in the hypothesis.
            Fix also \(i,j\in\{1,\dots,d\}\), and \(\omega\in\Omega\), and write
            \[
                P_i:=P^{(L)}_{\omega,i},
                \qquad
                P_j:=P^{(L)}_{\omega,j},
            \]
            and
            \[
                p_i(k):=\overline P^{(L)}_\omega(i,k),
                \qquad
                p_j(k):=\overline P^{(L)}_\omega(j,k),
                \qquad
                k\in\{1,\dots,d\}.
            \]
            Thus \(p_i\) and \(p_j\) are probability measures on \(\{1,\dots,d\}\).
            Set
            \[
                r_k:=\min\{p_i(k),p_j(k)\},
                \qquad
                \alpha:=\sum_{k=1}^d r_k.
            \]
            By hypothesis, \(\alpha\ge\varepsilon\).
            
            We first define a coupling \(\lambda\in\Pi(p_i,p_j)\) of the terminal-label laws.
            If \(\alpha=1\), set
            \[
                \lambda(k,\ell):=
                \begin{cases}
                    p_i(k), & k=\ell,\\
                    0, & k\neq \ell.
                \end{cases}
            \]
            for this case we must necessarily have that $p_i(k) = p_j(k)$ for all $k$ and thus $\lambda = p_i(k)1_{k=l}\in \pi(pI-,p_j)$. 
            
            On the other hand, if \(\alpha<1\), define the residual probability vectors
            \[
                \widetilde p_i(k):=\frac{p_i(k)-r_k}{1-\alpha},
                \qquad
                \widetilde p_j(k):=\frac{p_j(k)-r_k}{1-\alpha},
            \]
            and set
            \[
                \lambda(k,\ell)
                :=
                r_k\,\mathbf 1_{\{k=\ell\}}
                +(1-\alpha)\,\widetilde p_i(k)\widetilde p_j(\ell).
            \]
            A direct computation shows that \(\lambda\in\Pi(p_i,p_j)\). 
            Moreover, for every \(k\),
            \[
                \bigl(p_i(k)-r_k\bigr)\bigl(p_j(k)-r_k\bigr)=0,
            \]
            so in the case \(\alpha<1\) the off-diagonal product part places no mass on the diagonal. 
            Hence in both cases
            \[
                \lambda\bigl(\{(k,\ell):k=\ell\}\bigr)=\alpha\ge\varepsilon.
            \]
            Now fix once and for all a reference word \(u_\ast\in\mcA^L\), and let
            \[
                q_\ast:=\delta_{u_\ast}\in\mcP(\mcA^L).
            \]
            Define for $u = (u_1, \ldots , u_L)$, $f_u:= f_{u_l}\circ\ldots \circ f_{u_1}$ and take $F_i(u) = f_u(i)$.
            Then for each \(k\in\{1,\dots,d\}\), define a probability measure
            \(\Gamma_k\in\mcP(\mcA^L)\) by
            \[
                \Gamma_k(u)
                :=
                \begin{cases}
                    \dfrac{P_i(u)\,\mathbf 1_{\{F_i(u)=k\}}}{p_i(k)},
                        & p_i(k)>0,\\[1.2ex]
                    q_\ast(u),
                        & p_i(k)=0,
                \end{cases}
                \qquad u\in\mcA^L.
            \]
            Similarly, define \(\Delta_\ell\in\mcP(\mcA^L)\) by
            \[
                \Delta_\ell(v)
                :=
                \begin{cases}
                    \dfrac{P_j(v)\,\mathbf 1_{\{F_j(v)=\ell\}}}{p_j(\ell)},
                        & p_j(\ell)>0,\\[1.2ex]
                    q_\ast(v),
                        & p_j(\ell)=0,
                \end{cases}
                \qquad v\in\mcA^L.
            \]
            
            Define
            \[
                \kappa_{\omega;i,j}(u,v)
                :=
                \sum_{k,\ell=1}^d
                \lambda(k,\ell)\,\Gamma_k(u)\,\Delta_\ell(v),
                \qquad
                (u,v)\in\mcA^L\times\mcA^L.
            \]
            Since \(\mcA^L\) and \(\{1,\dots,d\}\) are finite, this defines a probability measure on  \(\mcA^L\times\mcA^L\).
            We claim that
            \[
                \kappa_{\omega;i,j}\in\Pi(P_i,P_j).
            \]
            
            For the first marginal, for each \(u\in\mcA^L\),
            \begin{align*}
                \sum_{v\in\mcA^L}\kappa_{\omega;i,j}(u,v)
                &= \sum_{k,\ell=1}^d
                    \lambda(k,\ell)\,\Gamma_k(u)\sum_{v\in\mcA^L}\Delta_\ell(v) \\
                &= \sum_{k,\ell=1}^d \lambda(k,\ell)\Gamma_k(u) \\
                &= \sum_{k=1}^d p_i(k)\Gamma_k(u).
            \end{align*}
            If \(p_i(k)>0\), then
            \[
                p_i(k)\Gamma_k(u)
                =
                P_i(u)\,\mathbf 1_{\{F_i(u)=k\}}.
            \]
            If \(p_i(k)=0\), then the \(k\)-th row sum of \(\lambda\) is \(0\), so the contribution of that \(k\) vanishes. 
            Therefore
            \[
                \sum_{v\in\mcA^L}\kappa_{\omega;i,j}(u,v)
                =
                \sum_{k=1}^d
                P_i(u)\,\mathbf 1_{\{F_i(u)=k\}}
                =
                P_i(u).
            \]
            Similarly,
            \[
                \sum_{u\in\mcA^L}\kappa_{\omega;i,j}(u,v)=P_j(v),
            \]
            so \(\kappa_{\omega;i,j}\in\Pi(P_i,P_j)\).
            
            Next, let
            \[
                E_{i,j}
                :=
                \{(u,v)\in\mcA^L\times\mcA^L:\ f_u(i)=f_v(j)\}.
            \]
            Equivalently,
            \[
                E_{i,j}
                =
                \{(u,v)\in\mcA^L\times\mcA^L:\ F_i(u)=F_j(v)\}.
            \]
            If \(p_i(k)>0\), then \(\Gamma_k\) is supported on \(F_i^{-1}(k)\), and if \(p_j(k)>0\), then \(\Delta_k\) is supported on \(F_j^{-1}(k)\). 
            Hence, whenever \(\lambda(k,k)>0\),
            \[
                \mathrm{supp}(\Gamma_k\otimes\Delta_k)\subseteq E_{i,j}.
            \]
            Therefore
            \[
                \kappa_{\omega;i,j}(E_{i,j})
                \ge
                \sum_{k=1}^d \lambda(k,k)
                =
                \lambda\bigl(\{(k,\ell):k=\ell\}\bigr)
                =
                \alpha
                \ge
                \varepsilon.
            \]
            
            It remains to verify measurability in \(\omega\). 
            For fixed \(i,j\), the maps
            \[
                \omega\longmapsto P^{(L)}_{\omega,i}(u),
                \qquad
                \omega\longmapsto P^{(L)}_{\omega,j}(v)
            \]
            are measurable for all \(u,v\in\mcA^L\), hence so are
            \[
                \omega\longmapsto p_i(k),
                \qquad
                \omega\longmapsto p_j(\ell),
                \qquad
                \omega\longmapsto \lambda(k,\ell),
                \qquad
                \omega\longmapsto \Gamma_k(u),
                \qquad
                \omega\longmapsto \Delta_\ell(v).
            \]
            Since \(\mcA^L\times\mcA^L\) is finite, for each fixed \((u,v)\) the map
            \[
                \omega\longmapsto \kappa_{\omega;i,j}(u,v)
            \]
            is measurable, and hence
            \[
                \omega\longmapsto \kappa_{\omega;i,j}
                \in \mcP(\mcA^L\times\mcA^L)
            \]
            is measurable.
            Thus, for \(\pr\)-a.e. $\omega$ and all \(i,j\in\{1,\dots,d\}\), the measure \(\kappa_{\omega;i,j}\) satisfies the requirements of \Cref{condition:monomial}(A.2) as claimed. 
        \end{proof}

\section*{Acknowledgments}
\addcontentsline{toc}{section}{Acknowledgments}


    The authors acknowledge support from Villum Fonden Grant No. 25452 and Grant No. 60842, as well as QMATH Center of Excellence Grant No. 10059. 
    Part of this work was also supported by the Danish e-infrastructure Consortium (DeiC) grant 5260-00014B. 
    The authors also thank Prof. Albert H. Werner for valuable discussions. 


\begin{appendix}


\section{Auxiliary Lemmas}
\label{section:appn}


    Throughout we work with the finite outcome alphabet $\mcA$, equipped with the discrete $\sigma$-algebra $2^\mcA$, and the state space $\states$ with its Borel $\sigma$-algebra $\borel{\states}$.
    The product space $\mcA\times\states$ is equipped with the product $\sigma$-algebra $2^\mcA\otimes\borel{\states}$, which coincides with the Borel $\sigma$-algebra of the product topology.

    We recall the projective action of an instrument component.
    For each $\omega\in\Omega$, $a\in\mcA$ and $\rho\in\states$ we set
    \begin{equation}
    \label{eq:appendix-projective-action}
        \mcT_{a;\omega}\proj\rho
        :=
        \begin{cases}
            \dfrac{\mcT_{a;\omega}(\rho)}{\tr{\mcT_{a;\omega}(\rho)}}, 
                & \text{if }\tr{\mcT_{a;\omega}(\rho)}>0,\\[1ex]
            \rho_\star, 
                & \text{if }\tr{\mcT_{a;\omega}(\rho)}=0,
        \end{cases}
    \end{equation}
    where $\rho_\star\in\states$ is some fixed reference state.
    The choice of $\rho_\star$ on the zero-trace set does not affect any of the outcome distributions.

    \begin{lemma}
    \label{lem:proj-action-jointly-measurable}
        Fix \(a\in\mcA\). Then the map
        \[
            \Omega\times\states \ni (\omega,\rho)
            \longmapsto \mcT_{a;\omega}\proj \rho \in \states
        \]
        is \((\mcF\otimes\borel{\states},\borel{\states})\)-measurable.
    \end{lemma}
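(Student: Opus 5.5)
The plan is to split $\Omega\times\states$ into the measurable set where $\tr{\mcT_{a;\omega}(\rho)}>0$ and its complement, and to check measurability of the map on each piece.

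First, show that the unnormalized map $F:(\omega,\rho)\mapsto\mcT_{a;\omega}(\rho)\in\mbM_d$ is jointly measurable. By the standing assumption, $\omega\mapsto\mcT_{a;\omega}$ is $\mcF$-measurable into the finite-dimensional space $\mcL(\mbM_d)$. Fix a basis $\{E_{kl}\}$ of $\mbM_d$ and write $\rho=\sum_{k,l}\rho_{kl}E_{kl}$. Then
\[
    F(\omega,\rho)=\sum_{k,l}\rho_{kl}\,\mcT_{a;\omega}(E_{kl}).
\]
Each coordinate $\rho\mapsto\rho_{kl}$ is continuous, hence Borel, on $\states$, and each map $\omega\mapsto\mcT_{a;\omega}(E_{kl})$ is $\mcF$-measurable. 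Products and finite sums of such functions on the product space are $\mcF\otimes\borel{\states}$-measurable. This uses $\borel{\mbC}\otimes\borel{\mbC}=\borel{\mbC^2}$ for second-countable spaces, as recalled in the preliminaries. Equivalently, $F$ is the composition of the measurable map $(\omega,\rho)\mapsto(\mcT_{a;\omega},\rho)$ with the continuous evaluation map $\mcL(\mbM_d)\times\mbM_d\to\mbM_d$. This is the only point needing any care: it identifies the product $\sigma$-algebra with the Borel $\sigma$-algebra of $\mcL(\mbM_d)\times\states$.

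Next, $g(\omega,\rho):=\tr{F(\omega,\rho)}$ is measurable, since the trace is continuous. Hence $D:=\{g>0\}$ is in $\mcF\otimes\borel{\states}$. On $D$, the map $F/g$ is a quotient of measurable functions with nonvanishing denominator, so it is measurable. It takes values in $\states$ because $\mcT_{a;\omega}$ is completely positive, so $F(\omega,\rho)\ge0$, and dividing by its trace gives unit trace. Finally,
\[
    \mcT_{a;\omega}\proj\rho=\mathbf 1_D\,\frac{F}{g}+\mathbf 1_{D^c}\,\rho_\star.
\]
For a Borel set $B\subseteq\states$, its preimage is $(D\cap(F/g)^{-1}(B))\cup(D^c\cap\{\rho_\star\in B\})$, which is measurable. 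This proves the claim.
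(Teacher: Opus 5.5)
Your proof is correct and follows essentially the same route as the paper: expand $\rho$ in a basis of $\mbM_d$ to get joint measurability of $(\omega,\rho)\mapsto\mcT_{a;\omega}(\rho)$, note that the trace is measurable so $\{\operatorname{Tr}\mcT_{a;\omega}(\rho)>0\}$ is a measurable set, normalize on that set, and use the constant $\rho_\star$ on its complement. Your alternative phrasing, composing with the continuous evaluation map, is an equivalent way of making the same first step.
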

        \begin{proof}
            Fix \(a\in\mcA\). Since \(\mcH=\mbC^d\) is finite-dimensional, the space \(\mcB(\mcH)\) is a finite-dimensional complex vector space. 
            Choose a basis \(E_1,\dots,E_N\) of \(\mcB(\mcH)\), where \(N=d^2\), and let \(\ell_1,\dots,\ell_N:\mcB(\mcH)\to\mbC\) be the corresponding coordinate
            functionals, so that
            \[
                \rho=\sum_{r=1}^N \ell_r(\rho)\,E_r,
                \qquad \rho\in\mcB(\mcH).
            \]
            Each \(\ell_r\) is linear and therefore continuous.
        
            For \((\omega,\rho)\in\Omega\times\states\), linearity of \(\mcT_{a;\omega}\) gives
            \[
                \mcT_{a;\omega}(\rho)
                =
                \sum_{r=1}^N \ell_r(\rho)\,\mcT_{a;\omega}(E_r).
            \]
            For each \(r\), the map
            \[
                \omega\longmapsto \mcT_{a;\omega}(E_r)
            \]
            is \((\mcF,\borel{\mcB(\mcH)})\)-measurable by the standing measurability assumption on the instrument, and the map
            \[
                \rho\longmapsto \ell_r(\rho)
            \]
            is \((\borel{\states},\borel{\mbC})\)-measurable. Hence
            \[
                (\omega,\rho)\longmapsto \ell_r(\rho)\,\mcT_{a;\omega}(E_r)
            \]
            is \((\mcF\otimes\borel{\states},\borel{\mcB(\mcH)})\)-measurable for each \(r\). Summing over finitely many \(r\), we conclude that
            \[
                X(\omega,\rho):=\mcT_{a;\omega}(\rho)
            \]
            is \((\mcF\otimes\borel{\states},\borel{\mcB(\mcH)})\)-measurable as a map \(\Omega\times\states\to\mcB(\mcH)\).
        
            Since the trace map \(\tr{\,\cdot\,}:\mcB(\mcH)\to\mbR\) is continuous, the function
            \[
                t(\omega,\rho):=\tr{X(\omega,\rho)}
            \]
            is \((\mcF\otimes\borel{\states},\borel{\mbR})\)-measurable. 
            Let
            \[
                A:=\{(\omega,\rho)\in\Omega\times\states:\ t(\omega,\rho)>0\}.
            \]
            Then \(A\in\mcF\otimes\borel{\states}\).
        
            On \(A\), define
            \[
                Y(\omega,\rho):=\frac{X(\omega,\rho)}{t(\omega,\rho)}.
            \]
            The map
            \[
                f:\mcB(\mcH)\times(0,\infty)\to\mcB(\mcH),
                \qquad
                f(Z,s):=\frac{Z}{s},
            \]
            is continuous, hence Borel measurable. 
            Since
            \[
                (\omega,\rho)\longmapsto (X(\omega,\rho),t(\omega,\rho))
            \]
            is measurable and takes values in \(\mcB(\mcH)\times(0,\infty)\) on \(A\), it follows that \(Y\) is \((\mcF\otimes\borel{\states})|_A\)-measurable.
        
            Now define \(\widetilde Y:\Omega\times\states\to\mcB(\mcH)\) by
            \[
                \widetilde Y(\omega,\rho)
                :=
                \begin{cases}
                    Y(\omega,\rho), & (\omega,\rho)\in A,\\[0.5ex]
                    \rho_\star, & (\omega,\rho)\notin A,
                \end{cases}
            \]
            where \(\rho_\star\in\states\) is the fixed reference state from \eqref{eq:appendix-projective-action}. Since \(\rho_\star\) is constant and \(A\) is measurable, \(\widetilde Y\) is \((\mcF\otimes\borel{\states},\borel{\mcB(\mcH)})\)-measurable.
        
            By construction,
            \[
                \widetilde Y(\omega,\rho)=\mcT_{a;\omega}\proj\rho
                \qquad\text{for all }(\omega,\rho)\in\Omega\times\states.
            \]
            Moreover, \(\widetilde Y(\omega,\rho)\in\states\) for every \((\omega,\rho)\): if \(t(\omega,\rho)>0\), then \(\mcT_{a;\omega}(\rho)\ge0\) and normalization gives a state; if \(t(\omega,\rho)=0\), then \(\widetilde Y(\omega,\rho)=\rho_\star\in\states\).
            Since \(\states\) carries the subspace Borel \(\sigma\)-algebra inherited from \(\mcB(\mcH)\), it follows that
            \[
                (\omega,\rho)\longmapsto \mcT_{a;\omega}\proj\rho
            \]
            is \((\mcF\otimes\borel{\states},\borel{\states})\)-measurable.
        \end{proof}

    It is immediate from \Cref{lem:proj-action-jointly-measurable} that the $n$-step skew-product trajectory (defined in \Cref{section:skew-prod-traj}) is jointly measurable in the environment and the outcome sequence.

    Using \Cref{lem:proj-action-jointly-measurable}, we obtain joint measurability of the finite-step skew-product trajectory by induction.

    \begin{cor}
    \label{lem:rho-n-measurable}
        Let \(\rho_0:\Omega\to\states\) be an \((\mcF,\borel{\states})\)-measurable random initial state.
        For each \(n\ge1\), the map
        \[
            (\omega,\bar a)
            \longmapsto
            \rho_n^{\rho_0}(\omega,\bar a)
            :=
            \left(\mcT_{a_n;\theta^n(\omega)}\circ\cdots\circ
                  \mcT_{a_1;\theta(\omega)}\right)\proj
            \bigl(\rho_0(\omega)\bigr)
        \]
        from \(\Omega\times\mcA^\mbN\) to \(\states\) is
        \((\mcF\otimes\Sigma,\borel{\states})\)-measurable, where
        \(\Sigma:=(2^\mcA)^{\otimes\mbN}\).
        In fact, \(\rho_n^{\rho_0}(\omega,\bar a)\) depends only on the first \(n\)
        coordinates \((a_1,\dots,a_n)\) of \(\bar a\), and the induced map
        \[
            (\omega,a_1,\dots,a_n)\longmapsto
            \rho_n^{\rho_0}(\omega,a_1,\dots,a_n)
        \]
        from \(\Omega\times\mcA^n\) to \(\states\) is
        \((\mcF\otimes 2^{\mcA^n},\borel{\states})\)-measurable.
    \end{cor}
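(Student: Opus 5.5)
The plan is induction on \(n\), driven by \Cref{lem:proj-action-jointly-measurable}. First we dispose of the dependence claim: by definition, \(\rho_n^{\rho_0}(\omega,\bar a)\) is built only from \(\rho_0(\omega)\), the maps \(\mcT_{a_k;\theta^k(\omega)}\) for \(k\le n\), and the coordinates \(a_1,\dots,a_n\). So it factors as \(\widetilde\rho_n\circ(\mathrm{id}_\Omega\times\pi_{1,\ldots,n})\), where \(\widetilde\rho_n:\Omega\times\mcA^n\to\states\) is the induced map. The projection \(\mathrm{id}_\Omega\times\pi_{1,\ldots,n}\) is \((\mcF\otimes\Sigma,\mcF\otimes2^{\mcA^n})\)-measurable, since each \(\pi_k\) is \(\Sigma\)-measurable. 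Hence it suffices to prove measurability of \(\widetilde\rho_n\) on \(\Omega\times\mcA^n\).

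The one subtlety is to use the recursive form correctly. The projective action of a composition agrees with iterated projective actions whenever the intermediate traces are positive. If some intermediate trace vanishes, the composition itself gives zero, so the composite definition returns \(\rho_\star\). The iterated definition may instead return \(\mcT\proj\rho_\star\), which need not equal \(\rho_\star\).

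I will therefore not rely on iteration. I will work with the unnormalized operator \(X_n(\omega,a_1,\dots,a_n):=\mcT_{a_n;\theta^n\omega}\circ\cdots\circ\mcT_{a_1;\theta\omega}(\rho_0(\omega))\) and prove by induction that \(X_n\) is \((\mcF\otimes2^{\mcA^n},\borel{\mcB(\mcH)})\)-measurable.

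\textbf{Base case.} \(X_0=\rho_0\) is measurable by assumption.

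\textbf{Inductive step.} We have \(X_n=\mcT_{a_n;\theta^n\omega}(X_{n-1})\). Since \(2^{\mcA^n}=2^{\mcA^{n-1}}\otimes 2^{\mcA}\) and \(\mcA\) is finite, it suffices to check measurability on each slice \(\{a_n=a\}\); there \(X_n(\omega,\cdot)=\mcT_{a;\theta^n\omega}(X_{n-1}(\omega,\cdot))\). Exactly as in the proof of \Cref{lem:proj-action-jointly-measurable}, we expand \(X_{n-1}\) in a basis \(E_r\) of \(\mcB(\mcH)\):
\[
\mcT_{a;\theta^n\omega}(X_{n-1})=\sum_r\ell_r(X_{n-1})\,\mcT_{a;\theta^n\omega}(E_r).
\]
Each coefficient \(\ell_r(X_{n-1})\) is measurable by the induction hypothesis. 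Each \(\omega\mapsto\mcT_{a;\theta^n\omega}(E_r)\) is measurable, being the composition of the \(\mcF\)-measurable map \(\omega\mapsto\mcT_{a;\omega}(E_r)\) with the measurable map \(\theta^n\). Products and finite sums preserve measurability, so \(X_n\) is measurable.

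Finally we normalize, repeating the last part of the proof of \Cref{lem:proj-action-jointly-measurable}. The trace \(t_n=\tr{X_n}\) is measurable, so the set \(\{t_n>0\}\) is measurable. On this set \(X_n/t_n\) is measurable, and on its complement \(\widetilde\rho_n\) equals the constant \(\rho_\star\). This shows \(\widetilde\rho_n\) is measurable, and the values lie in \(\states\), which carries the subspace Borel \(\sigma\)-algebra.

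The main obstacle is just this bookkeeping about the \(\rho_\star\) convention versus the recursion. Working with unnormalized operators sidesteps it entirely. If one instead wants the recursive form, one can note that the two definitions agree off the set where an intermediate trace vanishes, and that this set is itself measurable.
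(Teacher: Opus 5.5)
Your proof is correct, and it takes a slightly different and more careful route than the paper.

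\textbf{The paper's route.} It reuses \Cref{lem:proj-action-jointly-measurable} as a black box. It sets $F_a(\omega,\rho)=\mcT_{a;\omega}\proj\rho$ and defines recursively $G_0=\rho_0$ and $G_r(\omega,a_1,\dots,a_r)=F_{a_r}\bigl(\theta^r\omega,G_{r-1}(\omega,a_1,\dots,a_{r-1})\bigr)$. Measurability of each $G_r$ follows by induction, as a composition of measurable maps. It then asserts $\rho_n^{\rho_0}=G_n\circ(\mathrm{id}_\Omega\times\pi_{1,\ldots,n})$ ``by construction''.

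\textbf{Where that identity needs care.} As you point out, it holds only off the set where some intermediate trace $\tr{X_k}$ with $k<n$ vanishes. For example, take $n=2$ and $\mcT_{a_1;\theta\omega}(\rho_0(\omega))=0$. The composite definition gives $\rho_\star$, whereas $G_2=\mcT_{a_2;\theta^2\omega}\proj\rho_\star$, which need not equal $\rho_\star$. For each $\omega$, this exceptional set of $\bar a$ is a finite union of $\mbQ_{\rho_0(\omega);\omega}$-null cylinders. That is why the paper can treat the composite and recursive forms as equivalent in \Cref{section:skew-prod-traj}. For the pointwise statement, however, one needs exactly the patch in your last sentence: the set is measurable, and $\rho_n^{\rho_0}$ equals $G_n$ off it and $\rho_\star$ on it.

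\textbf{Comparison.} Your induction on the unnormalized $X_n$, followed by a single normalization, matches the definition in the statement verbatim and sidesteps the issue. The price is repeating the basis-expansion step of \Cref{lem:proj-action-jointly-measurable} instead of citing it. The paper's route has the advantage of directly giving measurability of the recursive (Markov) form of the trajectory, which is the form used elsewhere.
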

        \begin{proof}
            For each \(a\in\mcA\), set
            \[
                F_a(\omega,\rho):=\mcT_{a;\omega}\proj\rho.
            \]
            By \Cref{lem:proj-action-jointly-measurable}, each
            \(F_a:\Omega\times\states\to\states\) is
            \((\mcF\otimes\borel{\states},\borel{\states})\)-measurable.
        
            Define recursively
            \[
                G_0(\omega):=\rho_0(\omega),
            \]
            and for \(r\ge1\),
            \[
                G_r(\omega,a_1,\dots,a_r)
                :=
                F_{a_r}\!\bigl(\theta^r(\omega),\,G_{r-1}(\omega,a_1,\dots,a_{r-1})\bigr).
            \]
            Since \(\theta^r\) is \(\mcF\)-measurable and \(\mcA^r\) carries the discrete \(\sigma\)-algebra, an induction on \(r\) shows that each  \(G_r:\Omega\times\mcA^r\to\states\) is
            \((\mcF\otimes 2^{\mcA^r},\borel{\states})\)-measurable.
        
            By construction,
            \[
                \rho_n^{\rho_0}(\omega,\bar a)
                =
                G_n\bigl(\omega,\pi_1(\bar a),\dots,\pi_n(\bar a)\bigr),
            \]
            so \(\rho_n^{\rho_0}\) depends only on the first \(n\) coordinates of \(\bar a\).
            Since the map
            \[
                (\omega,\bar a)\longmapsto (\omega,\pi_1(\bar a),\dots,\pi_n(\bar a))
            \]
            from \(\Omega\times\mcA^\mbN\) to \(\Omega\times\mcA^n\) is  \((\mcF\otimes\Sigma,\mcF\otimes 2^{\mcA^n})\)-measurable, it follows that
            \[
                (\omega,\bar a)\longmapsto \rho_n^{\rho_0}(\omega,\bar a)
            \]
            is \((\mcF\otimes\Sigma,\borel{\states})\)-measurable.
        \end{proof}

    It was proved in \cite{traj} that there exists a family of outcome laws $\{\mbQ_{\rho,\omega}\}_{\rho\in\states,\;\omega\in\Omega}$ on $\mcA^{\mbN}$ such that the map
    \[
        (\rho,\omega) \longmapsto \mbQ_{\rho,\omega}
    \]
    from $(\states\times\Omega,\borel{\states}\otimes\mcF)$ to $(\mcP(\mcA^\mbN),\borel{\mcP(\mcA^\mbN)})$ is measurable, where $\mcP(\mcA^\mbN)$ denotes the space of Borel probability measures on $\mcA^\mbN$, equipped with the Prokhorov metric.
    In particular, for each deterministic initial state $\rho\in\states$ the map $\omega\mapsto\mbQ_{\rho,\omega}$ is $(\mcF,\borel{\mcP(\mcA^\mbN)})$-measurable.
    We now elevate this to random initial states.

    \begin{lemma}
    \label{lem:Q-random-initial-state}
        Let $\rho_0:\Omega\to\states$ be an $(\mcF,\borel{\states})$-measurable random initial state. 
        Then the map
        \[
            \omega \longmapsto \mbQ_{\rho_0(\omega);\omega}
        \]
        from $(\Omega,\mcF)$ to $(\mcP(\mcA^\mbN),\borel{\mcP(\mcA^\mbN)})$
        is measurable.
    \end{lemma}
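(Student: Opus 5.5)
The plan is to reduce measurability of $\omega\mapsto\mbQ_{\rho_0(\omega);\omega}$ to measurability of countably many real-valued maps, namely its values on finite cylinders. The space $\mcP(\mcA^\mbN)$ carries the Prokhorov (weak) topology. Because $\mcA^\mbN$ is compact metrizable and cylinder sets are clopen, the Borel $\sigma$-algebra $\borel{\mcP(\mcA^\mbN)}$ is generated by the evaluation maps $\mu\mapsto\mu([w])$, where $w$ ranges over the countable set of finite words $\bigcup_{n}\mcA^n$. One way to see this: the indicators $\mathbf 1_{[w]}$ are continuous, their linear span is uniformly dense in $C(\mcA^\mbN)$, and so the evaluations $\mu\mapsto\mu([w])$ separate points and generate the weak topology on this Polish space. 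It therefore suffices to show that, for each finite word $w=(a_1,\dots,a_n)$, the map
\[
    \omega\longmapsto \mbQ_{\rho_0(\omega);\omega}([w])
\]
is $\mcF$-measurable.

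For this I will use the explicit cylinder formula
\[
    \mbQ_{\rho_0(\omega);\omega}([w])=\tr{\mcT^{(n)}_{w;\omega}(\rho_0(\omega))},
    \qquad
    \mcT^{(n)}_{w;\omega}=\mcT_{a_n;\theta^n\omega}\circ\cdots\circ\mcT_{a_1;\theta\omega}.
\]
I will argue that $(\omega,\rho)\mapsto\mcT^{(n)}_{w;\omega}(\rho)$ is $\mcF\otimes\borel{\states}$-measurable, by induction on $n$ exactly as in the proof of \Cref{lem:proj-action-jointly-measurable}. Each map $(\omega,\rho)\mapsto\mcT_{a;\theta^k\omega}(\rho)$ is a finite sum of products of a measurable function of $\omega$ (by the standing assumption, composed with the measurable map $\theta^k$) and a continuous function of $\rho$. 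Composition is then handled through the linearity and coordinates of the intermediate matrix, as in that lemma, or equivalently as in \Cref{lem:rho-n-measurable} but without normalization.

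Next I compose with the measurable map $\omega\mapsto(\omega,\rho_0(\omega))$, which is measurable into $\mcF\otimes\borel{\states}$ because both of its components are measurable. Taking the continuous trace then gives the measurability of each cylinder probability.

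An alternative route would combine the cited joint measurability of $(\rho,\omega)\mapsto\mbQ_{\rho,\omega}$ from \cite{traj} with the same graph map $\omega\mapsto(\rho_0(\omega),\omega)$; that gives the result in one line. I would mention this as a remark, but keep the self-contained cylinder argument as the main proof.

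The main obstacle is the topological step, namely identifying the Borel $\sigma$-algebra of the Prokhorov metric with the $\sigma$-algebra generated by cylinder evaluations. I will cite the standard fact for Polish spaces (for example, Kallenberg: for a countable convergence-determining class of continuous functions, the evaluation maps generate the Borel $\sigma$-algebra of $\mcP$). I will also note that, by Carathéodory uniqueness, cylinders determine the measure, so the evaluations separate points.
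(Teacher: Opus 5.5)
Your argument is correct. The paper's proof is exactly the one-line route you relegate to a remark. It takes the graph map $F(\omega)=(\rho_0(\omega),\omega)$, checks on rectangles that it is $(\mcF,\borel{\states}\otimes\mcF)$-measurable, and composes it with the jointly measurable map $(\rho,\omega)\mapsto\mbQ_{\rho,\omega}$ imported from \cite[Lemma~A.2]{traj}.

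Your main argument instead reproves what that citation supplies. You identify $\borel{\mcP(\mcA^\mbN)}$ with the $\sigma$-algebra generated by the countably many cylinder evaluations, using Stone--Weierstrass on the clopen cylinder indicators or a countable point-separating family on a Polish space. You then reduce everything to measurability of $\omega\mapsto\tr{\mcT^{(n)}_{w;\omega}(\rho_0(\omega))}$.

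What your route buys is self-containedness and generality. It uses only the standing measurability of $\omega\mapsto\mcT_{a;\omega}$ and the instrument axioms, so it covers the imperfect regime directly. The paper's proof instead relies on a lemma stated in \cite{traj} for perfect measurements, together with the remark that it ``extends verbatim.'' What the paper's route buys is brevity.

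One small simplification to your version: you do not need joint measurability in $(\omega,\rho)$. The map $\omega\mapsto\mcT^{(n)}_{w;\omega}$ is already measurable into the finite-dimensional space $\mcL(\mcB(\mcH))$, because composition is continuous. Evaluating it at the measurable $\rho_0(\omega)$ and taking the trace is then a continuous function of a measurable pair.
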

        \begin{proof}
            Consider the map
            \[
                F : \Omega \longrightarrow \states\times\Omega,
                \qquad
                F(\omega) := \left(\rho_0(\omega),\omega\right).
            \]
            We first check that $F$ is $(\mcF,\borel{\states}\otimes\mcF)$-measurable.
            The product $\sigma$-algebra $\borel{\states}\otimes\mcF$ is generated by rectangles $B\times F_0$ with $B\in\borel{\states}$ and $F_0\in\mcF$, and for such a rectangle we have
            \[
                F^{-1}(B\times F_0)
                = \{\omega : \rho_0(\omega)\in B,\ \omega\in F_0\}
                = \rho_0^{-1}(B)\cap F_0 \in \mcF,
            \]
            since $\rho_0$ is $(\mcF,\borel{\states})$-measurable and $F_0\in\mcF$.
            Hence $F$ is measurable.
        
            By the result from \cite[Lemma~A.2]{traj}, the map
            \[
                B : \states\times\Omega \to \mcP(\mcA^\mbN),
                \qquad
                B(\rho,\omega) := \mbQ_{\rho,\omega},
            \]
            is $(\borel{\states}\otimes\mcF,\borel{\mcP(\mcA^\mbN)})$-measurable.
            Therefore the composition
            \[
                \Omega \xrightarrow{\ F\ } \states\times\Omega
                   \xrightarrow{\ B\ } \mcP(\mcA^\mbN)
            \]
            is $(\mcF,\borel{\mcP(\mcA^\mbN)})$-measurable.
            But
            \[
                (B\circ F)(\omega)
                = B\left(\rho_0(\omega),\omega\right)
                = \mbQ_{\rho_0(\omega);\omega},
            \]
            so the map $\omega\mapsto\mbQ_{\rho_0(\omega);\omega}$ is measurable, as claimed.
        \end{proof}


\section{Proof of Examples}
\label{appen:proof_of_examples}


    In this section, we provide proofs of the claims made in the examples in the \Cref{sec:examples}. 
    We recall that these examples were defined on a random environment space that satisfies \cref{assumption1,assumption_mixing_base}.


    \begin{lemma}
    \label{appen_proof_of_1}
        Assume \Cref{assumption1}. 
        Then the model in \Cref{eg:projective-probe-noisy-label} is deterministic, and hence \Cref{assumption_mixing_base} holds automatically.
        Moreover, it satisfies \((\esp)\) with \(N_0(\omega)=1\) for all \(\omega\in\Omega\), and \Cref{assumption:forgetting} with dynamically stationary state \(\s(\omega)=\frac1d I_d\) and rate \(r_n=2(1-\alpha)^n\). 
        Finally, \Cref{condition:monomial}(A.1) holds with \(f_{(k,\ell)}(i)=k\), and \Cref{condition:monomial}(A.2) holds with \(L=1\) and \(\varepsilon=\alpha\).
    \end{lemma}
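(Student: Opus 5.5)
The plan is to verify each claim of \Cref{appen_proof_of_1} by direct computation from the Kraus operators $V_{(k,\ell)}=\sqrt{q_{k\ell}}\,|e_k\rangle\langle e_\ell|$, with $q_{k\ell}=(1-\alpha)\delta_{k\ell}+\alpha/d$. Determinism is immediate, since no object depends on $\omega$. The $\sigma$-algebras $\mcF_k,\mcF^k$ are therefore trivial, so $\alpha_{\pr}(n)=0$ and \Cref{assumption_mixing_base} holds.

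\textbf{Step 1 (completeness and channel).} I will first check that the family is a perfect measurement. We have $\sum_{k,\ell}q_{k\ell}|e_\ell\rangle\langle e_\ell|=\sum_\ell(\sum_k q_{k\ell})|e_\ell\rangle\langle e_\ell|=I_d$, because $\sum_k q_{k\ell}=(1-\alpha)+\alpha=1$. Next I will compute the channel:
\[
\Phi(\rho)=\sum_{k,\ell}q_{k\ell}\langle e_\ell,\rho e_\ell\rangle\rho^{(k)}=(1-\alpha)D(\rho)+\tfrac{\alpha\tr{\rho}}{d}I_d .
\]
Two facts then give the iterate formula $\Phi^{(n)}(\rho)=(1-\alpha)^nD(\rho)+(1-(1-\alpha)^n)\tfrac1dI_d$ by induction on $n$. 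First, $D$ is idempotent. Second, $D(I_d)=I_d$, so $\Phi(I_d/d)=I_d/d$.

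\textbf{Step 2 (\esp, stationarity, forgetting).} Strict positivity of $\Phi$ is immediate. For a nonzero PSD matrix $X$ we have $\Phi(X)\ge\frac{\alpha\tr{X}}{d}I_d>0$, since $\tr{X}>0$ and $D(X)\ge0$; hence $N_0=1$. The iterate formula gives $\Phi^{(n)}(I_d/d)=I_d/d$, so the constant state $\s=\tfrac1dI_d$ is dynamically stationary. For forgetting, the iterate formula gives
\[
\Phi^{(n)}(\vartheta)-\s=(1-\alpha)^n\bigl(D(\vartheta)-\tfrac1dI_d\bigr),
\]
whose trace norm is at most $2(1-\alpha)^n$. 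This bound holds pointwise in $\omega$, so $\beta_n(\vartheta)\le 2(1-\alpha)^n=:r_n$. This sequence is geometric, so $\sum r_n^{\delta/(2+\delta)}<\infty$.

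\textbf{Step 3 (\Cref{condition:monomial}).} For (A.1), I will use $V_{(k,\ell)}e_i=\sqrt{q_{k\ell}}\,\delta_{\ell i}e_k\in\mbC e_k$, which gives $f_{(k,\ell)}(i)=k$. For $i\ne j$, at most one of $V_{(k,\ell)}e_i$ and $V_{(k,\ell)}e_j$ is nonzero, so the two vectors are orthogonal. For (A.2), I will apply \Cref{prop:a2-sufficient} with $L=1$. The one-step law from $\rho^{(i)}$ is $P_i((k,\ell))=q_{k\ell}\delta_{\ell i}$, so the terminal-label law is $\overline P(i,k)=q_{ki}$. The overlap is
\[
\sum_k\min\{q_{ki},q_{kj}\}\ \ge\ \sum_k \tfrac{\alpha}{d}=\alpha,
\]
where I use that for $i\ne j$ every entry satisfies $q_{ki},q_{kj}\ge\alpha/d$; the case $i=j$ gives overlap $1$. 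This yields $\varepsilon=\alpha$.

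There is no genuine obstacle in this proof. The only step requiring care is the induction for $\Phi^{(n)}$, specifically the identities $D(I_d)=I_d$ and $\tr{D(\rho)}=\tr{\rho}$. These ensure the constant term accumulates as $\alpha\sum_{j<n}(1-\alpha)^j=1-(1-\alpha)^n$.
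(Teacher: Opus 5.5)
Your proposal is correct and follows essentially the same route as the paper: the explicit formula $\Phi(\rho)=(1-\alpha)D(\rho)+\frac{\alpha\,\tr{\rho}}{d}I_d$, induction on $\Phi^{(n)}$ using idempotence of $D$, the pointwise bound $2(1-\alpha)^n$, (A.1) with $f_{(k,\ell)}(i)=k$, and (A.2) via \Cref{prop:a2-sufficient} with $L=1$ and terminal-label law $q_{ki}$. You are in fact slightly more explicit than the paper, which only asserts these two points: you verify Kraus completeness, and you justify $(\esp)$ through the bound $\Phi(X)\ge\frac{\alpha\,\tr{X}}{d}I_d$.
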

        \begin{proof}
            Using
            \[
                q_{k\ell}=(1-\alpha)\delta_{k\ell}+\frac{\alpha}{d},
            \]
            we get
            \[
                \Phi_\omega(\rho)
                =
                (1-\alpha)\sum_{\ell=1}^d \langle e_\ell,\rho e_\ell\rangle\,\rho^{(\ell)}
                +
                \frac{\alpha}{d}
                \left(\sum_{\ell=1}^d \langle e_\ell,\rho e_\ell\rangle\right)
                \left(\sum_{k=1}^d \rho^{(k)}\right).
            \]
            Since $\sum_{\ell=1}^d \langle e_\ell,\rho e_\ell\rangle=\tr{\rho}$ and
            $\sum_{k=1}^d \rho^{(k)}=I_d$, this becomes
            \[
                \Phi_\omega(\rho)
                =
                (1-\alpha)\sum_{i=1}^d \langle e_i,\rho e_i\rangle\,\rho^{(i)}
                +
                \frac{\alpha\,\tr{\rho}}{d}\,I_d 
                = (1-\alpha)D(\rho) +  \frac{\alpha\,\tr{\rho}}{d}\,I_d 
            \]
            where 
            \[
                D(\rho):=\sum_{i=1}^d \langle e_i,\rho e_i\rangle\,\rho^{(i)}.
            \]
            This shows that this instrument verifies \esp with $N_0(\omega) = 1$ and also that 
            \[
                \s(\omega):=\frac1d I_d
            \]
            is the unique dynamically stationary state. 
        
            \medskip
            We can also show that this example satisfies \Cref{assumption:forgetting}, directly without invoking \Cref{thm:s_exists}. 
            Observe that for the non-selective channel:
            \[
                \Phi_\omega(\rho)
                =
                \sum_{k,\ell=1}^d q_{k\ell}
                |e_k\rangle\langle e_\ell|\rho|e_\ell\rangle\langle e_k|
                =
                \sum_{k,\ell=1}^d q_{k\ell}\,\langle e_\ell,\rho e_\ell\rangle\,\rho^{(k)}.
            \]
            Using
            \[
                q_{k\ell}=(1-\alpha)\delta_{k\ell}+\frac{\alpha}{d},
            \]
            we get
            \[
                \Phi_\omega(\rho)
                =
                (1-\alpha)\sum_{\ell=1}^d \langle e_\ell,\rho e_\ell\rangle\,\rho^{(\ell)}
                +
                \frac{\alpha}{d}
                \left(\sum_{\ell=1}^d \langle e_\ell,\rho e_\ell\rangle\right)
                \left(\sum_{k=1}^d \rho^{(k)}\right).
            \]
            Since $\sum_{\ell=1}^d \langle e_\ell,\rho e_\ell\rangle=\tr{\rho}$ and
            $\sum_{k=1}^d \rho^{(k)}=I_d$, this becomes
            \[
                \Phi_\omega(\rho)
                =
                (1-\alpha)\sum_{i=1}^d \langle e_i,\rho e_i\rangle\,\rho^{(i)}
                +
                \frac{\alpha\,\tr{\rho}}{d}\,I_d 
                = (1-\alpha)D(\rho) +  \frac{\alpha\,\tr{\rho}}{d}\,I_d 
            \]
            where 
            \[
                D(\rho):=\sum_{i=1}^d \langle e_i,\rho e_i\rangle\,\rho^{(i)}.
            \]
            Note that for the operator $D$ we have that $D(D\rho) = D(\rho)$, for any states. 
            Thus for any random states $\vartheta:\Omega\to \states$, we have that 
            \[
                \Phi^{(2)}_\omega(\vartheta(\omega)) = (1-\alpha)^2 D(\vartheta(\omega)) + \dfrac{\alpha(1-\alpha)}{d}\mbI_d + \dfrac{\alpha}{d}\mbI_d
            \]
            Thus, inductively, we obtain that 
            \[
                \Phi^{(n)}(\vartheta(\omega)) = (1-\alpha)^n D(\vartheta(\omega)) + (1-(1-\alpha)^n)\dfrac{1}{d}\mbI_d
            \]
            Hence,
            \[
                \norm{\Phi^{(n)}(\vartheta(\omega) - \dfrac{1}{d}\mbI_d}_1 \le (1-\alpha)^n\norm{D(\vartheta(\omega)) - \dfrac{1}{d}{\mbI_d}}_1\le 2(1-\alpha)^n
            \]
        
            \medskip
            We verify \Cref{condition:monomial}(A.1) and \Cref{condition:monomial}(A.2).
        
            For \(a=(k,\ell)\in\mcA\), define
            \[
                f_a(i):=k,
                \qquad i=1,\dots,d.
            \]
            Then
            \[
                \begin{aligned}
                V_{(k,\ell);\omega}\rho^{(i)}V_{(k,\ell);\omega}\adj 
                &=
                \bigl(\sqrt{q_{k\ell}}\ket{e_k}\bra{e_\ell}\bigr)
                \ket{e_i}\bra{e_i}
                \bigl(\sqrt{q_{k\ell}}\ket{e_\ell}\bra{e_k}\bigr)\\
                &=
                q_{k\ell}\,\delta_{\ell i}\,\rho^{(k)}
                \in \mbR_+\,\rho^{(f_a(i))}.
                \end{aligned}
            \]
            Hence \Cref{condition:monomial}(A.1) holds.
        
            To verify \Cref{condition:monomial}(A.2), we apply \Cref{prop:a2-sufficient} with \(L=1\). For \(i\in\{1,\dots,d\}\),
            \[
                P^{(1)}_{\omega,i}\bigl((k,\ell)\bigr)
                =
                \tr{
                    V_{(k,\ell);\omega}\rho^{(i)}V_{(k,\ell);\omega}\adj 
                }
                =
                q_{k\ell}\delta_{\ell i}
                =
                q_{k i}\delta_{\ell i}.
            \]
            Since \(f_{(k,\ell)}(i)=k\), the terminal-label law is
            \[
                \overline P^{(1)}_\omega(i,m)
                =
                \sum_{\ell=1}^d P^{(n,1)}_{\omega,i}\bigl((m,\ell)\bigr)
                =
                P^{(1)}_{\omega,i}\bigl((m,i)\bigr)
                =
                q_{m i}.
            \]
            Using
            \[
                q_{m i}=(1-\alpha)\delta_{m i}+\frac{\alpha}{d},
            \]
            we obtain, for all \(i,j\in\{1,\dots,d\}\),
            \[
                \sum_{m=1}^d
                \min\!\bigl\{
                    \overline P^{(1)}_\omega(i,m),\,
                    \overline P^{(1)}_\omega(j,m)
                \bigr\}
                \ge \alpha.
            \]
            Therefore \Cref{prop:a2-sufficient} applies with \(L=1\) and \(\varepsilon=\alpha\), so \Cref{condition:monomial}(A.2) holds.
        \end{proof}


    \begin{lemma}
    \label{appen_proof_of_3}
        Assume \Cref{assumption1}. 
        Then the perfect measurement model of \Cref{eg:projective-probe-N0-2} is deterministic, so that \Cref{assumption_mixing_base} holds automatically. 
        Moreover, it satisfies \((\esp)\) with \(N_0(\omega)=2\) for all \(\omega\in\Omega\), and \Cref{assumption:forgetting} with dynamically stationary state
        \[
            \s(\omega)=\frac13 I_3,
            \qquad \omega\in\Omega,
        \]
        and deterministic rate
        \[
            r_n=2^{1-n},
            \qquad n\in\mbN.
        \]
        Finally, \Cref{condition:monomial}(A.1) holds with
        \[
            f_{(k,\ell)}(i)=k,
        \]
        and \Cref{condition:monomial}(A.2) holds with \(L=1\) and
        \(\varepsilon=\frac12\).
    \end{lemma}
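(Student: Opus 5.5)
The plan is to compute everything explicitly, since the model is deterministic and every Kraus operator is a weighted matrix unit, in the same way as in \Cref{appen_proof_of_1}. Because \(\omega\mapsto\mcV_\omega\) is constant, the $\sigma$-algebras \(\mcF_k,\mcF^k\) are trivial. Hence \(\alpha_{\pr}(n)=\rho_{\pr}(n)=0\), and \Cref{assumption_mixing_base} holds.

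First I will write down the non-selective channel:
\[
    \Phi(\rho)=\sum_{k,\ell}q_{k\ell}\braket{e_\ell,\rho e_\ell}\rho^{(k)}=P(D(\rho)),
\]
where \(P:=\tfrac12(\mathrm{id}+S)\) acts on diagonal matrices, \(S\) is the cyclic shift \(\rho^{(\ell)}\mapsto\rho^{(\ell+1)}\), and \(D\) is the dephasing map. For (\esp) I will check the two halves separately.
\begin{itemize}
    \item \(\Phi\) is not strictly positive: \(\Phi(\rho^{(1)})=\tfrac12(\rho^{(1)}+\rho^{(2)})\) is singular.
    \item \(\Phi^{(2)}\) is strictly positive: \(\Phi^{(n)}=P^nD\) and \(P^2=\tfrac14(\mathrm{id}+2S+S^2)\). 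For \(d=3\), \(P^2\) maps each \(\rho^{(\ell)}\) to a full-rank diagonal state, since \(\mathrm{id},S,S^2\) cover all three labels. Any nonzero PSD \(X\) has \(D(X)\neq0\) with nonnegative diagonal, so \(\Phi^{(n)}(X)>0\) for every \(n\ge2\). This gives \(N_0=2\).
\end{itemize}
Moreover, \(\tfrac13 I_3\) is fixed by \(P\) and by \(D\), so it is dynamically stationary.

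For \Cref{assumption:forgetting} with \(r_n=2^{1-n}\), I must bound \(\|P^n(p)-u\|_1\) for a probability vector \(p=\mathrm{diag}(D\vartheta)\) and the uniform vector \(u\). I will work in the Fourier basis of \(\mbZ/3\mbZ\). There \(P\) has eigenvalue \(1\) on constants and \(\tfrac12(1+e^{\pm 2\pi i/3})=\tfrac12 e^{\pm i\pi/3}\) on the two nontrivial characters, both of modulus \(\tfrac12\). So \(P^n(p)-u\) is \(p-u\) with its nontrivial Fourier components scaled by modulus \(2^{-n}\).

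The step I expect to be the main obstacle is turning this spectral fact into the clean \(\ell^1\) bound \(2\cdot 2^{-n}\); I would rather not just cite the Fourier bound. Instead I will use a Dobrushin/coupling argument. Any two rows of the \(2\)-step kernel \(P^2\) overlap in total mass at least \(\tfrac12\). For example, the distributions \((\tfrac14,\tfrac12,\tfrac14)\) and their shift share mass \(\tfrac14+\tfrac14=\tfrac12\). This gives \(\|P^{n}(p)-u\|_1\le 2\cdot 2^{-\lfloor n/2\rfloor}\). If that loses too much relative to \(2^{1-n}\), I will fall back on an explicit computation: since \(p-u\) is real with zero mean, I will write out \(P^n(p)-u\) in closed form as \(2^{-n}\) times a rotation of \(p-u\) within the two-dimensional zero-sum plane. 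I will then bound the trace norm by \(\|D\vartheta-\tfrac13 I\|_1\le 2\) times \(2^{-n}\), using that the off-diagonal part of \(\vartheta\) is killed at the first step. If the rotation is not an \(\ell^1\)-isometry, a harmless constant may appear; that still satisfies the summability requirement, so the assumption holds either way.

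Finally, \Cref{condition:monomial}(A.1) holds with \(f_{(k,\ell)}(i)=k\): we have \(V_{(k,\ell)}e_i=\sqrt{q_{k\ell}}\delta_{\ell i}e_k\), and at most one \(i\) survives, so orthogonality is trivial. For (A.2) I will apply \Cref{prop:a2-sufficient} with \(L=1\). The terminal-label law from \(i\) is \(\overline P(i,m)=q_{mi}\), which is uniform on \(\{i,i+1\}\). For \(i\neq j\) in \(\mbZ/3\), the sets \(\{i,i+1\}\) and \(\{j,j+1\}\) always share exactly one element. Hence the overlap is at least \(\tfrac12\), which gives \(\varepsilon=\tfrac12\).
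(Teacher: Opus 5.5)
Your handling of the mixing assumption, (ESP) with \(N_0=2\), stationarity of \(\tfrac13 I_3\), (A.1), and (A.2) via \Cref{prop:a2-sufficient} is correct and matches the paper. The gap is in the forgetting part. The lemma asserts the explicit rate \(r_n=2^{1-n}\), and your proposal never establishes it.

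\begin{itemize}
\item Your primary route applies Dobrushin to the two-step kernel \(P^2\), which yields only \(2\cdot 2^{-\lfloor n/2\rfloor}\).
\item The overlap you compute there is also wrong. The laws \((\tfrac14,\tfrac12,\tfrac14)\) and \((\tfrac14,\tfrac14,\tfrac12)\) share mass \(\tfrac34\), not \(\tfrac12\). Even with the correct value, the two-step argument gives \(4\cdot 2^{-n}\) for odd \(n\).
\item Your fallback is left unresolved, with the hedge that ``a harmless constant may appear''. So the stated rate is conceded rather than proved.
\end{itemize}
What you do prove is enough for \Cref{assumption:forgetting} itself, but not for the lemma's explicit claim.

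The missing observation is one you already make in your (A.2) paragraph. For \(i\neq j\), the one-step laws from labels \(i\) and \(j\) are uniform on \(\{i,i+1\}\) and \(\{j,j+1\}\), and these overlap in mass exactly \(\tfrac12\). So the Dobrushin coefficient of the one-step matrix \(Q\) is already \(\tfrac12\). This gives \(\|Qz\|_1\le\tfrac12\|z\|_1\) for zero-sum \(z\), and iterating yields \(\|Q^n p-u\|_1\le 2^{-n}\|p-u\|_1\le 2^{1-n}\). This is exactly the paper's argument.

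Your fallback would also have worked. On the zero-sum plane, the \(\ell^1\) unit ball is the hexagon with vertices \(\pm\tfrac12(e_i-e_j)\). The map \(\mathrm{id}+S\) sends \(e_1-e_2\mapsto e_1-e_3\), \(e_2-e_3\mapsto e_2-e_1\), and \(e_3-e_1\mapsto e_3-e_2\). It therefore permutes these vertices and is an \(\ell^1\)-isometry on that plane. Hence \(\|P^n z\|_1=2^{-n}\|z\|_1\) exactly for zero-sum \(z\), and no extra constant appears.
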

        \begin{proof}
            The matrix \(Q=(q_{k\ell})_{k,\ell=1}^3\) is
            \[
                Q=
                \begin{pmatrix}
                    \frac12 & 0 & \frac12\\[0.3em]
                    \frac12 & \frac12 & 0\\[0.3em]
                    0 & \frac12 & \frac12
                \end{pmatrix}.
            \]
            Since each column of \(Q\) sums to \(1\), we have
            \[
                \sum_{a\in\mcA}V_{a;\omega}\adj  V_{a;\omega}
                =
                \sum_{\ell=1}^3\left(\sum_{k=1}^3 q_{k\ell}\right)\rho^{(\ell)}
                =
                I_3.
            \]
            Hence \((V_{a;\omega})_{a\in\mcA}\) is a Kraus family, and the corresponding instrument is a perfect measurement.
        
            We first verify \hyperlink{esp}{(\textup{ESP})}. The associated non-selective channel is
            \[
                \Phi_\omega(\rho)
                =
                \sum_{k,\ell=1}^3 q_{k\ell}\,\braket{e_\ell,\rho e_\ell}\,\rho^{(k)}.
            \]
            In particular,
            \[
                \Phi_\omega(\rho^{(1)})
                =
                \frac12\rho^{(1)}+\frac12\rho^{(2)},
            \]
            which is not strictly positive, so \(\Phi_\omega\) itself is not strictly positive. However,
            \[
                Q^2
                =
                \begin{pmatrix}
                    \frac14 & \frac14 & \frac12\\[0.3em]
                    \frac12 & \frac14 & \frac14\\[0.3em]
                    \frac14 & \frac12 & \frac14
                \end{pmatrix},
            \]
            whose entries are all strictly positive. If \(A\ge0\) is nonzero, then at least one diagonal entry of \(A\) is strictly positive, and therefore
            \[
                \Phi_\omega^{(2)}(A)
                =
                \operatorname{diag}\!\bigl(Q^2\,\operatorname{diag}(A)\bigr)
            \]
            is positive definite. Hence \((\esp)\) holds with
            \[
                N_0(\omega)=2
                \qquad\text{for all }\omega\in\Omega.
            \]
        
            Next we verify \Cref{assumption:forgetting}. Since \(Q\) is doubly stochastic,
            \[
                Q\begin{pmatrix}1/3\\1/3\\1/3\end{pmatrix}
                =
                \begin{pmatrix}1/3\\1/3\\1/3\end{pmatrix},
            \]
            so the constant state
            \[
                \s(\omega):=\frac13 I_3
            \]
            is dynamically stationary. For \(\rho\in\states\), let
            \[
                D(\rho):=\sum_{i=1}^3 \braket{e_i,\rho e_i}\,\rho^{(i)},
            \]
            and let \(p(\rho)\in\mbR^3\) denote the diagonal probability vector of \(\rho\) in the basis \((e_1,e_2,e_3)\). Then
            \[
                \Phi_\omega^{(n)}(\rho)=\operatorname{diag}\!\bigl(Q^n p(\rho)\bigr).
            \]
            The Dobrushin contraction coefficient of \(Q\) is
            \[
                \delta(Q)
                =
                1-\min_{i,j}\sum_{k=1}^3 \min\{q_{ki},q_{kj}\}
                =
                \frac12,
            \]
            because every pair of columns of \(Q\) has overlap \(1/2\). Hence
            \[
                \|Q^n p-u\|_1\le 2^{-n}\|p-u\|_1,
                \qquad
                u:=\begin{pmatrix}1/3\\1/3\\1/3\end{pmatrix}.
            \]
            Since \(\|p-u\|_1\le2\) for probability vectors and trace norm agrees with \(\ell^1\)-distance on diagonal states, it follows that
            \[
                \norm{
                    \Phi_\omega^{(n)}(\rho)-\frac13 I_3
                }_1
                \le 2^{1-n}.
            \]
            Therefore, for every measurable initial state \(\vartheta:\Omega\to\states\),
            \[
                \beta_n(\vartheta)\le 2^{1-n},
            \]
            so \Cref{assumption:forgetting} holds.
        
            Finally, we verify \Cref{condition:monomial}. For \(a=(k,\ell)\in\mcA\), define
            \[
                f_a(i):=k,
                \qquad i=1,2,3.
            \]
            Then
            \[
                V_{(k,\ell);\omega}\rho^{(i)}V_{(k,\ell);\omega}\adj 
                =
                q_{k\ell}\delta_{\ell i}\,\rho^{(k)}
                \in \mbR_+\,\rho^{(f_a(i))},
            \]
            so \Cref{condition:monomial}(A.1) holds. To verify \Cref{condition:monomial}(A.2), we apply \Cref{prop:a2-sufficient} with \(L=1\). For \(i\in\{1,2,3\}\), the one-step terminal-label law is
            \[
                \overline P^{(1)}_\omega(i,m)=q_{mi},
                \qquad m=1,2,3.
            \]
            Since every pair of columns of \(Q\) has overlap \(1/2\), we have
            \[
                \sum_{m=1}^3
                \min\!\bigl\{
                    \overline P^{(1)}_\omega(i,m),\,
                    \overline P^{(1)}_\omega(j,m)
                \bigr\}
                \ge \frac12
                \qquad\text{for all }i,j.
            \]
            Thus \Cref{condition:monomial}(A.2) holds with \(L=1\) and \(\varepsilon=\frac12\).
        
            Consequently, this example satisfies \((\esp)\), \Cref{assumption:forgetting}, and \Cref{condition:monomial}. Hence, whenever \Cref{assumption1} holds for the chosen base system, the conclusions of \Cref{thm:s_exists}, \Cref{thm:aclt}, and \Cref{thm:universal-clt-A} apply.
        \end{proof}

    \begin{lemma}
    \label{appen_proof_of_6}
        Assume \Cref{assumption1,assumption_mixing_base}. Then the model in \Cref{example:amplitude_damping_channel} has unique dynamically stationary state
        \[
            \s(\omega)=\ket{0}\bra{0},
            \qquad \omega\in\Omega,
        \]
        and satisfies \Cref{assumption:forgetting} with deterministic rate
        \[
            r_n=2(1-\gamma_\ast)^{n/2},
            \qquad n\in\mbN.
        \]
        Moreover, \Cref{condition:monomial}(A.1) holds with
        \[
            f_0(1)=1,\quad f_0(2)=2,
            \qquad
            f_1(1)=1,\quad f_1(2)=1,
        \]
        and \Cref{condition:monomial}(A.2) holds with \(L=1\) and \(\varepsilon=\gamma_\ast\).
    \end{lemma}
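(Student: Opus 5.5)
We verify each claim by explicit computation in the computational basis, identifying $e_1:=\ket{0}$ and $e_2:=\ket{1}$, so that $\rho^{(1)}=\ket{0}\bra{0}$ and $\rho^{(2)}=\ket{1}\bra{1}$.

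\textbf{Step 1: the iterated channel.} Write $\rho=\begin{pmatrix}1-p & c\\ \bar c & p\end{pmatrix}$ with $p\in[0,1]$ and $|c|^2\le p(1-p)$. A direct computation of $V_{0;\omega}\rho V_{0;\omega}\adj+V_{1;\omega}\rho V_{1;\omega}\adj$ shows that $\Phi_\omega$ sends
\[
    p\mapsto(1-\gamma(\omega))\,p,
    \qquad
    c\mapsto\sqrt{1-\gamma(\omega)}\,c ,
\]
and preserves the trace. In particular $\Phi_\omega(\ket{0}\bra{0})=\ket{0}\bra{0}$ for every $\omega$, so the constant state $\s(\omega)=\ket{0}\bra{0}$ is dynamically stationary. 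Iterating along the environment gives
\[
    \Phi^{(n)}_\omega(\rho)=\begin{pmatrix}1-\Pi_n p & \sqrt{\Pi_n}\,c\\ \sqrt{\Pi_n}\,\bar c & \Pi_n p\end{pmatrix},
    \qquad
    \Pi_n:=\prod_{k=1}^n\bigl(1-\gamma(\theta^k\omega)\bigr)\le(1-\gamma_\ast)^n .
\]

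\textbf{Step 2: forgetting.} The difference $\Phi^{(n)}_\omega(\rho)-\ket{0}\bra{0}$ is the traceless Hermitian matrix $\begin{pmatrix}-\Pi_n p & \sqrt{\Pi_n}c\\ \sqrt{\Pi_n}\bar c & \Pi_n p\end{pmatrix}$. Its trace norm is therefore $2\sqrt{\Pi_n^2p^2+\Pi_n|c|^2}$. Since $\Pi_n\le 1$, this is at most $2\sqrt{\Pi_n}\sqrt{p^2+|c|^2}$. Using $|c|^2\le p(1-p)$ we get $p^2+|c|^2\le p\le 1$, hence the bound $2(1-\gamma_\ast)^{n/2}$. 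This holds uniformly in $\rho$ and $\omega$, so $\beta_n(\vartheta)\le r_n$ for every measurable $\vartheta$. The sequence $r_n^{\delta/(2+\delta)}$ is geometric and hence summable, so \Cref{assumption:forgetting} holds. Uniqueness of $\s$ then follows from the remark after \Cref{assumption:forgetting}: two stationary states satisfying the forgetting bound must agree, since $\beta_n\to0$ along the stationary orbit. Alternatively, it can be read off directly from the contraction above.

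\textbf{Step 3: Condition (A.1).} We have $V_{0;\omega}e_1=e_1$, $V_{0;\omega}e_2=\sqrt{1-\gamma}\,e_2$, $V_{1;\omega}e_1=0\in\mbC e_1$ and $V_{1;\omega}e_2=\sqrt{\gamma}\,e_1$. These give exactly the stated maps $f_0=\mathrm{id}$ and $f_1\equiv1$. Orthogonality of images holds because $\ip{V_{0;\omega}e_1}{V_{0;\omega}e_2}=0$ and one of the two images under $V_{1;\omega}$ vanishes.

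\textbf{Step 4: Condition (A.2).} We apply \Cref{prop:a2-sufficient} with $L=1$. From label $1$, outcome $0$ occurs with probability $1$ and the terminal label is $1$, so $\overline P^{(1)}_\omega(1,\cdot)=\delta_1$. From label $2$, outcome $1$ has probability $\gamma(\omega)$ with terminal label $1$, and outcome $0$ has probability $1-\gamma(\omega)$ with terminal label $2$. The overlap for $(i,j)=(1,2)$ is therefore $\gamma(\omega)\ge\gamma_\ast$, while the overlap is $1$ when $i=j$. Hence (A.2) holds with $\varepsilon=\gamma_\ast$.

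The only step requiring care is the trace-norm estimate in Step 2. There the positivity constraint $|c|^2\le p(1-p)$ is needed to obtain the clean rate $(1-\gamma_\ast)^{n/2}$, which is dominated by the decay of the coherences. Everything else is routine bookkeeping.
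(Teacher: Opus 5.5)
Your proof is correct and follows the paper's argument: you use the same closed form for $\Phi^{(n)}_\omega$ with $\Pi_n=\prod_{k=1}^n(1-\gamma(\theta^k\omega))$ and the same trace-norm estimate via $|c|^2\le p(1-p)$. The only differences are that you obtain uniqueness of $\s$ from the forgetting bound plus $\theta$-invariance of $\pr$, where the paper iterates the stationarity relation backwards along $\theta^{-1}$, and that you get (A.2) from \Cref{prop:a2-sufficient}, where the paper writes down the coupling $\kappa_{\omega;1,2}$ explicitly.

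One small slip: the first measurement at base point $\omega$ uses the instrument at $\theta\omega$, so the one-step law from label $2$ has weights $\gamma(\theta\omega)$ and $1-\gamma(\theta\omega)$, not $\gamma(\omega)$. This is harmless because $\gamma\circ\theta\ge\gamma_\ast$ $\pr$-a.s.
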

        \begin{proof}
            Recall that the first measurement is performed with the Kraus family at \(\theta(\omega)\), so the one-step non-selective map is \(\phi_{1;\omega}=\Phi_{\theta(\omega)}\).
            Let \(\rho:\Omega\to\states\) be a measurable state field and write
            \[
                \rho(\omega)
                =
                \begin{pmatrix}
                    \rho_{00}(\omega) & \rho_{01}(\omega)\\
                    \rho_{10}(\omega) & \rho_{11}(\omega)
                \end{pmatrix}.
            \]
            A direct computation gives
            \begin{equation}
            \label{eq:AD-explicit}
                \phi_{1;\omega}(\rho(\omega))
                =
                \Phi_{\theta(\omega)}(\rho(\omega))
                =
                \begin{pmatrix}
                    \rho_{00}(\omega)+\gamma(\theta(\omega))\rho_{11}(\omega)
                    &
                    \sqrt{1-\gamma(\theta(\omega))}\,\rho_{01}(\omega)
                    \\
                    \sqrt{1-\gamma(\theta(\omega))}\,\rho_{10}(\omega)
                    &
                    (1-\gamma(\theta(\omega)))\rho_{11}(\omega)
                \end{pmatrix}.
            \end{equation}
        
            To prove the existence of a unique dynamically stationary state, let \(\rho\) be a dynamically stationary state. 
            Then $\Phi_{\theta(\omega)}(\rho(\omega))=\rho(\theta(\omega))$ for \(\pr\)-almost every \(\omega\), and hence by \eqref{eq:AD-explicit},
            \[
                \rho_{01}(\theta(\omega))
                =
                \sqrt{1-\gamma(\theta(\omega))}\,\rho_{01}(\omega),
            \]
            and
            \[
                \rho_{11}(\theta(\omega))
                =
                (1-\gamma(\theta(\omega)))\,\rho_{11}(\omega).
            \]
            Iterating backwards, for every \(n\in\mbN\),
            \[
                \rho_{01}(\omega)
                =
                \left(
                    \prod_{j=0}^{n-1}\sqrt{1-\gamma(\theta^{-j}(\omega))}
                \right)
                \rho_{01}(\theta^{-n}(\omega)),
            \]
            and
            \[
                \rho_{11}(\omega)
                =
                \left(
                    \prod_{j=0}^{n-1}(1-\gamma(\theta^{-j}(\omega)))
                \right)
                \rho_{11}(\theta^{-n}(\omega)).
            \]
            Since \(\gamma(\omega)\ge \gamma_\ast>0\) for \(\pr\)-almost every \(\omega\),
            \[
                0\le 1-\gamma(\theta^{-j}(\omega))\le 1-\gamma_\ast<1
            \]
            for \(\pr\)-almost every \(\omega\) and all \(j\ge 0\). 
            Therefore
            \[
                \left|
                    \rho_{01}(\omega)
                \right|
                \le
                (1-\gamma_\ast)^{n/2},
                \qquad
                0\le
                \rho_{11}(\omega)
                \le
                (1-\gamma_\ast)^n.
            \]
            Where we have used that $\rho(\theta^{-n}(\omega)$ is also a state and thus $|\rho_{0,1}(\theta^{-n}\omega)|^2 \le 1$ and similarly for $\rho_{1,1}$ components. 
            Letting \(n\to\infty\), we obtain
            \[
                \rho_{01}(\omega)=\rho_{11}(\omega)=0 \quad \text{$\pr$-almost surely}.
            \]
            Since \(\rho(\omega)\) is positive semi-definite, \(\rho_{10}(\omega)=\overline{\rho_{01}(\omega)}=0\), and since \(\tr{\rho(\omega)}=1\), it follows that
            \[
                \rho(\omega)=\ket{0}\bra{0}
            \]
            for \(\pr\)-almost every \(\omega\). 
            Thus, the unique dynamically stationary state is
            \[
                \s(\omega)=\ket{0}\bra{0}.
            \]
        
            \medskip
            Now we proceed to verify \Cref{assumption:forgetting}.
            Let \(\vartheta:\Omega\to\states\) be any measurable initial state, and write
            \[
                \vartheta(\omega)
                =
                \begin{pmatrix}
                    \vartheta_{00}(\omega) & \vartheta_{01}(\omega)\\
                    \vartheta_{10}(\omega) & \vartheta_{11}(\omega)
                \end{pmatrix}.
            \]
            Define
            \[
                \Lambda_n(\omega):=\prod_{j=1}^n (1-\gamma(\theta^j(\omega))).
            \]
            A simple induction shows that
            \[
                \Phi_\omega^{(n)}(\vartheta(\omega))
                =
                \begin{pmatrix}
                    \vartheta_{00}(\omega)+\bigl(1-\Lambda_n(\omega)\bigr)\vartheta_{11}(\omega)
                    &
                    \sqrt{\Lambda_n(\omega)}\,\vartheta_{01}(\omega)
                    \\
                    \sqrt{\Lambda_n(\omega)}\,\vartheta_{10}(\omega)
                    &
                    \Lambda_n(\omega)\vartheta_{11}(\omega)
                \end{pmatrix}.
            \]
            Hence
            \[
                \Phi_\omega^{(n)}(\vartheta(\omega))-\ket{0}\bra{0}
                =
                \begin{pmatrix}
                    -\Lambda_n(\omega)\vartheta_{11}(\omega)
                    &
                    \sqrt{\Lambda_n(\omega)}\,\vartheta_{01}(\omega)
                    \\
                    \sqrt{\Lambda_n(\omega)}\,\vartheta_{10}(\omega)
                    &
                    \Lambda_n(\omega)\vartheta_{11}(\omega)
                \end{pmatrix}.
            \]
            Since \(\vartheta(\omega)\in\states\), we have $\det{\vartheta(\omega)}\ge 0$ and therefore, 
            \[
                |\vartheta_{01}(\omega)|^2
                \le
                \vartheta_{11}(\omega)\bigl(1-\vartheta_{11}(\omega)\bigr),
            \]
            where we have used that $\vartheta_{0,0}(\omega) + \vartheta_{1,1}(\omega) = 1$ since $\tr{\vartheta(\omega)} = 1$.
            Therefore
            \[
                \norm{
                    \Phi_\omega^{(n)}(\vartheta(\omega))-\ket{0}\bra{0}
                }_1
                \le 2 \sqrt{(\Lambda_n(\omega))^2(\vartheta_{11}(\omega))^2 + (\Lambda_n(\omega))|\vartheta_{0,1}(\omega)|^2}
                \le
                2\sqrt{\Lambda_n(\omega)}
                \le
                2(1-\gamma_\ast)^{n/2}.
            \]
            Taking expectation yields
            \[
                \beta_n(\vartheta)
                \le
                2(1-\gamma_\ast)^{n/2}.
            \]
            Thus \Cref{assumption:forgetting} holds with
            \[
                r_n:=2(1-\gamma_\ast)^{n/2}.
            \]
        
            We verify \Cref{condition:monomial} with \(d=2\) and \(L=1\).
            
            For \(i\in\{1,2\}\), let
            \[
                \rho^{(i)}:=|e_i\rangle\langle e_i|,
                \qquad
                e_1:=\ket{0},\quad e_2:=\ket{1}.
            \]
            Define deterministic maps
            \[
                f_0,f_1:\{1,2\}\to\{1,2\}
            \]
            by
            \[
                f_0(1)=1,\quad f_0(2)=2,
                \qquad
                f_1(1)=1,\quad f_1(2)=1.
            \]
            A direct computation shows that for every \(\omega\in\Omega\),
            \[
                V_{0;\omega}\rho^{(1)}V_{0;\omega}\adj =\rho^{(1)},
                \qquad
                V_{1;\omega}\rho^{(1)}V_{1;\omega}\adj =0,
            \]
            and
            \[
                V_{0;\omega}\rho^{(2)}V_{0;\omega}\adj =(1-\gamma(\omega))\rho^{(2)},
                \qquad
                V_{1;\omega}\rho^{(2)}V_{1;\omega}\adj =\gamma(\omega)\rho^{(1)}.
            \]
            Hence, for each \(a\in\{0,1\}\) and \(i\in\{1,2\}\),
            \[
                V_{a;\omega}\rho^{(i)}V_{a;\omega}\adj 
                \in \mathbb R_+\,\rho^{(f_a(i))}.
            \]
            A direct computation also establishes that 
            \[
                \ip{V_{0;\omega}\ket{0}}{V_{0;\omega}\ket{1}}=0,
                    \qquad
                \ip{V_{1;\omega}\ket{0}}{V_{1;\omega}\ket{1}}=0.
            \]
            so \Cref{condition:monomial}(A.1) holds.
        
                \smallskip
            Set \(L=1\). Since the first measurement for the environment \(\omega\) is performed with the
            instrument at \(\theta(\omega)\), the one-step outcome laws are
            \[
                \mbQ_{\rho^{(1)};\omega}\circ \pi_1^{-1}
                =
                \delta_0,
            \]
            and
            \[
                \mbQ_{\rho^{(2)};\omega}\circ \pi_1^{-1}
                =
                (1-\gamma(\theta\omega))\delta_0+\gamma(\theta\omega)\delta_1.
            \]
            For each \((i,j)\in\{1,2\}^2\), define a coupling
            \[
                \kappa_{\omega;i,j}\in\mcP(\mcA\times\mcA)
            \]
            as follows:
            \begin{itemize}
                \item if \(i=j\), let \(\kappa_{\omega;i,i}\) be the identity coupling;
                \item if \((i,j)=(1,2)\), set
                \[
                    \kappa_{\omega;1,2}(0,0):=1-\gamma(\theta\omega),
                    \qquad
                    \kappa_{\omega;1,2}(0,1):=\gamma(\theta\omega);
                \]
                \item if \((i,j)=(2,1)\), set
                \[
                    \kappa_{\omega;2,1}(0,0):=1-\gamma(\theta\omega),
                    \qquad
                    \kappa_{\omega;2,1}(1,0):=\gamma(\theta\omega).
                \]
            \end{itemize}
            Then
            \[
                \kappa_{\omega;i,j}
                \in
                \Pi\!\Bigl(
                    \mbQ_{\rho^{(i)};\omega}\circ\pi_1^{-1},
                    \mbQ_{\rho^{(j)};\omega}\circ\pi_1^{-1}
                \Bigr).
            \]
        
            Since \(L=1\), for \(u,v\in\mcA\) we have
            \[
                f_u=f_{u_1},
                \qquad
                f_v=f_{v_1}.
            \]
            If \(i=j\), then
            \[
                \kappa_{\omega;i,i}\bigl(\{(u,v):f_u(i)=f_v(i)\}\bigr)=1.
            \]
            If \((i,j)=(1,2)\), then \(f_0(1)=1\), \(f_0(2)=2\), and \(f_1(2)=1\), so
            \[
                \{(u,v)\in\mcA\times\mcA:f_u(1)=f_v(2)\}=\{(0,1), (1,1)\}.
            \]
            Hence
            \[
                \kappa_{\omega;1,2}
                \bigl(\{(u,v):f_u(1)=f_v(2)\}\bigr)
                =
                \kappa_{\omega;1,2}(0,1)
                =
                \gamma(\theta\omega).
            \]
            Similarly, if \((i,j)=(2,1)\), then
            \[
                \{(u,v)\in\mcA\times\mcA:f_u(2)=f_v(1)\}=\{(1,0), (1,1)\},
            \]
            and therefore
            \[
                \kappa_{\omega;2,1}
                \bigl(\{(u,v):f_u(2)=f_v(1)\}\bigr)
                =
                \kappa_{\omega;2,1}(1,0)
                =
                \gamma(\theta\omega).
            \]
            Since \(\gamma(\theta\omega)\ge \gamma_\ast\) for \(\pr\)-a.e.\ \(\omega\), it follows that
            \[
                \kappa_{\omega;i,j}
                \Bigl(
                    \{(u,v)\in\mcA\times\mcA:\ f_u(i)=f_v(j)\}
                \Bigr)
                \ge \gamma_\ast
            \]
            for \(\pr\)-a.e.\ \(\omega\) and all \(i,j\in\{1,2\}\).
            Since \(\omega\mapsto \gamma(\theta\omega)\) is measurable, the maps
            \[
                \omega\longmapsto \kappa_{\omega;i,j}
                \in \mcP(\mcA\times\mcA)
            \]
            are measurable. Thus \Cref{condition:monomial}(A.2) holds with \(L=1\) and
            \(\varepsilon=\gamma_\ast\).
        \end{proof}


    \begin{lemma}
    \label{appen_proof_of_7}
        Assume \Cref{assumption1} and (\rhomix). 
        Then the model in \Cref{eg:perfect-GAD-monomial-ESP} satisfies (\esp) with \(N_0(\omega)=1\) for \(\pr\)-a.e.\ \(\omega\in\Omega\). 
        Consequently,  there exists a unique dynamically stationary state \(\s\), and \Cref{assumption:forgetting} holds.
        Moreover, \Cref{condition:monomial}(A.1) holds with
        \[
            f_0(1)=1,\quad f_0(2)=2,\qquad
            f_1(1)=1,\quad f_1(2)=1,
        \]
        \[
            f_2(1)=1,\quad f_2(2)=2,\qquad
            f_3(1)=2,\quad f_3(2)=2,
        \]
        and \Cref{condition:monomial}(A.2) holds with \(L=1\) and
        \(\varepsilon=\delta\).
    \end{lemma}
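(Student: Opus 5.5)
Our plan is to verify the claims in three stages: first (\esp) with $N_0=1$, then \Cref{assumption:forgetting} via \Cref{thm:s_exists}, and finally \Cref{condition:monomial} via \Cref{prop:a2-sufficient}. Throughout, $e_1=\ket0$, $e_2=\ket1$.

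For (\esp) we compute the non-selective channel directly. Writing $\rho=(\rho_{ij})$ and $\gamma=\gamma(\theta\omega)$, $p=p(\theta\omega)$, the four Kraus terms give a diagonal part
\[
\Phi_{1;\omega}(\rho)_{00}=\rho_{00}\bigl(p+(1-p)(1-\gamma)\bigr)+p\gamma\,\rho_{11},
\qquad
\Phi_{1;\omega}(\rho)_{11}=(1-p)\gamma\,\rho_{00}+\rho_{11}\bigl(p(1-\gamma)+(1-p)\bigr),
\]
together with off-diagonal entries multiplied by $\sqrt{1-\gamma}$. Strict positivity is the only place needing care. Given $0\ne A\ge0$, we will show $\Phi_{1;\omega}(A)>0$ by writing
\[
\Phi_{1;\omega}(A)=\sum_a K_aAK_a\adj\ \ge\ K_0AK_0\adj+K_2AK_2\adj+K_1AK_1\adj+K_3AK_3\adj,
\]
and noting that $K_0$ and $K_2$ are invertible diagonal matrices with ratio of entries $\neq$ constant. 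Concretely, if $A=\ket\psi\bra\psi$ with $\psi=(x,y)$, then $K_0\psi$ and $K_2\psi$ are linearly independent unless $xy=0$. Indeed, the determinant is $\sqrt{p(1-p)}\,xy\bigl(1-(1-\gamma)\bigr)\ne0$ since $\gamma>0$. If $y=0$, then $K_0\psi\propto e_1$ and $K_3\psi\propto e_2$ with nonzero coefficients. If $x=0$, we use $K_1\psi\propto e_1$ and $K_2\psi\propto e_2$. Hence the range of $\Phi_{1;\omega}(A)$ is all of $\mbC^2$ for rank-one $A$, and by linearity for all nonzero $A\ge0$. This holds for every $\omega$ in the full-measure set where the bounds \eqref{eq:GAD-unif-bounds} hold along the orbit, so (\esp) holds with $N_0=1$.

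Existence and uniqueness of $\s$ then follow from the first part of \Cref{thm:s_exists}. Since (\rhomix) is assumed, the second part of \Cref{thm:s_exists} yields \Cref{assumption:forgetting}. We need not identify $\s$ explicitly.

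For (A.1), $K_0$ and $K_2$ are diagonal, so they fix labels. $K_1$ maps $e_2\mapsto\sqrt{p\gamma}\,e_1$ and $e_1\mapsto0$; $K_3$ maps $e_1\mapsto\sqrt{(1-p)\gamma}\,e_2$ and $e_2\mapsto0$. This matches the stated $f_a$ (the value of $f_1(1)$ and $f_3(2)$ is irrelevant since the image is $0\in\mbC e_{f}$). Orthogonality of $K_ae_1$ and $K_ae_2$ is immediate, since in each case either the images are orthogonal basis multiples or one vanishes.

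For (A.2) we apply \Cref{prop:a2-sufficient} with $L=1$. The terminal-label law from label $1$ is
\[
\overline P_\omega(1,\cdot)=\bigl(1-(1-p)\gamma,\ (1-p)\gamma\bigr),
\]
and from label $2$ it is
\[
\overline P_\omega(2,\cdot)=\bigl(p\gamma,\ 1-p\gamma\bigr),
\]
with $p,\gamma$ evaluated at $\theta\omega$. The overlap is at least
\[
\min\{1-(1-p)\gamma,\ p\gamma\}+\min\{(1-p)\gamma,\ 1-p\gamma\}\ \ge\ p\gamma+(1-p)\gamma=\gamma\ \ge\ \delta .
\]
Here we used $p\gamma\le1-(1-p)\gamma$, which is equivalent to $\gamma\le1$, and the analogous inequality for the second term. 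Diagonal pairs trivially give overlap $1$. Hence (A.2) holds with $\varepsilon=\delta$.

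The main, if mild, obstacle is the strict-positivity argument in the first stage; the rest consists of direct computations.
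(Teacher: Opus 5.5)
Your proof is correct and follows the same overall plan as the paper. Strict positivity of each one-step channel gives (\esp) with $N_0=1$. \Cref{thm:s_exists} together with (\rhomix) then gives $\s$ and \Cref{assumption:forgetting}. (A.1) is read off the Kraus operators, and (A.2) comes from \Cref{prop:a2-sufficient} with the same terminal-label laws and the same overlap $\gamma(\theta\omega)\ge\delta$.

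The one genuine difference is the strict-positivity step. The paper writes $\Phi_{\theta\omega}(X)$ explicitly for $X=\begin{pmatrix}x&y\\ \bar y&z\end{pmatrix}\ge0$ and sets $b=p\gamma$, $c=(1-p)\gamma$. Using $|y|^2\le xz$, it obtains $\det\Phi_{\theta\omega}(X)\ge c(1-c)x^2+b(1-b)z^2+2bc\,xz>0$, and together with $\tr{\Phi_{\theta\omega}(X)}=\tr{X}>0$ this gives positive definiteness. You instead reduce to rank-one inputs and check that the vectors $K_a\psi$ span $\mbC^2$: via $K_0\psi,K_2\psi$ when $xy\neq0$, and via $K_0\psi,K_3\psi$ or $K_1\psi,K_2\psi$ otherwise.

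The determinant route is a single closed-form computation, but it relies on the $2\times2$ criterion that positive trace and positive determinant imply positive definiteness. Your criterion, that strict positivity of $\Phi$ is equivalent to $\mathrm{span}\{K_a\psi\}_{a}=\mcH$ for every $\psi\neq0$, needs no formula for $\Phi$ and extends verbatim to higher dimensions, at the price of a small case split.

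Two cosmetic points: the ``$\ge$'' in your display for $\Phi_{1;\omega}(A)$ is just an identity, and the overlap estimate in (A.2) is actually an equality.
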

        \begin{proof}
            We write $\rho^{(1)}:=\ket{0}\bra{0}, \rho^{(2)}:=\ket{1}\bra{1}$. 
            Recall that  $p(\omega)\in[\delta,1-\delta]$ and $\gamma(\omega)\in[\delta,1-\delta]$.
            Let $X$ be a non-zero positive semi-definite matrix with 
            \[
                X=
                \begin{pmatrix}
                    x&y\\
                    \overline y&z
                \end{pmatrix}
            \]
        
            We now show the existence of $\s$ and \Cref{assumption:forgetting}.
            A direct computation gives
            \[
                \Phi_{\theta(\omega)}(X)
                =
                \begin{pmatrix}
                    \bigl(1-(1-p_1)\gamma_1\bigr)x+p_1\gamma_1 z
                    &
                    \sqrt{1-\gamma_1}\,y
                    \\[1mm]
                    \sqrt{1-\gamma_1}\,\overline y
                    &
                    (1-p_1)\gamma_1 x+\bigl(1-p_1\gamma_1\bigr)z
                \end{pmatrix},
            \]
            where $p_1=p(\theta(\omega))$ and $\gamma_1 = \gamma(\theta(\omega))$. 
            Write
            \[
                b:=p_1\gamma_1,
                \qquad
                c:=(1-p_1)\gamma_1.
            \]
            Then \(b,c\in(0,1)\), and
            \[
                \Phi_{\theta(\omega)}(X)
                =
                \begin{pmatrix}
                    (1-c)x+bz & \sqrt{1-\gamma_1}\,y\\
                    \sqrt{1-\gamma_1}\,\overline y & cx+(1-b)z
                \end{pmatrix}.
            \]
            Hence
            \begin{align*}
                \det\!\bigl(\Phi_{\theta(\omega)}(X)\bigr)
                &=
                \bigl((1-c)x+bz\bigr)\bigl(cx+(1-b)z\bigr)
                -(1-\gamma_1)|y|^2
                \\
                &\ge
                \bigl((1-c)x+bz\bigr)\bigl(cx+(1-b)z\bigr)
                -(1-\gamma_1)xz
                \\
                &=
                c(1-c)x^2+b(1-b)z^2+2bc\,xz.
            \end{align*}
            Since \(b,c\in(0,1)\) and \(x,z\ge 0\) are not both zero, the right-hand side is strictly positive.
            Therefore
            \[
                \det\!\bigl(\Phi_{\theta(\omega)}(X)\bigr)>0.
            \]
            Moreover,
            \[
                \tr{\Phi_{\theta(\omega)}(X)}=\tr{X}=x+z>0.
            \]
            Thus \(\Phi_{\theta(\omega)}(X)\) is positive definite for every non-zero positive semi-definite matrix $X$. 
            Hence \(\Phi_{\theta(\omega)}\) is strictly positive for  \(\pr\)-a.e.\ \(\omega\).
            It follows that \((\esp)\) holds with \(N_0(\omega)=1\) for \(\pr\)-a.e.\ \(\omega\). 
            Therefore \Cref{thm:s_exists} applies, and hence the unique dynamically stationary state \(\s\) exists, because the standing assumption in this subsection provides $\rho$-mixing needed in \Cref{thm:s_exists}.
            Moreover, the system satisfies \Cref{assumption:forgetting}.
    
            Now we verify the conditions in \Cref{condition:monomial}.
            Define deterministic maps \(f_a:\{1,2\}\to\{1,2\}\) by
            \[
                f_0(1)=1,\quad f_0(2)=2,
                \qquad
                f_1(1)=1,\quad f_1(2)=1,
            \]
            \[
                f_2(1)=1,\quad f_2(2)=2,
                \qquad
                f_3(1)=2,\quad f_3(2)=2.
            \]
            A direct computation yields
            \[
                K_{0;\omega}\rho^{(1)}K_{0;\omega}\adj 
                =
                p(\omega)\rho^{(1)},
                \qquad
                K_{1;\omega}\rho^{(1)}K_{1;\omega}\adj 
                =
                0,
            \]
            \[
                K_{2;\omega}\rho^{(1)}K_{2;\omega}\adj 
                =
                (1-p(\omega))(1-\gamma(\omega))\rho^{(1)},
                \qquad
                K_{3;\omega}\rho^{(1)}K_{3;\omega}\adj 
                =
                (1-p(\omega))\gamma(\omega)\rho^{(2)},
            \]
            and
            \[
                K_{0;\omega}\rho^{(2)}K_{0;\omega}\adj 
                =
                p(\omega)(1-\gamma(\omega))\rho^{(2)},
                \qquad
                K_{1;\omega}\rho^{(2)}K_{1;\omega}\adj 
                =
                p(\omega)\gamma(\omega)\rho^{(1)},
            \]
            \[
                K_{2;\omega}\rho^{(2)}K_{2;\omega}\adj 
                =
                (1-p(\omega))\rho^{(2)},
                \qquad
                K_{3;\omega}\rho^{(2)}K_{3;\omega}\adj 
                =
                0.
            \]
            Thus, for every \(a\in\mcA\) and \(i\in\{1,2\}\),
            \[
                K_{a;\omega}\rho^{(i)}K_{a;\omega}\adj 
                \in
                \mathbb R_+\,\rho^{(f_a(i))},
            \]
            and moreover
            \[
                \ip{K_{a;\omega}e_1}{K_{a;\omega}e_2}=0
                \qquad\text{for every } a\in\{0,1,2,3\}.
            \]
            so \Cref{condition:monomial}(A.1) holds.
        
            \    \medskip
    
            \emph{Verification of (A.2).}
            We use \Cref{prop:a2-sufficient} with \(L=1\).
            Since the first measurement for the environment \(\omega\) is performed at
            \(\theta\omega\), the one-step outcome laws are
            \[
                P^{(1)}_{\omega,1}
                :=
                \mbQ_{\rho^{(1)};\omega}\circ\pi_1^{-1}
                =
                p(\theta\omega)\delta_0+(1-p(\theta\omega))(1-\gamma(\theta\omega))\delta_2
                +(1-p(\theta\omega))\gamma(\theta\omega)\delta_3,
            \]
            and
            \[
                P^{(1)}_{\omega,2}
                :=
                \mbQ_{\rho^{(2)};\omega}\circ\pi_1^{-1}
                =
                p(\theta\omega)(1-\gamma(\theta\omega))\delta_0
                +p(\theta\omega)\gamma(\theta\omega)\delta_1
                +(1-p(\theta\omega))\delta_2.
            \]
            Therefore, the induced one-step label laws are
            \[
                \overline P^{(1)}_\omega(1,1)=1-(1-p(\theta\omega))\gamma(\theta\omega),
                \qquad
                \overline P^{(1)}_\omega(1,2)=(1-p(\theta\omega))\gamma(\theta\omega),
            \]
            and
            \[
                \overline P^{(1)}_\omega(2,1)=p(\theta\omega)\gamma(\theta\omega),
                \qquad
                \overline P^{(1)}_\omega(2,2)=1-p(\theta\omega)\gamma(\theta\omega).
            \]
            Hence
            \begin{align*}
                \sum_{k=1}^2
                \min\!\bigl\{
                    \overline P^{(1)}_\omega(1,k),
                    \overline P^{(1)}_\omega(2,k)
                \bigr\}
                &=
                \min\!\bigl\{1-(1-p(\theta\omega))\gamma(\theta\omega),\ p(\theta\omega)\gamma(\theta\omega)\bigr\}
                \\
                &\qquad
                +
                \min\!\bigl\{(1-p(\theta\omega))\gamma(\theta\omega),\ 1-p(\theta\omega)\gamma(\theta\omega)\bigr\}.
            \end{align*}
            Since
            \[
                1-(1-p(\theta\omega))\gamma(\theta\omega)
                =
                p(\theta\omega)\gamma(\theta\omega)+(1-\gamma(\theta\omega))
                \ge p(\theta\omega)\gamma(\theta\omega),
            \]
            and
            \[
                1-p(\theta\omega)\gamma(\theta\omega)
                =
                (1-p(\theta\omega))\gamma(\theta\omega)+(1-\gamma(\theta\omega))
                \ge (1-p(\theta\omega))\gamma(\theta\omega),
            \]
            it follows that
            \[
                \sum_{k=1}^2
                \min\!\bigl\{
                    \overline P^{(1)}_\omega(1,k),
                    \overline P^{(1)}_\omega(2,k)
                \bigr\}
                =
                p(\theta\omega)\gamma(\theta\omega)
                +(1-p(\theta\omega))\gamma(\theta\omega)
                =
                \gamma(\theta\omega)
                \ge \delta.
            \]
            Thus \Cref{prop:a2-sufficient} applies with \(L=1\) and \(\varepsilon=\delta\), and
            therefore \Cref{condition:monomial}(A.2) holds.
        \end{proof}
    

    \begin{lemma}
    \label{appen_proof_of_8}
        Under the standing assumptions for the examples, the model in
        \Cref{eg:disordered-keep-switch} satisfies \((\esp)\) with
        \(N_0(\omega)=2\) for \(\pr\)-a.e.\ \(\omega\in\Omega\).
        Moreover, \Cref{condition:monomial}(A.1) holds with
        \[
            f_K(1)=1,\quad f_K(2)=2,
            \qquad
            f_S(1)=2,\quad f_S(2)=1,
        \]
        and \Cref{condition:monomial}(A.2) holds with \(L=1\) and \(\varepsilon=1\).
    \end{lemma}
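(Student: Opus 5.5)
The plan is to verify the three claims by direct computation with the explicit $2\times 2$ Kraus operators, labelling $e_1:=\ket0$ and $e_2:=\ket1$. Throughout I would work on the full-measure set $\Omega^\ast:=\bigcap_{k\in\mbZ}\theta^{-k}\{\omega: p(\omega)\in(0,1),\ p(\omega)\neq\tfrac12\}$. This set is a countable intersection of full-measure sets, so every pointwise statement below holds simultaneously at all times $\theta^k\omega$.

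\emph{Step 1 (Kraus family and (A.1)).} First I check that $V_{K,\omega}\adj V_{K,\omega}=\operatorname{diag}(p,1-p)$ and $V_{S,\omega}\adj V_{S,\omega}=\operatorname{diag}(1-p,p)$. Their sum is $\mbI_2$, so the model is a perfect measurement. Next, $V_{K,\omega}e_1=\sqrt{p}\,e_1$, $V_{K,\omega}e_2=\sqrt{1-p}\,e_2$, $V_{S,\omega}e_1=\sqrt{1-p}\,e_2$ and $V_{S,\omega}e_2=\sqrt p\,e_1$. In each case the two images lie in orthogonal lines. This gives (A.1) with $f_K=\mathrm{id}$ and $f_S$ the transposition $1\leftrightarrow 2$, both deterministic.

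\emph{Step 2 ((A.2) with $L=1$, $\varepsilon=1$).} Write $p_1:=p(\theta\omega)$. The one-step outcome laws are $\mbQ_{\rho^{(1)};\omega}\circ\pi_1^{-1}=p_1\delta_K+(1-p_1)\delta_S$ and $\mbQ_{\rho^{(2)};\omega}\circ\pi_1^{-1}=(1-p_1)\delta_K+p_1\delta_S$.
\begin{itemize}
\item For $i=j$, take the diagonal coupling $\kappa_{\omega;i,i}$, which trivially gives mass $1$ to the event $\{f_u(i)=f_v(i)\}$.
\item For $(i,j)=(1,2)$, observe that $\{(u,v):f_u(1)=f_v(2)\}=\{(K,S),(S,K)\}$. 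Set $\kappa_{\omega;1,2}(K,S):=p_1$ and $\kappa_{\omega;1,2}(S,K):=1-p_1$. Its first marginal is $p_1\delta_K+(1-p_1)\delta_S$ and its second is $(1-p_1)\delta_K+p_1\delta_S$, as required, and it puts all its mass on the merging event.
\item The case $(2,1)$ is the transpose of the case $(1,2)$.
\end{itemize}
All these kernels are measurable because $\omega\mapsto p(\theta\omega)$ is measurable. This gives (A.2) with $\varepsilon=1$.

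\emph{Step 3 ((\esp) with $N_0=2$).} This is the only step needing care. For $X=\begin{pmatrix}x&y\\ \bar y&z\end{pmatrix}\ge0$ with $t:=x+z>0$, I compute
\[
\Phi_{\theta\omega}(X)=\begin{pmatrix} p_1t & 2\sqrt{p_1(1-p_1)}\,\mathrm{Re}\,y\\ 2\sqrt{p_1(1-p_1)}\,\mathrm{Re}\,y & (1-p_1)t\end{pmatrix},
\]
whose determinant is $p_1(1-p_1)\bigl(t^2-4(\mathrm{Re}\,y)^2\bigr)$.

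One step is not enough. For $X=\begin{pmatrix}1&1\\1&1\end{pmatrix}$ this determinant vanishes, so $\Phi_{\theta\omega}$ is not strictly positive. Applying the formula a second time with $p_2:=p(\theta^2\omega)$, to the output above (which has trace $t$ and real off-diagonal entry $y'$), gives
\[
\det \Phi^{(2)}_\omega(X)=p_2(1-p_2)\bigl(t^2-4y'^2\bigr),\qquad y'=2\sqrt{p_1(1-p_1)}\,\mathrm{Re}\,y .
\]
Using $|\mathrm{Re}\,y|\le\sqrt{xz}\le t/2$, this is at least
\[
p_2(1-p_2)\,t^2\,\bigl(1-4p_1(1-p_1)\bigr)=p_2(1-p_2)\,t^2(1-2p_1)^2 ,
\]
which is strictly positive precisely because $p_1\neq\tfrac12$. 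Since the trace $t$ is also positive, $\Phi^{(2)}_\omega$ is strictly positive.

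To pass to all $n\ge2$, I note that each one-step map preserves positive definiteness. If $X>0$ then $|\mathrm{Re}\,y|\le|y|<\sqrt{xz}\le t/2$, so the one-step determinant $p(1-p)\bigl(t^2-4(\mathrm{Re}\,y)^2\bigr)$ is strictly positive. Hence $\Phi^{(n)}_\omega=\Phi^{(n-2)}_{3;\omega}\circ\Phi^{(2)}_\omega$ is strictly positive for every $n\ge2$ and every $\omega\in\Omega^\ast$, which is (\esp) with $N_0(\omega)=2$.
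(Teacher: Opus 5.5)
Your proposal is correct and follows essentially the same route as the paper: the same one-step formula for $\Phi_{\theta\omega}(X)$ in terms of the trace $t$ and the off-diagonal sum $2\,\mathrm{Re}\,y$, the same determinant bound using $|\mathrm{Re}\,y|\le t/2$ together with $4p(\theta\omega)\bigl(1-p(\theta\omega)\bigr)<1$, and the same direct checks for (A.1). The differences are cosmetic. The paper handles all $n\ge 2$ at once through a closed-form induction for $\Phi^{(n)}_\omega$, where you instead argue ``two steps suffice, then positive definiteness is preserved''. For (A.2) the paper goes through its terminal-label overlap criterion (the two label laws coincide, so the overlap is $1$), where you write down the explicit coupling supported on $\{(K,S),(S,K)\}$.
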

        \begin{proof}
            For each \(\omega\in\Omega\), write
            \[
                p_\omega:=p(\omega),
                \qquad
                c_\omega:=\sqrt{p_\omega(1-p_\omega)}.
            \]
            A direct computation shows that for
            \[
                X=
                \begin{pmatrix}
                    x_{11} & x_{12}\\
                    x_{21} & x_{22}
                \end{pmatrix},
            \]
            one has
            \[
                \Phi_\omega(X)
                :=
                \mcT_{K;\omega}(X)+\mcT_{S;\omega}(X)
                =
                \begin{pmatrix}
                    p_\omega(x_{11}+x_{22}) & c_\omega(x_{12}+x_{21})\\
                    c_\omega(x_{12}+x_{21}) & (1-p_\omega)(x_{11}+x_{22})
                \end{pmatrix}.
            \]
            Set
            \[
                t(X):=x_{11}+x_{22}=\tr{X},
                \qquad
                s(X):=x_{12}+x_{21}.
            \]
            Then
            \[
                \Phi_\omega(X)
                =
                \begin{pmatrix}
                    p_\omega\,t(X) & c_\omega\,s(X)\\
                    c_\omega\,s(X) & (1-p_\omega)\,t(X)
                \end{pmatrix}.
            \]
        
            We first verify \((\esp)\).
            For \(n\ge1\), let
            \[
                p_j(\omega):=p(\theta^j\omega),
                \qquad
                c_j(\omega):=\sqrt{p_j(\omega)(1-p_j(\omega))}.
            \]
            An induction on \(n\) gives
            \[
                \Phi_\omega^{(n)}(X)
                =
                \begin{pmatrix}
                    p_n(\omega)\,t(X) & \lambda_n(\omega)\,s(X)\\
                    \lambda_n(\omega)\,s(X) & (1-p_n(\omega))\,t(X)
                \end{pmatrix},
            \]
            where
            \[
                \lambda_n(\omega)
                :=
                2^{\,n-1}\prod_{j=1}^n c_j(\omega).
            \]
            Indeed, the formula is immediate for \(n=1\), and if it holds at time \(n\), then
            \[
                s\!\bigl(\Phi_\omega^{(n)}(X)\bigr)=2\lambda_n(\omega)\,s(X),
            \]
            so
            \[
                \Phi_\omega^{(n+1)}(X)
                =
                \begin{pmatrix}
                    p_{n+1}(\omega)\,t(X) & 2c_{n+1}(\omega)\lambda_n(\omega)\,s(X)\\
                    2c_{n+1}(\omega)\lambda_n(\omega)\,s(X) & (1-p_{n+1}(\omega))\,t(X)
                \end{pmatrix},
            \]
            which is the same formula with
            \[
                \lambda_{n+1}(\omega)=2c_{n+1}(\omega)\lambda_n(\omega).
            \]
        
            Now let \(X\ge0\) be nonzero. Then \(t(X)>0\) and
            \[
                |s(X)|
                \le
                2|x_{12}|
                \le
                2\sqrt{x_{11}x_{22}}
                \le
                x_{11}+x_{22}
                =
                t(X).
            \]
            For \(n\ge2\),
            \begin{align*}
                \det\!\bigl(\Phi_\omega^{(n)}(X)\bigr)
                &=
                p_n(\omega)(1-p_n(\omega))\,t(X)^2
                -\lambda_n(\omega)^2\,s(X)^2
                \\
                &=
                p_n(\omega)(1-p_n(\omega))
                \Biggl(
                    t(X)^2
                    -
                    \prod_{j=1}^{n-1}4p_j(\omega)(1-p_j(\omega))\,s(X)^2
                \Biggr).
            \end{align*}
            Since \(p_1(\omega)\neq\tfrac12\) for \(\pr\)-a.e.\ \(\omega\), we have
            \[
                4p_1(\omega)(1-p_1(\omega))<1,
            \]
            and therefore
            \[
                \prod_{j=1}^{n-1}4p_j(\omega)(1-p_j(\omega))<1
                \qquad
                \text{for every }n\ge2.
            \]
        
            Here we note that one must work on the full probability event $\Omega_\ast$ where
            \[
                \Omega_ast = \bigcap_{n\in\mbN} \theta^{-n}(A) \qquad\text{ where } \qquad A:= \{\omega: p(\omega)\neq \frac 12\},. 
            \]
            Thus 
            \[
                \prod_{j=1}^{n-1}4p_j(\omega)(1-p_j(\omega))<1
                \qquad
                \text{for every }n\ge2 \qquad \text{$\pr$-almost surely.}
            \]
            
            Using \(|s(X)|\le t(X)\) and \(t(X)>0\), it follows that
            \[
                \det\!\bigl(\Phi_\omega^{(n)}(X)\bigr)>0
                \qquad
                \text{for every }n\ge2
            \]
            and for \(\pr\)-a.e.\ \(\omega\).
            Since also
            \[
                \tr{\Phi_\omega^{(n)}(X)}=\tr{X}>0,
            \]
            the matrix \(\Phi_\omega^{(n)}(X)\) is positive definite.
            Hence \((\esp)\) holds with
            \[
                N_0(\omega)=2
                \qquad
                \text{for \(\pr\)-a.e.\ }\omega\in\Omega.
            \]
        
            Next, we verify \Cref{condition:monomial}.
            Let
            \[
                \rho^{(1)}:=\ket0\bra0,
                \qquad
                \rho^{(2)}:=\ket1\bra1.
            \]
            Define deterministic maps \(f_K,f_S:\{1,2\}\to\{1,2\}\) by
            \[
                f_K(1)=1,\quad f_K(2)=2,
                \qquad
                f_S(1)=2,\quad f_S(2)=1.
            \]
            Then
            \[
                V_{K;\omega}\rho^{(1)}V_{K;\omega}\adj
                =p(\omega)\rho^{(1)},
                \qquad
                V_{S;\omega}\rho^{(1)}V_{S;\omega}\adj
                =(1-p(\omega))\rho^{(2)},
            \]
            and
            \[
                V_{K;\omega}\rho^{(2)}V_{K;\omega}\adj
                =(1-p(\omega))\rho^{(2)},
                \qquad
                V_{S;\omega}\rho^{(2)}V_{S;\omega}\adj
                =p(\omega)\rho^{(1)}.
            \]
            Thus, for every \(a\in\mcA\) and \(i\in\{1,2\}\),
            \[
                V_{a;\omega}\rho^{(i)}V_{a;\omega}\adj
                \in \mbR_+\,\rho^{(f_a(i))}.
            \]
            The orthogonality condition in \Cref{condition:monomial}(A.1) is obtained via a simple computation. 
        
            To verify \Cref{condition:monomial}(A.2), we apply
            \Cref{prop:a2-sufficient} with \(L=1\).
            Since the first measurement for the environment \(\omega\) is performed at
            \(\theta\omega\), the one-step outcome laws from the two basis states are
            \[
                P^{(1)}_{\omega,1}
                :=
                \mbQ_{\rho^{(1)};\omega}\circ A_1^{-1}
                =
                p(\theta\omega)\delta_K+(1-p(\theta\omega))\delta_S,
            \]
            and
            \[
                P^{(1)}_{\omega,2}
                :=
                \mbQ_{\rho^{(2)};\omega}\circ A_1^{-1}
                =
                (1-p(\theta\omega))\delta_K+p(\theta\omega)\delta_S.
            \]
            Therefore, the corresponding terminal-label laws are
            \[
                \overline P^{(1)}_\omega(1,1)=p(\theta\omega),
                \qquad
                \overline P^{(1)}_\omega(1,2)=1-p(\theta\omega),
            \]
            and
            \[
                \overline P^{(1)}_\omega(2,1)=p(\theta\omega),
                \qquad
                \overline P^{(1)}_\omega(2,2)=1-p(\theta\omega).
            \]
            Hence, for all \(i,j\in\{1,2\}\),
            \[
                \sum_{k=1}^2
                \min\!\bigl\{
                    \overline P^{(1)}_\omega(i,k),
                    \overline P^{(1)}_\omega(j,k)
                \bigr\}
                =1.
            \]
            Thus \Cref{prop:a2-sufficient} applies with \(L=1\) and \(\varepsilon=1\), and \Cref{condition:monomial}(A.2) follows.
        \end{proof}


    \begin{lemma}
    \label{appen_proof_of_9}
        Under the standing assumptions for the examples, the model in \Cref{eg:d-level-all-to-all} satisfies \((\esp)\) with \(N_0(\omega)=1\) for \(\pr\)-a.e.\ \(\omega\in\Omega\).
        Moreover, \Cref{condition:monomial}(A.1) holds with
        \[
            f_{a_{k,i}}(j)=k,
            \qquad
            1\le k,i,j\le d,
        \]
        and \Cref{condition:monomial}(A.2) holds with \(L=1\) and
        \(\varepsilon=d\delta\).
    \end{lemma}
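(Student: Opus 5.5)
The plan is to make everything explicit: in the basis \((e_i)\), both the non-selective channel and the one-step outcome laws reduce to the column-stochastic matrix \(R(\omega)=(r_{k,i}(\omega))_{k,i}\). Throughout, I will work on the full-measure, \(\theta\)-invariant set
\[
\Omega_\ast:=\bigcap_{n\ge0}\theta^{-n}(\Omega_0),
\]
where \(\Omega_0\) is the set on which \(\sum_k r_{k,i}=1\) and \(r_{k,i}\ge\delta\) for all \(k,i\). As a preliminary, I will check that the family is a Kraus family:
\[
\sum_{k,i}V_{a_{k,i};\omega}\adj V_{a_{k,i};\omega}
=\sum_i\Bigl(\sum_k r_{k,i}(\omega)\Bigr)\ket{e_i}\bra{e_i}
=\mbI_d .
\]

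\emph{Step 1 (ESP).} A direct computation gives
\[
\Phi_{1;\omega}(X)
=\sum_{k,i}r_{k,i}(\theta\omega)\,\braket{e_i,Xe_i}\,\rho^{(k)}
=\operatorname{diag}\bigl(R(\theta\omega)\,p(X)\bigr),
\]
where \(p(X)\) is the diagonal vector of \(X\). If \(X\ge0\) is nonzero, then \(p(X)\ge0\) entrywise and \(\sum_i p_i(X)=\tr{X}>0\). Hence each diagonal entry of the output satisfies
\[
\sum_i r_{k,i}(\theta\omega)\,p_i(X)\;\ge\;\delta\,\tr{X}\;>\;0 .
\]
So \(\Phi_{1;\omega}(X)\) is diagonal with strictly positive diagonal, i.e.\ positive definite. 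The same bound applies to every later factor \(\Phi_{j;\omega}\). Since a composition of strictly positive maps is strictly positive, \(\Phi^{(n)}_\omega\) is strictly positive for all \(n\ge1\), which gives \((\esp)\) with \(N_0\equiv1\).

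\emph{Step 2 (A.1).} Here \(V_{a_{k,i};\omega}e_j=\sqrt{r_{k,i}(\omega)}\,\delta_{ij}\,e_k\in\mbC\,e_k\), so one can take the deterministic map \(f_{a_{k,i}}(j):=k\). For \(j\ne j'\), at most one of \(V_{a_{k,i};\omega}e_j\) and \(V_{a_{k,i};\omega}e_{j'}\) is nonzero, namely the one with index equal to \(i\). Their inner product therefore vanishes, which establishes (A.1).

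\emph{Step 3 (A.2).} I will apply \Cref{prop:a2-sufficient} with \(L=1\). Since the first measurement uses the instrument at \(\theta\omega\), the one-step outcome law from \(\rho^{(j)}\) is
\[
P^{(1)}_{\omega,j}(a_{k,i})=r_{k,i}(\theta\omega)\,\delta_{ij}.
\]
Because \(f_{a_{k,i}}(j)=k\), the terminal-label law is
\[
\overline P^{(1)}_\omega(j,k)=\sum_i r_{k,i}(\theta\omega)\,\delta_{ij}=r_{k,j}(\theta\omega)\;\ge\;\delta .
\]
Consequently, for all \(j,j'\),
\[
\sum_{k=1}^d\min\bigl\{\overline P^{(1)}_\omega(j,k),\,\overline P^{(1)}_\omega(j',k)\bigr\}\;\ge\;d\delta .
\]
This value is a legitimate overlap because \(d\delta\le\sum_k r_{k,j}=1\). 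Measurability of these quantities in \(\omega\) follows from the measurability of the \(r_{k,i}\) and of \(\theta\). \Cref{prop:a2-sufficient} then yields (A.2) with \(L=1\) and \(\varepsilon=d\delta\).

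\emph{Main obstacle.} There is no real obstacle. The only point requiring care is bookkeeping: the time shift (the instrument at \(\theta^j\omega\) is used at step \(j\)) and the restriction to \(\Omega_\ast\), so that the lower bound \(\delta\) holds along the entire orbit.
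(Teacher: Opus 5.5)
Your proposal is correct and follows the paper's proof essentially step by step: the Kraus check, one-step strict positivity of the diagonal channel from the lower bound \(\delta\,\tr{X}\) on each diagonal entry, the \(\delta_{ij}\) computation for (A.1), and \Cref{prop:a2-sufficient} with \(L=1\) using \(\overline P^{(1)}_\omega(j,k)=r_{k,j}(\theta\omega)\ge\delta\). Your additional bookkeeping makes explicit what the paper leaves implicit, namely restricting to the \(\theta\)-invariant full-measure set and observing that compositions of strictly positive maps remain strictly positive.
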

        \begin{proof}
            For \(a_{k,i}\in\mcA\), we have
            \[
                V_{a_{k,i};\omega}
                =
                \sqrt{r_{k,i}(\omega)}\,|e_k\rangle\langle e_i|.
            \]
            Hence
            \[
                V_{a_{k,i};\omega}^\ast V_{a_{k,i};\omega}
                =
                r_{k,i}(\omega)\,|e_i\rangle\langle e_i|
                =
                r_{k,i}(\omega)\rho^{(i)}.
            \]
            Summing over all \(a_{k,i}\in\mcA\), we obtain
            \[
                \sum_{a\in\mcA}V_{a;\omega}^\ast V_{a;\omega}
                =
                \sum_{i=1}^d\left(\sum_{k=1}^d r_{k,i}(\omega)\right)\rho^{(i)}
                =
                \sum_{i=1}^d \rho^{(i)}
                =
                I_d.
            \]
            Thus the family \((V_{a;\omega})_{a\in\mcA}\) defines a perfect measurement.
        
            Next, for \(X\in\matrices\),
            \[
                \Phi_\omega(X)
                :=
                \sum_{a\in\mcA}V_{a;\omega}X V_{a;\omega}^\ast
                =
                \sum_{k=1}^d\sum_{i=1}^d
                r_{k,i}(\omega)\,\langle e_i,Xe_i\rangle\,\rho^{(k)}.
            \]
            Equivalently,
            \[
                \Phi_\omega(X)
                =
                \sum_{k=1}^d
                \left(
                    \sum_{i=1}^d r_{k,i}(\omega)\,\langle e_i,Xe_i\rangle
                \right)\rho^{(k)}.
            \]
        
            Let \(X\ge0\) be nonzero. Then \(\langle e_i,Xe_i\rangle\ge0\) for all \(i\), and
            \[
                \sum_{i=1}^d \langle e_i,Xe_i\rangle=\tr{X}>0.
            \]
            Therefore, for each \(k\),
            \[
                \sum_{i=1}^d r_{k,i}(\omega)\,\langle e_i,Xe_i\rangle
                \ge
                \delta\sum_{i=1}^d \langle e_i,Xe_i\rangle
                =
                \delta\,\tr{X}
                >
                0.
            \]
            Thus every diagonal entry of \(\Phi_\omega(X)\) in the basis
            \((e_1,\dots,e_d)\) is strictly positive. Since \(\Phi_\omega(X)\) is diagonal,
            it follows that \(\Phi_\omega(X)\) is positive definite. Hence \(\Phi_\omega\) is
            strictly positive for \(\pr\)-a.e.\ \(\omega\), and so \((\esp)\) holds with
            \[
                N_0(\omega)=1
                \qquad
                \text{for \(\pr\)-a.e.\ }\omega\in\Omega.
            \]
        
            We now verify \Cref{condition:monomial}(A.1).
            Define
            \[
                f_{a_{k,i}}(j):=k,
                \qquad
                1\le k,i,j\le d.
            \]
            Then
            \[
                V_{a_{k,i};\omega}\rho^{(j)}V_{a_{k,i};\omega}^\ast
                =
                r_{k,i}(\omega)\,\delta_{ij}\,\rho^{(k)}
                \in \mbR_+\,\rho^{(f_{a_{k,i}}(j))}.
            \]
            Moreover, if \(j\neq \ell\), then at most one of \(\delta_{ij}\) and \(\delta_{i\ell}\) is nonzero, and therefore
            \[
                \ip{V_{a_{k,i};\omega}e_j}{V_{a_{k,i};\omega}e_\ell}=0.
            \]
            Hence \Cref{condition:monomial}(A.1) holds.
        
                Finally, we verify \Cref{condition:monomial}(A.2) using
            \Cref{prop:a2-sufficient} with \(L=1\).
            Fix \(i\in\{1,\dots,d\}\). Starting from \(\rho^{(i)}\) at base point \(\omega\),
            the only outcomes with positive probability are \(a_{k,i}\), and
            \[
                P^{(1)}_{\omega,i}(a_{k,i})=r_{k,i}(\theta\omega),
                \qquad
                k=1,\dots,d.
            \]
            Since \(f_{a_{k,i}}(i)=k\), the corresponding terminal-label law is
            \[
                \overline P^{(1)}_\omega(i,k)
                =
                r_{k,i}(\theta\omega),
                \qquad
                k=1,\dots,d.
            \]
            Therefore, for any \(i,j\in\{1,\dots,d\}\),
            \[
                \sum_{k=1}^d
                \min\!\bigl\{
                    \overline P^{(1)}_\omega(i,k),\,
                    \overline P^{(1)}_\omega(j,k)
                \bigr\}
                =
                \sum_{k=1}^d
                \min\!\bigl\{
                    r_{k,i}(\theta\omega),\,
                    r_{k,j}(\theta\omega)
                \bigr\}
                \ge
                \sum_{k=1}^d \delta
                =
                d\delta.
            \]
            Hence \Cref{prop:a2-sufficient} applies with \(L=1\) and
            \(\varepsilon=d\delta\), and \Cref{condition:monomial}(A.2) follows.
        \end{proof}


    \begin{lemma}
    \label{appen_proof_of_11}
        Under the standing assumptions for the examples, the model in
        \Cref{eg:projective-probe-random-reset} defines a disordered perfect measurement and satisfies
        \Cref{assumption:forgetting} with dynamically stationary state
        \[
            \s(\omega)=\rho^{(r(\omega))},
            \qquad \omega\in\Omega,
        \]
        and deterministic rate \(r_n=0\) for all \(n\ge1\). Moreover,
        \Cref{condition:monomial}(A.1) holds with
        \[
            f_{(k,\ell)}(i)=k,
            \qquad (k,\ell)\in\mcA,\ \ i\in\{1,\dots,d\},
        \]
        and \Cref{condition:monomial}(A.2) holds with \(L=1\) and \(\varepsilon=1\).
        Finally, \((\esp)\) fails.
    \end{lemma}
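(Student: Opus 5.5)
The model has Kraus operators $V_{(k,\ell);\omega}=\mathbf 1_{\{k=r(\omega)\}}|e_k\rangle\langle e_\ell|$. I will check each claim directly from this explicit form, in the same way as the proofs of \Cref{appen_proof_of_1} and \Cref{appen_proof_of_6}.

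\emph{Perfect measurement.} We have $V_{(k,\ell);\omega}\adj V_{(k,\ell);\omega}=\mathbf 1_{\{k=r(\omega)\}}\rho^{(\ell)}$. Summing over $k$ picks out exactly the single value $k=r(\omega)$. Summing then over $\ell$ gives $\mbI_d$, so the family is a Kraus decomposition of a perfect measurement. Measurability in $\omega$ is clear, since $r$ is measurable and takes finitely many values.

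\emph{Non-selective channel, stationarity and forgetting.} The non-selective channel is
\[
\Phi_\omega(X)=\sum_\ell \langle e_\ell,Xe_\ell\rangle\,\rho^{(r(\omega))}=\tr{X}\rho^{(r(\omega))}.
\]
This is a replacement channel, so $\Phi_{k;\omega}(X)=\tr{X}\rho^{(r(\theta^k\omega))}$. For every $n\ge1$ and every state $\rho$ this gives $\Phi^{(n)}_\omega(\rho)=\rho^{(r(\theta^n\omega))}$.

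Set $\s(\omega):=\rho^{(r(\omega))}$. Then $\Phi^{(n)}_\omega(\s(\omega))=\s(\theta^n\omega)$, so $\s$ is dynamically stationary. Moreover $\beta_n(\vartheta)=0$ for every measurable $\vartheta$, so \Cref{assumption:forgetting} holds with $r_n=0$.

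\emph{Failure of (ESP).} Every iterate $\Phi^{(n)}_\omega$ has rank-one pure output $\rho^{(r(\theta^n\omega))}$. Since $d\ge2$, no iterate is strictly positive, so (ESP) fails for every $\omega$.

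\emph{Condition (A.1).} Take $f_{(k,\ell)}(i)=k$. Then $V_{(k,\ell);\omega}e_i=\mathbf 1_{\{k=r(\omega)\}}\delta_{\ell i}e_k\in\mbC e_{f_{(k,\ell)}(i)}$. For $i\ne j$, at most one of $\delta_{\ell i},\delta_{\ell j}$ is nonzero, so the images are orthogonal. The maps $f_a$ are deterministic, as the condition requires; only the weights depend on $\omega$.

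\emph{Condition (A.2).} I will apply \Cref{prop:a2-sufficient} with $L=1$. Starting from $\rho^{(i)}$ at base point $\omega$, the first measurement uses the instrument at $\theta\omega$. The only outcome with positive probability is $(r(\theta\omega),i)$, and it occurs with probability $1$. Hence the terminal-label law is
\[
\overline P^{(1)}_\omega(i,\cdot)=\delta_{r(\theta\omega)},
\]
which is independent of $i$. The overlap is therefore $1$ for all $i,j$, and \Cref{prop:a2-sufficient} gives (A.2) with $\varepsilon=1$.

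The only delicate point is the bookkeeping of the time index: the instrument at step $k$ is the one at $\theta^k\omega$. Once that is tracked, each step above is immediate.
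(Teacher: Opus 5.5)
Your proof is correct and follows the paper's argument essentially step by step. The shared steps are the Kraus identity, the replacement-channel computation $\Phi^{(n)}_\omega(\rho)=\rho^{(r(\theta^n\omega))}$ giving stationarity with $r_n=0$, the rank-one obstruction to (ESP), and (A.1) with $f_{(k,\ell)}(i)=k$. The only cosmetic difference is in (A.2), where you invoke \Cref{prop:a2-sufficient} while the paper writes down the coupling $\delta_{((r(\theta\omega),i),(r(\theta\omega),j))}$ directly; since both one-step laws are point masses, this is the unique coupling and is exactly what the proposition's construction produces here.
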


        \begin{proof}
            First, one can easily verify that \((V_{a;\omega})_{a\in\mcA}\) is a Kraus family. 
            We next verify \Cref{condition:monomial}(A.1). Define
            \[
                f_{(k,\ell)}(i):=k,
                \qquad (k,\ell)\in\mcA,\ \ i\in\{1,\dots,d\}.
            \]
            Then for every \(i\in\{1,\dots,d\}\),
            \[
                V_{(k,\ell);\omega}\rho^{(i)}V_{(k,\ell);\omega}\adj
                =
                \mathbf 1_{\{k=r(\omega)\}}\delta_{\ell i}\,\rho^{(k)}
                \in \mbR_+\,\rho^{(f_{(k,\ell)}(i))}.
            \]
            Moreover, if \(i\neq j\), then at most one of \(\delta_{\ell i}\) and \(\delta_{\ell j}\) is nonzero, and therefore
            \[
                \ip{V_{(k,\ell);\omega}e_i}{V_{(k,\ell);\omega}e_j}=0.
            \]
            Hence \Cref{condition:monomial}(A.1) holds.
        
                We now verify \Cref{condition:monomial}(A.2). Fix \(\omega\in\Omega\) and
            \(i,j\in\{1,\dots,d\}\). Since the first measurement from base point \(\omega\) is
            performed with the instrument at \(\theta\omega\), the one-step outcome laws are
            \[
                \mbQ_{\rho^{(i)};\omega}\circ A_1^{-1}
                =
                \delta_{(r(\theta\omega),\,i)},
            \]
            and
            \[
                \mbQ_{\rho^{(j)};\omega}\circ A_1^{-1}
                =
                \delta_{(r(\theta\omega),\,j)}.
            \]
            Therefore
            \[
                \kappa_{\omega;i,j}
                :=
                \delta_{\bigl((r(\theta\omega),\,i),\,(r(\theta\omega),\,j)\bigr)}
            \]
            is a coupling of the two one-step laws. Moreover,
            \[
                f_{(r(\theta\omega),\,i)}(i)
                =
                r(\theta\omega)
                =
                f_{(r(\theta\omega),\,j)}(j),
            \]
            so
            \[
                \kappa_{\omega;i,j}
                \Bigl(
                    \{(u,v)\in\mcA\times\mcA:\ f_u(i)=f_v(j)\}
                \Bigr)=1.
            \]
            Since \(\omega\mapsto r(\theta\omega)\) is measurable, the map
            \[
                \omega\longmapsto \kappa_{\omega;i,j}\in\mcP(\mcA\times\mcA)
            \]
            is measurable. Thus \Cref{condition:monomial}(A.2) holds with \(L=1\) and
            \(\varepsilon=1\).
        
            Next, for \(\omega\in\Omega\), define the non-selective map
            \[
               \Phi_\omega(\rho)
                :=
                \sum_{(k,\ell)\in\mcA}V_{(k,\ell);\omega}\rho V_{(k,\ell);\omega}\adj.
            \]
            Writing \(r:=r(\omega)\), we obtain
            \[
                \Phi_\omega(\rho)
                =
                \sum_{\ell=1}^d |e_r\rangle\langle e_\ell|\rho|e_\ell\rangle\langle e_r|
                =
                \tr{\rho}\,\rho^{(r)}.
            \]
            In particular, for every state \(\rho\),
            \[
                \Phi_\omega(\rho)=\rho^{(r(\omega))}.
            \]
            In other words, $\Phi$ is a random replacement channel.  
        
            Since the first measurement at base point \(\omega\) is performed with the instrument at
            \(\theta(\omega)\), the one-step non-selective map is
            \[
                \phi_{1;\omega}=\Phi_{\theta(\omega)}.
            \]
            Define
            \[
                \s(\omega):=\rho^{(r(\omega))}.
            \]
            Then
            \[
                \phi_{1;\omega}(\s(\omega))
                =
                \widetilde\Phi_{\theta(\omega)}\bigl(\rho^{(r(\omega))}\bigr)
                =
                \rho^{(r(\theta\omega))}
                =
                \s(\theta\omega),
            \]
            so \(\s\) is dynamically stationary.
        
            Let \(\vartheta:\Omega\to\states\) be any measurable initial state. Since after one step the
            state is reset completely, an immediate induction gives
            \[
                \Phi_\omega^{(n)}(\vartheta(\omega))
                =
                \rho^{(r(\theta^n\omega))}
                =
                \s(\theta^n\omega),
                \qquad n\ge1.
            \]
            Hence
            \[
                \beta_n(\vartheta)=0,
                \qquad n\ge1,
            \]
            and therefore \Cref{assumption:forgetting} holds with
            \[
                r_n:=0.
            \]
        
            Finally, let \(A\ge0\) be nonzero. Since trace is preserved by the non-selective dynamics,
            for every \(n\ge1\),
            \[
                \Phi_\omega^{(n)}(A)=\tr{A}\,\rho^{(r(\theta^n\omega))}.
            \]
            Thus \(\Phi_\omega^{(n)}(A)\) has rank one. Since \(d\ge2\), it is never strictly positive.
            Therefore \((\esp)\) fails.
        \end{proof}


\section{Proof of \texorpdfstring{\Cref{prop:group-action-forgetting}}{Porposition 6}}
\label{appen:proof_of_example_prop_group}

We now prove \Cref{prop:group-action-forgetting}
 
\groupactionprop*
    \begin{proof}
        Let $\Omega_0$ be a full measure event on which both F.1 and F.2 hold above. 
        The rest of the argument is carried out on the full $\pr$-probability event 
        \[
            \Omega_\ast = \bigcap_{m\in\mbZ}\theta^{-m}(\Omega_0). 
        \]
        We write
        \[
            \Phi_\omega(X)
            :=
            \sum_{a\in\mcA}V_{a;\omega}XV_{a;\omega}\adj,
            \qquad X\in\matrices.
        \]
    
        First, \((V_{a;\omega})_{a\in\mcA}\) is a Kraus family for \(\pr\)-a.e.\ \(\omega\). Indeed,
        \[
            V_{a;\omega}\adj V_{a;\omega}
            =
            \sum_{g\in G} w_{a,g}(\omega)\,|e_g\rangle\langle e_g|,
        \]
        and therefore
        \[
            \sum_{a\in\mcA}V_{a;\omega}\adj V_{a;\omega}
            =
            \sum_{g\in G}\Bigl(\sum_{a\in\mcA} w_{a,g}(\omega)\Bigr)\rho^{(g)}
            =
            \sum_{g\in G}\rho^{(g)}
            =
            I.
        \]
    
        We next verify \Cref{condition:monomial}(A.1). For \(a\in\mcA\) define
        \[
            f_a(g):=s(a)g,
            \qquad g\in G.
        \]
        Moreover, if \(g\neq h\), then \(s(a)g\neq s(a)h\), since left multiplication by \(s(a)\) is a bijection on \(G\). Hence
        \[
            \ip{V_{a;\omega}e_g}{V_{a;\omega}e_h}
            =
            \sqrt{w_{a,g}(\omega)w_{a,h}(\omega)}
            \,\ip{e_{s(a)g}}{e_{s(a)h}}
            =
            0.
        \]
        Thus \Cref{condition:monomial}(A.1) holds.
        
        We now analyze the diagonal sector. Let
        \[
            \mathcal D:=\mathrm{span}\{\rho^{(g)}:g\in G\}.
        \]
        For \(x=(x_g)_{g\in G}\in\mbR^G\), write
        \[
            D(x):=\sum_{g\in G} x_g \rho^{(g)}.
        \]
        Then
        \[
            \Phi_\omega(D(x))=D(T_\omega x),
            \qquad x\in\mbR^G.
        \]
    
        For \(\omega\in\Omega\), let
        \[
            T_\omega^{(L)}
            :=
            T_{\theta^L\omega}\cdots T_{\theta\omega}.
        \]
        By assumption (F.1), every entry of \(T_\omega^{(L)}\) is at least \(\varepsilon_0\). Let \(U\) be the
        \(d\times d\) matrix whose entries are all equal to \(1/d\). 
        Then, decreasing $\varepsilon_0$ so that $\varepsilon_0 < 1/d$, if necessary, we get 
        \[
            T_\omega^{(L)}
            =
            d\varepsilon_0 U + (1-d\varepsilon_0)R_\omega,
        \]
        where \(R_\omega\) is column-stochastic. Since \(\varepsilon_0\le 1/d\), we have
        \[
            \lambda:=1-d\varepsilon_0\in[0,1),
        \]
        and for every \(z\in\mbR^G\) with \(\sum_{g\in G} z_g=0\),
        \[
            Uz=0.
        \]
        Because \(R_\omega\) is column-stochastic with nonnegative entries, it is
        \(\ell^1\)-nonexpansive. Hence
        \begin{equation}
        \label{eq:group-action-diag-block}
            \norm{T_\omega^{(L)}z}_{\ell^1}
            \le
            \lambda \norm{z}_{\ell^1}
            \qquad
            \text{whenever }\sum_{g\in G} z_g=0.
        \end{equation}
    
        Let
        \[
            \Delta_G:=\{x\in[0,1]^G:\sum_{g\in G}x_g=1\}.
        \]
        Fix \(x_\ast\in\Delta_G\), and define
        \[
            x_n(\omega)
            :=
            T_\omega T_{\theta^{-1}\omega}\cdots T_{\theta^{-n+1}\omega}\,x_\ast,
            \qquad n\ge1.
        \]
        If \(q>p\), then
        \[
            x_q(\omega)-x_p(\omega)
            =
            T_{\omega}\cdots T_{\theta^{-p+1}\omega}(y-x_\ast),
        \]
        where
        \[
            y:=T_{\theta^{-p}\omega}\cdots T_{\theta^{-q+1}\omega}x_\ast\in\Delta_G.
        \]
        Thus \(y-x_\ast\) has zero sum and \(\norm{y-x_\ast}_{\ell^1}\le 2\).
        Writing \(p=\ell L+r\) with \(0\le r<L\), and using
        \eqref{eq:group-action-diag-block} on the \(\ell\) full \(L\)-blocks and
        \(\ell^1\)-nonexpansiveness on the remaining block, we obtain
        \[
            \norm{x_q(\omega)-x_p(\omega)}_{\ell^1}
            \le
            2\,\lambda^{\lfloor p/L\rfloor}.
        \]
        Hence \((x_n(\omega))_{n\ge1}\) is Cauchy for \(\pr\)-a.e.\ \(\omega\), and therefore converges
        to a limit
        \[
            \pi(\omega)\in\Delta_G.
        \]
        Since each \(x_n\) is measurable and \(\Delta_G\subset\mbR^G\) is closed, the limit \(\pi\) is measurable
        after redefining on a null set if necessary. The same estimate shows that the limit is independent
        of the choice of \(x_\ast\).
    
        Moreover, from
        \[
            x_{n+1}(\theta\omega)=T_{\theta\omega}x_n(\omega).
        \]
        and continuity of \(T_\omega\), we obtain
        \[
            T_{\theta\omega}\pi(\omega)=\pi(\theta\omega)
            \qquad
            \text{for \(\pr\)-a.e.\ }\omega.
        \]
        Define
        \[
            \s(\omega):=D(\pi(\omega))
            =
            \sum_{g\in G}\pi_g(\omega)\rho^{(g)}.
        \]
        Then \(\s\) is measurable. Moreover, since
        \[
            T_{\theta\omega}\pi(\omega)=\pi(\theta\omega)
            \qquad\text{for \(\pr\)-a.e.\ }\omega,
        \]
        we obtain
        \[
            \Phi_{\theta\omega}(\s(\omega))
            =
            D\!\bigl(T_{\theta\omega}\pi(\omega)\bigr)
            =
            D(\pi(\theta\omega))
            =
            \s(\theta\omega).
        \]
        Hence \(\s\) is dynamically stationary.
    
        We next estimate convergence on the diagonal sector. For any \(x\in\Delta_G\),
        \[
            T_\omega^{(n)}x-\pi(\theta^n\omega)
            =
            T_\omega^{(n)}\bigl(x-\pi(\omega)\bigr),
        \]
        where
        \[
            T_\omega^{(n)}:=T_{\theta^n\omega}\cdots T_{\theta\omega}.
        \]
        Since \(x-\pi(\omega)\) has zero sum and \(\ell^1\)-norm at most \(2\), the same block-contraction
        argument gives
        \begin{equation}
        \label{eq:group-action-diag-stationary}
            \norm{T_\omega^{(n)}x-\pi(\theta^n\omega)}_{\ell^1}
            \le
            2\,\lambda^{\lfloor n/L\rfloor},
            \qquad n\ge1.
        \end{equation}
        Hence, for every diagonal state \(D(x)\),
        \begin{equation}
        \label{eq:group-action-diag-state}
            \norm{\Phi_\omega^{(n)}(D(x))-\s(\theta^n\omega)}_1
            \le
            2\,\lambda^{\lfloor n/L\rfloor},
            \qquad n\ge1,
        \end{equation}
        since trace norm agrees with \(\ell^1\)-norm on diagonal matrices.
    
        We now analyze the off-diagonal sector. For \(h\in G\setminus\{e\}\), define
        \[
            \mathcal B_h:=\mathrm{span}\{E_{g,gh}:g\in G\},
            \qquad
            E_{g,k}:=|e_g\rangle\langle e_k|.
        \]
        Then
        \[
            \matrices
            =
            \mathcal D \oplus \bigoplus_{h\in G\setminus\{e\}} \mathcal B_h.
        \]
        Moreover, for \(h\neq e\),
        \[
            \Phi_\omega(E_{g,gh})
            =
            \sum_{a\in\mcA}
            \sqrt{w_{a,g}(\omega)\,w_{a,gh}(\omega)}\,
            E_{s(a)g,\;s(a)gh},
        \]
        so each \(\mathcal B_h\) is \(\Phi_\omega\)-invariant.
    
        Fix \(h\in G\setminus\{e\}\), and let
        \[
            X_h=\sum_{g\in G} c_g E_{g,gh}\in\mathcal B_h.
        \]
        Re-indexing the image in the basis \((E_{g,gh})_{g\in G}\), we see that \(\Phi_\omega|_{\mathcal B_h}\)
        is represented by a nonnegative \(d\times d\) matrix \(A_{\omega,h}\) whose \(g\)-th column sum is
        \[
            \sum_{a\in\mcA}\sqrt{w_{a,g}(\omega)\,w_{a,gh}(\omega)}
            \le q.
        \]
        Therefore
        \[
            \norm{A_{\omega,h}c}_{\ell^1}\le q\norm{c}_{\ell^1}.
        \]
        Let \(U_h\) be the permutation unitary \(U_h e_g=e_{gh}\). Then
        \[
            X_h=\mathrm{diag}((c_g)_{g\in G})\,U_h^\ast,
        \]
        so by the unitary invariance of the trace norm,
        \[
            \norm{X_h}_1=\norm{c}_{\ell^1}.
        \]
        Hence
        \[
            \norm{\Phi_\omega(X_h)}_1\le q\norm{X_h}_1,
        \]
        and by iteration,
        \[
            \norm{\Phi_\omega^{(n)}(X_h)}_1\le q^n\norm{X_h}_1.
        \]
    
        Let
        \[
            X_{\mathrm{off}}:=\sum_{h\in G\setminus\{e\}} X_h.
        \]
        Since each \(\mathcal B_h\) is invariant under every \(\Phi_\omega\),
        \[
            \Phi_\omega^{(n)}(X_{\mathrm{off}})
            =
            \sum_{h\in G\setminus\{e\}} \Phi_\omega^{(n)}(X_h).
        \]
        Therefore
        \[
            \norm{\Phi_\omega^{(n)}(X_{\mathrm{off}})}_1
            \le
            \sum_{h\neq e}\norm{\Phi_\omega^{(n)}(X_h)}_1
            \le
            q^n\sum_{h\neq e}\norm{X_h}_1.
        \]

        For each \(h\in G\setminus\{e\}\), we have $\|X_h\|_1\le\sqrt d\,\|X_h\|_2$.
        Hence
        \[
            \sum_{h\neq e}\|X_h\|_1
            \le
            \sqrt d\sum_{h\neq e}\|X_h\|_2.
        \]
        By Cauchy--Schwarz,
        \[
            \sum_{h\neq e}\|X_h\|_2
            \le
            \sqrt{d-1}\Bigl(\sum_{h\neq e}\|X_h\|_2^2\Bigr)^{1/2}.
        \]
        Moreover, the subspaces \(\mathcal B_h\) are orthogonal for the Hilbert--Schmidt inner product, so
        \[
            \sum_{h\neq e}\|X_h\|_2^2
            =
            \Bigl\|\sum_{h\neq e}X_h\Bigr\|_2^2
            =
            \|X_{\mathrm{off}}\|_2^2.
        \]
        Therefore
        \[
            \sum_{h\neq e}\|X_h\|_1
            \le
            \sqrt{d(d-1)}\,\|X_{\mathrm{off}}\|_2
            \le
            \sqrt{d(d-1)}\,\|X_{\mathrm{off}}\|_1
            \le
            d\,\|X_{\mathrm{off}}\|_1.
        \]
        Hence,
        \begin{equation}
        \label{eq:group-action-offdiag}
            \norm{\Phi_\omega^{(n)}(X_{\mathrm{off}})}_1
            \le
            d\,q^n\norm{X_{\mathrm{off}}}_1.
        \end{equation}
    
        Let \(\vartheta:\Omega\to\states\) be any measurable initial state, and decompose
        \[
            \vartheta(\omega)
            =
            \Delta(\vartheta(\omega))
            +
            \bigl(\vartheta(\omega)-\Delta(\vartheta(\omega))\bigr),
        \]
        where \(\Delta\) denotes diagonal projection in the basis \((e_g)_{g\in G}\).
        Since \(\Delta(\vartheta(\omega))\) is a diagonal state, while
        \[
            \vartheta(\omega)-\Delta(\vartheta(\omega))
            \in
            \bigoplus_{h\neq e}\mathcal B_h,
        \]
        combining \eqref{eq:group-action-diag-state} and \eqref{eq:group-action-offdiag} yields
        \[
            \norm{
                \Phi_\omega^{(n)}(\vartheta(\omega))-\s(\theta^n\omega)
            }_1
            \le
            2\,\lambda^{\lfloor n/L\rfloor}
            +
            d\,q^n \norm{\vartheta(\omega)-\Delta(\vartheta(\omega))}_1.
        \]
        Since \(\norm{\vartheta(\omega)-\Delta(\vartheta(\omega))}_1\le 2\), we obtain
        \[
            \norm{
                \Phi_\omega^{(n)}(\vartheta(\omega))-\s(\theta^n\omega)
            }_1
            \le
            2\,\lambda^{\lfloor n/L\rfloor}+2d\,q^n.
        \]
        Taking expectation gives
        \[
            \beta_n(\vartheta)\le r_n,
            \qquad
            r_n:=2\,\lambda^{\lfloor n/L\rfloor}+2d\,q^n.
        \]
        Since \(\lambda,q\in(0,1)\), the sequence \((r_n)\) decays exponentially. Hence
        \[
            \sum_{n=1}^\infty r_n^{\delta/(2+\delta)}<\infty,
        \]
        so \Cref{assumption:forgetting} holds.
        Finally, we verify \Cref{condition:monomial}(A.2).
        By definition of the terminal-label law,
        \[
            \overline P^{(L)}_\omega(g,k)
            =
            \bigl(T_{\theta^{L}\omega}\cdots T_{\theta\omega}\bigr)_{k,g}.
        \]
        Hence \textup{(F.1)} gives
        \[
            \overline P^{(L)}_\omega(g,k)\ge \varepsilon_0
            \qquad
            \text{for all } g,k\in G.
        \]
        Therefore, for all \(g,h\in G\),
        \[
            \sum_{k\in G}
            \min\!\bigl\{
                \overline P^{(L)}_\omega(g,k),\,
                \overline P^{(L)}_\omega(h,k)
            \bigr\}
            \ge
            d\varepsilon_0
            \ge
            \varepsilon_0.
        \]
        Thus \Cref{prop:a2-sufficient} applies, and \Cref{condition:monomial}(A.2) follows with the
        same block length \(L\).
    \end{proof}


\end{appendix}


\addcontentsline{toc}{section}{Bibliography}
\printbibliography


\end{document}